\title{Interaction Graphs: Additives}
\author[ts]{Thomas Seiller\corref{cor1}\fnref{fn1}} \ead{thomas.seiller@inria.fr}
\address[ts]{INRIA. Mailing address: Université de Savoie, LAMA UMR 5127 CNRS, Bâtiment Chablais, Campus Scientifique, 73376 Le Bourget-du-Lac Cedex}
\begin{document}

\begin{abstract}
Geometry of Interaction (GoI) is a research program initiated by Jean-Yves Girard which aims at defining a semantics of linear logic proofs accounting for the dynamical aspects of cut elimination. We present here a parametrised construction of a Geometry of Interaction for Multiplicative Additive Linear Logic (MALL) in which proofs are represented by families of directed weighted graphs. Contrarily to former constructions dealing with additive connectives \cite{goi3,goi5}, we are able to solve the known issue of obtaining a denotational semantics for MALL by introducing a notion of observational equivalence. Moreover, our setting has the advantage of being the first construction dealing with additives where proofs of MALL are interpreted by finite objects. The fact that we obtain a denotational model of MALL relies on a single geometric property, which we call the \emph{trefoil property}, from which we obtain, for each value of the parameter, \emph{adjunctions}. We then proceed to show how this setting is related to Girard's various constructions: particular choices of the parameter respectively give a combinatorial version of his latest GoI \cite{goi5}, a refined version of older Geometries of Interaction \cite{goi1,goi2,goi3}, and even a generalisation of his \emph{multiplicatives} \cite{multiplicatives} construction. This shows the importance of the \emph{trefoil property} underlying our constructions since all known GoI construction to this day rely on particular cases of it.
\end{abstract}
\maketitle

\section{Introduction}

\paragraph{\textbf{The Geometry of Interaction program \cite{towards}}} It was introduced by Girard a couple of years after his discovery of Linear Logic \cite{ll}. It aims at giving a semantics of linear logic proofs that would account for the dynamical aspects of cut-elimination, hence of computation through the proofs-as-program correspondence. Informally, a Geometry of Interaction (GoI) consists in:
\begin{itemize}
\item a set of mathematical objects — paraproofs — that will contain, among other things, the interpretations of proofs (or $\lambda$-terms);
\item a notion of execution that will represent the dynamics of cut-elimination (or $\beta$-reduction \cite{danos-phd,malacaria-regnier}).
\end{itemize}
Then, from these basic notions, one should be able to "reconstruct" the logic from the way the paraproofs interact:
\begin{itemize}
\item From the notion of execution, one defines a notion of orthogonality between the paraproofs that will allow to define formulas — types — as sets of paraproofs closed under bi-orthogonality (a usual construction in realisability). The notion of orthogonality should be thought of as a way of defining negation based on its computational effect. 
\item The connectives on formulas are defined from "low-level" operations on the paraproofs, following the idea that the rules governing the use of a connective should be defined by the way this connective acts at the level of proofs, i.e. by its computational effect.
\end{itemize}

Throughout the years, Girard defined several such semantics, mainly based on the interpretation of a proof as an operator on an infinite-dimensional Hilbert space. In particular, two such constructions offer a treatment of additive connectives of linear logic \cite{goi3,goi5}. It is also worth noting that the first version of GoI \cite{goi1} was used to analyse lambda-calculus' $\beta$-reduction \cite{AbadiGonthierLevy92b,danosregnierlocalasynchronous}, elucidating Lamping's optimal reduction \cite{Lamping90}.

The latest version of GoI \cite{goi5}, from which this work is greatly inspired, is related to quantum coherent spaces \cite{qcs}, which suggest future applications to quantum computing. Moreover, the great generality and flexibility of the definition of exponentials also seem promising when it comes to the study of complexity. Some results in this direction were already obtained: using a new technique proposed by Girard \cite{normativity}, the author obtained, in a joint work with Clément Aubert, new characterisations of the computational complexity classes \textbf{co-NL} \cite{seiller-conl} and \textbf{L} \cite{seiller-lsp} as sets of operators in the hyperfinite type $\text{II}_{1}$ factor.

\paragraph{\textbf{Interaction Graphs}} Departing from the realm of infinite-dimensional vector spaces and linear maps between them, we propose a graph-theoretical GoI where proofs are interpreted by finite objects\footnote{Even though the graphs we consider can have an infinite set of edges, linear logic proofs are represented by finite graphs (disjoint unions of transpositions).}. In this framework, it is possible to define the multiplicative and additive connectives of Linear Logic. Although not the first such work proposing a combinatorial formulation of GoI constructions \cite{danosregnierlocalasynchronous,goilca,catgoi}, Interaction Graphs is the first work providing such an approach to Girard's hyperfinite GoI \cite{goi5}. Another novelty lies in the fact that the construction is parametrised by a map from the interval $]0,1]$ to $\mathbb{R}_{\geqslant 0}\cup\{\infty\}$, and therefore yields not just one but a whole family of models. %

We will show how, from any of these models, one can obtain a $\ast$-autonomous category with $\parr\not\cong\otimes$ and $1\not\cong\bot$, i.e. a non-degenerate denotational semantics for Multiplicative Linear Logic (MLL). However, as in all the versions of GoI dealing with additive connectives, our construction of additives does not define a categorical product. We solve this issue by introducing a notion of \emph{observational equivalence} within the model. We are then able to define a categorical product from our additive connectives when considering classes of observationally equivalent objects, obtaining a denotational semantics for Multiplicative Additive Linear Logic (MALL).

One important point in this work is the fact that all results rely on a single geometric property we call the \emph{trefoil property}. This property ensures the fours following facts:
\begin{itemize}
\item we obtain a $\ast$-autonomous category; this is a consequence of the \emph{three-term adjunction} obtained as a corollary of the trefoil property;
\item the observational equivalence is a congruence on this category;
\item the quotiented category inherits the $\ast$-autonomous structure;
\item the quotiented category has a full subcategory with products.
\end{itemize}

We then proceed to show how our framework is related to Girard's versions of GoI by looking at models obtained for particular choices of the parameter. Indeed, a first choice of map gives us a model that can be embedded in Girard's GoI5 framework \cite{goi5}. It can be shown that it is a combinatorial version of (the multiplicative additive) fragment of GoI5, offering insights on its notion of orthogonality and constructions. On the other hand, a second choice of map defines a model where orthogonality is defined by nilpotency: our construction thus defines in this case a (refined) version of older Geometries of Interaction \cite{goi1,goi2,goi3}. We also show how a special case of our construction can be related formally to Girard's \emph{Multiplicatives} construction, the very first Geometry of Interaction construction. These results show that the framework of Interaction Graphs captures the different constructions studied by Girard, exposes the one geometric identity underlying them (the trefoil property), gives new insights about those and shows how they relate to each other.

In the last section, we show how the intuitions gained by Interaction Graphs can be used to show results in the hyperfinite Geometry of Interaction of Girard. For this, we show that a proof of a technical property (Proposition \ref{counterexgraphs}) obtained on graphs can be adapted in order to obtain an equivalent property (Propositions \ref{contreexgdi52} and \ref{counterexgdi51}) in Girard's setting. This property, as it turns out, reveals a small mistake in Girard's paper. Our constructions of additives on graphs can however be used to correct this mistake which turns out to be of small importance.

\paragraph{\textbf{About terminology}}

We will use in this paper some unusual terminology, extending the terminology of the author's previous paper \cite{seiller-goim} and inspired by Girard's \cite{goi5}. It is the author's wish to keep, throughout the Interaction Graphs papers' series, a coherent terminology which does not convey incorrect intuitions. 

Let us illustrate how usual terminology may lead to incorrect intuitions. Although some well-known syntactic notions might be related to the specific objects introduced in the paper, the latter are generalisations of the former, and it may convey the wrong ideas to collapse the terminologies. For instance the notion of \emph{project}, introduced later in this paper, generalises the notion of proof in that some specific projects will be interpretations of proofs. However, a project is not in general a representation of a proof.  Other usual terminology from a more semantic tradition relates to the notion of project such as game semantics' \emph{strategies} or classical realisability's \emph{$\lambda_{c}$-terms}, which are more general than \emph{winning strategies} and \emph{quasi-proofs} respectively and characterise (and sometimes capture) those objects interpreting proofs. Although one may have borrowed terminology from those semantic traditions, it might once again have conveyed some incorrect intuitions. For instance, $\lambda_{c}$ terms may contain cuts ($\beta$-redexes), while projects should be understood as some sort of \emph{normal form}. Concerning game semantics, the notion of \emph{conducts}, which are interpretations of proofs, is quite different from the notion of \emph{game}: while a strategy is defined on a given game, a project will belong to several conducts. This latter difference is styled by Girard as an opposition between \emph{\enquote{a posteriori} typing} (conducts) and \emph{\enquote{a priori} typing} (games), in some ways reminiscent of the distinction between \emph{Curry-style} and \emph{Church-style} typing. Notice that it makes subtyping a natural feature of Interaction Graphs; in this particular aspect the GoI and realisability approaches are again quite similar.

We show (\autoref{terminologyfig}) a table proposing translations of the terminology used in this paper and more usual terminology from both syntactic and semantic traditions. We write in boldface the \enquote{usual} words whose meaning is closest to the meaning of the interaction graph term on the left; note that this idea of closeness is quite subjective as it reflects the author's understanding of the notions used in the paper. We stress that the boldface terms are borrowed from different lines of work, which is another argument for a specific terminology: the combined use of notions from game semantics, proof nets and realisability would be more confusing than using brand new terminology. The symbol \enquote{---} denotes the lack of adequate notion.

\begin{figure}
\centering
\subfigure[Correspondence with syntactic traditions]{
\begin{tabular}{|c||c|c|}
\hline
Interaction Graphs & Sequent Calculus  & Proof Nets\\
\hline\hline
project (\autoref{projectdef}) & --- & proof structure \\\hline
wager (\autoref{wagerdef})& --- & --- \\
carrier (\autoref{projectdef}) & --- & --- \\\hline\hline
execution/cut (\autoref{executiondef}) & normalisation & \textbf{normalisation}\\\hline
$\de{Fax}$ (\autoref{faxdef}) & axiom rule & \textbf{axiom link} \\\hline
successful project (\autoref{successdef}) & proof  & proof net \\\hline
conduct (\autoref{conductdef}) & formula & formula \\\hline
behaviour (\autoref{behaviourdef}) & \textbf{\enquote{linear} formula} & \textbf{\enquote{linear} formula}\\\hline
orthogonality (\autoref{orthogonalitydef}) & --- & \textbf{correctness criterion} \\\hline
observational equivalence (\autoref{equivalencedef}) & --- & --- \\\hline
\end{tabular}
}
\subfigure[Correspondence with semantics traditions]{
\begin{tabular}{|c||c|c|}
\hline
Interaction Graphs & Game Semantics & Realisability\\
\hline\hline
project (\autoref{projectdef}) & \textbf{strategy} & $\lambda_{c}$-term\\\hline
wager (\autoref{wagerdef})& --- & --- \\
carrier (\autoref{projectdef}) & \textbf{arena} & --- \\\hline\hline
execution/cut (\autoref{executiondef}) & \textbf{composition} & $\beta$-reduction\\\hline
$\de{Fax}$ (\autoref{faxdef}) & \textbf{copy-cat} & identity\\\hline
successful project (\autoref{successdef}) & winning strategy & \textbf{quasi-proof} \\\hline
conduct (\autoref{conductdef}) & game & \textbf{type} \\\hline
behaviour (\autoref{behaviourdef}) &  --- & --- \\\hline
orthogonality (\autoref{orthogonalitydef}) & --- & \textbf{orthogonality} \\\hline
observational equivalence (\autoref{equivalencedef}) & --- & \textbf{contextual equivalence} \\\hline
\end{tabular}
}
\caption{Tables showing terminologies' (approximate) correspondence}\label{terminologyfig}
\end{figure}

\section{Graphs and Cycles}\label{basicssection}

\subsection{Basic Definitions}

We first recall some definitions and notations of our earlier paper \cite{seiller-goim}. We include the proofs of Propositions \ref{countingeqclas} and \ref{assoc} to keep this section self-contained.

\begin{definition}
A \emph{directed weighted graph} is a tuple $G=(V^{G},E^{G},s^{G},t^{G},\omega^{G})$, where $V^{G}$ is the set of vertices, $E^{G}$ is the set of edges, $s^{G}$ and $t^{G}$ are two functions from $E^{G}$ to $V^{G}$, the \emph{source} and \emph{target} functions, and $\omega^{G}$ is a function\footnote{We chose here to work with the set $]0,1]$ as the set of possible weights for the edges of our graphs. The fact that we chose this particular set is used in Sections \ref{truthsection} and \ref{embedsection}. However, the constructions and results of Sections \ref{basicssection}, \ref{additivesection}, \ref{gdisection} and \ref{denotsection} are independent of this choice and could be performed with graphs with edges weighted in any semi-group (associative magma) $(\Omega,\times)$ (the multiplication being necessary to define the weight of paths, see Definition \ref{execdef}).} $E^{G} \rightarrow ]0,1]$.

In this paper, we will work with directed weighted graphs where the set of vertices is \emph{finite}, and the set of edges is \emph{finite or countably infinite}.
\end{definition}

We will write $E^{G}(v,w)$ the set of $e\in E^{G}$ satisfying $s^{G}(e)=v$ and $t^{G}(e)=w$. Moreover, we will sometimes forget the exponents when the context is clear. In the next definition, and throughout the paper, we will use the notation $\cdot\disjun\cdot$ to denote both the disjoint union of sets or the co-pairing of functions, i.e. if $f:E\rightarrow S$ and $g:F\rightarrow S$, the function $f\disjun g$ has domain $E\disjun F$ and codomain $S$.

\begin{definition}[Plugging]
Given two graphs $G$ and $H$, we define the graph $G\bicol H$ as the union graph $(V^{G}\cup V^{H},E^{G}\disjun E^{H},s^{G}\disjun s^{H},t^{G}\disjun t^{H},\omega^{G}\disjun\omega^{H})$ of $G$ and $H$, together with a coloring function $\delta$ from $E^{G}\disjun E^{H}$ to $\{0,1\}$ such that 
\begin{equation*}
\left\{\begin{array}{l}
\delta(x)=0\text{ if }x\in E^{G}\\
\delta(x)=1\text{ if }x\in E^{H}
\end{array}\right.
\end{equation*}
We refer to $G\bicol H$ as the \emph{plugging of $G$ and $H$.}
\end{definition}

\begin{example}
Let us consider the graphs $F$, $G$ and $H$ illustrated in \autoref{examplegraphs}. The plugging $F\bicol G$ and the plugging $F\bicol H$ are represented in \autoref{exampleplug}, where edges $e$ such that $s(e)=0$ are shown above the vertices and edges such that $s(e)=1$ are shown below them.
\end{example}

\begin{figure}
\centering
\subfigure[The graph $F$]{
\begin{tikzpicture}

\node (1) at (0,0) {1};
\node (2) at (1,0) {2};
\node (3) at (2,0) {3};
\node (4) at (3,0) {4};

\draw[->] (1) -- (2) {};
\draw[->] (2) .. controls (1,1) and (3,1) .. (4) {};
\draw[->] (4) -- (3) {};
\draw[->] (3) .. controls (2,1) and (0,1) .. (1) {};
\end{tikzpicture}
}
\subfigure[The graph $G$]{
\begin{tikzpicture}

\node (5) at (6,0) {1};
\node (6) at (7,0) {2};

\draw[->] (5) .. controls (5,1) and (7,1) .. (5) {};
\draw[->] (6) .. controls (6,1) and (8,1) .. (6) {};

\end{tikzpicture}
}
\subfigure[The graph $H$]{
\begin{tikzpicture}

\node (5) at (6,0) {1};
\node (6) at (7,0) {2};
\node (a) at (5.5,0) {};
\node (b) at (7.5,0) {};

\draw[->] (5) .. controls (6,0.8) and (7,0.8) .. (6) {};
\draw[->] (6) .. controls (7,1.2) and (6,1.2) .. (5) {};

\end{tikzpicture}
}
\caption{Three weighted graphs $F$, $G$ and $H$.\label{examplegraphs}}
\end{figure}

\begin{figure}
\centering
\subfigure[The graph $F\bicol G$]{
\begin{tikzpicture}
\node (1) at (0,0) {1};
\node (2) at (2,0) {2};
\node (3) at (4,0) {3};
\node (4) at (6,0) {4};

\draw[->] (1) ..controls (0.5,1) and (1.5,1) .. (2) {};
\draw[->] (2) .. controls (3,2) and (5,2) .. (4) {};
\draw[->] (4) .. controls (5.5,1) and (4.5,1) .. (3) {};
\draw[->] (3) .. controls (3,2) and (1,2) .. (1) {};

\draw[->] (1) .. controls (-1,-1) and (1,-1) .. (1) {};
\draw[->] (2) .. controls (1,-1) and (3,-1) .. (2) {};

\draw[-,dashed] (-1,2) -- (7,2) {};
\draw[-,dashed] (-1,2) -- (-1,0.2) {};
\draw[-,dashed] (-1,0.2) -- (7,0.2) {};
\draw[-,dashed] (7,2) -- (7,0.2) {};
\node (f) at (-0.7,1.7) {F};

\draw[-,dashed] (-1,-1.5) -- (-1,-0.2) {};
\draw[-,dashed] (-1,-1.5) -- (3,-1.5) {};
\draw[-,dashed] (3,-1.5) -- (3,-0.2) {};
\draw[-,dashed] (-1,-0.2) -- (3,-0.2) {};
\node (g) at (-0.7,-1.2) {G};
\end{tikzpicture}
}
\subfigure[The graph $F\bicol H$]{
\begin{tikzpicture}
\node (1) at (0,0) {1};
\node (2) at (2,0) {2};
\node (3) at (4,0) {3};
\node (4) at (6,0) {4};

\draw[->] (1) ..controls (0.5,1) and (1.5,1) .. (2) {};
\draw[->] (2) .. controls (3,2) and (5,2) .. (4) {};
\draw[->] (4) .. controls (5.5,1) and (4.5,1) .. (3) {};
\draw[->] (3) .. controls (3,2) and (1,2) .. (1) {};

\draw[->] (1) .. controls (0,-0.8) and (2,-0.8) .. (2) {};
\draw[->] (2) .. controls (2,-1.2) and (0,-1.2) .. (1) {};

\draw[-,dashed] (-1,2) -- (7,2) {};
\draw[-,dashed] (-1,2) -- (-1,0.2) {};
\draw[-,dashed] (-1,0.2) -- (7,0.2) {};
\draw[-,dashed] (7,2) -- (7,0.2) {};
\node (f) at (-0.7,1.7) {F};

\draw[-,dashed] (-1,-1.5) -- (-1,-0.2) {};
\draw[-,dashed] (-1,-1.5) -- (3,-1.5) {};
\draw[-,dashed] (3,-1.5) -- (3,-0.2) {};
\draw[-,dashed] (-1,-0.2) -- (3,-0.2) {};
\node (g) at (-0.7,-1.2) {H};
\end{tikzpicture}
}
\caption{Examples of plugging.}\label{exampleplug}
\end{figure}

\begin{figure}
\centering
\subfigure[Alternating paths in $F\bicol G$\label{FbicolG}]{
\begin{tikzpicture}
\node (1) at (0,0) {};
	\node (1G) at (-0.3,0) {};
	\node (1D) at (0.3,0) {};
\node (2) at (2,0) {};
	\node (2G) at (1.7,0) {};
	\node (2D) at (2.3,0) {};
\node (3) at (4,0) {3};
\node (4) at (6,0) {4};

\draw[->] (4) .. controls (5.5,1) and (4.5,1) .. (3) {};
\draw[->] (3) .. controls (4,2) and (0.3,2) .. (0.3,0) {};
\draw[->] (0.3,0) .. controls (0.3,-0.7) and (-0.3,-0.7) .. (-0.3,0) {};
\draw[->] (-0.3,0) .. controls (-0.3,1) and (1.7,1) .. (1.7,0) {};
\draw[->] (1.7,0) .. controls (1.7,-0.7) and (2.3,-0.7) .. (2.3,0) {};
\draw[->] (2.3,0) .. controls (2.3,2) and (6,2) .. (4) {};
\end{tikzpicture}
}
\subfigure[Alternating paths in $F\bicol H$\label{FbicolH}]{
\begin{tikzpicture}
\node (1) at (0,0) {};
	\node (1G) at (-0.3,0) {};
	\node (1D) at (0.3,0) {};
\node (2) at (2,0) {};
	\node (2G) at (1.7,0) {};
	\node (2D) at (2.3,0) {};
\node (3) at (4,0) {3};
\node (4) at (6,0) {4};

\draw[->] (4) .. controls (5.5,1) and (4.5,1) .. (3) {};
\draw[->] (3) .. controls (4,2) and (-0.3,2) .. (-0.3,0) {};
\draw[<-] (0.3,0) .. controls (0.3,-1) and (1.7,-1) .. (1.7,0) {};
\draw[->] (0.3,0) .. controls (0.3,1) and (1.7,1) .. (1.7,0) {};
\draw[<-] (2.3,0) .. controls (2.3,-1.5) and (-0.3,-1.5) .. (-0.3,0) {};
\draw[->] (2.3,0) .. controls (2.3,2) and (6,2) .. (4) {};
\end{tikzpicture}
}
\caption{Alternating paths.}\label{examplepaths}
\end{figure}

\begin{definition}[Paths, cycles and $k$-cycles]\label{pathdef}
A \emph{path} in a graph $G$ is a finite sequence of edges $(e_{i})_{0\leqslant i\leqslant n}\text{ }(n\in\mathbf{N})$ in $E^{G}$ such that $s(e_{i+1})=t(e_{i})$ for all $0\leqslant i\leqslant n-1$. We will call the vertices $s(\pi)=s(e_{0})$ and $t(\pi)=t(e_{n})$ the beginning and the end of the path.

We will also call a \emph{cycle} a path $\pi=(e_{i})_{0\leqslant i\leqslant n}$ such that $s(e_{0})=t(e_{n})$. If $\pi$ is a cycle, and $k$ is the greatest integer such that there exists a cycle $\rho$ with\footnote{Here, we denote by $\rho^{k}$ the concatenation of $k$ copies of $\rho$.} $\pi=\rho^{k}$, we will say that $\pi$ is a \emph{$k$-cycle}.
\end{definition}

\begin{definition}[Alternating paths]
Let $G$ and $H$ be two graphs. We define the \emph{alternating paths} between $G$ and $H$ as the paths $(e_{i})$ in $G\bicol H$ which satisfy
\begin{equation*}
\delta(e_{i})\neq\delta(e_{i+1})~~~~(i=0,\dots,n-1)
\end{equation*} 
We will call an \emph{alternating cycle} in $G\bicol H$ a cycle $(e_{i})_{0\leqslant i\leqslant n}$ in $G\bicol H$ which is an alternating path and such that $\delta(e_{n})\not=\delta(e_{0})$.

The set of alternating paths in $G\bicol H$ will be denoted by $\enpaths{G,H}$, while $\enpaths{G,H}_{V}$ will mean the subset of alternating paths in $G\bicol H$ with source and target in a given set of vertices $V$.
\end{definition}

\begin{proposition}\label{countingeqclas}\label{eqclasskcycl}
Let $\rho=(e_{i})_{0\leqslant i\leqslant n-1}$ be a cycle, and let $\sigma$ be the permutation taking $i$ to $i+1$ ($i=0,\dots,n-2$) and $n-1$ to $0$. We define the set
\begin{equation*}
\bar{\rho}=\{(e_{\sigma^{k}(i)})_{0\leqslant i\leqslant n-1}~|~ 0\leqslant k\leqslant n-1\}
\end{equation*}
Then $\rho$ is a $k$-cycle if and only if the cardinality of $\bar{\rho}$ is equal to $n/k$. In the following, we will refer to such an equivalence class modulo cyclic permutations as a \emph{$k$-circuit}.
\end{proposition}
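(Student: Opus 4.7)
The plan is to recognise $\bar\rho$ as the orbit of the sequence $(e_i)_{0\leqslant i\leqslant n-1}$ under the cyclic shift action of $\mathbb{Z}/n\mathbb{Z}$ generated by $\sigma$, and to apply the orbit--stabiliser theorem. Let $p$ be the smallest positive integer such that $e_{i+p \bmod n} = e_i$ for every $i$; since $\sigma^{n}$ trivially fixes the sequence, we get $p \mid n$, the stabiliser is the subgroup generated by $p$ (of order $n/p$), and therefore $|\bar\rho| = p$. The proposition thus reduces to proving $p = n/k$, where $k$ is the largest integer such that $\rho$ admits a factorisation $\rho = \tau^{k}$ for some cycle $\tau$.

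For the inequality $k \leqslant n/p$, any such factorisation $\rho = \tau^{k}$ with $|\tau| = n/k$ makes $\sigma^{n/k}$ fix $(e_i)$, since shifting by one full copy of $\tau$ merely rotates the $k$ blocks. Minimality of $p$ then forces $p \mid n/k$, so $k \leqslant n/p$. Conversely, setting $\tau = (e_{0},\dots,e_{p-1})$, the periodicity $e_{i+p}=e_i$ together with $p \mid n$ gives $\rho = \tau^{n/p}$, and $\tau$ is itself a cycle: it is a path as an initial segment of $\rho$, and the closure condition $t(e_{p-1}) = s(e_{p}) = s(e_{0})$ follows from the path condition on $\rho$ combined with $e_{p} = e_{0}$. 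This exhibits a factorisation with exponent $n/p$, so $k \geqslant n/p$, and combining both inequalities yields $k = n/p$, hence $|\bar\rho| = n/k$.

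The main point requiring care, rather than a genuine obstacle, is verifying that the minimal period $p$ divides $n$ and that the primitive block $\tau$ really is a cycle (and not merely a path); both follow immediately from $\sigma^n = \mathrm{id}$ and from the closure condition on $\rho$. The argument is essentially the classical combinatorial fact that a word of length $n$ has cyclic orbit equal to its minimal period, specialised here to cycles of edges in a graph.
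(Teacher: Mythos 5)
Your proof is correct and follows essentially the same route as the paper's: identify the minimal period $p$ of the edge sequence, show $|\bar\rho|=p$ (you via orbit--stabiliser, the paper by the equivalent direct count $\sigma^{ps+q}(\rho)=\sigma^{q}(\rho)$ plus minimality), and then identify $p$ with $n/k$ using the maximality of $k$ in one direction and the primitive block $\tau=(e_0,\dots,e_{p-1})$ in the other. Your explicit check that $\tau$ is genuinely a cycle (the closure condition $t(e_{p-1})=s(e_0)$) is a detail the paper leaves implicit, but it does not change the substance of the argument.
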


\begin{proof}
We use classical cyclic groups techniques here. We will abusively denote by $\sigma^{p}(\rho)$ the path $(e_{\sigma^{p}(i)})_{0\leqslant i\leqslant n-1}$.

First, notice that if $\rho$ is a $k$-cycle, then $\sigma^{n/k}(\rho)=\rho$. Now, if $s$ is the smallest integer such that $\sigma^{s}(\rho)=\rho$, we have that $e_{i+s}=e_{i}$. Hence, writing $m=n/s$, we have $\rho=\pi^{m}$ where $\pi=(e_{i})_{0\leqslant i\leqslant s-1}$. This implies that $k=n/s$ from the maximality of $k$. Hence $\rho$ is a $k$-cycle if and only if the smallest integer $s$ such that $\sigma^{s}(\rho)=\rho$ is equal to $n/k$.

Let $s$ be the smallest integer such that $\sigma^{s}(\rho)=\rho$. We have that for any integers $p,q$ such that $0\leqslant q< s$, $\sigma^{ps+q}(\rho)=\sigma^{q}(\rho)$. Indeed, it is a direct consequence of the fact that $\sigma^{ps}(\rho)=\rho$ for any integer $p$. Moreover, since $\sigma^{n}(\rho)=\rho$, we have that $s$ divides $n$. Hence, we have that the cardinality of $\bar{\rho}$ is at most $s$. To show that the cardinality of $\bar{\rho}$ is exactly $s$, we only need to show that $\sigma^{i}(\rho)\neq\sigma^{j}(\rho)$ for $i< j$ between $0$ and $s-1$. But if it were the case, we would have, since $\sigma$ is a bijection, $\rho=\sigma^{j-i}(\rho)$, an equality contradicting the minimality of $s$.
\end{proof}

\begin{definition}[The set of $1$-circuits]
We will denote by $\circuits{G,H}$ the set of alternating $1$-circuits in $G\bicol H$, i.e. the quotient of the set of alternating $1$-cycles by cyclic permutations.
\end{definition}

\begin{example}
\autoref{examplepaths} shows the alternating paths between $F$ and respectively $G$ and $H$. Notice that in \autoref{FbicolH} the circle represent an infinity of paths, since there are two paths of length $k$ for every integer $k$ (of sources $1$ and $2$ respectively). Let us denote by $f$ and $h$ the edges of $F$ and $H$ used to define thoses paths. The alternating cycles between $F$ and $H$ are also infinite in number: every such path of even length defines a cycle, i.e. cycles are paths $(ef)^{k}$ or $(fe)^{k}$ where $\pi^{k}$ denotes the concatenation of $k$ copies of $\pi$. Since circuits are equivalence classes of cycles modulo cyclic permutations, there are then one circuit for each even integer, and therefore the number of circuits is still infinite. However, there is only one $1$-circuit alternating between $F$ and $H$, namely the equivalence class $\rho$ of cycles of length $2$. Indeed, every other circuit $\pi$ is obtained as the concatenation $\rho^{k}$ for an integer $k$. 
\end{example}

\begin{definition}\label{execdef}
Let $F$ and $G$ be two graphs. We define the \emph{execution} of $F$ and $G$ as the graph $F\plug G$ defined by:
\begin{eqnarray*}
V^{F\plug G}&=&V^{F}\Delta V^{G}=(V^{F}\cup V^{G})-(V^{F}\cap V^{G})\\
E^{F\plug G}&=&\enpaths{F,G}_{V^{F}\Delta V^{G}}\\
s^{F\plug G}&=&\pi\mapsto s(\pi)\\
t^{F\plug G}&=&\pi\mapsto t(\pi)\\
\omega^{F\plug G}&=& \pi=\{e_{i}\}_{i=0}^{n}\mapsto \prod_{i=0}^{n}\omega^{G\bicol H}(e_{i})
\end{eqnarray*}

When $V^{F}\cap V^{G}=\emptyset$, we will write $F\cup G$ instead of $F\plug G$.
\end{definition}

\begin{proposition}[Associativity]\label{assoc}\label{associativity}
Let $G_{0},G_{1},G_{2}$ be three graphs with $V^{G_{0}}\cap V^{G_{1}}\cap V^{G_{2}}=\emptyset$. We have:
\begin{equation*}
G_{0}\plug (G_{1}\plug G_{2})=(G_{0}\plug G_{1})\plug G_{2}
\end{equation*}
\end{proposition}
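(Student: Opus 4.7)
The plan is to identify both sides with a canonical \emph{ternary execution graph} $\mathcal{G}$ whose vertices are $V^{G_0} \Delta V^{G_1} \Delta V^{G_2}$ and whose edges are the ``ternary-alternating'' paths with source and target in this symmetric difference, weighted by the product of their basic-edge weights. Concretely, I would call the three-coloured multigraph obtained from $G_0, G_1, G_2$ together with the colouring $\delta : E^{G_0}\disjun E^{G_1}\disjun E^{G_2}\to \{0,1,2\}$ the \emph{ternary plugging}, and a path $(e_i)_{0\le i\le n}$ in it \emph{ternary-alternating} when $\delta(e_i)\neq\delta(e_{i+1})$ for every $i<n$. Associativity of symmetric difference already matches the vertex sets on both sides, and the weights and source/target functions on both sides are by definition the product of basic-edge weights and the endpoints of the underlying sequence of basic edges. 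Hence the task reduces to exhibiting bijections of edge sets.

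\textbf{The bijection for $G_0 \plug (G_1 \plug G_2)$.} I would define this bijection explicitly. In one direction, an element of $E^{G_0 \plug (G_1 \plug G_2)}$ is a sequence $(\epsilon_i)$ alternating between basic edges of $G_0$ and alternating paths in $G_1 \bicol G_2$; concatenating the inner sequences into one long sequence of basic edges produces a ternary-alternating path, because within each $G_1 \bicol G_2$-block the $\{1,2\}$-alternation holds by definition, and at a boundary between a colour-$0$ edge and such a block the adjacent basic edge has colour $1$ or $2$. The global source and target are preserved, hence lie in $V^{G_0}\Delta(V^{G_1}\Delta V^{G_2}) = V^{G_0}\Delta V^{G_1}\Delta V^{G_2}$. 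In the other direction, given a ternary-alternating path $\pi$ with endpoints in $V^{G_0}\Delta V^{G_1}\Delta V^{G_2}$, I would cut $\pi$ at every colour-$0$ edge: each colour-$0$ edge and each maximal colour-$\{1,2\}$ sub-sequence becomes a top-level $\epsilon_i$.

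\textbf{Main point, symmetry, conclusion.} The non-trivial verification is that each maximal colour-$\{1,2\}$ sub-sequence so produced is a valid edge of $G_1 \plug G_2$, i.e.\ has endpoints in $V^{G_1}\Delta V^{G_2}$; this is the only step using the hypothesis $V^{G_0}\cap V^{G_1}\cap V^{G_2}=\emptyset$. An endpoint flanked by a colour-$0$ edge lies in $V^{G_0}\cap V^{G_j}$ for some $j\in\{1,2\}$, and hence by the hypothesis lies in only one of $V^{G_1},V^{G_2}$, i.e.\ in $V^{G_1}\Delta V^{G_2}$. An endpoint that coincides with a global endpoint of $\pi$ is incident to a colour-$1$ or colour-$2$ edge and lies in $V^{G_0}\Delta V^{G_1}\Delta V^{G_2}$; the triple-intersection hypothesis rules out the ``belongs to exactly two of the three $V^{G_i}$'' case and forces the vertex into $V^{G_1}\Delta V^{G_2}$. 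The two constructions are manifestly mutually inverse and respect source, target, and weight. The analogous bijection for $(G_0 \plug G_1) \plug G_2$ can be obtained by the same argument cutting at colour-$2$ edges, and composing the two bijections yields the desired equality of graphs. The only real obstacle is this bookkeeping around the triple-intersection hypothesis; everything else is a routine unfolding.
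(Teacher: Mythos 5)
Your proposal is correct and follows essentially the same route as the paper: both identify each side with a canonical three-coloured ``ternary plugging'' whose edges are the $3$-alternating paths with endpoints in $V^{G_0}\Delta V^{G_1}\Delta V^{G_2}$, and both obtain the bijection by cutting such a path at the edges of the external colour, using $V^{G_0}\cap V^{G_1}\cap V^{G_2}=\emptyset$ exactly where you do, to ensure the maximal two-coloured blocks have endpoints in the right symmetric difference. (One tiny slip: for a block endpoint that is a global endpoint of $\pi$, membership in the triple symmetric difference already excludes ``exactly two''; the hypothesis is what excludes ``all three''.)
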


\begin{proof}
Let us first  define the 3-colored graph $G_{0}\square G_{1}\square G_{2}$ as the union graph $(\bigcup V^{i},\biguplus E^{i},\biguplus s^{i}, \biguplus t^{i})$ together with the coloring function $\delta$ from $\biguplus E^{i}$ into $\{0,1,2\}$ which associates to each edge the number $i$ of the graph $G_{i}$ it comes from. We consider the 3-alternating paths between $G_{0},G_{1},G_{2}$, that is the paths $(e_{i})$ in $G_{0}\square G_{1}\square G_{2}$ satisfying:
\begin{equation*}
\delta(e_{i})\neq\delta(e_{i+1})
\end{equation*}
Then, we can define the simultaneous reduction of $G_{0}$, $G_{1}$, and $G_{2}$ as the graph $\mathop{\dblcolon}_{i}G_{i}=(V^{0}\Delta V^{1}\Delta V^{2},F,s^{F},t^{F})$, where $F$ is the set of 3-alternating paths between the graphs $G_{0}$, $G_{1}$, and $G_{2}$, $s^{F}(e)$ is the beginning of the path $e$ and $t^{F}(e)$ is its end.

We then show that this induced graph $\mathop{\dblcolon}_{i}G_{i}$ is equal to $(G_{0}\plug G_{1})\plug G_{2}$ and $G_{0}\plug (G_{1}\plug G_{2})$. This is a simple verification. Indeed, to prove for instance that $\mathop{\dblcolon}_{i}G_{i}$ is equal to $(G_{0}\plug G_{1})\plug G_{2}$, we just write the 3-alternating paths in $G_{0},G_{1},G_{2}$ as an alternating sequence of alternating paths in $G_{0}\bicol G_{1}$ (with\footnote{\label{footnote}This is where the hypothesis $V^{0}\cap V^{1}\cap V^{2}=\emptyset$ is important. If this is not satisfied, one gets some 3-alternating paths of the form $\rho x$, where $x$ is an edge in $G_{2}$ and $\rho$ is an alternating path in $G_{0}\bicol G_{1}$, but such that $\rho$ does not correspond to an edge in $G_{0}\plug G_{1}$.} source and target in $V^{0}\Delta V^{1}$, i.e. an edge of $G_{0}\plug G_{1}$) and edges in $G_{2}$.
\end{proof}

\begin{definition}\label{defmeasurement}
For any function\footnote{When one is working with a monoid of weights $\Omega$, the function $m$ should be \emph{tracial}, i.e. satisfy for every $a,b\in\Omega$, the equation $m(ab)=m(ba)$. Even though satisfied by every map $m:\Omega\rightarrow\mathbb{R}_{\geqslant 0}\cup\{\infty\}$ when $\Omega$ is a commutative monoid, the traciality is not ensured in general. This requirement is necessary in the general case from the very definition of the measurement, since the quantity $m(\omega(\pi))$ for a circuit $\pi$ would not be well-defined without it, as the value $m(\rho_{1})$ and $m(\rho_{2})$ would not be equal for two choices of representatives $\rho_{1},\rho_{2}$ of the circuit $\pi$.} $m: ]0,1]\rightarrow \mathbb{R}_{\geqslant 0}\cup\{\infty\}$, let us define a measure on graphs by $\meas{F,G}=\sum_{\pi\in\circuits{F,G}} m(\omega^{F\bicol G}(\pi))$.
\end{definition}

\subsection{The Trefoil Property}

In our first paper \cite{seiller-goim}, we obtained the following three-terms adjunction in the case $m(x)=-\log(1-x)$:
$$\meas{F,G\cup H}=\meas{F,G}+\meas{F\plug G,H}$$
As it turns out, the adjunction is independent from the chosen measure, and comes from a more general geometrical identity that describes how the sets of $1$-circuits interact with the execution. This more general identity, which we call \emph{the trefoil property}, turns out to be of great importance in our construction. On the one hand it gives the adjunction which ensures that the construction of multiplicative connectives is adequate (i.e. it implies that $\cond{A\otimes B}=\cond{(A\multimap B^{\pol})^{\pol}}$), i.e. that one can obtain a $\ast$-autonomous category \catmll{}. On the other hand, it will allow us to show that:
\begin{itemize}
\item the $\with$ connective we define is a product \emph{up to observational equivalence};
\item this observational equivalence is a congruence on the category \catmll{};
\item the quotient category inherits the $\ast$-autonomous structure of \catmll{}.
\end{itemize}
As a consequence of this sole geometrical property, we thus obtain a family of GoI constructions for MALL which all define a categorical model.

\subsubsection{The trefoil property explained}

We will use Figure \ref{figtrefoil} to explain the trefoil property. In this figure, we consider three graphs $F$, $G$, and $H$ such that a given vertex (in any of the graphs) cannot be a vertex in the three graphs simultaneously, i.e. the intersection $V^{F}\cap V^{G}\cap V^{H}$ is empty. The double arrows in Figure \ref{figtrefoila} represent the sets of edges (in any of the graphs) that one can go through to go from one graph to the other: for instance the double arrow between $V^{F}$ and $V^{G}$ stands for the set of edges of $F$ whose target is an element of $V^{G}$. We also represent the sets of cycles formed from edges of $F$ and $G$ only (respectively edges of $F$ and $H$ only, respectively edges of $G$ and $H$ only) by a dotted cycle (respectively plain, respectively dashed) in Figures \ref{figtrefoilb}, \ref{figtrefoilc}, \ref{figtrefoild}, and \ref{figtrefoile}. Finally, the set of cycles that contain at least one edge from each graph is represented by a double-line cycle in these figures. The rectangles in Figures \ref{figtrefoilc}, \ref{figtrefoild}, and \ref{figtrefoile} represent the computation of the execution between two graphs, respectively $F$ and $G$ (Figure \ref{figtrefoilc}), $G$ and $H$ (Figure \ref{figtrefoild}) and $F$ and $H$ (Figure \ref{figtrefoile}). We then notice that during the execution of two graphs, one hides and forget about the alternating cycles contained in the rectangle. For instance when the execution between $F$ ans $G$ has been computed, we can no longer \enquote{see} the cycles composed of edges of $F$ and $G$ only. As a consequence, in order to account for all cycles, one should consider both the set of cycles between $F\plug G$ and $H$ and the set of cycles between $F$ and $G$. The trefoil property then states that, if $\cyclesfig{A,B}$ denotes the set of alternating cycles between the two graphs $A$ and $B$:
\begin{equation*}
\cyclesfig{F\plug G,H}\cup\cyclesfig{F,G}\cong\cyclesfig{G\plug H,F}\cup\cyclesfig{G,H}\cong\cyclesfig{H\plug F,G}\cup\cyclesfig{H,F}
\end{equation*}

\begin{figure}
\centering
\subfigure[Representation of $F,G,H$\label{figtrefoila}]{
\begin{tikzpicture}[x=0.75cm,y=0.75cm]
     \draw[-,double] (0,0) -- (0,-2) node [very near start, right] {$V^{F}$};
     \draw[-,double] (-1,-4) -- (-3,-5) node [very near end,below] {$V^{H}$};
     \draw[-,double] (1,-4) -- (3,-5) node [very near end,below] {$V^{G}$};
     
     \node (F) at (0,-1) {};
     \node (H) at (-2,-4.5) {};          
     \node (G) at (2,-4.5) {};
     
     \node (CFH1) at (-3,-0.5) {};
     \node (CFH2) at (-3.5,-2.75) {};
     \node (CHF1) at (-0.5,-6.25) {};
     \node (CHF2) at (2,-1.5) {};
     \node (CGH1) at (2.5,-2.25) {};
     \node (CGH2) at (-2.5,-2.25) {};
     \node (CHG1) at (-1.5,-6.75) {};
     \node (CHG2) at (1.5,-6.75) {};
     \node (CGF1) at (3.5,-2.75) {};
     \node (CGF2) at (3,-0.5) {};
     \node (CFG1) at (-2,-1.5) {};
     \node (CFG2) at (0.5,-6.25) {};
    
     \draw[->,double] (F.north) .. controls (CFH1) and (CFH2) .. (H.south west) {};
     \draw[->,double] (G.south east) .. controls (CGF1) and (CGF2) .. (F.north) {};
     \draw[->,double] (H.south west) .. controls (CHG1) and (CHG2) .. (G.south east) {};	

     \draw[->,double] (F.south) .. controls (CFG1) and (CFG2) .. (G.north west) {};	
     \draw[->,double] (H.north east) .. controls (CHF1) and (CHF2) .. (F.south) {};	
     \draw[->,double] (G.north west) .. controls (CGH1) and (CGH2) .. (H.north east) {};
\end{tikzpicture}
}
\subfigure[Alternating cycles between $F$, $G$ and $H$\label{figtrefoilb}]{
\begin{tikzpicture}[x=0.75cm,y=0.75cm]
     \definecolor{darkgreen}{rgb}{0.15,0.6,0.15};
     
     \draw[-,double] (0,0) -- (0,-2) node [very near start, right] {$V^{F}$};
     \draw[-,double] (-1,-4) -- (-3,-5) node [very near end,below] {$V^{H}$};
     \draw[-,double] (1,-4) -- (3,-5) node [very near end,below] {$V^{G}$};

     \node (F) at (0,-1) {};
     \node (H) at (-2,-4.5) {};          
     \node (G) at (2,-4.5) {};
     
     \node (CFH1) at (-3,-0.5) {};
     \node (CFH2) at (-3.5,-2.75) {};
     \node (CHF1) at (-0.5,-6.25) {};
     \node (CHF2) at (2,-1.5) {};
     \node (CGH1) at (2.5,-2.25) {};
     \node (CGH2) at (-2.5,-2.25) {};
     \node (CHG1) at (-1.5,-6.75) {};
     \node (CHG2) at (1.5,-6.75) {};
     \node (CGF1) at (3.5,-2.75) {};
     \node (CGF2) at (3,-0.5) {};
     
     \draw[-] (F.south) .. controls (CFH1) and (CFH2) .. (H.north east) .. controls (CHF1) and (CHF2) .. (F.south) {};
     \draw[dashed,-] (H.north east) .. controls (CHG1) and (CHG2) .. (G.north west) .. controls (CGH1) and (CGH2) .. (H.north east) {};
     \draw[double,-] (F.north) .. controls (CFH1.north west) and (CFH2.west) .. (H.south west) .. controls (CHG1.south west) and (CHG2.south east) .. (G.south east) .. controls (CGF1.east) and (CGF2.north east) .. (F.north) {};
     \draw[dotted,-] (F.south) .. controls (CFG1) and (CFG2) .. (G.north west) .. controls (CGF1) and (CGF2) .. (F.south) {};
\end{tikzpicture}}
\subfigure[Alternating cycles between $F\plug{}G$ and $H$\label{figtrefoilc}]{
\begin{tikzpicture}[x=0.75cm,y=0.75cm]
     \definecolor{darkgreen}{rgb}{0.15,0.6,0.15};
     
     \draw[-,double] (0,0) -- (0,-2) node [very near start, right] {$V^{F}$};
     \draw[-,double] (-1,-4) -- (-3,-5) node [very near end,below] {$V^{H}$};
     \draw[-,double] (1,-4) -- (3,-5) node [very near end,below] {$V^{G}$};

     \node (F) at (0,-1) {};
     \node (H) at (-2,-4.5) {};          
     \node (G) at (2,-4.5) {};
     
     \node (CFH1) at (-3,-0.5) {};
     \node (CFH2) at (-3.5,-2.75) {};
     \node (CHF1) at (-0.5,-6.25) {};
     \node (CHF2) at (2,-1.5) {};
     \node (CGH1) at (2.5,-2.25) {};
     \node (CGH2) at (-2.5,-2.25) {};
     \node (CHG1) at (-1.5,-6.75) {};
     \node (CHG2) at (1.5,-6.75) {};
     \node (CGF1) at (3.5,-2.75) {};
     \node (CGF2) at (3,-0.5) {};
     
     \draw[-] (F.south) .. controls (CFH1) and (CFH2) .. (H.north east) .. controls (CHF1) and (CHF2) .. (F.south) {};
     \draw[dashed,-] (H.north east) .. controls (CHG1) and (CHG2) .. (G.north west) .. controls (CGH1) and (CGH2) .. (H.north east) {};
     \draw[double,-] (F.north) .. controls (CFH1.north west) and (CFH2.west) .. (H.south west) .. controls (CHG1.south west) and (CHG2.south east) .. (G.south east) .. controls (CGF1.east) and (CGF2.north east) .. (F.north) {};
     \draw[dotted,-] (F.south) .. controls (CFG1) and (CFG2) .. (G.north west) .. controls (CGF1) and (CGF2) .. (F.south) {};

     \draw[-,fill=black,opacity=0.2] (-1,0) -- 
     				(-1,-5) --
					(3.5,-5) --
						(3.5,0) --
							(-1,0) {};
\end{tikzpicture}}
\subfigure[Alternating cycles between $G\plug{}H$ and $F$\label{figtrefoild}]{
\begin{tikzpicture}[x=0.75cm,y=0.75cm]
     \definecolor{darkgreen}{rgb}{0.15,0.6,0.15};

     \draw[-,double] (0,0) -- (0,-2) node [very near start, right] {$V^{F}$};
     \draw[-,double] (-1,-4) -- (-3,-5) node [very near end,below] {$V^{H}$};
     \draw[-,double] (1,-4) -- (3,-5) node [very near end,below] {$V^{G}$};

     \node (F) at (0,-1) {};
     \node (H) at (-2,-4.5) {};          
     \node (G) at (2,-4.5) {};
     
     \node (CFH1) at (-3,-0.5) {};
     \node (CFH2) at (-3.5,-2.75) {};
     \node (CHF1) at (-0.5,-6.25) {};
     \node (CHF2) at (2,-1.5) {};
     \node (CGH1) at (2.5,-2.25) {};
     \node (CGH2) at (-2.5,-2.25) {};
     \node (CHG1) at (-1.5,-6.75) {};
     \node (CHG2) at (1.5,-6.75) {};
     \node (CGF1) at (3.5,-2.75) {};
     \node (CGF2) at (3,-0.5) {};
     
     \draw[-] (F.south) .. controls (CFH1) and (CFH2) .. (H.north east) .. controls (CHF1) and (CHF2) .. (F.south) {};
     \draw[-,dashed] (H.north east) .. controls (CHG1) and (CHG2) .. (G.north west) .. controls (CGH1) and (CGH2) .. (H.north east) {};
     \draw[double,-] (F.north) .. controls (CFH1.north west) and (CFH2.west) .. (H.south west) .. controls (CHG1.south west) and (CHG2.south east) .. (G.south east) .. controls (CGF1.east) and (CGF2.north east) .. (F.north) {};
     \draw[dotted,-] (F.south) .. controls (CFG1) and (CFG2) .. (G.north west) .. controls (CGF1) and (CGF2) .. (F.south) {};

     \draw[-,fill=black,opacity=0.2] (-3.5,-6.5) -- 
     				(3.5,-6.5) --
					(3.5,-2.5) --
						(-3.5,-2.5) --
							(-3.5,-6.5) {};
\end{tikzpicture}}
\subfigure[Alternating cycles between $F\plug{}G$ and $H$\label{figtrefoile}]{
\begin{tikzpicture}[x=0.75cm,y=0.75cm]
     \definecolor{darkgreen}{rgb}{0.15,0.6,0.15};

     \draw[-,double] (0,0) -- (0,-2) node [very near start, right] {$V^{F}$};
     \draw[-,double] (-1,-4) -- (-3,-5) node [very near end,below] {$V^{H}$};
     \draw[-,double] (1,-4) -- (3,-5) node [very near end,below] {$V^{G}$};

     \node (F) at (0,-1) {};
     \node (H) at (-2,-4.5) {};          
     \node (G) at (2,-4.5) {};
     
     \node (CFH1) at (-3,-0.5) {};
     \node (CFH2) at (-3.5,-2.75) {};
     \node (CHF1) at (-0.5,-6.25) {};
     \node (CHF2) at (2,-1.5) {};
     \node (CGH1) at (2.5,-2.25) {};
     \node (CGH2) at (-2.5,-2.25) {};
     \node (CHG1) at (-1.5,-6.75) {};
     \node (CHG2) at (1.5,-6.75) {};
     \node (CGF1) at (3.5,-2.75) {};
     \node (CGF2) at (3,-0.5) {};
     
     \draw[-] (F.south) .. controls (CFH1) and (CFH2) .. (H.north east) .. controls (CHF1) and (CHF2) .. (F.south) {};
     \draw[dashed,-] (H.north east) .. controls (CHG1) and (CHG2) .. (G.north west) .. controls (CGH1) and (CGH2) .. (H.north east) {};
     \draw[double,-] (F.north) .. controls (CFH1.north west) and (CFH2.west) .. (H.south west) .. controls (CHG1.south west) and (CHG2.south east) .. (G.south east) .. controls (CGF1.east) and (CGF2.north east) .. (F.north) {};
     \draw[dotted,-] (F.south) .. controls (CFG1) and (CFG2) .. (G.north west) .. controls (CGF1) and (CGF2) .. (F.south) {};
     \draw[-,fill=black,opacity=0.2] (1,0) -- 
     				(1,-5) --
					(-3.5,-5) --
						(-3.5,0) --
							(1,0) {};
\end{tikzpicture}}
\caption{Graphical representation of the trefoil property}\label{figtrefoil}
\end{figure}

\subsubsection{The trefoil property: formally}

\begin{proposition}[Geometric trefoil property]\label{cyclpptycycles}
Let $F,G,H$ be three graphs such that $V^{F}\cap V^{G}\cap V^{H}=\emptyset$. Then:
\begin{eqnarray*}
\circuits{F,G}\cup\circuits{F\plug G,H}&\cong&\circuits{G,H}\cup\circuits{G\plug H,F}\\
&\cong&\circuits{H,F}\cup\circuits{H\plug F,G}
\end{eqnarray*}
Moreover, these bijections preserve weights.
\end{proposition}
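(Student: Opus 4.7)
The plan is to mirror the strategy of the associativity proof: introduce the three-colored object $F\square G\square H$ and show that each of the three unions in the statement is in weight-preserving bijection with one common set $\elemcycl{F,G,H}$ of \emph{$3$-alternating $1$-circuits}, namely equivalence classes modulo cyclic permutations of cycles $(e_i)_{0\leqslant i\leqslant n-1}$ in $F\square G\square H$ satisfying $\delta(e_i)\neq\delta(e_{i+1})$ for all $i$ (indices taken mod $n$) and whose underlying cycle is a $1$-cycle. Once one such bijection is constructed, the other two follow by the symmetry of the construction in the three graphs.

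I would construct the bijection $\elemcycl{F,G,H}\cong\elemcycl{F,G}\cup\elemcycl{F\plug G,H}$ by contracting the maximal $F\bicol G$-alternating segments of a representative cycle $\pi$ to edges of $F\plug G$. Two cases arise: if $\pi$ uses no color-$2$ (that is, $H$-) edge, then $\pi$ is already an alternating cycle in $F\bicol G$, giving an element of $\elemcycl{F,G}$. Otherwise, the $3$-alternating condition forbids two consecutive $H$-edges, so the $H$-edges are separated by maximal $F\bicol G$-alternating subpaths $\rho_{1},\ldots,\rho_{k}$. The endpoints of each $\rho_{j}$ are shared with adjacent $H$-edges and hence belong to $V^{H}$; the hypothesis $V^{F}\cap V^{G}\cap V^{H}=\emptyset$ then forbids them from lying in $V^{F}\cap V^{G}$, so they lie in $V^{F}\Delta V^{G}$. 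Thus each $\rho_{j}$ is a legitimate edge of $F\plug G$, and substituting each $\rho_{j}$ by the corresponding edge transforms $\pi$ into an alternating cycle in $(F\plug G)\bicol H$.

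The inverse map expands each $(F\plug G)$-edge of an alternating cycle in $(F\plug G)\bicol H$ back into its defining $F\bicol G$-alternating subpath, while alternating cycles of $F\bicol G$ are embedded into $F\square G\square H$ directly. These two maps are patently inverse to one another, commute with cyclic permutation (hence descend to circuits), and preserve weights since the weight of an $(F\plug G)$-edge is by construction the product of the weights of its underlying edges. The step that demands the most care is the preservation of the $1$-circuit condition: if a contracted cycle factors as $\sigma^{k}$, then the corresponding expansion factors as $\tilde\sigma^{k}$ with $\tilde\sigma$ the expansion of $\sigma$, and conversely; the two cycles therefore share the same minimal exponent and, via Proposition~\ref{eqclasskcycl}, the $1$-circuit property transports in both directions. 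This is the main combinatorial subtlety; everything else reduces to routine bookkeeping made transparent by the three-colored picture.
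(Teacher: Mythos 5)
Your proposal is correct and follows essentially the same route as the paper: both pass through the three-colored graph and identify each of the three unions with the common set of $3$-alternating $1$-circuits, splitting into the case with no $H$-edge and the case where maximal $F\bicol G$-alternating segments are contracted to edges of $F\plug G$ (with the same use of $V^{F}\cap V^{G}\cap V^{H}=\emptyset$ to place their endpoints in $V^{F}\Delta V^{G}$). You are in fact slightly more explicit than the paper on the transport of the $1$-circuit property and on weight preservation, which is welcome but not a different argument.
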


\begin{proof}
We prove this following the idea of the proof of associativity.

Let $A=F\bicol G\bicol H$ the three-colored graph obtained from $F,G,H$ together with its $3$-coloring function $\rho$, and consider a $3$-alternating $1$-cycle in $F\bicol G\bicol H$, i.e. a $1$-cycle $(e_{i})_{i=0}^{n}$ which is $3$-alternating and that satisfies $\rho(e_{0})\neq \rho(e_{n})$. Denote by $\cythree(F,G,H)$ the set of $3$-alternating $1$-cycles in $F\bicol G\bicol H$.

Then $\pi\in\cythree(F,G,H)$ is either  an alternating $1$-cycle in $F\bicol G$ (if it does not contain any edge from $H$) or it is a $1$-cycle composed of alternations between alternating paths (with source and target in $V^{F}\Delta V^{G}$, see Footnote \ref{footnote}) in $F\bicol G$ and edges in $H$. Hence, $\pi$ is either in $\circuits{F,G}$ or in $\circuits{F\plug G,H}$. Conversely, any alternating $1$-cycle in $F\bicol G$ corresponds to a unique $3$-alternating cycle in $F\bicol G\bicol H$, and any alternating $1$-cycle in $(F\plug G)\bicol H$ corresponds to a unique $3$-alternating $1$-cycle in $F\bicol G\bicol H$. Hence we have a bijection $$\cythree(F,G,H)\cong\circuits{F,G}\cup\circuits{F\plug G,H}$$

A similar argument shows that $\cythree(F,G,H)\cong\circuits{G,H}\cup\circuits{G\plug H,F}$ and that $\cythree(F,G,H)\cong\circuits{H,F}\cup\circuits{H\plug F,G}$.
\end{proof}

\begin{corollary}[Scalar trefoil property]\label{scalarcycl}
Let $F,G,H$ be such as in the preceding proposition. Then:
\begin{eqnarray*}
\meas{F,G}+\meas{F\plug G,H}&=&\meas{G,H}+\meas{G\plug H,F}\\
&=&\meas{H,F}+\meas{H\plug F,G}
\end{eqnarray*}
\end{corollary}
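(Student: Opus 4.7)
The plan is to deduce the corollary directly from the weight-preserving bijections of Proposition~\ref{cyclpptycycles} by applying $m\circ\omega$ to each $1$-circuit and summing. Unpacking $\meas{\cdot,\cdot}$, the quantity $\meas{F,G}+\meas{F\plug G,H}$ can be written as a single sum $\sum_{\pi\in\elemcycl{F,G}\cup\elemcycl{F\plug G,H}} m(\omega(\pi))$, and analogously for each of the other two expressions in the statement.

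Proposition~\ref{cyclpptycycles} provides bijections between the three such two-term unions; since these bijections preserve weights, they also preserve the values of $m\circ\omega$, and hence identify the three corresponding sums. This gives immediately the two claimed equalities.

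The only point needing a brief check is that each of the three two-term unions is genuinely disjoint, so that the $\meas$-sum really does split as a sum of two $\meas$'s. This is transparent from the proof of Proposition~\ref{cyclpptycycles}: under the common bijection with the set of $3$-alternating $1$-circuits in $F\bicol G\bicol H$, an element of $\elemcycl{F,G}$ corresponds to one using no edge of $H$, whereas an element of $\elemcycl{F\plug G,H}$ corresponds to one using at least one edge of $H$, and the two images cannot coincide. I do not expect any real obstacle here --- the corollary is simply the scalar image of the geometric identity established above, obtained by pushing the bijections through the sum defining $\meas$.
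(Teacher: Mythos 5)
Your proposal is correct and is exactly the argument the paper intends: the corollary is stated without proof as an immediate consequence of the weight-preserving bijections of Proposition~\ref{cyclpptycycles}, obtained by summing $m\circ\omega$ over each side. Your extra remark on disjointness of the two-term unions is a harmless (and essentially automatic) sanity check, since the two sets consist of circuits in different plugged graphs.
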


\begin{corollary}[Geometric three-terms adjunction]
Let $F$, $G$, and $H$ be weighted graphs such that $V^{G}\cap V^{H}=\emptyset$. We have 
\begin{equation*}
\circuits{F,G\cup H}\cong \circuits{F,G}\cup\circuits{F\plug G, H}
\end{equation*}
\end{corollary}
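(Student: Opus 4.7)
The plan is to obtain this as an immediate application of the Geometric cyclic property (Proposition \ref{cyclpptycycles}) to the triple $(F,G,H)$. Note first that the hypothesis $V^{G}\cap V^{H}=\emptyset$ implies the weaker condition $V^{F}\cap V^{G}\cap V^{H}=\emptyset$ needed to invoke Proposition \ref{cyclpptycycles}, so no separate argument is required for the side conditions.

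The proposition yields
\begin{equation}
\elemcycl{F,G}\cup\elemcycl{F\plug G,H}\cong\elemcycl{G,H}\cup\elemcycl{G\plug H,F}\nonumber
\end{equation}
and the goal is then to simplify the right-hand side using the disjointness $V^{G}\cap V^{H}=\emptyset$. Two observations do the job. First, because no vertex is shared between $G$ and $H$, an alternating path in $G\bicol H$ must have length at most one, so $\elemcycl{G,H}=\emptyset$; moreover, the execution $G\plug G H$ coincides with the disjoint union $G\cup H$, directly from the definition of $\plug$ (this is already noted in the paper after the definition of execution). Second, the set $\elemcycl{\cdot,\cdot}$ is symmetric in its two arguments: swapping the two colors in the plugging $F\bicol G$ does not change whether a cycle is alternating nor its weight, so there is a canonical, weight-preserving bijection $\elemcycl{G\cup H,F}\cong\elemcycl{F,G\cup H}$.

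Combining these, the right-hand side reduces to $\elemcycl{F,G\cup H}$, giving the desired bijection. There is no substantial obstacle here: the only thing worth writing carefully is the verification of the two simplifications above, both of which are immediate from the definitions of $\plug$, of alternating paths, and of $\elemcycl{\cdot,\cdot}$. The weight-preservation clause is inherited directly from the weight-preservation clause of Proposition \ref{cyclpptycycles}.
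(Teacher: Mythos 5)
Your proof is correct and is essentially the paper's intended derivation: the corollary is stated as a consequence of the geometric cyclic property, obtained exactly as you do by specializing to the triple $(F,G,H)$, observing that $V^{G}\cap V^{H}=\emptyset$ forces $\elemcycl{G,H}=\emptyset$ and $G\plug H=G\cup H$, and using the evident symmetry of $\elemcycl{\cdot,\cdot}$. No gaps; the weight-preservation claim is indeed inherited from Proposition~\ref{cyclpptycycles}.
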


\begin{corollary}[Scalar three-terms adjunction]
With the hypotheses of the last corollary:
\begin{equation*}
\meas{F,G\cup H}=\meas{F,G}+\meas{F\plug G,H}
\end{equation*}
\end{corollary}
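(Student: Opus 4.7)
The plan is to translate the preceding Geometric three-terms adjunction into its scalar form in exactly the same way that Corollary~\ref{scalarcycl} is obtained from Proposition~\ref{cyclpptycycles}: a weight-preserving bijection at the level of sets of $1$-circuits passes immediately to an equality of sums of $m(\omega(\pi))$.

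More concretely, I would first invoke the Geometric three-terms adjunction to obtain the bijection
\[
\elemcycl{F,G\cup H}\cong\elemcycl{F,G}\cup\elemcycl{F\plug G,H}.
\]
Since this bijection is inherited from the one of Proposition~\ref{cyclpptycycles} (both sides are realized as the set of $3$-alternating $1$-circuits in $F\bicol G\bicol H$, by the same decomposition already used in the proofs of associativity and of the cyclic property), it preserves the weight $\omega(\pi)$ of each $1$-circuit. Applying $m$ to every weight and summing over the two sides of the bijection then yields the required identity by definition of $\meas{\cdot,\cdot}$.

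An equivalent route would be to specialize Corollary~\ref{scalarcycl} directly. The hypothesis $V^{G}\cap V^{H}=\emptyset$ forbids any alternating cycle in $G\bicol H$ (such a cycle would have to switch colors at a vertex shared by $G$ and $H$), so $\elemcycl{G,H}=\emptyset$ and thus $\meas{G,H}=0$; moreover $G\plug H$ coincides with $G\cup H$ under this hypothesis, and $\meas{\cdot,\cdot}$ is manifestly symmetric in its arguments. Plugging these observations into the scalar cyclic identity collapses it to the desired equality.

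I do not expect a serious obstacle here: the argument is essentially book-keeping once the preceding results are in hand. The only point one has to record is the weight preservation of the bijection (or, equivalently, the vanishing of $\meas{G,H}$ when vertex sets are disjoint), both of which are already available.
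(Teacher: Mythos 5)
Your proposal is correct and follows essentially the same (implicit) route as the paper: the scalar statement is obtained from the geometric three-terms adjunction by applying $m$ to the weights along the weight-preserving bijection, exactly as Corollary~\ref{scalarcycl} is obtained from Proposition~\ref{cyclpptycycles}. Your alternative derivation from the scalar cyclic property, using $\meas{G,H}=0$ and the symmetry of $\meas{\cdot,\cdot}$ when $V^{G}\cap V^{H}=\emptyset$, is also valid and amounts to the same book-keeping.
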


\section{The Additive Construction}\label{additivesection}

\subsection{Sliced Graphs}

The additive construction consists in considering finite weighted families of graphs $G_{i}$ on the same set of vertices. This can be related to the way additives are dealt with in Girard's proof nets \cite{proofnets}: a sum $G+G'$ intuitively corresponds to the superposition of two proof structures $G$ and $G'$ using a boolean weight $p$. The main interest of this approach is that it allows us to juxtapose two graphs while making sure they cannot interact: such an operation will be the interpretation of the $\with$ connective in the GoI construction defined later on. In this section, we will define \emph{sliced graphs} and extend in a natural way the operations of execution and measurement; we then show that the scalar trefoil property can be extended to this setting.

\begin{definition}
A \emph{sliced graph} $F$ with carrier $V^{F}$ is a family $F=\sum_{i\in I^{F}}\alpha^{F}_{i}F_{i}$ where $I^{F}$ is a finite set, and, for all $i\in I^{F}$, $\alpha^{F}_{i}$ is a real number and $F_{i}$ is a graph on the set of vertices $V^{F}$.

Given a sliced graph $F$, we define the real number $\unit{F}=\sum_{i\in I^{F}}\alpha^{F}_{i}$. 
\end{definition}

Let us emphasise that the summation sign is merely a notation for indexed families. We will suppose here that these families are identified modulo isomorphisms of the indexing set; although not essential -- the corresponding \emph{sliced graphs} would be \enquote{identified}\footnote{We use quotes here as the equivalence is defined on projects, not sliced graphs. However two projects $(a,A)$ and $(a,B)$, where $B$ is obtained from $A$ through an isomorphism of the indexing set, will always be equivalent. This is how the graphs $A,B$ are \enquote{identified} by the equivalence on projects.} modulo observational equivalence  (\autoref{equivalencedef}) -- this supposition makes the manipulation of indexed families easier. No other properties are supposed: for instance $A+A$ and $2A$ are different sliced graphs. Let us notice however that these sliced graphs are \enquote{identified} as well when considering the quotient modulo observational equivalence.

We can now easily extend execution and measurement on sliced graphs.
\begin{definition}
Let $F$ and $G$ be two sliced graphs. We define their \emph{execution} as:
\begin{equation*}
\left(\sum_{i\in I^{F}} \alpha^{F}_{i}F_{i}\right)\plug\left(\sum_{i\in I^{G}} \alpha^{G}_{i}G_{i}\right)=\!\!\!\!\sum_{(i,j)\in I^{F}\times I^{G}}\!\!\!\!\alpha^{F}_{i}\alpha^{G}_{j} F_{i}\plug G_{j}
\end{equation*}

When $V^{F}\cap V^{G}=\emptyset$, we will denote the execution by $F\cup G$.
\end{definition}

\begin{definition}
Let $F$ and $G$ be two sliced graphs. We define the \emph{measurement}:
\begin{equation*}
\meas{\sum_{i\in I^{F}} \alpha^{F}_{i}F_{i},\sum_{i\in I^{G}} \alpha^{G}_{i}G_{i}}=\!\!\!\!\sum_{(i,j)\in I^{F}\times I^{G}}\!\!\!\!\alpha^{F}_{i}\alpha^{G}_{j} \meas{F_{i},G_{j}}
\end{equation*}
In the case where some of the $\meas{F_{i},G_{j}}$ are equal to $\infty$, we set $\meas{F,G}$ to be equal to $\infty$.
\end{definition}

A simple computation gives us the trefoil property for sliced graphs as a direct consequence of the trefoil property for graphs.
\begin{proposition}\label{cyclicpptydia}
Let $F,G,H$ be sliced graphs such that $V^{F}\cap V^{G}\cap V^{H}=\emptyset$. Then
\begin{eqnarray*}
\lefteqn{\unit{H}\meas{F,G}+\meas{F\plug G,H}}\\&=&\unit{F}\meas{G,H}+\meas{G\plug H,F}\\
&=&\unit{G}\meas{H,F}+\meas{H\plug F,G}
\end{eqnarray*}
\end{proposition}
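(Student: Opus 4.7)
The plan is to derive the sliced version as a direct bookkeeping consequence of the non-sliced scalar cyclic property (Corollary \ref{scalarcycl}) applied slice by slice, then reassembled.

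First, I unfold everything by definition. By the definition of execution and measurement on sliced graphs,
\[
\meas{F\plug G, H}=\sum_{(i,j,k)\in I^{F}\times I^{G}\times I^{H}} \alpha^{F}_{i}\alpha^{G}_{j}\alpha^{H}_{k}\, \meas{F_{i}\plug G_{j},H_{k}},
\]
and likewise for $\meas{G\plug H,F}$ and $\meas{H\plug F,G}$. Similarly,
\[
\unit{H}\meas{F,G}=\Bigl(\sum_{k\in I^{H}}\alpha^{H}_{k}\Bigr)\sum_{(i,j)\in I^{F}\times I^{G}}\alpha^{F}_{i}\alpha^{G}_{j}\meas{F_{i},G_{j}}=\sum_{(i,j,k)}\alpha^{F}_{i}\alpha^{G}_{j}\alpha^{H}_{k}\meas{F_{i},G_{j}},
\]
and analogously for the other two symmetric expressions.

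Second, I invoke Corollary \ref{scalarcycl} at each triple of slices. Since $V^{F}\cap V^{G}\cap V^{H}=\emptyset$ implies $V^{F_{i}}\cap V^{G_{j}}\cap V^{H_{k}}=\emptyset$ for every $(i,j,k)$, we get the pointwise identities
\[
\meas{F_{i},G_{j}}+\meas{F_{i}\plug G_{j},H_{k}}=\meas{G_{j},H_{k}}+\meas{G_{j}\plug H_{k},F_{i}}=\meas{H_{k},F_{i}}+\meas{H_{k}\plug F_{i},G_{j}}.
\]
Multiplying by $\alpha^{F}_{i}\alpha^{G}_{j}\alpha^{H}_{k}$ and summing over $(i,j,k)\in I^{F}\times I^{G}\times I^{H}$, the two sides rearrange into exactly the three expressions displayed above, proving the identity.

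The main subtlety—rather than a real obstacle—is the case where some $\meas{F_{i},G_{j}}$ equals $\infty$. By the convention set in the preceding definition, whenever such an infinite term appears the corresponding sliced measurement is set to $\infty$, so both sides of the identity become $\infty$ (the other side necessarily contains the same problematic slice triple through the corresponding rearrangement given by Corollary \ref{scalarcycl}, which in particular preserves finiteness across the three equal expressions). Hence the equality persists in the extended sense, and no further argument is needed.
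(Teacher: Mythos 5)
Your proof is correct and follows exactly the route the paper intends: the paper gives no explicit proof, stating only that ``a simple computation gives us the cyclic equality for sliced graphs as a consequence of the cyclic equality for graphs,'' and your triple-sum expansion combined with Corollary \ref{scalarcycl} applied slice by slice is precisely that computation. Your remark on the $\infty$ convention is a welcome extra care the paper does not bother to spell out, and it is handled correctly since the slice-level cyclic identity propagates infiniteness to all three expressions.
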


\begin{proof}
This is a straightforward computation, consequence of Corollary \ref{scalarcycl}:
\begin{eqnarray*}
\lefteqn{\meas{F,G\plug H}+\unit{F}\meas{G,H}}\\
&=&\meas{F,G\plug H}+ \left(\sum_{i\in I^{F}}d^{F}(i)\right)\sum_{j\in I^{G}}\sum_{k\in I^{H}} \meas{G_{j},H_{k}}\\
&=&\sum_{i\in I^{F}}\sum_{j\in I^{G}}\sum_{k\in I^{H}} d^{F}(i)d^{G}(j)d^{H}(k) \meas{F_{i}, G_{j}\plug H_{k}}\\
&&~~~~+\sum_{i\in I^{F}}\sum_{j\in I^{G}}\sum_{k\in I^{H}} d^{F}(i)d^{G}(j)d^{H}(k) \meas{G_{j}, H_{k}}\\
&=&\sum_{i\in I^{F}}\sum_{j\in I^{G}}\sum_{k\in I^{H}} d^{F}(i)d^{G}(j)d^{H}(k) (\meas{F_{i}, G_{j}\plug H_{k}}+\meas{G_{j}, H_{k}})\\
&=&\sum_{i\in I^{F}}\sum_{j\in I^{G}}\sum_{k\in I^{H}} d^{F}(i)d^{G}(j)d^{H}(k) (\meas{H_{k},F_{i}\plug G_{j}}+\meas{F_{i}, G_{j}})\\
&=&\sum_{i\in I^{F}}\sum_{j\in I^{G}}\sum_{k\in I^{H}} d^{F}(i)d^{G}(j)d^{H}(k) \meas{F_{i}\exec G_{j},H_{k}}\\
&&~~~~+\sum_{i\in I^{F}}\sum_{j\in I^{G}}\sum_{k\in I^{H}} d^{F}(i)d^{G}(j)d^{H}(k) \meas{F_{i}, G_{j}}\\
&=&\meas{H,F\plug G}+ \left(\sum_{k\in I^{H}}d^{H}(k)\right)\sum_{i\in I^{F}}\sum_{j\in I^{G}} \meas{F_{i},G_{j}}\\
&=&\meas{H, F\plug G}+\unit{H}\meas{F,G}
\end{eqnarray*}
The other equality is obtained in a similar way.
\end{proof}

\begin{corollary}\label{adjunctdia}
Let $F$, $G$, and $H$ be sliced graphs such that $V^{G}\cap V^{H}=\emptyset$. Then:
\begin{equation*}
\meas{F,G\cup H}=\unit{H}\meas{F,G}+\meas{F\plug G,H}
\end{equation*}
\end{corollary}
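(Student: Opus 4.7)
The plan is to derive Corollary \ref{adjunctdia} directly from the sliced cyclic equality (Proposition \ref{cyclicpptydia}) by specializing to the case where two of the vertex sets are disjoint. The hypothesis $V^{G}\cap V^{H}=\emptyset$ trivially implies $V^{F}\cap V^{G}\cap V^{H}=\emptyset$, so Proposition \ref{cyclicpptydia} applies and yields
\begin{equation}
\unit{H}\meas{F,G}+\meas{F\plug G,H}=\unit{F}\meas{G,H}+\meas{G\plug H,F}.\nonumber
\end{equation}

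The goal is then to simplify the right-hand side. First, I would observe that $\meas{G,H}=0$: for each pair of slices $G_{i},H_{j}$ the graphs have disjoint vertex sets, so no edge of $G_{i}$ can be followed by an edge of $H_{j}$ in $G_{i}\bicol H_{j}$, hence $\elemcycl{G_{i},H_{j}}=\emptyset$ and $\meas{G_{i},H_{j}}=0$; summing weighted by the $\alpha$'s gives zero. Second, the disjointness of $V^{G}$ and $V^{H}$ forces $G_{i}\plug H_{j}=G_{i}\cup H_{j}$ at the level of each slice, so $G\plug H=G\cup H$ as sliced graphs. Third, $\meas{\cdot,\cdot}$ is symmetric in its two arguments, since the set $\elemcycl{X,Y}$ of alternating $1$-circuits depends only on $X\bicol Y$, which is symmetric in $X$ and $Y$ up to the relabelling of the colouring function; the weights attached to the circuits are unchanged by this relabelling, and symmetry lifts to sliced graphs because the definition is symmetric in the indices $i,j$ as well. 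Putting these three facts together, the right-hand side collapses to $\meas{F,G\cup H}$, yielding the desired identity.

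The only subtlety worth a sentence is the handling of infinities: if any $\meas{F_{i},G_{j}\cup H_{k}}$ is infinite, then by the scalar cyclic identity at the level of unsliced graphs either $\meas{F_{i},G_{j}}$ or $\meas{F_{i}\plug G_{j},H_{k}}$ must also be infinite (since we add finitely many non-negative quantities and subtract nothing), so the convention setting any sliced measurement involving an infinite term to $\infty$ makes both sides equal to $\infty$ simultaneously. I expect no real obstacle beyond this bookkeeping, since all the work has already been done in establishing Proposition \ref{cyclicpptydia}; the present corollary is merely the observation that one of the three symmetric terms vanishes when the corresponding vertex sets are disjoint.
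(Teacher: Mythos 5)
Your proof is correct and is exactly the derivation the paper intends: specialize Proposition \ref{cyclicpptydia} to $V^{G}\cap V^{H}=\emptyset$, note $\meas{G,H}=0$ and $G\plug H=G\cup H$, and use symmetry of $\meas{\cdot,\cdot}$, mirroring how the unsliced scalar three-terms adjunction follows from the scalar cyclic property. The remark on infinities is a welcome extra precaution and is consistent with the paper's conventions.
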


\subsection{Wagers}\label{wagerintro}\label{wagerdef}

In order to get an adjunction in the usual sense, i.e. relating only the terms $\meas{F,G\cup H}$ and $\meas{F\plug G, H}$, we need to get rid of the additional term $\unit{H}\meas{F,G}$. A good way of doing so is to capture this term in a real number that will be associated to our graphs, the \emph{wager}. Hence, the objects we will be working with will be couples of a wager and a sliced graph.

\begin{definition}\label{projectdef}
A \emph{project} is a couple $\de{a}=(a,A)$ where $A$ is a sliced graph and $a\in\mathbb{R}\cup\{\infty\}$. We will call $V^{A}$ the \emph{carrier} of $\de{a}$ and denote it by $\supp{\de{a}}$.
\end{definition}

Since we will work in $\mathbb{R}\cup\{\infty\}$, we need to explain how sums and products are defined on $\infty$. We will follow a very simple rule: any sum and any product containing $\infty$ will be equal to $\infty$.

\begin{definition}
Let $\de{a},\de{b}$ be projects. We define the \emph{measurement}:
\begin{equation*}
\sca{a}{b}=a\unit{B}+\unit{A}b+\meas{A,B}
\end{equation*}
\end{definition}

\begin{definition}\label{executiondef}
Let $\de{f}$ and $\de{g}$ be projects. We define the \emph{cut} between $\de{f}$ and $\de{g}$ by:
\begin{equation*}
\de{f\deplug g}=(\sca{f}{g},F\plug G)
\end{equation*}
When $\de{a},\de{b}$ are with disjoint carriers, i.e. $V^{F}\cap V^{G}=\emptyset$, the execution $\de{a\deplug b}$ defines the \emph{tensor product} of $\de{a}$ and $\de{b}$ as $(a\unit{B}+b\unit{A},A\cup B)$.
\end{definition}

The following theorem is a straightforward application of Proposition \ref{cyclicpptydia}.

\begin{theorem}[Trefoil Property]\label{cyclicppty}
Let $\de{f,g,h}$ be projects such that $V^{F}\cap V^{G}\cap V^{H}=\emptyset$. Then:
\begin{equation*}
\sca{f\deplug g}{h}=\sca{f\deplug h}{g}=\sca{g\deplug h}{f}
\end{equation*}
\end{theorem}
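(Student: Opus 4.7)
The plan is to unfold all the definitions and reduce the statement to Proposition \ref{cyclicpptydia} by a direct calculation. I would first expand the left-hand side:
\begin{equation*}
\sca{f\plug g}{h}=\sca{f}{g}\cdot\unit{H}+\unit{F\plug G}\cdot h+\meas{F\plug G,H}.
\end{equation*}
Expanding $\sca{f}{g}=f\unit{G}+\unit{F}g+\meas{F,G}$ and observing from the definition of execution on sliced graphs that $\unit{F\plug G}=\unit{F}\unit{G}$ (the scalar coefficients of $F\plug G$ are the products $\alpha^F_i\alpha^G_j$, whose sum factors), I obtain
\begin{equation*}
\sca{f\plug g}{h}=f\unit{G}\unit{H}+g\unit{F}\unit{H}+h\unit{F}\unit{G}+\unit{H}\meas{F,G}+\meas{F\plug G,H}.
\end{equation*}

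Next I would carry out the analogous expansions for $\sca{f\plug h}{g}$ and $\sca{g\plug h}{f}$. The three "wager" terms $f\unit{G}\unit{H}$, $g\unit{F}\unit{H}$, $h\unit{F}\unit{G}$ appear in each of the three expressions, just written in different orders, so they match by commutativity of multiplication in $\mathbb{R}$. The remaining parts are precisely
\begin{equation*}
\unit{H}\meas{F,G}+\meas{F\plug G,H},\quad \unit{G}\meas{F,H}+\meas{F\plug H,G},\quad \unit{F}\meas{G,H}+\meas{G\plug H,F},
\end{equation*}
which are equal by Proposition \ref{cyclicpptydia} once one notes that $\meas{\cdot,\cdot}$ and $\plug$ are symmetric in their two arguments (which is clear from the definitions, since the set of alternating $1$-cycles of $G\bicol H$ and the execution graph $G\plug H$ do not depend on the order of $G$ and $H$).

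There is no real obstacle: the only subtlety is the handling of the value $\infty$, but the convention stated just before the definition of $\sca{\cdot}{\cdot}$ ensures that if any $\meas{F_i,G_j}$ is $\infty$ then every term in every one of the three expressions collapses to $\infty$, so the equalities hold trivially in that case and the calculation above is valid whenever all measurements are finite.
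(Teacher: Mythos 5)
Your proof is correct and follows exactly the route the paper intends: the paper states that the theorem ``is a straightforward application of Proposition \ref{cyclicpptydia}'', and your computation (expanding the wager terms, using $\unit{F\plug G}=\unit{F}\unit{G}$ and the symmetry of $\meas{\cdot,\cdot}$ and $\plug$) is precisely that application, with the $\infty$ convention handled appropriately.
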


\begin{proof}
We only show one equality, namely $\sca{f\deplug g}{h}=\sca{f\deplug h}{g}$. Using the definitions and Proposition \ref{cyclicpptydia}:
\begin{eqnarray*}
\sca{f\deplug g}{h}&=&(f\unit{G}+g\unit{F})\unit{H}+h\unit{F}\unit{G}+\meas{F\plug G,H}\\
&=&(f\unit{H}+h\unit{F})\unit{G}+g\unit{F}\unit{H}+\meas{F\plug H,G}\\
&=&\sca{f\deplug h}{g}
\end{eqnarray*}
The second equality is obtained by a similar computation.
\end{proof}

\begin{corollary}[Adjunction]\label{adjunction}
Let $\de{f,a,b}$ be projects such that $V^{A}\cap V^{B}=\emptyset$. Then:
\begin{equation*}
\sca{f}{a\otimes b}=\sca{f\deplug a}{b}
\end{equation*}
\end{corollary}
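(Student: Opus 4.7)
The plan is to derive the adjunction directly from the Cyclic Property (Theorem \ref{cyclicppty}) rather than expanding both sides by hand. Two preliminary observations make this clean. First, the measurement $\sca{\cdot}{\cdot}$ is symmetric: writing out $\sca{a}{b}=a\unit{B}+\unit{A}b+\meas{A,B}$, the first two terms are symmetric in $(a,\de{a})$ and $(b,\de{b})$, and $\meas{A,B}=\meas{B,A}$ because the set of alternating $1$-circuits in $A\bicol B$ is unchanged if one swaps the two colors $0$ and $1$. Second, when $V^A\cap V^B=\emptyset$ the definition of execution gives $A\plug B=A\cup B$, and since the wager of $\de{a}\otimes\de{b}$ was defined to be exactly $\sca{a}{b}$, the project $\de{a}\plug\de{b}$ coincides with $\de{a}\otimes\de{b}$.

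Once these are in place, I would invoke Theorem \ref{cyclicppty} on the triple $\de{f},\de{a},\de{b}$: the hypothesis $V^F\cap V^A\cap V^B=\emptyset$ is implied by $V^A\cap V^B=\emptyset$. The cyclic property then gives
\begin{equation}
\sca{f\plug a}{b}=\sca{a\plug b}{f}.\nonumber
\end{equation}
Combining this with the symmetry of $\sca{\cdot}{\cdot}$ and the identification $\de{a}\plug\de{b}=\de{a}\otimes\de{b}$ yields
\begin{equation}
\sca{f\plug a}{b}=\sca{a\plug b}{f}=\sca{f}{a\plug b}=\sca{f}{a\otimes b},\nonumber
\end{equation}
which is the desired equality.

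There is no real obstacle: the work has already been done in Theorem \ref{cyclicppty} (and before that in Proposition \ref{cyclicpptydia} and Corollary \ref{adjunctdia}, the scalar three-terms adjunction). The only point deserving a line of justification is that under the disjointness hypothesis the tensor product and the cut of $\de{a}$ and $\de{b}$ are literally the same project, so that the cyclic equality between three cuts specialises to the asymmetric-looking adjunction. If one prefers a direct calculation, one can instead expand $\sca{f}{a\otimes b}$ using $\unit{A\cup B}=\unit{A}\unit{B}$ and apply Corollary \ref{adjunctdia} to rewrite $\meas{F,A\cup B}$; the extra term $\unit{F}\meas{A,B}$ that appears then vanishes because no alternating cycle can exist in $A\bicol B$ when $V^A\cap V^B=\emptyset$.
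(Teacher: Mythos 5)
Your proof is correct and follows the paper's intended route: the corollary is stated there as an immediate consequence of the Cyclic Property, via exactly the identification $\de{a}\plug\de{b}=\de{a}\otimes\de{b}$ for disjoint carriers and the symmetry of $\sca{\cdot}{\cdot}$ that you spell out. The alternative direct computation you sketch (with the vanishing term $\unit{F}\meas{A,B}$) is also sound.
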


\section{Localised Connectives}\label{gdisection}

We can now proceed with the construction of connectives. We first define a notion of orthogonality from which we derive the notion of \emph{conduct}: a set which is closed under bi-orthogonality. Interpreting the orthogonality $\de{a}\poll\de{b}$ as the fact that $\de{a}$ passes the test $\de{b}$, the definition of conducts can be understood as a way of typing the projects: $\de{a}$ is an element of a conduct $\cond{A}$ if and only if it passes all the tests in $\cond{A}^{\pol}$. The constructions of the connectives is then defined on projects and extends to a definition on conducts by applying the constructions on the elements and closing under bi-orthogonality. Connectives are thus defined by their computational meaning.

\subsection{Multiplicatives}

\begin{definition}[Orthogonality]\label{orthogonalitydef}
Two projects $\de{a,b}$ on the same carrier are \emph{orthogonal}, denoted by $\de{a}\poll[m]\de{b}$, if $\sca{a}{b}\neq 0,\infty$.

If $A$ is a set of projects, the orthogonal $A^{\pol[m]}$ of $A$ is defined as the set: 
$$\{\de{b}~|~\forall \de{a}\in A, \de{a}\poll\de{b}\}$$

We will in general forget about the subscript $m$ in order to simplify notations.
\end{definition}

\begin{definition}[Conducts]\label{conductdef}
A \emph{conduct} is a set $\cond{A}$ of projects which is equal to its bi-orthogonal, i.e. $\cond{A}=\cond{A}^{\pol\pol}$, together with a set $V^{A}$ such that  $\de{a}\in \cond{A}\Rightarrow \supp{\de{a}}=V^{A}$. The set $V^{A}$ will be called the \emph{carrier} of the conduct, and denoted by $\supp{\cond{A}}$.
\end{definition}

\begin{proposition}\label{miseinfcondpleine}
Let $\cond{A}$ be a conduct, and $\de{a}=(\infty,A)\in\cond{A}$ be a project with an infinite wager. Then $\cond{A}^{\pol}=\emptyset$. We will denote by $\cond{0}_{V^{A}}=\cond{A}^{\pol}$ the empty conduct with carrier $V^{A}$ and by $\cond{T}_{V^{A}}=\cond{A}$ the conduct that contains all projects with carrier $V^{A}$.
\end{proposition}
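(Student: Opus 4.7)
The plan is to unfold the definition of the measurement $\sca{a}{b}$ applied to the given $\de{a}=(\infty,A)$ and an arbitrary project $\de{b}=(b,B)$ of carrier $V^A$, and then invoke the convention governing arithmetic on $\infty$.

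Concretely, by definition we have
\begin{equation}
\sca{a}{b}=a\unit{B}+\unit{A}b+\meas{A,B}=\infty\cdot\unit{B}+\unit{A}b+\meas{A,B}.\nonumber
\end{equation}
The paper stipulates that \emph{any} sum and \emph{any} product containing $\infty$ is equal to $\infty$. In particular the summand $\infty\cdot\unit{B}$ equals $\infty$ (whatever the value of $\unit{B}$, even $0$), and then the whole sum equals $\infty$. Hence $\sca{a}{b}=\infty$ for every project $\de{b}$ of carrier $V^A$. By the definition of orthogonality, which requires $\sca{a}{b}\neq 0,\infty$, this means that $\de{a}$ is orthogonal to no project whatsoever.

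Now $\cond{A}^{\pol}$ is defined as the set of projects that are orthogonal to \emph{every} element of $\cond{A}$, and in particular they must be orthogonal to $\de{a}$. Since no project is orthogonal to $\de{a}$, we conclude $\cond{A}^{\pol}=\emptyset$, as claimed.

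There is no real obstacle here: the statement is essentially a sanity check for the $\infty$ convention chosen for wagers, and the proof is a one-line unfolding. The only subtle point to make explicit is the convention $\infty\cdot 0=\infty$, which is what forces $\sca{a}{b}=\infty$ even when $\de{b}$ has carrier with empty index set or vanishing coefficients, and this is precisely the convention the paper has just fixed.
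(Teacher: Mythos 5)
Your proof is correct and is exactly the one-line unfolding the paper intends (the paper in fact omits the proof entirely, treating it as an immediate consequence of the $\infty$ convention): $\sca{a}{b}=\infty$ for every $\de{b}$, so nothing is orthogonal to $\de{a}$ and hence $\cond{A}^{\pol}=\emptyset$. The only remaining remark in the statement, that $\cond{A}$ is then the full conduct $\cond{T}_{V^{A}}$, follows at once from $\cond{A}=\cond{A}^{\pol\pol}=\emptyset^{\pol}$, which vacuously contains every project of carrier $V^{A}$.
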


\begin{proof}
From the definition of operations on $\infty$, any algebraic expression containing $\infty$ is equal to $\infty$. It follows that no project is orthogonal to the project $\de{a}=(\infty,A)$. Hence $\cond{A}^{\pol}=\emptyset$.
\end{proof}

\begin{remark}
If there exists only one conduct (up to the choice of a carrier) containing projects with infinite wager, then why introduce them ? In fact, the introduction of infinite wagers ensures that the application $\de{f}\deplug \de{a}$ is always defined. Technically, this allows us to have the equality between $\cond{0}_{V}\multimap \cond{A}$ and $\cond{A}^{\pol}\multimap \cond{T}_{V}$, which wouldn't be the case if application were not always defined. Indeed, by definition of the linear implication (Definition \ref{linear} below), the conduct $\cond{0}_{V}\multimap \cond{A}$ would be equal to $\cond{T}_{V\cup V^{A}}$, while the conduct $\cond{A}^{\pol}\multimap \cond{T}_{V}$ would contain only the projects $\de{f}$ with $\supp{\de{f}}=V\cup V^{A}$ such that $\de{f}\deplug\de{a}$ is defined for all $\de{a}\in\cond{A}^{\pol}$.
\end{remark}

\begin{definition}[Tensor on Conducts]
Let $\cond{A,B}$ be conducts of disjoint carrier. We can form the conduct $\cond{A}\otimes\cond{B}$
\begin{equation*}
\cond{A\otimes B}=\{\de{a}\otimes\de{b}~|~\de{a}\in\cond{A},\de{b}\in\cond{B}\}^{\pol\pol}
\end{equation*}
\end{definition}

\begin{proposition}\label{ethtenscondtens}
We denote by $\cond{A\odot B}$ the set $\{\de{a}\otimes\de{b}~|~\de{a}\in\cond{A},\de{b}\in\cond{B}\}$. Let $E,F$ be non-empty sets of projects of respective carriers $V,W$ with $V\cap W=\emptyset$. Then 
\begin{equation*}
(E\odot F)^{\pol\pol}=(E^{\pol\pol}\odot F^{\pol\pol})^{\pol\pol}
\end{equation*}
\end{proposition}

\begin{proof}
Obviously, we have $E\subset E^{\pol\pol}$ and $F\subset F^{\pol\pol}$, hence $E\odot F\subset E^{\pol\pol}\odot F^{\pol\pol}$ and finally we get a first inclusion $(E\odot F)\subset (E^{\pol\pol}\odot F^{\pol\pol})^{\pol\pol}$.

For the other inclusion, we prove that $(E\odot F)^{\pol}\subset (E^{\pol\pol}\odot F^{\pol\pol})^{\pol}$. Let $\de{a}$ be a project in $(E\odot F)^{\pol}$. Then for all $\de{e}\in E$ and $\de{f}\in F$ we have $\sca{a}{e\otimes f}\neq 0,\infty$. By the adjunction this means that $\sca{a\deplug e}{f}\neq 0,\infty$, i.e. $\de{a}\deplug\de{e}\in F^{\pol}$. Thus $\sca{a\deplug e}{f'}\neq 0,\infty$ for all $\de{e}\in E$ and $\de{f'}\in F^{\pol\pol}$. Since $\sca{a\deplug e}{f'}=\sca{a\deplug f'}{e}$, we deduce that $\de{a\deplug f'}\in E^{\pol}$, which means that $\de{a\deplug f'}\poll \de{e'}$ for all $\de{f'}\in F^{\pol\pol}$ and $\de{e'}\in E^{\pol\pol}$. To conclude, just notice that this is equivalent to the fact that for all $\de{e'}\in E^{\pol\pol}$ and for all $\de{f'}\in F^{\pol\pol}$, $\sca{a}{e'\otimes f'}\neq 0,\infty$. This implies that $\de{a}\in(E^{\pol\pol}\odot F^{\pol\pol})^{\pol}$ which gives us the second inclusion.
\end{proof}

\begin{definition}
We will write $\de{0}_{V}$ the project $(0,(V,\emptyset))$ where $(V,\emptyset)$ is the empty graph on the set of vertices $V$ considered as a one-sliced graph. Then we will denote by $\cond{1}_{V}$ the conduct $\{\de{0}_{V}\}^{\pol\pol}$.
\end{definition}

\begin{proposition}[Properties of the Tensor]
The tensor product is commutative and associative. Moreover it has a neutral element, namely $\cond{1}_{\emptyset}$.
\end{proposition}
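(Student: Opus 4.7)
The plan is to first verify the three properties at the level of individual projects, and then lift them to conducts via a closure lemma based on the adjunction (Corollary~\ref{adjunction}). At the project level, commutativity $\de{a}\otimes\de{b}=\de{b}\otimes\de{a}$ is immediate: $\cup$ is commutative on sliced graphs, and the scalar $\sca{a}{b}=a\unit{B}+\unit{A}b+\meas{A,B}$ is symmetric because the set of alternating $1$-circuits does not depend on the two-coloring, so $\meas{A,B}=\meas{B,A}$. Associativity splits into its graph part, which follows from Proposition~\ref{associativity} applied slicewise, and its wager part: expanding $\sca{a\otimes b}{c}$ and $\sca{a}{b\otimes c}$ and using $\unit{A\cup B}=\unit{A}\unit{B}$, the remaining identity is
\begin{equation}
\unit{C}\meas{A,B}+\meas{A\cup B,C}=\unit{A}\meas{B,C}+\meas{A,B\cup C},\nonumber
\end{equation}
which is the cyclic property (Proposition~\ref{cyclicpptydia}) using that $\plug$ coincides with $\cup$ on disjoint carriers and the symmetry $\meas{X,Y}=\meas{Y,X}$. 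For the neutral element, I will check $\de{a}\otimes\de{0}_\emptyset=\de{a}$: the graph is $A\cup(\emptyset,\emptyset)=A$, and $\sca{a}{0_\emptyset}=a\cdot 1+\unit{A}\cdot 0+\meas{A,(\emptyset,\emptyset)}=a$ since $\unit{(\emptyset,\emptyset)}=1$ (one slice of coefficient $1$) and the empty graph admits no alternating cycle.

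The main step is the closure lemma: for sets $X,Y$ of projects on disjoint carriers,
\begin{equation}
\{\de{a}\otimes\de{b}\mid\de{a}\in X^{\pol\pol},\de{b}\in Y^{\pol\pol}\}^{\pol\pol}=\{\de{a}\otimes\de{b}\mid\de{a}\in X,\de{b}\in Y\}^{\pol\pol}.\nonumber
\end{equation}
For the nontrivial inclusion, if $\de{d}$ is orthogonal to every $\de{a}\otimes\de{b}$ with $\de{a}\in X,\de{b}\in Y$, the adjunction $\sca{d}{a\otimes b}=\sca{d\plug a}{b}$ translates this into $\de{d}\plug\de{a}\in Y^{\pol}=Y^{\pol\pol\pol}$ for each $\de{a}\in X$, whence $\de{d}\perp\de{a}\otimes\de{b}'$ for every $\de{b}'\in Y^{\pol\pol}$. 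Using project-level commutativity to swap factors and applying the adjunction once more, the same argument then yields $\de{d}\plug\de{b}'\in X^{\pol\pol\pol}$ for every such $\de{b}'$, from which $\de{d}\perp\de{a}'\otimes\de{b}'$ for all $\de{a}'\in X^{\pol\pol},\de{b}'\in Y^{\pol\pol}$, as required.

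Granted this lemma, the conduct-level properties transfer by pure set manipulation. Commutativity is immediate from the project-level statement. Associativity follows from the chain
\begin{equation}
(\cond{A}\otimes\cond{B})\otimes\cond{C}=\{(\de{a}\otimes\de{b})\otimes\de{c}\}^{\pol\pol}=\{\de{a}\otimes(\de{b}\otimes\de{c})\}^{\pol\pol}=\cond{A}\otimes(\cond{B}\otimes\cond{C}),\nonumber
\end{equation}
invoking the closure lemma at the outer equalities and project-level associativity in the middle. Finally, $\cond{A}\otimes\cond{1}_\emptyset=\{\de{a}\otimes\de{0}_\emptyset\mid\de{a}\in\cond{A}\}^{\pol\pol}=\cond{A}^{\pol\pol}=\cond{A}$, using the neutral-element identity on projects.
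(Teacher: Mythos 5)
Your proof is correct. The paper states this proposition without proof, but your argument is exactly the one its machinery supports: the project-level identities follow from $\meas{A,B}=\meas{B,A}$, $\unit{A\cup B}=\unit{A}\,\unit{B}$ and the cyclic property (Proposition~\ref{cyclicpptydia}), and your ``closure lemma'' is literally Proposition~\ref{ethtenscondtens} of the paper (stated there for non-empty sets), which the paper proves later by the same double application of the adjunction --- you route the second application through commutativity of $\otimes$ where the paper invokes $\sca{a\plug e}{f'}=\sca{a\plug f'}{e}$, which is the same fact. The only points worth a remark are cosmetic: the equalities $A\cup(\emptyset,\emptyset)=A$ and $(A\cup B)\cup C=A\cup(B\cup C)$ hold up to the canonical re-indexing of slices and edge sets, and in the degenerate case where one of the sets of projects is empty the closure lemma should be checked separately (both sides then reduce to $\emptyset^{\pol\pol}$), matching the non-emptiness hypothesis in Proposition~\ref{ethtenscondtens}.
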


\begin{proof}
These properties are consequences (using Proposition \ref{ethtenscondtens}) of the fact that $\otimes$ on projects is commutative, associative and has a neutral element: the project $\de{0}_{\emptyset}$. Indeed, the tensor product is a special case of execution which is associative and commutative (we did not prove commutativity but it is a direct consequence of locativity). Moreover, one can easily check that $\de{a}\otimes\de{0}_{\emptyset}=\de{a}$.
\end{proof}

\begin{definition}[Linear Implication]\label{linear}
Let $\cond{A,B}$ be conducts with disjoint carriers $V^{A}$ and $V^{B}$.
\begin{equation*}
\cond{A\multimap B}=\{\de{f}~|~\supp{f}=V^{A}\cup V^{B} \wedge \forall \de{a}\in \cond{A}, \de{f}\deplug\de{a}\in\cond{B}\}
\end{equation*}
\end{definition}

The fact that this defines a conduct is justified by the following proposition, which is a simple corollary of the adjunction (Corollary \ref{adjunction}).

\begin{proposition}[Duality]\label{duality}
For any conducts $\cond{A,B}$ with disjoint carriers:
\begin{equation*}
\cond{A\multimap B}=(\cond{A}\otimes \cond{B}^{\pol})^{\pol}
\end{equation*}
\end{proposition}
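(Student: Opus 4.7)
The plan is to unfold the definition of $\cond{A\multimap B}$ and massage it using the adjunction (Corollary \ref{adjunction}) until it becomes visibly an orthogonal. Concretely, I would argue as follows.

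First I would fix a project $\de{f}$ with $\supp{\de{f}} = V^A \cup V^B$ and observe that, since $\cond{B}$ is equal to its biorthogonal, membership of $\de{f}\plug\de{a}$ in $\cond{B}$ is equivalent to saying that $\sca{f\plug a}{b} \neq 0,\infty$ for every $\de{b} \in \cond{B}^{\pol}$. (This is where the infinite-wager convention plays a role: one needs $\de{f}\plug\de{a}$ to always be a well-defined project, so that the orthogonality test makes sense, which was precisely the motivation given in the remark after the definition of conducts.)

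Next I would apply the adjunction. Since $V^A \cap V^B = \emptyset$, Corollary \ref{adjunction} yields the identity
\begin{equation}
\sca{f\plug a}{b} = \sca{f}{a\otimes b}\nonumber
\end{equation}
for all $\de{a}\in\cond{A}$ and $\de{b}\in\cond{B}^{\pol}$. Hence $\de{f}\in\cond{A\multimap B}$ if and only if $\sca{f}{a\otimes b}\neq 0,\infty$ for every such pair, i.e.\ if and only if $\de{f}$ belongs to the set
\begin{equation}
X^{\pol} \text{ where } X = \{\de{a}\otimes\de{b}\mid \de{a}\in\cond{A},\de{b}\in\cond{B}^{\pol}\}.\nonumber
\end{equation}

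Finally I would close the loop using the standard polarity identity $X^{\pol} = (X^{\pol\pol})^{\pol}$, which gives $X^{\pol} = (\cond{A}\otimes\cond{B}^{\pol})^{\pol}$ by the very definition of the tensor on conducts. This also explains why $\cond{A\multimap B}$ is a conduct: being written as an orthogonal, it is automatically equal to its biorthogonal, and by construction all its members have carrier $V^A\cup V^B$. I do not foresee a genuine obstacle; the only point requiring a little care is checking that the adjunction applies to every relevant triple — which it does, because the disjointness hypothesis $V^A\cap V^B = \emptyset$ is built into the statement — and the observation that we never run into undefined scalar products thanks to the convention that sums or products involving $\infty$ equal $\infty$.
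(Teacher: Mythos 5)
Your proof is correct and follows exactly the route the paper intends: the paper gives no separate argument, merely noting that the proposition is ``a simple corollary of the adjunction (Corollary \ref{adjunction})'', and your unfolding of the definition, application of $\sca{f}{a\otimes b}=\sca{f\plug a}{b}$, and use of the polarity identity $X^{\pol}=X^{\pol\pol\pol}$ is precisely that corollary spelled out. The two side remarks you make --- that the infinite-wager convention keeps $\de{f}\plug\de{a}$ always defined, and that the disjointness $V^{A}\cap V^{B}=\emptyset$ licenses the adjunction --- are the right points of care.
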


\begin{proof}
Choose $\de{a}$ in $\cond{A}$ and $\de{b'}\in\cond{B}^{\pol}$. Then, for any project $\de{f}$ with carrier $V^{A}\cup V^{B}$ the adjunction \ref{adjunction} is written:
\begin{equation*}
\sca{f}{a\otimes b'}=\sca{f\deplug a}{b'}
\end{equation*}
This shows that $\de{f}\in\cond{(A\otimes B^{\pol})^{\pol}}$ if and only if $\de{f}\in\cond{A\multimap B}$.
\end{proof}

We now introduce \emph{delocations}, which are renaming of projects' (and conducts') carriers through a chosen bijection. This leads to the introduction of specific projects, named \emph{faxes} following Ludic's terminology \cite{locussolum}, which intuitively correspond to axiom links or copy-cat strategies: these are the projects that interpret axiom rules.

\begin{definition}[Delocations]\label{deloc}
Let $\de{a}$ be a project with carrier $V^{A}$, and $\phi: V^{A} \rightarrow V^{B}$ a bijection. We define the \emph{delocation} of a graph $G$ as the graph $\phi(G)=(V^{B},E^{A},\phi\circ s^{A},\phi\circ t^{A},\omega^{A})$. This extends to projects: the delocation of $\de{a}=(a,\sum_{i\in I^{A}}\alpha^{A}_{i}A_{i})$ is defined as $\phi(\de{a})=(a,\sum_{i\in I^{A}}\alpha^{A}_{i}\phi(A_{i}))$.

Similarly, the delocation of a conduct $\cond{A}$ with carrier $V^{A}$ is defined as the conduct $\phi(\cond{A})=\{\phi(\de{a})~|~\de{a}\in\cond{A}\}$ with carrier $V^{B}$.
\end{definition}

\begin{proposition}\label{faxs}\label{faxdef}
Keeping the notations of Definition \ref{deloc} and supposing $V^{A}\cap V^{B}=\emptyset$, we define the project $\de{Fax}_{\phi}=(0,\{\Phi\})$ whose slice has weight $1$ and where:
\begin{eqnarray*}
E^{\Phi}&=&\{(a,\phi(a))~|~ a\in V^{A}\}\cup\{(\phi(a),a)~|~ a \in V^{A}\}\\
\Phi&=&(V^{A}\cup V^{B},E^{\Phi},\omega^{\Phi}(e)=1)
\end{eqnarray*}
Then $\de{Fax}_{\phi}\in\cond{A}\multimap\phi(\cond{A})$.
\end{proposition}

\subsection{Additives}

To define additive connectives, we will use the formal sum introduced on graphs. In a way, this formal sum corresponds to the notion of \emph{slice} traditionally introduced to define additive proof nets without boxes.

\subsubsection{Definitions}

\begin{definition}
We extend the sum and product by a scalar to projects as follows:
\begin{equation*}
\de{a}+\lambda\de{b}=(a+\lambda b,A+\lambda B)
\end{equation*}
where $A+\lambda B$ is a notation for $\sum_{i\in I^{A}}\alpha^{A}_{i}A_{i}+\sum_{i\in I^{B}}\lambda\alpha^{B}_{i}B_{i}$.
\end{definition}

\begin{proposition}
Let $\de{a},\de{b}$ be projects, and $\lambda\in\mathbb{R}^{\ast}$. Then, for any project $\de{c}$, we have:
\begin{equation*}
\sca{a+\lambda b}{c}=\sca{a}{c}+\lambda\sca{b}{c}
\end{equation*}
\end{proposition}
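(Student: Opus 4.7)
The plan is to unfold the definitions of $\sca{\cdot}{\cdot}$ and of the sum $\de{a}+\lambda\de{b}$ and observe that each of the three terms in the definition of the measurement is (bi)linear in its first argument. Writing $\de{a}=(a,A)$, $\de{b}=(b,B)$, $\de{c}=(c,C)$, by definition $\de{a}+\lambda\de{b}=(a+\lambda b,\,A+\lambda B)$ where $A+\lambda B$ is the sliced graph $\sum_{i\in I^A}\alpha_i^A A_i+\sum_{j\in I^B}\lambda\alpha_j^B B_j$, so
\begin{equation}
\sca{a+\lambda b}{c}=(a+\lambda b)\unit{C}+\unit{A+\lambda B}\,c+\meas{A+\lambda B,C}.\nonumber
\end{equation}

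The first two terms are immediate: I would use the distributivity of the real product to rewrite $(a+\lambda b)\unit{C}=a\unit{C}+\lambda b\unit{C}$, and I would observe from the definition of $\unit{\cdot}$ on sliced graphs that $\unit{A+\lambda B}=\sum_i\alpha_i^A+\sum_j\lambda\alpha_j^B=\unit{A}+\lambda\unit{B}$, so the second term becomes $\unit{A}c+\lambda\unit{B}c$. For the third term I would apply the definition of $\meas{\cdot,\cdot}$ on sliced graphs, which gives a double sum over the index set $(I^A\sqcup I^B)\times I^C$; splitting this sum along the disjoint union $I^A\sqcup I^B$ and factoring $\lambda$ out of the $I^B$-piece yields
\begin{equation}
\meas{A+\lambda B,C}=\meas{A,C}+\lambda\meas{B,C}.\nonumber
\end{equation}
Recollecting the three pieces and grouping the terms coming from $\de{a}$ and from $\de{b}$ separately gives exactly $\sca{a}{c}+\lambda\sca{b}{c}$.

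The only real point of care is the convention that any sum or product involving $\infty$ evaluates to $\infty$, which is the main obstacle to a purely formal calculation. Here the hypothesis $\lambda\in\mathbb{R}^{\ast}$ is essential: since $\lambda\neq 0$, we have $\lambda\cdot\infty=\infty$, so if any of $a$, $b$, or some $\meas{A_i,C_k}$, $\meas{B_j,C_k}$ is $\infty$, both sides of the claimed equality collapse uniformly to $\infty$. (Had we allowed $\lambda=0$, the term $\lambda\sca{b}{c}$ on the right-hand side could pathologically turn into $0\cdot\infty=\infty$ while the left-hand side remained finite, explaining why $\lambda=0$ is excluded.) Once this case is dispatched, the argument above is a finite computation in $\mathbb{R}$ and requires no further subtlety.
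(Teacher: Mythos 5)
Your proof is correct: the paper states this proposition without proof, and the direct unfolding you give (distributivity for the wager term, $\unit{A+\lambda B}=\unit{A}+\lambda\unit{B}$, and splitting the double sum in $\meas{A+\lambda B,C}$ along $I^{A}\sqcup I^{B}$) is exactly the computation the author evidently has in mind. Your attention to the $\infty$ convention and to why $\lambda\neq 0$ matters is a sensible bonus, not a deviation.
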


\begin{corollary}[Homothety Lemma]\label{homothetie}
Conducts are closed under homothety: for all $\de{a}\in\cond{A}$ and all $\lambda\in\mathbf{R}$ with $\lambda\neq 0$, $\lambda\de{a}\in\cond{A}$.
\end{corollary}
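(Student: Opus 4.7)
The plan is to show that $\lambda\de{a}$ lies in $\cond{A}^{\pol\pol}=\cond{A}$ by checking it is orthogonal to every $\de{b}\in\cond{A}^{\pol}$. The whole corollary therefore reduces to one algebraic identity: $\sca{\lambda a}{b}=\lambda\,\sca{a}{b}$.

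First I would establish this identity. The preceding proposition gives bilinearity of $\sca{-}{-}$ with respect to the combination $\de{a}+\lambda\de{b}$, and one can read the homogeneity off it by writing $\lambda\de{a}$ as $\de{a}+(\lambda-1)\de{a}$ (up to a harmless reindexing of the slices, which does not affect the measurement). Alternatively, and a bit more cleanly, the identity is immediate from unfolding the definitions: $\lambda\de{a}=(\lambda a,\lambda A)$ with $\lambda A=\sum_{i\in I^{A}}\lambda\alpha^{A}_{i}A_{i}$, so $\unit{\lambda A}=\lambda\unit{A}$, and the bilinearity of $\meas{-,-}$ on sliced graphs (which is visible from its definition as a double sum $\sum_{i,j}\alpha^{A}_{i}\alpha^{B}_{j}\meas{A_{i},B_{j}}$) yields $\meas{\lambda A,B}=\lambda\meas{A,B}$. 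Plugging these into
\begin{equation}
\sca{\lambda a}{b}=(\lambda a)\unit{B}+\unit{\lambda A}\,b+\meas{\lambda A,B}\nonumber
\end{equation}
gives $\sca{\lambda a}{b}=\lambda\,\sca{a}{b}$.

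The conclusion is then a one-line argument: since $\de{a}\poll\de{b}$, the scalar $\sca{a}{b}$ belongs to $\mathbb{R}\setminus\{0\}$; as $\lambda$ is a nonzero \emph{finite} real, $\lambda\,\sca{a}{b}$ is again in $\mathbb{R}\setminus\{0\}$, i.e. neither $0$ nor $\infty$. Hence $\lambda\de{a}\poll\de{b}$ for every $\de{b}\in\cond{A}^{\pol}$, which gives $\lambda\de{a}\in\cond{A}^{\pol\pol}=\cond{A}$. There is no real obstacle here; the only subtleties are book-keeping ones, namely that the assumption $\lambda\neq 0$ is exactly what prevents $\lambda\,\sca{a}{b}$ from collapsing to $0$, and that since $\lambda\in\mathbb{R}$ is finite the convention on $\infty$ does not interfere with the argument.
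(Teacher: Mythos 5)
Your proof is correct and follows essentially the route the paper intends: the corollary is stated as an immediate consequence of the preceding linearity proposition, i.e.\ of the identity $\sca{\lambda a}{b}=\lambda\sca{a}{b}$, from which $\lambda\neq 0$ and $\sca{a}{b}\neq 0,\infty$ give $\lambda\sca{a}{b}\neq 0,\infty$ and hence $\lambda\de{a}\in\cond{A}^{\pol\pol}=\cond{A}$. Your direct unfolding of $\unit{\lambda A}=\lambda\unit{A}$ and $\meas{\lambda A,B}=\lambda\meas{A,B}$ is the cleaner way to get that identity (and rightly avoids the slice-duplication issue in writing $\lambda\de{a}=\de{a}+(\lambda-1)\de{a}$), so there is nothing to add.
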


In order to define the $\oplus$ connective, we need to be able to associate to a project $\de{a}$ in a conduct $\cond{A}$ a project $\extde{a}{B}$ in the conduct $\cond{A\oplus B}$. There is only one natural way of defining such a project $\extde{a}{V^{B}}$.

\begin{definition}
Let $\de{a}=(a,A)$ be a project with carrier $V^{A}$, and $V$ a finite set such that $V\cap V^{A}=\emptyset$. We will write $\extde{a}{V}$ the project $\de{a}\otimes \de{0}_{V}$.

If $\cond{A}$ is a conduct with support $V^{A}$, then $\extend{A}{V}$ will denote the set $\{\extde{a}{V}~|~\de{a}\in\cond{A}\}$. Usually, this construction will be used in the particular case where $V=V^{B}$ is the support of a conduct $\cond{B}$; when in this case we will use the notations $\extend{A}{B}$ and $\extde{a}{B}$ instead of $\extend{A}{V^{B}}$ and $\extde{a}{V^{B}}$.
\end{definition}

Let $\cond{A}$ be a conduct with carrier $V^{A}$ and let $V$ be a finite set such that $V\cap V^{A}=\emptyset$, then we want to define from $\cond{A}$ a conduct with carrier $V^{A}\cup V$. A conduct $\cond{A}$ can be considered as defined by its elements — the projects $\de{a}\in\cond{A}$ — or its test — the projects $\de{a'}\in\cond{A}^{\pol}$. This means that an extension of the conduct $\cond{A}$ should satisfy two properties: contain the extended elements of $\cond{A}$ and pass the extended tests of $\cond{A}$.

\begin{definition}
A \emph{conservative extension} of a conduct $\cond{A}$ along a set $V$ such that $V\cap V^{A}=\emptyset$ is a conduct $\cond{B}$ with carrier $V\cup V^{A}$ such that:
\begin{itemize}
\item $\cond{B}$ contains the extensions $\extde{a}{V}$ of the elements of $\de{a}\in\cond{A}$;
\item $\cond{B^{\pol}}$ contains the extensions $\extde{a'}{V}$ of the elements of $\de{a'}\in\cond{A^{\pol}}$;
\end{itemize}
\end{definition}

\begin{proposition}\label{dualextensions}
If $\cond{B}$ is a conservative extension of a conduct $\cond{A}$ on a set $V$, then $\cond{B}^{\pol}$ is a conservative extension of $\cond{A}^{\pol}$ on the set $V$.
\end{proposition}

\begin{proof}
By definition of conservative extensions, $\cond{B}^{\pol}$ contains the set $\extend{(A^{\pol})}{V}$ and $\cond{B}$ contains the set $\extend{A}{V}$. Since $\cond{A}^{\pol\pol}=\cond{A}$ and $\cond{B}^{\pol\pol}=\cond{B}$, the second property is equivalent to the fact that $\cond{B}^{\pol\pol}$ contains the set $\extend{(A^{\pol\pol})}{V}$. Hence $\cond{B}^{\pol}$ satisfies the properties of conservative extensions of $\cond{A}^{\pol}$ along $V$. 
\end{proof}

Notice that if $\extde{a}{V}$ and $\extde{a'}{V}$ are elements of $\extend{A}{V}$ and $\extend{(A^{\pol})}{V}$ respectively, we have $\sca{\extde{a}{V}}{\extde{a'}{V}}=\sca{a}{b}\neq 0,\infty$. This implies that the sets $\extend{A}{V}$ and $\extend{(A^{\pol})}{V}$ are contained in the orthogonal of the other:
\begin{eqnarray*}
\extend{A}{V}&\subset&(\extend{(A^{\pol})}{V})^{\pol}\\
\extend{(A^{\pol})}{V}&\subset&(\extend{A}{V})^{\pol}
\end{eqnarray*}

We will now consider two particular conservative extensions. The first is to extend the elements of $\cond{A}$ and take the bi-orthogonal closure of the set thus obtained, i.e. consider the conduct $(\extend{A}{V})^{\pol\pol}$. This is a conservative extension since it contains the set $\extend{A}{V}$ by definition, and we have $\extend{(A^{\pol})}{V}\subset (\extend{A}{V})^{\pol}$. This conservative extension intuitively corresponds to the set of projects that act as an element of $\cond{A}$ on $V^{A}$ and act as the project $\de{0}_{V}$.

The second possibility is to extend the set of tests of $\cond{A}$, i.e. the elements of $\cond{A}^{\pol}$, and take the orthogonal of this set, i.e. consider the conduct $(\extend{(A^{\pol})}{V})^{\pol}$. This is a conservative extension of $\cond{A}$ along $V$ since it is a conduct (as the orthogonal of a set, it is equal to its bi-orthogonal closure), which satisfies $\extend{A}{V}\subset(\extend{(A^{\pol})}{V})^{\pol}$ and such that its set of tests contains the set $\extend{(A^{\pol})}{V}$ by definition. This conservative extension intuitively corresponds to the set of projects that act as an element of $\cond{A}$ on $V^{A}$ without any restriction on what they can do on $V$ since the set of tests they have to pass act as $\de{0}_{V}$.

These two different "completions", which are the core of the definition of additives, intuitively seem to be the \emph{minimal} and \emph{maximal} conservative extensions of $\cond{A}$ along $V$. As it turns out, one can actually show that these intuitions are true.

\begin{proposition}
Let $\cond{B}$ be a conservative extension of $\cond{A}$ along $V$. Then:
$$(\extend{A}{V})^{\pol\pol}\subset\cond{B}\subset(\extend{(A^{\pol})}{V})^{\pol}$$
\end{proposition}

\begin{proof}
Since $\cond{B}$ is a conservative extension of $\cond{A}$ along $V$, it is a conduct containing $\extend{A}{V}$. This implies that $(\extend{A}{B})^{\pol\pol}\subset\cond{B}^{\pol\pol}$. Since $\cond{B}$ is a conduct, we have shown $(\extend{A}{B})^{\pol\pol}\subset\cond{B}$.

Since $\cond{B}$ is a conservative extension of $\cond{A}$ along $V$, its orthogonal is a conduct containing $\extend{(A^{\pol})}{V}$. We therefore obtain that $(\extend{(A^{\pol})}{V})^{\pol\pol}\subset\cond{B}^{\pol}$, i.e. $\cond{B}\subset (\extend{(A^{\pol})}{V})^{\pol}$.
\end{proof}

We will now proceed to define the additives on conducts. The underlying idea is simple. On the one hand, the set $\cond{A\oplus B}$ should contain projects that either act as an element of $\cond{A}$ on $V^{A}$ and as $\de{0}_{V^{B}}$ on $V^{B}$, or act as an element of $\cond{B}$ on $V^{B}$ and as $\de{0}_{V^{A}}$ on $V^{A}$.

On the other hand, the set $\cond{A\with B}$ should contain elements that act like an element of $\cond{A}$ on $V^{A}$ and act as an element of $\cond{B}$ on $V^{B}$. The natural way to define this set is therefore to consider the intersection of the maximal conservative extensions of $\cond{A}$ and of $\cond{B}$. 

\begin{definition}[Additive Connectives]
Let $\cond{A},\cond{B}$ be conducts. We define the conducts $\cond{A\with B}$ and $\cond{A\oplus B}$ as follows:
\begin{eqnarray*}
\cond{A\oplus B}&=&((\extend{A}{B})^{\pol\pol}\cup(\extend{B}{A})^{\pol\pol})^{\pol\pol}\\
\cond{A\with B}&=&(\extend{(A^{\pol})}{B})^{\pol}\cap(\extend{(B^{\pol})}{A})^{\pol}
\end{eqnarray*}
\end{definition}

One can check that these two definitions are dual.

\begin{proposition}\label{dualityadditives}
\begin{equation*}
\cond{A^{\pol}\oplus B^{\pol}}=(\cond{A\with B})^{\pol}
\end{equation*}
\end{proposition}

\begin{proof}
The orthogonal of a union is the intersection of the orthogonals, and thus:
\begin{eqnarray*}
\cond{A^{\pol}\oplus B^{\pol}}&=&((\extend{(A^{\pol})}{B})^{\pol\pol}\cup((\extend{(B^{\pol})}{A})^{\pol\pol})^{\pol\pol}\\
&=&((\extend{(A^{\pol})}{B})^{\pol\pol\pol}\cap(\extend{(B^{\pol})}{A})^{\pol\pol\pol})^{\pol}\\
&=&((\extend{(A^{\pol})}{B})^{\pol}\cap(\extend{(B^{\pol})}{A})^{\pol})^{\pol\pol}\\
&=&\cond{(A\with B)^{\pol}}
\end{eqnarray*}
\end{proof}

\begin{proposition}\label{compintoplus}
Let $\cond{A,B}$ be conducts. Then:
$$(\{\de{a}\otimes \de{0}_{\mathnormal{B}}~|~\de{a}\in\cond{A}\}\cup\{\de{0}_{\mathnormal{A}}\otimes\de{b}~|~\de{b}\in\cond{B}\})^{\pol\pol}=\cond{A\oplus B}$$
\end{proposition}

\begin{proof}
For any sets $A,B$ such that $\cond{A}=A^{\pol\pol}$ and $\cond{B}=B^{\pol\pol}$ one has:
\begin{eqnarray*}
(A\cup B)^{\pol}&=&A^{\pol}\cap B^{\pol}\\
(\cond{A}\cap \cond{B})^{\pol\pol}&=&(\cond{A}^{\pol}\cup\cond{B}^{\pol})^{\pol}
\end{eqnarray*}
The first equality — satisfied by any sets $A,B$ — is a consequence of the fact that $\de{a}\in(A\cup B)^{\pol}$ if and only if $\de{a}\in A^{\pol}$ and $\de{a}\in B^{\pol}$ if and only if $\de{a}\in A^{\pol}\cap B^{\pol}$. The second — which is true only when $\cond{A,B}$ are conducts — comes from the fact that $(\cond{A}\cap\cond{B})^{\pol\pol}=(\cond{A}^{\pol\pol}\cap\cond{B}^{\pol\pol})^{\pol\pol}=(\cond{A}^{\pol}\cup\cond{B}^{\pol})^{\pol\pol\pol}=(\cond{A}^{\pol}\cup\cond{B}^{\pol})^{\pol}$.\\
Denoting by $\extend{A}{B}$ (resp. $\extend{B}{A}$) the set $\{\de{a}\otimes \de{0}_{\mathnormal{B}}~|~\de{a}\in\cond{A}\}$ (resp. the set $\{\de{0}_{\mathnormal{A}}\otimes\de{b}~|~\de{b}\in\cond{B}\}$), we can use these equalities to obtain:
\begin{eqnarray*}
(\extend{A}{B}\cup \extend{B}{A})^{\pol\pol}&=&((\extend{A}{B})^{\pol}\cap (\extend{B}{A})^{\pol})^{\pol}\\
&=&((\extend{A}{B})^{\pol}\cap (\extend{B}{A})^{\pol})^{\pol\pol\pol}\\
&=&((\extend{A}{B})^{\pol\pol}\cup (\extend{B}{A})^{\pol\pol})^{\pol\pol}
\end{eqnarray*}
As a consequence, $\{\extde{a}{B}~|~\de{a}\in\cond{A}\}\cup\{\extde{b}{A}~|~\de{b}\in\cond{B}\}$ generates the behaviour $\cond{A\oplus B}$.
\end{proof}

It is quite natural to wonder if these additive connectives can be characterised in some way by means of conservative extensions. The answer to this question will turn out to be negative, as shown in \autoref{negativeanswer}, even though the next proposition seems to hint at a positive answer!

\begin{proposition}\label{commonconsext}
Let $\cond{A,B}$ be conducts with disjoint carriers. If there exists a conduct $\cond{C}$ which is both a conservative extension of $\cond{A}$ along $V^{B}$ and a conservative extension of $\cond{B}$ along $V^{A}$, then $\cond{A\oplus B}\subset\cond{C}\subset\cond{A\with B}$.
\end{proposition}

\begin{proof}
Since $\cond{C}$ is both a conservative extension of $\cond{A}$ and $\cond{B}$, it contains $\extend{A}{B}$ and $\extend{B}{A}$. It therefore contains $\extend{A}{B}\cup\extend{B}{A}$, which implies that $(\extend{A}{B}\cup \extend{B}{A})^{\pol\pol}\subset \cond{B}$.

On the other hand, $\cond{C^{\pol}}$ is both a conservative extension of $\cond{A^{\pol}}$ along $V^{B}$ and a conservative extension of $\cond{B^{\pol}}$ along $V^{A}$ by Proposition \ref{dualextensions}. As a consequence, $\cond{C}^{\pol}$ contains $\cond{A^{\pol}\oplus B^{\pol}}$. Taking the orthogonal reverses the inclusion and we obtain that $\cond{C}\subset(\cond{A^{\pol}\oplus B^{\pol}})$. We then conclude using Proposition \ref{dualityadditives}.
\end{proof}

This proposition, however, is true for all the wrong reasons, and conveys the wrong intuitions about the additive connectives. In fact, the inclusion $\cond{A\oplus B}\subset\cond{A\with B}$ is not satisfied in general, i.e. the proposition holds because there exists no common conservative extensions of $\cond{A}$ and $\cond{B}$. Let us try to describe what the fact that the inclusion $\cond{A\oplus B}\subset\cond{A\with B}$ is satisfied implies for $\cond{A}$ and $\cond{B}$.

\begin{lemma}
Let $\cond{C}$ be a conservative extension of a conduct $\cond{A}$ along a set $V^{B}$ disjoint from $V^{A}$. For every project $\de{c}\in\cond{C}$, $\de{c}\plug\de{0}_{V^{B}}\in\cond{A}$.
\end{lemma}

\begin{proof}
Let $\de{c}$ be an element of $\cond{C}$ and $\de{a'}$ an element of $\cond{A^{\pol}}$. By definition of conservative extensions, $\cond{C}^{\pol}$ contains $\extend{(A^{\pol})}{B}$, which means that $\sca{c}{a'\otimes 0_{\textnormal{$V^{B}$}}}\neq0,\infty$. By the trefoil property, this implies $\sca{c\plug 0_{\textnormal{$V^{B}$}}}{a'}\neq 0,\infty$, i.e. $\de{c\plug 0}_{V^{B}}\in\cond{A}$.
\end{proof}

\begin{proposition}
Let $\cond{A}$ and $\cond{B}$ be conducts with disjoint carriers such that $\cond{A\oplus B}\subset \cond{A\with B}$. Then:
\begin{itemize}
\item if $\cond{A}\neq\cond{0}_{V^{A}}$, then $\de{0}_{V^{B}}\in\cond{B}$;
\item if $\cond{A}\neq\cond{T}_{V^{A}}$, then $\de{0}_{V^{B}}\in\cond{B^{\pol}}$;
\end{itemize}
\end{proposition}

\begin{proof}
As a consequence of Proposition \ref{dualityadditives}, the inclusion $\cond{A\oplus B}\subset \cond{A\with B}$ is equivalent to the inclusion $\cond{A^{\pol}\oplus \cond{B}^{\pol}}\subset \cond{A^{\pol}\with B^{\pol}}$.

If $\cond{A\oplus B}\subset\cond{A\with B}$, then $\extend{A}{B}$ and $\extend{B}{A}$ are contained in $\cond{A\with B}$. By definition of $\with$, this means that in particular $\extend{A}{B}$ is contained in $(\extend{(B^{\pol})}{A})^{\pol\pol}$, i.e. all elements of the form $\de{a}\otimes\de{0}_{V^{B}}$ are in $(\extend{(B^{\pol})}{A})^{\pol\pol}$. Supposing that $\cond{A}$ is not empty it implies, using the preceding lemma, that $\de{0}_{V^{B}}\in\cond{B}$. 

Using the same reasoning on the inclusion $\cond{A^{\pol}\oplus \cond{B}^{\pol}}\subset \cond{A^{\pol}\with B^{\pol}}$, one shows that $\de{0}_{V^{A}}\in\cond{A^{\pol}}$ as long as $\cond{A^{\pol}}\neq\cond{0}_{V^{A}}$, i.e. as long as $\cond{A}\neq\cond{T}_{V^{A}}$.
\end{proof}

\begin{proposition}
Let $\cond{A}$ and $\cond{B}$ be conducts with disjoint carriers. Then $\cond{A\oplus B}\subset\cond{A\with B}$ if and only if $\cond{A,B}$ are both empty or $\cond{A,B}$ are both full.
\end{proposition}

\begin{proof}
By the preceding proposition, if $\cond{A}$ is neither equal to $\cond{0}_{V^{A}}$ nor $\cond{T}_{V^{A}}$, then $\de{0}_{V^{B}}$ is an element of both $\cond{B}$ and $\cond{B^{\pol}}$. This is of course a contradiction, since $\sca{\de{0}_{\textnormal{$V^{B}$}}}{\de{0}_{\textnormal{$V^{B}$}}}=0$. Similarly, we reach a contradiction if we suppose that $\cond{B}$ is neither equal to $\cond{0}_{V^{B}}$ nor equal to $\cond{T}_{V^{B}}$. As a consequence, $\cond{A}$ and $\cond{B}$ are either empty of full (the orthogonal of the empty conduct). 

Using this fact, one can then show that $\cond{A}=\cond{0}_{V^{A}}$ if and only if $\cond{B}=\cond{0}_{V^{B}}$ (and symmetrically, $\cond{B}=\cond{T}_{V^{B}}$ if and only if $\cond{A}=\cond{T}_{V^{A}}$). To show this, we just need to remark that $\cond{A}=\cond{0}_{V^{A}}$ implies that $\cond{A}\neq\cond{T}_{V^{A}}$. By the preceding lemma this implies that $\de{0}_{V^{B}}$ is an element in $\cond{B}^{\pol}$, hence $\cond{B}^{\pol}$ is non-empty, which implies that $\cond{B}=\cond{0}_{V^{B}}$. Conversely, if $\cond{B}=\cond{0}_{V^{B}}$, then $\cond{B}\neq\cond{T}_{V^{B}}$ hence $\de{0}_{V^{A}}\in\cond{A^{\pol}}$ by the preceding proposition, and finally $\cond{A}=\cond{0}_{V^{A}}$.
\end{proof}

\begin{corollary}
Let $\cond{A}$ and $\cond{B}$ be conducts with disjoint carriers $V^{A}, V^{B}$, and let $V=V^{A}\cup V^{B}$. Then $\cond{A\oplus B}\subset\cond{A\with B}$ if and only if $\cond{A\oplus B}=\cond{0}_{V}$ or $\cond{A\with B}=\cond{T}_{V}$.
\end{corollary}

\begin{proof}
We show that $\cond{A\oplus B}=\cond{0}_{V}$ if and only if $\cond{A}=\cond{0}_{V^{A}}$ and $\cond{B}=\cond{0}_{V^{B}}$. This is a direct consequence of the fact that $\cond{A\oplus B}=(\extend{A}{B}\cup\extend{B}{A})^{\pol\pol}$. If one of $\cond{A}$ and $\cond{B}$ is non-empty, then $\extend{A}{B}\cup\extend{B}{A}$ is non-empty, and therefore $\cond{A\with B}$ is non-empty since it contains $\extend{A}{B}\cup\extend{B}{A}$. Conversely, it is clear that $\cond{0_{V^{A}}\oplus 0_{V^{B}}}=\cond{0}_{V}$.

This implies by duality that $\cond{A\with B}=\cond{T}_{V}$ if and only if $\cond{A}=\cond{T}_{V^{A}}$ and $\cond{B}=\cond{T}_{V^{B}}$.

We then conclude by using the preceding proposition.
\end{proof}

\begin{corollary}\label{negativeanswer}
Let $\cond{A}$ and $\cond{B}$ be conducts with disjoint carriers $V^{A}, V^{B}$, and let $V=V^{A}\cup V^{B}$. If $\cond{A\oplus B}\neq \cond{0}_{V}$ and $\cond{A\with B}\neq\cond{T}_{V}$, then no conduct is both a conservative extension of $\cond{A}$ along $V^{B}$ and a conservative extension of $\cond{B}$ along $V^{A}$.
\end{corollary}

\begin{proof}
This is a simple consequence of the preceding corollary and Proposition \ref{commonconsext}.
\end{proof}

Even though, as we have just seen, the additive connectives we defined cannot be described in terms of conservative extensions, it turns out that they satisfy all the properties we expect. We already saw that the duality between them is satisfied. It is now a simple exercise to see that these connectives are commutative, associative and have neutral elements.

\begin{proposition}
The $\with$ connective is commutative, associative, and has a neutral element: the full conduct on the empty carrier $\cond{T}_{\emptyset}$.
\end{proposition}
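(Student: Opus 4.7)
The plan is to treat commutativity and neutrality as short direct calculations, and to reduce associativity to a key lemma relating biorthogonality with the delocation operation $\cdot_{\uparrow V}$.

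Commutativity is evident from the symmetric form $\cond{A\with B} = ((\cond{A}^\pol)_{\uparrow V^B})^\pol\cap((\cond{B}^\pol)_{\uparrow V^A})^\pol$, since $\cap$ is commutative. For neutrality I would compute $\cond{A\with T_\emptyset}$ directly. Since $V^{\cond T_\emptyset} = \emptyset$, an easy check shows $\de{a}_{\uparrow\emptyset} = \de{a}\otimes \de{0}_\emptyset = \de{a}$ (the empty graph contributes no vertex, no cycle, has wager $0$ and unit $1$, so tensoring changes nothing), whence the first factor of the intersection collapses to $((\cond{A}^\pol)_{\uparrow\emptyset})^\pol = \cond{A}$. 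For the second factor, $\cond T_\emptyset$ contains projects of infinite wager, so by the Proposition preceding the Remark $\cond T_\emptyset^\pol = \emptyset$; hence $(\emptyset)_{\uparrow V^A} = \emptyset$ and $\emptyset^\pol = \cond T_{V^A}$, which contains $\cond{A}$. The intersection is therefore $\cond{A}$.

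Associativity is the real content. I would first use the duality $\cond{A\with B} = (\cond{A}^\pol\oplus \cond{B}^\pol)^\pol$ (Proposition \ref{dualityadditives}) to reduce the problem to associativity of $\oplus$. The standard identity $(X^{\pol\pol}\cup Y^{\pol\pol})^{\pol\pol} = (X\cup Y)^{\pol\pol}$ (immediate from $X\subseteq X^{\pol\pol}$ and biorthogonal closure of the right-hand side) gives the simpler form $\cond{A\oplus B} = (\cond{A}_{\uparrow V^B}\cup \cond{B}_{\uparrow V^A})^{\pol\pol}$. I would then introduce the symmetric ternary version
\begin{equation}
\cond{A}\oplus \cond{B}\oplus \cond{C} := (\cond{A}_{\uparrow V^B\cup V^C}\cup \cond{B}_{\uparrow V^A\cup V^C}\cup \cond{C}_{\uparrow V^A\cup V^B})^{\pol\pol}\nonumber
\end{equation}
and prove that both bracketings coincide with it. Using $(\cond{A}_{\uparrow V^B})_{\uparrow V^C} = \cond{A}_{\uparrow V^B\cup V^C}$ (from associativity of $\otimes$ together with $\de{0}_{V^B\cup V^C} = \de{0}_{V^B}\otimes\de{0}_{V^C}$), the analysis of, say, $\cond{(A\oplus B)\oplus C}$ reduces to the key lemma below.

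The main obstacle is this lemma: for any set $X$ of projects and any $V$ disjoint from their carriers, $((X^{\pol\pol})_{\uparrow V})^{\pol\pol} = (X_{\uparrow V})^{\pol\pol}$. The $\supseteq$ direction is immediate from $X\subseteq X^{\pol\pol}$. For the $\subseteq$ direction it suffices to show that $\de{y}\in X^{\pol\pol}$ implies $\de{y}_{\uparrow V}\in (X_{\uparrow V})^{\pol\pol}$. Given a test $\de{d}\in (X_{\uparrow V})^\pol$, the strategy is to transform it into a test $\de{d}\plug \de{0}_V$ for $X$: adjunction (Corollary \ref{adjunction}) followed by the cyclic property (Theorem \ref{cyclicppty}) yields $\sca{x_{\uparrow V}}{d} = \sca{d\plug x}{0_V} = \sca{x}{d\plug \de{0}_V}$ for every $\de{x}\in X$, so $\de{d}\plug\de{0}_V\in X^\pol$; then $\de{y}\in X^{\pol\pol}$ gives $\de{y}\poll(\de{d}\plug\de{0}_V)$, which translates back via the same identity to $\de{y}_{\uparrow V}\poll \de{d}$. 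I expect the careful bookkeeping of carriers (to ensure that the triple intersections required by the cyclic property are empty) and the propagation of the ``neither $0$ nor $\infty$'' clause of orthogonality to be the delicate point in the verification.
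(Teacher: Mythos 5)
Your proof is correct; note that the paper itself states this proposition without any proof, so there is no ``official'' argument to compare against, and the work you do here is genuine. Commutativity and neutrality are handled as the paper would presumably intend (the one bookkeeping point you navigate correctly is that the orthogonal of the empty set $(\cond{T}_{\emptyset}^{\pol})_{\uparrow_{V^{A}}}$ must be read as $\cond{T}_{V^{A}}$, which relies on the carrier being carried along as explicit data in the definition of a conduct). For associativity, your key lemma $((X^{\pol\pol})_{\uparrow V})^{\pol\pol}=(X_{\uparrow V})^{\pol\pol}$ is essentially the specialization of Proposition \ref{ethtenscondtens} to $F=\{\de{0}_{V}\}$ (equivalently, it is what makes Corollary \ref{tensor0tensor1}, $(\cond{A}_{\uparrow_{V}})^{\pol\pol}=\cond{A}\otimes\cond{1}_{V}$, work), and your proof of it --- turning a test $\de{d}$ on the enlarged carrier into the test $\de{d}\plug\de{0}_{V}$ on the original carrier via the adjunction (Corollary \ref{adjunction}) and the cyclic property (Theorem \ref{cyclicppty}) --- is exactly the technique the paper uses to prove Proposition \ref{ethtenscondtens}; the carrier conditions you worry about do hold, since the triple intersection $(V^{X}\cup V)\cap V^{X}\cap V=\emptyset$. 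So an equally valid shortcut would have been to cite Corollary \ref{tensor0tensor1} and reduce associativity of $\oplus$ to associativity of $\otimes$ together with the identity $(P\cup Q)^{\pol\pol}=(P^{\pol\pol}\cup Q^{\pol\pol})^{\pol\pol}$; your self-contained route proves the same facts from scratch. The only remaining items you leave implicit --- $(\de{a}_{\uparrow V^{B}})_{\uparrow V^{C}}=\de{a}_{\uparrow V^{B}\cup V^{C}}$ and the reduction of $\with$-associativity to $\oplus$-associativity via Proposition \ref{dualityadditives} --- are both sound.
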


However, these constructions leave us more or less empty-handed: how could one obtain a project in $\cond{A\with B}$ from two projects $\de{a,b}$ respectively in $\cond{A,B}$? We will need a more explicit construction of the $\with$ construction at the level of projects. In order to do this, one has to restrict to a particular class of conducts which we will call behaviours\footnote{The behaviours are the Interaction Graphs' counterpart of Girard's \emph{dichologies} in his recent "manifeste" \cite{syntran}}.

\subsubsection{behaviours}

\begin{definition}\label{behaviourdef}
A \emph{behaviour} $\cond{A}$ with carrier $V$ is a conduct $\cond{A}$ such that for all $\lambda\in\mathbf{R}$:
\begin{enumerate}
\item if $\de{a}=(a,A)\in\cond{A}$, then $\de{a}+\lambda \de{0}_{V}\in \cond{A}$.;
\item if $\de{a}=(a,A)\in\cond{A}^{\pol}$, then $\de{a}+\lambda \de{0}_{V}\in \cond{A}^{\pol}$.
\end{enumerate}
\end{definition}

\begin{remark}
The orthogonal of a behaviour is a behaviour.
\end{remark}

This definition of behaviour is however quite cumbersome and difficult to work with, as it is quite unnatural to reason about the elements of the orthogonal. We will thus first obtain a characterisation of behaviours  (Proposition \ref{charac}) which will simplify greatly the proofs of the results of this section.

\begin{proposition}\label{wagerfreeorth}
If $A$ is a non-empty set of projects with carrier $V$ such that $\de{a}\in A\Rightarrow \de{a+\lambda 0}_{V}\in A$, then any project in $A^{\pol}$ is wager-free, i.e. if $(a,A)\in A^{\pol}$ then $a=0$.
\end{proposition}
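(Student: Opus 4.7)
The plan is to exploit the stability hypothesis, $\de{c}\in A\Rightarrow \de{c}+\lambda\de{0}_V\in A$, in order to constrain the wager of any orthogonal project. The key computational observation is that $\de{0}_V=(0,(V,\emptyset))$ has the empty graph as its unique slice; since an alternating cycle in $(V,\emptyset)\bicol B_j$ would need at least one edge of each colour, and the first graph contributes none, $\elemcycl{(V,\emptyset),B_j}=\emptyset$ and hence $\meas{(V,\emptyset),B_j}=0$ for every slice $B_j$. By bilinearity of $\meas{\cdot,\cdot}$ on sliced graphs this yields $\meas{C+\lambda(V,\emptyset),B}=\meas{C,B}$ for every sliced graph $B$. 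Combined with $\unit{C+\lambda(V,\emptyset)}=\unit{C}+\lambda$ (the slice added by $\de{0}_V$ has weight $1$) and with the fact that the wager of $\de{0}_V$ is zero, the definition of $\sca{\cdot}{\cdot}$ will give, for every $\de{c}=(c,C)$ and every $\de{b}=(b,B)$,
\begin{equation}
\sca{\de{c}+\lambda\de{0}_V}{\de{b}}=\sca{\de{c}}{\de{b}}+\lambda b.\nonumber
\end{equation}

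From this identity the conclusion follows by a case analysis on $b$. Fix $\de{b}=(b,B)\in A^{\pol}$ and, using non-emptiness, pick some $\de{c}\in A$. The hypothesis ensures $\de{c}+\lambda\de{0}_V\in A$ for every $\lambda\in\mathbb{R}$, so the scalar product above must be neither $0$ nor $\infty$ for every $\lambda$. If $b=\infty$, the $\infty$-absorption convention forces $\lambda b=\infty$ and hence the whole scalar product to be $\infty$, contradicting orthogonality. If $b$ is a nonzero real, then setting $\lambda=0$ first shows that $\sca{\de{c}}{\de{b}}$ is a finite real, and then the choice $\lambda=-\sca{\de{c}}{\de{b}}/b$ makes the scalar product vanish, again contradicting orthogonality. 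Only $b=0$ remains.

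I do not foresee a real obstacle: the heart of the argument is the observation that the empty graph cannot participate in any alternating cycle, which reduces everything to the one-line identity above and a two-case analysis. The only point worth stating explicitly in the writeup is the use of the $\infty$-absorption convention, which is what rules out $b=\infty$.
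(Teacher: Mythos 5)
Your proof is correct and follows essentially the same route as the paper: both hinge on the identity $\sca{\de{c}+\lambda\de{0}_V}{\de{b}}=\sca{\de{c}}{\de{b}}+\lambda b$ and then choose $\lambda$ to force the pairing to $0$ (or to $\infty$) unless $b=0$. Your writeup is somewhat more explicit about why $\meas{(V,\emptyset),B}=0$ and about the $b=\infty$ case, but there is no substantive difference.
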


\begin{proof}
Chose $\de{a}=(a,A)\in\cond{A}$, which is possible since $\cond{A}$ is supposed to be non-empty. Then for any $\de{b}=(b,B)\in\cond{A}^{\pol}$, $\sca{a}{b}\neq 0$. But $\de{a}+\lambda\de{0}\in\cond{A}$ for any $\lambda\in\mathbb{R}$. Then $\sca{a+\lambda 0}{b}=\sca{a}{b}+b\lambda$ must be non-zero, hence $b$ must be equal to $0$.
\end{proof}

\begin{proposition}\label{fellorth}
If $A$ is a non-empty set of projects with the same carrier $V^{A}$ such that $(a,A)\in A$ implies $a=0$, then $\de{b}\in A^{\pol}$ implies $\de{b}+\lambda\de{0}_{V^{A}}\in A^{\pol}$ for all $\lambda\in\mathbb{R}$.
\end{proposition}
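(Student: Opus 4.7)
My plan is to fix an arbitrary $\de{b}=(b,B)\in A^{\pol}$, a scalar $\lambda\in\mathbb{R}$, and an arbitrary $\de{a}=(a,A')\in A$ (writing $A'$ for the graph to avoid collision with the name of the set), and show that $\sca{a}{b+\lambda 0_{V^A}}$ is neither $0$ nor $\infty$. The case $\lambda=0$ reduces at once to the assumption $\de{b}\in A^{\pol}$, so I may assume $\lambda\neq 0$ and invoke the bilinearity proposition just proved.

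That proposition is stated for the first argument, so I first note that the pairing is in fact symmetric: unfolding $\sca{a}{b}=a\unit{B}+\unit{A'}b+\meas{A',B}$ and using that $\meas{F,G}=\meas{G,F}$ (the set $\elemcycl{F,G}$ and the weights of its elements depend only on the unordered two-coloring of $F\bicol G$, extended to sliced graphs by bilinearity), one obtains $\sca{a}{b}=\sca{b}{a}$. Hence the bilinearity transfers to the second slot and yields
\begin{equation}
\sca{a}{b+\lambda 0_{V^A}}=\sca{a}{b}+\lambda\sca{a}{0_{V^A}}. \nonumber
\end{equation}

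Next, I compute $\sca{a}{0_{V^A}}$ directly from the definition: it equals $a\cdot\unit{0_{V^A}}+\unit{A'}\cdot 0+\meas{A',(V^A,\emptyset)}$. The first term vanishes by the hypothesis $a=0$ imposed on members of $A$; the second is trivially $0$; and the third vanishes because the empty graph $(V^A,\emptyset)$ contains no edges, so no alternating $1$-cycle (which would have to contain edges of both colors) exists in $A'\bicol(V^A,\emptyset)$. Hence $\sca{a}{0_{V^A}}=0$, and the equation above collapses to $\sca{a}{b+\lambda 0_{V^A}}=\sca{a}{b}$, which is not $0$ or $\infty$ since $\de{b}\in A^{\pol}$.

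Since $\de{a}\in A$ was arbitrary, we conclude $\de{b}+\lambda\de{0}_{V^A}\in A^{\pol}$. The only mildly delicate point in the argument is the transfer of bilinearity to the second slot, which rests on symmetry of the measurement; the rest is an unfolding of definitions using the wager-freeness hypothesis on $A$.
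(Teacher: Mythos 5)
Your proof is correct and follows essentially the same route as the paper's: both arguments come down to the identity $\sca{a}{b+\lambda 0}=\sca{a}{b}+\lambda a$ together with the hypothesis $a=0$. The paper obtains this identity by direct expansion of the measurement, whereas you derive it via symmetry of the pairing plus the bilinearity proposition and an explicit computation of $\sca{a}{0_{V^{A}}}$ — a slightly longer but equally valid unfolding of the same computation.
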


\begin{proof}
Chose $\de{a}=(a,A)\in\cond{A}$. For any project $\de{b}$, we have $\sca{a}{b+\lambda 0}=\sca{a}{b}+\lambda a$. Since $a=0$, we get that $\sca{a}{b+\lambda 0}=\sca{a}{b}$. Hence, if $\de{b}\in\cond{A}^{\pol}$, $\de{b+\lambda 0}\in\cond{A}^{\pol}$.
\end{proof}

\begin{corollary}\label{properbehaviour}\label{charact}
If a conduct $\cond{A}$ with carrier $V$ is such that
\begin{enumerate}
\item if $\de{a}\in\cond{A}$, then $\de{a}+\lambda\de{0}_{V}\in\cond{A}$;
\item if $\de{a}=(a,A)\in\cond{A}$ then $a=0$;
\item $\cond{A}$ is non-empty.
\end{enumerate}
Then $\cond{A}$ is a behaviour, and its orthogonal satisfies all the conditions above. We call such a behaviour \emph{proper}.
\end{corollary}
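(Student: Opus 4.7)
The plan is to read off the two behaviour closure conditions from Propositions \ref{wagerfreeorth} and \ref{fellorth} directly, and then verify that the orthogonal inherits the three proper-behaviour properties. The two preceding propositions are essentially designed as the two halves of this statement, so the bulk of the work reduces to checking hypotheses.

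First I would show that $\cond{A}$ is a behaviour. Condition (i) of the behaviour definition, namely closure of $\cond{A}$ under $\de{a}\mapsto \de{a}+\lambda\de{0}_V$, is just hypothesis (1) of the corollary. For condition (ii), I apply Proposition \ref{fellorth} to the set $\cond{A}$: it is non-empty by hypothesis (3), and every $(a,A)\in\cond{A}$ satisfies $a=0$ by hypothesis (2), so the conclusion of that proposition gives exactly that $\de{b}\in\cond{A}^\pol$ implies $\de{b}+\lambda\de{0}_V\in\cond{A}^\pol$.

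Next I would verify that $\cond{A}^\pol$ satisfies the three conditions of a proper behaviour. Condition (1) for $\cond{A}^\pol$ is the property just established. Condition (2) for $\cond{A}^\pol$ follows from Proposition \ref{wagerfreeorth}, whose hypotheses are precisely that $\cond{A}$ is non-empty (hypothesis (3)) and closed under $\de{a}\mapsto\de{a}+\lambda\de{0}_V$ (hypothesis (1)); the proposition then guarantees every element of $\cond{A}^\pol$ is wager-free.

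The main — and only — point that does not drop out of the two lemmas is condition (3) for $\cond{A}^\pol$, i.e.\ its non-emptiness. Here I would argue by contradiction: if $\cond{A}^\pol=\emptyset$, then $\cond{A}=\cond{A}^{\pol\pol}=\emptyset^\pol$, which by the orthogonality definition is the full conduct $\cond{T}_V$ containing every project on $V$, in particular projects with non-zero (or infinite) wager, contradicting hypothesis (2). This is the one step that requires going through the biorthogonal closure, but it is short; the rest is a bookkeeping of hypotheses.
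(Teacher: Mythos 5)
Your proposal is correct and follows essentially the same route as the paper's own proof: Proposition \ref{fellorth} gives closure of $\cond{A}^{\pol}$ under $\de{a}\mapsto\de{a}+\lambda\de{0}_{V}$, Proposition \ref{wagerfreeorth} gives wager-freeness of $\cond{A}^{\pol}$, and non-emptiness of $\cond{A}^{\pol}$ is obtained by the same biorthogonal contradiction (if $\cond{A}^{\pol}=\emptyset$ then $\cond{A}=\cond{A}^{\pol\pol}$ would contain every project on $V$, contradicting that it contains only wager-free ones). No gaps.
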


\begin{proof}
From the preceding proposition and the second and third conditions, we get that $\cond{A}^{\pol}$ satisfies that $\de{b}\in\cond{A}^{\pol}\Rightarrow \de{b+\lambda 0}\in\cond{A}^{\pol}$. Hence $\cond{A}$ is a behaviour. Moreover, if $\cond{A}^{\pol}$ were empty, any project $\de{a}$ with the same carrier as $\cond{A}$ would be in the orthogonal of $\cond{A}^{\pol}$. Hence $\cond{A}$ wouldn't be a conduct, since it contains only wager-free projects. Finally, all projects in $\cond{A}^{\pol}$ are wager-free from proposition \ref{wagerfreeorth}.
\end{proof}

\begin{proposition}\label{charac}
A behaviour is either proper, equal to $\cond{0}_{V}=\emptyset$ or equal to $\cond{T}_{V}=\cond{0}_{V}^{\pol}$.
\end{proposition}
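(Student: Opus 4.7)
The plan is to show that any behaviour which is neither empty nor the top conduct satisfies the three conditions of Corollary \ref{properbehaviour}, and is hence proper. The two key ingredients already available are Proposition \ref{wagerfreeorth} (in a set closed under adding $\lambda\de{0}_V$, any orthogonal project is wager-free) and the remark preceding the proposition (the orthogonal of a behaviour is again a behaviour).

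So first I would assume $\cond{A}$ is a behaviour with $\cond{A}\neq \cond{0}_V$ and $\cond{A}\neq \cond{T}_V$. The first assumption gives non-emptiness of $\cond{A}$, which is condition (iii) of Corollary \ref{properbehaviour}. Closure under $\de{a}\mapsto \de{a}+\lambda\de{0}_V$ is condition (i), and it is immediate from clause (1) of the definition of a behaviour. The only non-trivial thing to check is condition (ii): every project in $\cond{A}$ is wager-free.

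Next I would argue that $\cond{A}^{\pol}$ is also non-empty. Indeed, if $\cond{A}^{\pol}=\emptyset=\cond{0}_V$, then $\cond{A}=\cond{A}^{\pol\pol}=\cond{0}_V^{\pol}=\cond{T}_V$, contradicting our hypothesis. Now, by the remark, $\cond{A}^{\pol}$ is itself a behaviour; in particular it is non-empty and closed under $\de{b}\mapsto\de{b}+\lambda\de{0}_V$. Proposition \ref{wagerfreeorth} applied to the set $\cond{A}^{\pol}$ then says that every project in $(\cond{A}^{\pol})^{\pol}=\cond{A}$ is wager-free, which is exactly condition (ii). So $\cond{A}$ is proper.

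The argument is essentially a two-step symmetric use of Proposition \ref{wagerfreeorth}: first on $\cond{A}$ (giving wager-freeness on the orthogonal side, which we will not actually need here) and then on $\cond{A}^{\pol}$ (giving wager-freeness of $\cond{A}$ itself). The only step requiring any care is verifying that $\cond{A}\neq\cond{T}_V$ forces $\cond{A}^{\pol}$ to be non-empty, so that the second application of Proposition \ref{wagerfreeorth} is legitimate; this is where the dichotomy between the three cases actually becomes visible, and it is the place where a reader could stumble if the characterisation of $\cond{T}_V$ via emptiness of the orthogonal is not kept in mind.
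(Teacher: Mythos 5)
Your argument is correct and is essentially the paper's own proof: dispatch the two degenerate cases, observe that otherwise $\cond{A}^{\pol}$ is non-empty and closed under adding $\lambda\de{0}_{V}$, and apply Proposition \ref{wagerfreeorth} to $\cond{A}^{\pol}$ to conclude that $\cond{A}=(\cond{A}^{\pol})^{\pol}$ is wager-free, hence proper. The only cosmetic difference is that you invoke the remark that the orthogonal of a behaviour is a behaviour, where clause (2) of the definition already gives the needed closure of $\cond{A}^{\pol}$ directly.
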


\begin{proof}
Let $\cond{A}$ be a behaviour. If it is empty, then $\cond{A}=\cond{0}$. If $\cond{A}^{\pol}$ is empty, then $\cond{A}=\cond{T}$. In the other cases, since $\cond{A}^{\pol}$ is non-empty we get that $\cond{A}$ contains only wager-free projects from proposition \ref{wagerfreeorth}. Hence $\cond{A}$ is proper since it satisfies all the needed conditions.
\end{proof}

As a consequence of this proposition, it is easy to show that $\cond{1}$ and $\cond{\bot}$ are not behaviours. However, the class of behaviours, which is closed under taking the orthogonal, is closed under the multiplicative and additive connectives, as the following propositions state.

\begin{proposition}\label{closedmult}
If $\cond{A}$ and $\cond{B}$ are behaviours with disjoint carriers, then $\cond{A\multimap B}$ and $\cond{A\otimes B}$ are behaviours.
\end{proposition}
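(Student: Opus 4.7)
The plan is to handle the tensor case first and recover $\multimap$ from duality. By Proposition \ref{duality} we have $\cond{A\multimap B}=(\cond{A}\otimes\cond{B}^{\pol})^{\pol}$, and since the orthogonal of a behaviour is again a behaviour (see the remark following the definition), $\multimap$ will come for free once $\otimes$ is settled: $\cond{B}^{\pol}$ is a behaviour, so if $\cond{A}\otimes\cond{B}^{\pol}$ is then so is its orthogonal.

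For the tensor, the trichotomy of Proposition \ref{charac} disposes of the degenerate cases: if $\cond{A}$ or $\cond{B}$ equals $\cond{0}_V$ or $\cond{T}_V$, the tensor is one of $\cond{0}_{V^A\cup V^B}$ or $\cond{T}_{V^A\cup V^B}$, both behaviours trivially. In the remaining case both are proper, and the strategy is to verify the three conditions of Corollary \ref{charact} for $\cond{A\otimes B}$. The crucial preliminary observation is that disjointness of $V^A$ and $V^B$ forces $\meas{A,B}=0$: any $A$-edge and any $B$-edge have endpoints in disjoint vertex sets, so consecutive edges of different colours in $A\bicol B$ can never share a vertex and no alternating cycle exists. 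Combined with wager-freeness of proper behaviours, this gives $\sca{a}{b}=0$, so the generating set $S=\{\de{a}\otimes\de{b}\mid\de{a}\in\cond{A},\de{b}\in\cond{B}\}$ consists of wager-free projects and is non-empty. Applying Proposition \ref{fellorth} gives that $S^{\pol}$ is closed under $+\lambda\de{0}_{V^A\cup V^B}$, whence Proposition \ref{wagerfreeorth} yields that $\cond{A\otimes B}=S^{\pol\pol}$ is wager-free (the case $S^{\pol}=\emptyset$ collapses to $\cond{A\otimes B}=\cond{T}_{V^A\cup V^B}$ and is fine).

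The main obstacle is closing $\cond{A\otimes B}$ itself under $+\lambda\de{0}_{V^A\cup V^B}$. A direct calculation using that measurements against the empty graph vanish gives $\sca{c+\lambda 0}{d}=\sca{c}{d}+\lambda d$, reducing the task to showing that every $\de{d}=(d,D)\in S^{\pol}$ has $d=0$. For this I would exploit both behaviour structures in turn. The closure of $\cond{A}$ under $+\mu\de{0}_{V^A}$ makes the map $\mu\mapsto\sca{(a+\mu 0_{V^A})\otimes b}{d}$ affine in $\mu$; its linear coefficient must vanish, for otherwise some $\mu$ would make the scalar product zero, contradicting $\de{d}\in S^{\pol}$. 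This yields an identity of the form $\unit{B}\,d+M(B,D)=0$ for a measurement term $M(B,D)$ not involving $\mu$. Iterating with $\de{b}+\nu\de{0}_{V^B}\in\cond{B}$ in place of $\de{b}$ — and using that the extra empty-graph measurements vanish — produces $(\unit{B}+\nu)\,d+M(B,D)=0$; subtracting the two identities leaves $\nu d=0$ for all $\nu\in\mathbb{R}$, hence $d=0$. All three conditions of Corollary \ref{charact} are then established, so $\cond{A\otimes B}$ is a proper behaviour, and $\cond{A\multimap B}$ follows by the duality argument of the first paragraph.
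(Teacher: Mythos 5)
Your proposal is correct, but it runs the paper's argument in the mirror direction: the paper establishes that $\cond{A\multimap B}$ is a behaviour directly and then obtains $\cond{A\otimes B}=(\cond{A\multimap B^{\pol}})^{\pol}$ by duality, whereas you establish the tensor directly and recover $\multimap$ as $(\cond{A}\otimes\cond{B}^{\pol})^{\pol}$. The two routes buy slightly different conveniences. Your generating-set formulation makes the degenerate cases of Proposition \ref{charac} almost automatic (an empty generating set forces $\cond{0}$, an infinite-wager generator forces $\cond{T}$), while the paper has to argue separately that $\cond{T\multimap A}=\cond{0}$ for proper $\cond{A}$ via the propagation of infinite wagers through application. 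Conversely, for wager-freeness in the proper case the paper has the shorter computation: it uses that $\de{f}\plug\de{a+\lambda 0}$ lands in the proper (hence wager-free) behaviour $\cond{B}$, so $f\unit{A}+f\lambda+\scalar{F,A}=0$ for all $\lambda$, giving $f=0$ in one stroke; your double perturbation in $\mu$ and $\nu$ reaches the same conclusion for $\de{d}\in(\cond{A}\odot\cond{B})^{\pol}$ but needs the behaviour property of both $\cond{A}$ and $\cond{B}$ rather than just properness of the target. Both arguments rest on the same mechanism (affine dependence of $\sca{\cdot}{\cdot}$ on a $\lambda\de{0}$ perturbation, together with the fact that orthogonality forbids the value $0$), and your preliminary observation that disjointness of carriers forces $\meas{A,B}=0$ is exactly the fact the paper uses implicitly when asserting that $\cond{A\odot B^{\pol}}$ is wager-free. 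The remaining scaffolding (Propositions \ref{wagerfreeorth} and \ref{fellorth}, Corollary \ref{charact}, closure of behaviours under orthogonal) is used identically in both proofs, so I consider your route a valid and marginally tidier reorganization rather than a new argument.
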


\begin{proof}
First suppose $\cond{A},\cond{B}$ are proper. Let $\de{f}\in\cond{A \multimap B}$, $\de{a}\in\cond{A}$ and $\de{b}\in\cond{B}^{\pol}$ be projects. Then, from the adjunction we have that $\de{f}\poll\de{a\otimes b}$ if and only if $\de{f\deplug a}\poll \de{b}$. Since $\cond{A\odot B^{\pol}}$ is non-empty and contains only wager-free projects, $\de{f}\in\cond{A\multimap B}$ implies $\de{f+\lambda 0}\in\cond{A\multimap B}$. Moreover, for all $\lambda\in\mathbb{R}$, $\de{f}\deplug\de{a+\lambda 0}\in\cond{B}$, hence $f\unit{A}+f\lambda+\scalar{F,A}= 0$. We therefore get that $f=0$. Hence $\cond{A\multimap B}$ contains only wager-free projects. Finally, either $\cond{A\multimap B}$ is empty, and therefore equal to $\cond{0}$, either it is non-empty and therefore satisfies the conditions needed to be a proper behaviour.

Now, if either $\cond{A}=\cond{0}$ or $\cond{B}=\cond{T}$, it is easy to see that $\cond{A\multimap B}=\cond{T}$.

The last case is when $\cond{A}=\cond{T}$ and $\cond{B}$ is proper or $\cond{B}=\cond{0}$ while $\cond{A}$ is proper. Then by definition of the linear implication we get that $\cond{A\multimap B}=\cond{0}$ in the second case. To prove that $\cond{T}\multimap \cond{A}=\cond{0}$ when $\cond{A}$ is proper, notice that if $\de{f}$ is with carrier $V\cup V^{A}$, then applying $\de{f}$ to a project with infinite wager yields a project with infinite wager. Hence, if $\cond{T\multimap A}$ were to be non-empty, $\cond{A}$ should contain a project with an infinite wager, i.e. $\cond{A}$ would not be proper.

From the duality of multiplicative connectives, $\cond{A\otimes B}=(\cond{A\multimap B^{\pol}})^{\pol}$. Since $\cond{A},\cond{B}^{\pol}$ are behaviours, $\cond{A\multimap B^{\pol}}$ is a behaviour. Therefore $\cond{A\otimes B}$ is the orthogonal of a behaviour, hence a behaviour.
\end{proof}

\begin{proposition}\label{closedadd}
Let $\cond{A}$, $\cond{B}$ be behaviours. Then $\cond{A\with B}$ and $\cond{A\oplus B}$ are behaviours.
\end{proposition}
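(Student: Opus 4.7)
The plan is to reduce by duality to the case of $\cond{A\with B}$, dispatch the degenerate cases via Proposition~\ref{charac}, and then verify the two defining conditions of a behaviour in the generic case. Applying Proposition~\ref{dualityadditives} to the behaviours $\cond{A}^{\pol}$ and $\cond{B}^{\pol}$ yields $\cond{A\oplus B}=(\cond{A}^{\pol}\with\cond{B}^{\pol})^{\pol}$, so by the remark that the orthogonal of a behaviour is a behaviour it suffices to treat the $\with$. If $\cond{A}=\cond{0}_{V^A}$ then $\cond{A}^{\pol}=\cond{T}_{V^A}$ contains a project of infinite wager, whose $\uparrow_{V^B}$-extension still has infinite wager and is therefore orthogonal to no project of carrier $V^A\cup V^B$; this forces $((\cond{A}^{\pol})_{\uparrow_{V^B}})^{\pol}=\emptyset$ and hence $\cond{A\with B}=\cond{0}_{V^A\cup V^B}$. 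If $\cond{A}=\cond{T}_{V^A}$ then $\cond{A}^{\pol}=\emptyset$ and $((\cond{A}^{\pol})_{\uparrow_{V^B}})^{\pol}=\cond{T}_{V^A\cup V^B}$, reducing $\cond{A\with B}$ to $((\cond{B}^{\pol})_{\uparrow_{V^A}})^{\pol}$, which is dispatched by the symmetric analysis on $\cond{B}$.

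In the main case, where both $\cond{A}$ and $\cond{B}$ are proper, I rely on a key preservation property: the operation $\cond{C}\mapsto\cond{C}_{\uparrow V}$ preserves wager-freeness and commutes with the translation by $\lambda\de{0}$. Indeed, $\de{c}_{\uparrow V}=\de{c}\otimes\de{0}_V$ has the same wager as $\de{c}$, and a short calculation using the definition of $\cup$ on sliced graphs yields the identity $\de{c}_{\uparrow V}+\lambda\de{0}_{V^C\cup V}=(\de{c}+\lambda\de{0}_{V^C})_{\uparrow V}$. Consequently $(\cond{A}^{\pol})_{\uparrow_{V^B}}$ inherits from the proper behaviour $\cond{A}^{\pol}$ the properties of being non-empty, consisting only of wager-free projects, and being closed under $+\lambda\de{0}_{V^A\cup V^B}$, and analogously for $(\cond{B}^{\pol})_{\uparrow_{V^A}}$.

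Proposition~\ref{fellorth} now yields that $((\cond{A}^{\pol})_{\uparrow_{V^B}})^{\pol}$ and $((\cond{B}^{\pol})_{\uparrow_{V^A}})^{\pol}$ are both closed under $+\lambda\de{0}$, so their intersection $\cond{A\with B}$ is, which is the first defining condition of a behaviour. For the second, Proposition~\ref{dualityadditives} writes $(\cond{A\with B})^{\pol}=Z^{\pol\pol}$ with $Z=((\cond{A}^{\pol})_{\uparrow_{V^B}})^{\pol\pol}\cup((\cond{B}^{\pol})_{\uparrow_{V^A}})^{\pol\pol}$; using $X^{\pol\pol\pol}=X^{\pol}$ and distribution of orthogonality over unions one obtains $Z^{\pol}=\cond{A\with B}$. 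If $\cond{A\with B}$ is empty it equals $\cond{0}$, trivially a behaviour; otherwise Proposition~\ref{wagerfreeorth} applied to $(\cond{A}^{\pol})_{\uparrow_{V^B}}$ forces $\cond{A\with B}$ to be wager-free, and a second invocation of Proposition~\ref{fellorth} applied to $Z^{\pol}=\cond{A\with B}$ gives that $Z^{\pol\pol}=(\cond{A\with B})^{\pol}$ is closed under $+\lambda\de{0}$. The main obstacle is the bookkeeping required to thread the alternating wager-freeness and closure-under-$+\lambda\de{0}$ properties through successive applications of orthogonality (each switching the role of Propositions~\ref{wagerfreeorth} and~\ref{fellorth}), together with the edge cases stemming from the asymmetry between $\cond{0}$ and $\cond{T}$ caused by infinite-wager projects.
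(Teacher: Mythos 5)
Your proof is correct and follows essentially the same route as the paper's: reduction of $\oplus$ to $\with$ by duality, a case split between proper and degenerate behaviours via Proposition~\ref{charac}, the observation that $(\cdot)_{\uparrow V}$ preserves non-emptiness, wager-freeness and closure under $+\lambda\de{0}$, and Propositions~\ref{wagerfreeorth} and~\ref{fellorth} to push these properties through orthogonality. The only cosmetic differences are that you verify the two behaviour conditions directly on the intersection rather than first showing each component $((\cond{A}^{\pol})_{\uparrow_{V^{B}}})^{\pol}$ is itself a behaviour, and that you dispatch the case $\cond{A}=\cond{0}$ by exhibiting an infinite-wager project in the lift of $\cond{T}_{V^{A}}$ rather than arguing via $\cond{T}\multimap\cond{\bot}$.
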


\begin{proof}
Let $\cond{A}$ be a proper behaviour and let $V^{B}$ be such that $V^{B}\cap V^{A}=\emptyset$. Then if $\de{a}\otimes \de{0}\in\extend{A}{B}$, $\de{a\otimes 0+\lambda0}\in\extend{A}{B}$ since $\de{(a+\lambda 0)\otimes 0}=\de{a\otimes 0}+\lambda \de{0}$. Moreover, $\cond{A}$ contains only wager-free projects, hence $\extend{A}{B}$ contains only wager-free projects. Since $\extend{A}{B}$ is closed under expansion, non-empty and contains only wager-free projects, it generates a behaviour $(\extend{A}{B})^{\pol\pol}$. Thus if $\cond{A}$ is a proper behaviour, $\cond{A}^{\pol}$ is a proper behaviour, which implies as just showed that $((\extend{(A^{\pol})}{B})^{\pol\pol}$ is a behaviour, and finally $(\extend{(A^{\pol})}{B})^{\pol}$ is a behaviour.

If $\cond{A}=\cond{T}_{V^{A}}$, we have $\cond{T}_{V^{A}}^{\pol}=\cond{0}_{V^{A}}$, hence $((\extend{(\Topp{A}^{\pol})}{B})^{\pol}=\cond{T}_{V^{A}\cup V^{B}}$. Thus $(\extend{(A^{\pol})}{B})^{\pol}$ is a behaviour.

If $\cond{A}=\cond{0}_{V^{A}}$, then $\cond{A}^{\pol}=\cond{T}_{V^{A}}$, hence $(\extend{(A^{\pol})}{B})^{\pol}=\{\de{a}\otimes \de{0}_{V^{B}}~|~\de{a}\in\cond{T}_{V^{A}}\}^{\pol}=(\cond{T}_{V^{A}}\otimes\cond{1}_{V^{B}})^{\pol}=\cond{T}\multimap \cond{\bot}_{V^{B}}$. But, since there exists infinite wager projects in $\cond{T}$, any $\de{f}\in\cond{T}\multimap \cond{\bot}_{V^{B}}$ would yield a project with an infinite wager. Hence $\cond{\bot}_{V^{B}}$ should contain a project with an infinite wager. But this is not possible, since $\cond{\bot}_{V^{B}}\neq\cond{T}_{V^{B}}$. Hence $\cond{T}\multimap\cond{\bot}_{V^{B}}$ is necessarily empty, i.e. $(\extend{(\Zero{A}^{\pol})}{B})^{\pol}=\cond{0}_{V^{A}\cup V^{B}}$ is a behaviour.

This implies that if $\cond{A,B}$ are behaviours, the conduct $\cond{A\with B}$ is a behaviour, as the intersection of $(\extend{(A^{\pol})}{B})^{\pol}$ and $(\extend{(B^{\pol})}{A})^{\pol}$ which are behaviours.

Since the orthogonal of a behaviour is a behaviour, we get that if $\cond{A,B}$ are behaviours, then $\cond{A^{\pol},B^{\pol}}$ are behaviours, hence $\cond{A^{\pol}\with B^{\pol}}$ is a behaviour, and eventually $\cond{A\oplus B}=(\cond{A^{\pol}\with B^{\pol}})^{\pol}$ is a behaviour.
\end{proof}

\subsection{Additive Constructions on Projects}

Now, there is something more. We justified the restriction to behaviours by the fact that additive connectives on conducts were not obtained through an operation at the level of projects, and therefore did not allow us to interpret the additive rules of sequent calculus. As we show in the next results, this restriction bore its fruits since we are now able to define the interpretation of the $\with$ rule at the level of projects.

\begin{definition}
Let $\cond{A}$ and $\cond{B}$ be non-empty conducts with disjoint carriers, we define the set
\begin{equation*}
\cond{A}+\cond{B}=\{\extde{a}{B}+\extde{b}{A}~|~\de{a}\in\cond{A},\de{b}\in\cond{B}\}
\end{equation*}
\end{definition}

\begin{lemma}\label{plusdanswith}
Let $\cond{A,B}$ be non-empty behaviours. Then $\cond{A+B}\subset\cond{A\with B}$.
\end{lemma}

\begin{proof}
Choose $\de{f}=\de{a\otimes 0_{\text{$V^{B}$}}+b\otimes 0_{\text{$V^{A}$}}}\in\cond{A+B}$. By the preceding proposition, we can write:
\begin{eqnarray*}
\cond{A\with B}&=&\cond{(A^{\pol}\oplus B^{\pol})^{\pol}}\\
&=&(\{\extde{a}{B}~|~\de{a}\in\cond{A}\}\cup\{\extde{b}{A}~|~\de{b}\in\cond{B}\})^{\pol}
\end{eqnarray*}
It is thus sufficient to show that for any $\de{c}\in\extend{(A^{\pol})}{B}\cup\extend{(B^{\pol})}{A}$, one has $\de{f}\poll\de{c}$. We suppose, without loss of generality, that $\de{c}\in\extend{(A^{\pol})}{B}$, i.e. $\de{c}=\de{a'}\otimes \de{0}_{V^{B}}$ for a chosen $\de{a'}\in\cond{A}^{\pol}$. One can notice that 
$$\sca{b\otimes 0_{\text{$V^{A}$}}}{a'\otimes 0_{\text{$V^{B}$}}}=0$$ because the wagers of $\de{b}$ and $\de{a'}$ are equal to $0$. We thus obtain:
\begin{eqnarray*}
\sca{f}{c}&=&\sca{a\otimes 0_{\text{$V^{B}$}}+b\otimes 0_{\text{$V^{A}$}}}{a'\otimes 0_{\text{$V^{B}$}}}\\
&=&\sca{a\otimes 0_{\text{$V^{B}$}}}{a'\otimes 0_{\text{$V^{B}$}}}+\sca{b\otimes 0_{\text{$V^{A}$}}}{a'\otimes 0_{\text{$V^{B}$}}}\\
&=&\sca{a}{a'}
\end{eqnarray*}
Finally, $\de{f}\poll\de{c}$.
\end{proof}

However, the set $\cond{A+B}$ do not in general generate $\cond{A\with B}$, as we will show in the following proposition.

\begin{proposition}\label{counterexgraphs}
Suppose the map $m: ]0,1]\rightarrow\realposN\cup\{\infty\}$ takes a value $\mu\neq 0,\infty$ (i.e. there exists $x\in]0,1]$ such that $m(x)\neq 0,\infty$). If $\cond{A,B}$ are proper behaviours, then $\cond{(A+B)^{\pol[m]\pol[m]}}\subsetneq\cond{A\with B}$.
\end{proposition}

\begin{proof}
Let $\de{a'}\in\cond{A}^{\pol[m]}$, and define $\de{0}_{u}$ as the project $(0,U)$ where $U$ is the one-sliced graph with a unique edge from a source vertex $s_{a}\in V^{A}$ and a target vertex $s_{b}\in V^{B}$. Then for all project $\de{c}$ in $\cond{A+B}$, one has $\sca{c}{0_{\text{$U$}}}=0$ and therefore $\sca{c}{a'\otimes 0_{\text{$V^{B}$}}+0_{\text{$U$}}}=\sca{c}{a'\otimes 0_{\text{$V^{B}$}}}\neq 0,\infty$. As a consequence, 
$$\de{d}=\de{a'\otimes 0_{\text{$V^{B}$}}+0_{\text{$U$}}}\in\cond{(A+B)^{\pol}}$$

We will now show that for all $\nu$, $0<\lambda<1$ and $\de{a},\de{b}$ in $\cond{A,B}$ respectively, the project $\de{t_{\nu}}=\de{a+b+\nu 0}_{\lambda U^{\ast}}$ is an element of $\cond{(A^{\pol[m]}\oplus B^{\pol[m]})^{\pol[m]}}$ — where $U^{\ast}$ is the one-sliced graph with a unique edge whose source is $s_{b}$ and target if $s_{a}$. Indeed, for all $\de{a''\otimes 0}\in\extend{(A^{\pol[m]})}{B}$ (resp. $\de{b''\otimes 0}\in\extend{(B^{\pol[m]})}{A}$), we can compute:
\begin{equation*}
\begin{array}{c}
\sca{a''\otimes 0}{t_{\nu}}=\sca{a''\otimes 0}{a}+\sca{a''\otimes 0}{b}+\nu\sca{a''\otimes 0}{0_{\text{$\lambda U^{\ast}$}}}=\sca{a''}{a}\\
\sca{b''\otimes 0}{t_{\nu}}=\sca{b''\otimes 0}{a}+\sca{b''\otimes 0}{b}+\nu\sca{b''\otimes 0}{0_{\text{$\lambda U^{\ast}$}}}=\sca{b''}{b}
\end{array}
\end{equation*}
As a consequence,  we have that $\de{a''\otimes 0}\poll[m]\de{t_{\nu}}$ and $\de{b''\otimes 0}\poll[m]\de{t_{\nu}}$. This means that $\de{t_{\nu}}\in\cond{(A^{\pol[m]}\oplus B^{\pol[m]})^{\pol[m]}}$ (Proposition \ref{compintoplus}), i.e. $t_{\nu}\in\cond{A\with B}$.

But $\sca{t_{\nu}}{a'\otimes 0_{\text{$V^{B}$}}+0_{\text{$U$}}}=\sca{a'}{a}+\nu m(\lambda)$. Eventually changing the value of $\lambda$, one can suppose that $m(\lambda)=\mu\neq0,\infty$. Since $\sca{a'}{a}$ and $m(\lambda)$ are both different from $0$ and $\infty$, one can define $\nu=(-\sca{a'}{a})/\mu$. Then $\sca{t_{\nu}}{d}=0$, i.e. $\de{t_{\nu}}\not\in\cond{(A+B)^{\pol[m]\pol[m]}}$. 

Finally, we showed the inclusion $\cond{(A+B)^{\pol[m]\pol[m]}}\subset\cond{A\with B}$ — a consequence of Proposition \ref{plusdanswith} — is strict.
\end{proof}

To deal with this issue, we will define a notion of observational equivalence and show that even though the elements of the form $\de{a+b}$ do not generate the behaviour $\cond{A\with B}$, any element in $\cond{A\with B}$ is observationally equivalent to such a sum. We will now define the notion of equivalence and state some of its properties before showing this result.

\begin{definition}[Equivalence]\label{equivalencedef}
We define, given a conduct $\cond{A}$, an \emph{equivalence relation} on the set of projects in $\cond{A}$ as follows:
\begin{equation*}
\de{a}\cong_{\cond{A}}\de{b}\Leftrightarrow \forall \de{c}\in\cond{A}^{\pol}, \sca{a}{c}=\sca{b}{c}
\end{equation*}
We will denote the equivalence class of $\de{f}$ by $[f]_{\cond{A}}$, forgetting the subscript when it is clear from the context.
\end{definition}

\begin{proposition}\label{equivgoingup}
Let $\cond{A,B}$ be conducts such that $\cond{A\subset B}$. Then:
$$\de{a}\cong_{\cond{A}}\de{b}\Rightarrow \de{a}\cong_{\cond{B}}\de{b}$$
\end{proposition}

\begin{proof}
One only needs to notice that $\cond{A}\subset\cond{B}$ implies that $\cond{B}^{\pol}\subset\cond{A}^{\pol}$. Then:
\begin{eqnarray*}
\de{a}\cong_{\cond{A}}\de{b}&\Leftrightarrow&\forall\de{e}\in\cond{A}^{\pol},~ \sca{a}{e}=\sca{b}{e}\\
&\Rightarrow&\forall\de{e}\in\cond{B}^{\pol},~ \sca{a}{e}=\sca{b}{e}
\end{eqnarray*}
By definition of the equivalence, this shows that $\de{a}\cong_{\cond{A}}\de{b}\Rightarrow\de{a}\cong_{\cond{B}}\de{b}$.
\end{proof}

\begin{lemma}\label{compethics}
Let $E$ be a set of projects with carrier $V$, and $\de{a,b}\in E^{\pol}$. Then $\de{a}\cong_{E^{\pol}}\de{b}$ if and only if $\forall \de{e}\in E, \sca{a}{e}=\sca{b}{e}$.
\end{lemma}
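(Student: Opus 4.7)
The strategy is to unpack the definition of $\cong_{E^{\pol\pol}}$ and treat the two directions of the biconditional separately. Unfolding the definition of equivalence preceding the lemma, $\de{a}\cong_{E^{\pol\pol}}\de{b}$ is the statement that $\sca{a}{c}=\sca{b}{c}$ for every $\de{c}$ in the dual set associated with $E^{\pol\pol}$ (using the standard identity $E^{\pol\pol\pol}=E^{\pol}$ together with the symmetry of the scalar product). The hypothesis $\de{a},\de{b}\in E^{\pol}$ then places these two projects on the correct side of the duality and ensures that all the pairings under consideration are well-defined.

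The forward implication $(\Rightarrow)$ is immediate from the elementary inclusion $E\subseteq E^{\pol\pol}$: if the scalar products $\sca{a}{c}$ and $\sca{b}{c}$ agree for every $\de{c}$ ranging over the whole dual set, then in particular they agree on the subset $E$.

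For the reverse implication $(\Leftarrow)$, I would appeal to the bilinearity of the scalar product established just before Corollary \ref{homothetie}, namely the identity $\sca{a+\lambda b}{c}=\sca{a}{c}+\lambda\sca{b}{c}$. Taking $\lambda=-1$ and viewing $\de{a}-\de{b}$ formally, the hypothesis $\sca{a}{e}=\sca{b}{e}$ for every $\de{e}\in E$ becomes the vanishing of the single linear functional $\de{c}\mapsto\sca{a}{c}-\sca{b}{c}$ on the generators $E$. The remaining task is then to propagate this vanishing from the generators $E$ to the full bi-orthogonal closure $E^{\pol\pol}$.

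I expect this propagation step to be the main obstacle, since $E^{\pol\pol}$ is a bi-orthogonal closure rather than a linear span, and a linear functional vanishing on generators need not, in general, vanish on the bi-orthogonal closure of those generators. Closing this gap requires combining bilinearity with the duality between $E^{\pol}$ and $E^{\pol\pol}$ built into the hypothesis $\de{a},\de{b}\in E^{\pol}$: because $\de{a}$ and $\de{b}$ themselves sit in $E^{\pol}$, the pairing of their formal difference with an arbitrary $\de{c}\in E^{\pol\pol}$ is controlled, via orthogonality, by the pairings with elements of $E$, and the vanishing on generators then forces the functional to vanish on the whole closure, yielding $\sca{a}{c}=\sca{b}{c}$ for every $\de{c}$ in the dual set and hence $\de{a}\cong_{E^{\pol\pol}}\de{b}$.
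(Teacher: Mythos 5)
Your forward direction is fine in substance, but note that as you set it up it is inconsistent: you first identify the test set for $\cong_{E^{\pol\pol}}$ as $(E^{\pol\pol})^{\pol}=E^{\pol}$, and then restrict from "the whole dual set" to $E$ via the inclusion $E\subseteq E^{\pol\pol}$, which is an inclusion into a different set. The reading that makes the lemma cohere with both the hypothesis $\de{a},\de{b}\in E^{\pol}$ and its later use in Proposition \ref{sumwith} is that the equivalence is taken in the conduct $E^{\pol}$, so that the test set is $E^{\pol\pol}$ and the inclusion $E\subseteq E^{\pol\pol}$ is exactly what the easy direction needs (the paper's own proof silently makes this correction).

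The real problem is the reverse direction. You correctly identify the propagation from $E$ to $E^{\pol\pol}$ as the crux, but the sentence claiming that the pairing of $\de{a}-\de{b}$ with an arbitrary $\de{c}\in E^{\pol\pol}$ is ``controlled, via orthogonality, by the pairings with elements of $E$'' is a restatement of the goal, not an argument; as you yourself observe, a functional vanishing on $E$ need not vanish on $E^{\pol\pol}$ for general reasons, so some specific mechanism must be exhibited. The paper's mechanism is to convert the quantitative statement into a closure property of $E^{\pol}$: for a test set $T$ with $T^{\pol}=E^{\pol}$, the condition ``$\sca{a}{f}=\sca{b}{f}$ for all $\de{f}\in T$'' is equivalent to ``$\de{c+\lambda a-\lambda b}\in E^{\pol}$ for all $\lambda\in\mathbb{R}$ and all $\de{c}\in E^{\pol}$'', because $\sca{c+\lambda a-\lambda b}{f}=\sca{c}{f}+\lambda(\sca{a}{f}-\sca{b}{f})$ is affine in $\lambda$ with constant term $\neq 0,\infty$ (as $\de{c}\in E^{\pol}$), and such an affine function avoids $0$ for every $\lambda$ exactly when its slope is zero. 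Since $E^{\pol}=(E^{\pol\pol})^{\pol}$, this closure property is literally the same statement whether one takes $T=E$ or $T=E^{\pol\pol}$, and that identification is what transports the vanishing from the generators to the biorthogonal. Without this (or an equivalent) pivot through the set $E^{\pol}$ itself, your proof does not go through.
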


\begin{proof}
By definition, if $\de{a}\cong_{E^{\pol}}\de{b}$, we have $\forall \de{f}\in E^{\pol\pol}, \sca{a}{f}=\sca{b}{f}$, which is equivalent to $\forall \lambda\in\mathbb{R}, \sca{\lambda a-\lambda b}{f}=0$. Thus the equivalence of $\de{a}$ and $\de{b}$ can be restated as $\forall \lambda\in\mathbb{R},\forall \de{c}\in E^{\pol}, \de{c+\lambda a-\lambda b}\in E^{\pol}$. This is by definition equivalent to $\forall \lambda\in\mathbb{R},\forall \de{c}\in E^{\pol},\forall \de{e}\in E, \sca{c+\lambda a-\lambda b}{e}\neq 0$, i.e. $\sca{\lambda a-\lambda b}=0$. Finally, we have showed that $\de{a}\cong_{E^{\pol}}\de{b}$ if and only if $\forall \de{e}\in E, \sca{a}{e}=\sca{b}{e}$.
\end{proof}

\begin{proposition}\label{sumwith}
Let $\cond{A},\cond{B}$ be non empty behaviours, and $\de{f}\in\cond{A\with B}$. Then there exists $\de{g}\in\cond{A}$ and $\de{h}\in\cond{B}$ such that
\begin{equation*}
\de{f}\cong_{\cond{A\with B}}\de{g+h}
\end{equation*}
\end{proposition}

\begin{proof}
Take a project $\de{f}\in\cond{A\with B}$. Since $\de{f}\in(\extend{(A^{\pol})}{B})^{\pol}\cap(\extend{(B^{\pol})}{A})^{\pol}$, we have that $\sca{f}{a'\otimes 0}\neq 0,\infty$ and $\sca{f}{0\otimes b'}\neq 0,\infty$ for all $\de{a'}\in\cond{A}^{\pol}, \de{b'}\in\cond{B}^{\pol}$. Hence $\de{g}=\de{f}\deplug \de{0}_{V^{B}}\in \cond{A}$ and $\de{h}=\de{f}\deplug \de{0}_{V^{A}}\in\cond{B}$. We are going to show that $\de{f}\cong_{\cond{A\with B}}\de{g}+\de{h}$. 

Notice first that $(\extend{(A^{\pol})}{B}\cup \extend{(B^{\pol})}{A})^{\pol}=(\extend{(A^{\pol})}{B})^{\pol}\cap (\extend{(B^{\pol})}{A})^{\pol}=\cond{A\with B}$. Hence, by lemma \ref{compethics}, we only need to show that $\sca{f}{c}=\sca{g+h}{c}$ for all $\de{c}\in \extend{(A^{\pol})}{B}\cup \extend{(B^{\pol})}{A}$ in order to prove that $\de{f}\cong_{\cond{A\with B}}\de{g+h}$.

Now, take for instance $\de{a'}\otimes \de{0}$ in $\extend{(A^{\pol})}{B}$. Then, since: 
$$\sca{a\otimes 0_{\text{V}}}{b\otimes 0_{\text{V}}}=\sca{a}{b}$$ 
We get:
\begin{eqnarray*}
\sca{g+h}{a'\otimes 0}&=&\sca{g\otimes 0}{a'\otimes 0}+\sca{0\otimes h}{a'\otimes 0}\\
&=&\sca{g\otimes 0}{a'\otimes 0}\\
&=&\sca{(f\deplug 0)\otimes 0}{a'\otimes 0}\\
&=&\sca{f\deplug 0}{a'}\\
&=&\sca{f}{a'\otimes 0}
\end{eqnarray*}
Similarly, for any $\de{b'}\otimes \de{0}$ in $\extend{(B^{\pol})}{A}$, we have:
$$\sca{g+h}{b'\otimes 0}=\sca{g}{b'\otimes 0}=\sca{f}{b'\otimes 0}$$
We have thus shown that $\de{f}\cong_{\cond{A\with B}}\de{g+h}$.
\end{proof}

\begin{proposition}\label{ethplus}
Let $\cond{A},\cond{B}$ be non empty behaviours, and $\de{f}\in\cond{A\oplus B}$. Then there exists $\de{h}\in\extend{A}{B}\cup\extend{B}{A}$ such that
\begin{equation*}
\de{f}\cong_{\cond{A\oplus B}}\de{h}
\end{equation*}
\end{proposition}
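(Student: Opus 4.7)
The plan is to mirror the proof of Proposition \ref{sumwith} by duality. Using Proposition \ref{dualityadditives} I rewrite $\cond{A\oplus B}=(\cond{A}^{\pol}\with\cond{B}^{\pol})^{\pol}$, so a project $\de{f}\in\cond{A\oplus B}$ is precisely one orthogonal to every $\de{c}\in\cond{A}^{\pol}\with\cond{B}^{\pol}$. Proposition \ref{sumwith} applied to the behaviours $\cond{A}^{\pol}$ and $\cond{B}^{\pol}$ tells me that each such $\de{c}$ is $\cong_{\cond{A}^{\pol}\with\cond{B}^{\pol}}$ to a sum $\de{g'}_{\uparrow_{V^{B}}}+\de{h'}_{\uparrow_{V^{A}}}$ with $\de{g'}\in\cond{A}^{\pol}$ and $\de{h'}\in\cond{B}^{\pol}$. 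Since $\de{f}$ (and any candidate $\de{k}\in\cond{A\oplus B}$) respects that equivalence, it suffices to exhibit $\de{k}\in\cond{A}_{\uparrow_{V^{B}}}\cup\cond{B}_{\uparrow_{V^{A}}}$ that agrees with $\de{f}$ on sums of this form.

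Following the strategy of Proposition \ref{sumwith}, I set $\de{g}=\de{f}\plug\de{0}_{V^{B}}$ on carrier $V^{A}$ and $\de{h}=\de{f}\plug\de{0}_{V^{A}}$ on carrier $V^{B}$. Bilinearity of $\sca{\cdot}{\cdot}$ together with the adjunction (Corollary \ref{adjunction}) yield, exactly as in the proof of Proposition \ref{sumwith}, the identity $\sca{f}{\de{g'}_{\uparrow_{V^{B}}}+\de{h'}_{\uparrow_{V^{A}}}}=\sca{g}{g'}+\sca{h}{h'}$.

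The main obstacle, and the key new phenomenon compared with $\with$, is the following dichotomy: either $\sca{g}{g'}=0$ for every $\de{g'}\in\cond{A}^{\pol}$, or $\sca{h}{h'}=0$ for every $\de{h'}\in\cond{B}^{\pol}$. I would argue by contradiction using homothety (Corollary \ref{homothetie}): if both $\sca{g}{g'_{0}}\neq 0$ and $\sca{h}{h'_{0}}\neq 0$ for some witnesses, replacing $\de{h'_{0}}$ by $\lambda\de{h'_{0}}\in\cond{B}^{\pol}$ with $\lambda=-\sca{g}{g'_{0}}/\sca{h}{h'_{0}}$ produces a $\de{c}\in\cond{A}^{\pol}\with\cond{B}^{\pol}$ with $\sca{f}{c}=0$, contradicting $\de{f}\in\cond{A\oplus B}$.

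In the branch where $\sca{h}{h'}$ vanishes identically on $\cond{B}^{\pol}$, the reduced identity $\sca{f}{\de{g'}_{\uparrow_{V^{B}}}+\de{h'}_{\uparrow_{V^{A}}}}=\sca{g}{g'}\neq 0,\infty$ forces $\de{g}\in\cond{A}$, and I would set $\de{k}=\de{g}\otimes\de{0}_{V^{B}}\in\cond{A}_{\uparrow_{V^{B}}}$. A routine computation, using that $\de{g}\otimes\de{0}_{V^{B}}$ and $\de{0}_{V^{A}}\otimes\de{h'}$ have disjoint edge supports and zero wagers (for proper $\cond{A},\cond{B}$), verifies $\sca{k}{\de{g'}_{\uparrow_{V^{B}}}+\de{h'}_{\uparrow_{V^{A}}}}=\sca{g}{g'}$. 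The other branch is symmetric, giving $\de{k}=\de{0}_{V^{A}}\otimes\de{h}$; the degenerate cases where $\cond{A}$ or $\cond{B}$ equals $\cond{T}$ collapse $\cond{A\oplus B}$ to $\cond{T}_{V^{A}\cup V^{B}}$ and trivialize the equivalence, so any fixed $\de{k}$ in the required set works there.
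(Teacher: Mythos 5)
Your proof is correct and follows essentially the same route as the paper's: the same dichotomy (obtained by rescaling one test project via homothety to force one of the two pairings $\sca{g}{\cdot}$, $\sca{h}{\cdot}$ to vanish identically, on pain of producing a non-orthogonal witness), the same candidate $\de{f}\plug\de{0}$ lifted back by tensoring with $\de{0}$, and the same final verification through the decomposition supplied by Proposition \ref{sumwith}. The only differences are cosmetic (ordering of the steps, and which branch of the dichotomy is spelled out).
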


\begin{proof}
Let $\de{f}\in\cond{A\oplus B}$. Then, since $\cond{A\oplus B}\subset(\cond{A^{\pol}+B^{\pol}})^{\pol}$, we have that $\de{f}$ is orthogonal to $\de{a'+b'}$ for any $\de{a'}\in\cond{A^{\pol}}$ and $\de{b'}\in\cond{B^{\pol}}$. If $\cond{A}^{\pol},\cond{B}^{\pol}$ are non-empty, we can take $\de{a_{0}},\de{b_{0}}$ in them. Then, for any $\lambda,\mu$, the projects $\lambda\de{a_{0}}$ and $\mu\de{b_{0}}$ are in $\cond{A}^{\pol},\cond{B}^{\pol}$ respectively, hence $\sca{f}{\lambda a_{0}+\mu b_{0}}=\lambda\sca{f}{a_{0}}+\mu\sca{f}{b_{0}}\neq 0,\infty$. Since this must be true for any $\lambda,\mu\neq 0$, we have that either $\sca{f}{a_{0}}=0$ and $\sca{f}{b_{0}}\neq 0,\infty$, either $\sca{f}{b_{0}}=0$ and $\sca{f}{a_{0}}\neq 0,\infty$. 

Without loss of generality, we can suppose we are in the first case, that is: $\sca{f}{a_{0}}=0$ and $\sca{f}{b_{0}}\neq 0,\infty$. Then, for any $\de{a'}\in\cond{A^{\pol}}$, we have $\sca{f}{a'}=0$ since for all $\lambda$ and $\mu$, $\sca{f}{\lambda a'+\mu b_{0}}\neq0,\infty$. This implies that $\sca{f}{b'}\neq0,\infty$ for any $\de{b'}\in\cond{B}^{\pol}$. This gives us that $\sca{f}{b'\otimes 0_{\text{$V^{A}$}}}=\sca{f\deplug 0}{b'}\neq 0,\infty$, i.e. $\de{f\deplug 0}_{V^{A}}$ is in $\cond{B}$.

Now, we want to show that $\de{f}\cong_{\cond{A^{\pol}\with B^{\pol}}}\de{f\deplug 0}_{V^{A}}\otimes \de{0}_{V^{A}}$. For this, we chose an element $\de{g}\in\cond{A^{\pol}\with B^{\pol}}$. We want to show that $\sca{f}{g}=\sca{(f\deplug 0)\otimes 0}{g}$. Using the preceding proposition, there are two projects $\de{g_{1}}$ and $\de{g_{2}}$ in $\cond{A^{\pol}}$, $\cond{B^{\pol}}$ respectively, such that $\sca{g}{c}=\sca{g_{1}+g_{2}}{c}$ for all $\de{c}\in\cond{A\oplus B}$. Hence, we know that 
$$\sca{f}{g}=\sca{f}{g_{1}+g_{2}}~~~~~~\sca{(f\deplug 0)\otimes 0}{g}=\sca{(f\deplug 0)\otimes 0}{g_{1}+g_{2}}$$
Since $\sca{f}{a'}=0$ for all $\de{a'}\in\cond{A^{\pol}}$, we obtain:
$$\sca{f}{g_{1}+g_{2}}=\sca{f}{g_{1}}+\sca{f}{g_{2}}=\sca{f}{g_{2}}$$
On the other hand, $\sca{(f\deplug 0)\otimes 0}{g_{1}+g_{2}}=\sca{(f\deplug 0)\otimes 0}{g_{2}}+\sca{(f\deplug 0)\otimes 0}{g_{1}}$. 

\noindent Since $\sca{(f\deplug 0)\otimes 0}{g_{1}}=0$, we obtain:
\begin{eqnarray*}
\sca{(f\deplug 0)\otimes 0}{g_{1}+g_{2}}&=&\sca{(f\deplug 0)\otimes 0}{g_{2}}\\
&=&\sca{(f\deplug 0)\otimes 0}{g_{2}\otimes 0}\\
&=&\sca{f\deplug 0}{g_{2}}\\
&=&\sca{f}{g_{2}\otimes 0}\\
&=&\sca{f}{g_{2}}
\end{eqnarray*}
We finally obtained that the equality $\sca{f}{g}=\sca{(f\deplug 0)\otimes 0}{g}$ is satisfied for all $\de{g}\in\cond{A^{\pol}\with B^{\pol}}$.
\end{proof}

\subsection{Some Properties}

In order to prove distributivity, the following corollaries of Proposition \ref{ethtenscondtens} will be useful.

\begin{corollary}\label{tensor0tensor1}
Let $\cond{A}$ be a conduct. Then $(\extend{A}{V})^{\pol\pol}=\cond{A}\otimes \cond{1}_{V}$.
\end{corollary}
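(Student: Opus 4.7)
The plan is to recognize that $\cond{A}_{\uparrow V}$ is precisely the set $\cond{A} \odot \{\de{0}_V\}$ in the notation of Proposition \ref{ethtenscondtens}, and then apply that proposition with $E = \cond{A}$ and $F = \{\de{0}_V\}$.

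First I would unfold the definitions: by the definition of $\de{a}_{\uparrow V}$ we have
\begin{equation}
\cond{A}_{\uparrow V} = \{\de{a} \otimes \de{0}_V \mid \de{a} \in \cond{A}\} = \cond{A} \odot \{\de{0}_V\}. \nonumber
\end{equation}
Note that $\cond{A}$ and $\{\de{0}_V\}$ have disjoint carriers (this is exactly the hypothesis built into the definition of $\de{a}_{\uparrow V}$), so Proposition \ref{ethtenscondtens} applies.

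Next I would invoke Proposition \ref{ethtenscondtens} with $E = \cond{A}$ and $F = \{\de{0}_V\}$, both of which are non-empty (for $\cond{A}$ non-empty; if $\cond{A}$ is empty the statement reduces to $\emptyset^{\pol\pol} = \cond{0}\otimes\cond{1}_V$, which I would treat as a trivial separate case by noting that $\cond{0}\otimes\cond{1}_V = \cond{0}$ follows from the proof of closure of behaviours under $\otimes$, or more directly from the fact that $\cond{0}^{\pol} = \cond{T}$ absorbs everything in the adjunction). This yields
\begin{equation}
(\cond{A} \odot \{\de{0}_V\})^{\pol\pol} = (\cond{A}^{\pol\pol} \odot \{\de{0}_V\}^{\pol\pol})^{\pol\pol}. \nonumber
\end{equation}
Since $\cond{A}$ is a conduct, $\cond{A}^{\pol\pol} = \cond{A}$; by definition $\{\de{0}_V\}^{\pol\pol} = \cond{1}_V$. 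Substituting on both sides gives
\begin{equation}
(\cond{A}_{\uparrow V})^{\pol\pol} = (\cond{A} \odot \cond{1}_V)^{\pol\pol}. \nonumber
\end{equation}

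Finally, the right-hand side is, by the definition of the tensor on conducts, exactly $\cond{A} \otimes \cond{1}_V$, which concludes. There is no real obstacle here: the only subtle point is the edge case where $\cond{A}$ is empty (so that the non-emptiness hypothesis of Proposition \ref{ethtenscondtens} fails), which I would handle separately as indicated above; the main work is entirely absorbed into the earlier proposition.
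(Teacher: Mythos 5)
Your proof is correct and follows exactly the route the paper intends: the corollary is stated without proof as an immediate consequence of Proposition \ref{ethtenscondtens}, via the identification $\cond{A}_{\uparrow V}=\cond{A}\odot\{\de{0}_V\}$, the equalities $\cond{A}^{\pol\pol}=\cond{A}$ and $\{\de{0}_V\}^{\pol\pol}=\cond{1}_V$, and the definition of $\otimes$ on conducts. Your explicit treatment of the empty-$\cond{A}$ case (where the non-emptiness hypothesis of the proposition fails, but both sides reduce to $\emptyset^{\pol\pol}$) is a careful addition the paper glosses over.
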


\begin{proof}
This is a simple application of Proposition \ref{ethtenscondtens} with $E=\cond{A}$ and $F=\{\de{0}_{V}\}$.
\end{proof}

\begin{proposition}\label{send0send1}
Let $\cond{A}$ be a conduct with carrier $W$, and $\de{f}$ a project with carrier $V$ such that $\de{f}\plug \de{0}_{V}\in\cond{A}$. Then $\de{f}\in \cond{1}_{V}\multimap \cond{A}$.
\end{proposition}

\begin{proof}
For all $\de{a'}\in\cond{A}^{\pol}$, we have $\sca{f}{a'\otimes 0_{\text{V}}}\neq 0,\infty$. Hence $\de{f}\in (\cond{A}^{\pol}\odot \{\de{0}_{V}\})^{\pol}$. By Proposition \ref{ethtenscondtens}, it implies that $\de{f}\in(\cond{A}^{\pol}\otimes\cond{1}_{V})^{\pol}$, and finally $\de{f}\in\cond{1}_{V}\multimap \cond{A}$.
\end{proof}

\begin{proposition}[Distributivity]\label{distributivity}
For any behaviours $\cond{A,B,C}$, and delocations $\phi,\psi,\theta,\rho$ of $\cond{A},\cond{A},\cond{B},\cond{C}$ respectively, there is a project $\de{distr}$ in the behaviour 
$$\cond{((\phi(A)\!\multimap\! \theta(B))\!\with\! (\psi(A)\!\multimap\! \rho(C)))\!\multimap\! (A\!\multimap\! (B\!\with\! C))}$$
\end{proposition}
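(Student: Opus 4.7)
The plan is to construct $\de{distr}$ explicitly as a two-slice sliced graph, each slice routing the input $\de{a}\in\cond{A}$ to one of the two branches of the input $\with$ and back to the corresponding side of $\cond{B\with C}$. Concretely, I set $\de{distr}=(0,D_{1}+D_{2})$, where the slice $D_{1}$ is the disjoint union of the $\de{Fax}_{\phi}$-graph on $V^{A}\cup V^{\phi(A)}$, the $\de{Fax}_{\theta}$-graph on $V^{\theta(B)}\cup V^{B}$, and empty edge-sets on $V^{\psi(A)}\cup V^{\rho(C)}\cup V^{C}$; the slice $D_{2}$ is defined symmetrically by exchanging $\phi\leftrightarrow\psi$, $\theta\leftrightarrow\rho$, and $B\leftrightarrow C$. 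Both slices carry coefficient $1$, and the wager is $0$.

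To show that $\de{distr}$ lies in the claimed behaviour, I would first invoke Duality (Proposition \ref{duality}) and Adjunction (Corollary \ref{adjunction}) to reduce the problem to proving that, for every $\de{f}\in\cond{(\phi(A)\multimap\theta(B))\with(\psi(A)\multimap\rho(C))}$ and every $\de{a}\in\cond{A}$, the project $\de{distr}\plug\de{f}\plug\de{a}$ lies in $\cond{B\with C}$. By Proposition \ref{sumwith}, $\de{f}$ is equivalent, inside its conduct, to a sum $\de{f_{1}}+\de{f_{2}}$ with $\de{f_{1}}\in\cond{\phi(A)\multimap\theta(B)}$ and $\de{f_{2}}\in\cond{\psi(A)\multimap\rho(C)}$; since membership in $\cond{B\with C}$ is tested through scalar products with projects in $\cond{B\with C}^{\pol}$, this substitution is harmless. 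A slice-by-slice computation using Proposition \ref{faxs} then shows that $D_{1}$ contributes $(\de{Fax}_{\theta}\plug\de{f_{1}}\plug\de{Fax}_{\phi}\plug\de{a})\otimes\de{0}_{V^{C}}$, an element of $\cond{B}_{\uparrow V^{C}}$, and $D_{2}$ contributes a symmetric element of $\cond{C}_{\uparrow V^{B}}$. Hence $\de{distr}\plug\de{f}\plug\de{a}$ is equivalent to a sum $\de{g}_{\uparrow V^{C}}+\de{h}_{\uparrow V^{B}}$ with $\de{g}\in\cond{B}$ and $\de{h}\in\cond{C}$, which lies in $\cond{B\with C}$ by a short verification from its definition using the three-terms adjunction (Corollary \ref{adjunctdia}).

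The main obstacle will be the combinatorial bookkeeping: one must show that the cuts do not create spurious alternating circuits between the active edges of one slice and the empty-padded carriers of the other slice, so that the global measurement truly splits as a sum of slice-wise contributions. This should be essentially immediate from the fact that alternating paths in each plugging respect the slice structure of the projects involved, so that $D_{1}$ cannot interact through any cycle with test components supported in $V^{\psi(A)}\cup V^{\rho(C)}\cup V^{C}$. Once this is confirmed, Propositions \ref{closedmult} and \ref{closedadd} guarantee that the ambient implication is a behaviour, and the construction is complete.
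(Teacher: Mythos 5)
Your project $\de{distr}$ is the same as the paper's: the two slices $D_{1},D_{2}$ you describe are exactly its $\de{f_{1}}=\de{Fax}_{\phi}\otimes\de{Fax}_{\theta}\otimes\de{0}_{\psi(V^{A})\cup\rho(V^{C})}\otimes\de{0}_{V^{C}}$ and its symmetric counterpart, and the overall plan (reduce via duality and adjunction to showing $\de{distr}\plug\de{f}\plug\de{a}\in\cond{B\with C}$) also coincides. The divergence, and the gap, is in how you handle the hypothesis $\de{f}\in\cond{(\phi(A)\multimap\theta(B))\with(\psi(A)\multimap\rho(C))}$. You replace $\de{f}$ by $\de{f_{1}}+\de{f_{2}}$ using Proposition \ref{sumwith} and declare the substitution ``harmless'' because membership in $\cond{B\with C}$ is tested against $\cond{(B\with C)}^{\pol}$. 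But the equivalence supplied by Proposition \ref{sumwith} only guarantees $\sca{f}{e}=\sca{f_{1}+f_{2}}{e}$ for $\de{e}$ in the \emph{orthogonal of the input $\with$-conduct}. To transfer it to your tests, you must show, for $\de{a}\in\cond{A}$ and $\de{y}\in\cond{(B\with C)}^{\pol}$, that $\sca{distr\plug f\plug a}{y}=\sca{distr\plug(f_{1}+f_{2})\plug a}{y}$; by the cyclic property both sides rewrite as pairings of $\de{f}$, resp.\ $\de{f_{1}+f_{2}}$, against $\de{distr}\plug(\de{a}\otimes\de{y})$, so you need $\de{distr}\plug(\de{a}\otimes\de{y})$ to lie in the orthogonal of the input $\with$-conduct. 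By adjunction that is precisely the statement that $\de{distr}$ inhabits the target implication --- the proposition you are proving. As written, the argument is circular.

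The paper avoids this entirely by never decomposing $\de{f}$: using Proposition \ref{duality} and Corollary \ref{tensor0tensor1} it rewrites the input $\with$-conduct as the intersection of $\cond{\phi(A)\multimap(1_{\psi(V^{A})\cup\rho(V^{C})}\multimap\theta(B))}$ and its symmetric partner, so that the single project $\de{f}$, applied to $\phi(\de{a})$ and then to $\de{0}_{\psi(V^{A})\cup\rho(V^{C})}$, lands in $\theta(\cond{B})$ by the very definition of membership in a linear implication; the slice $D_{1}$ then just transports this to $\cond{B}_{\uparrow V^{C}}$ via the faxes. If you want to keep your slice-by-slice route, you would have to prove the identity $\de{distr}\plug\de{f}=\de{distr}\plug(\de{f_{1}}+\de{f_{2}})$ (or its invisibility to the relevant tests) by a direct graph computation --- noting that alternating paths through $D_{1}$ get stuck on $\psi(V^{A})\cup\rho(V^{C})$, so $D_{1}\plug F_{j}$ only sees the restriction of $F_{j}$ that defines $\de{f_{1}}$ --- together with a matching computation of wagers; none of this is supplied, and it is precisely the content your appeal to Proposition \ref{sumwith} was meant to replace. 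Your final worry about ``spurious alternating circuits'' is real but secondary; the missing piece is the logical justification of the substitution, not the combinatorics.
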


\begin{proof}
Let $\de{g}$ be a project in $(\phi(A)\multimap \theta(B))\with (\psi(A)\multimap \rho(C))$. Using the definition of $\with$, and propositions \ref{duality} and \ref{tensor0tensor1}, we get:
\begin{eqnarray*}
\lefteqn{\cond{(\phi(A)\multimap \theta(B))\with (\psi(A)\multimap \rho(C))}}\\
&=&\cond{(((\phi(A)\multimap \theta(B))^{\pol}){\uparrow_{\text{$\psi(V^{A})\cup\rho(V^{C})$}}})^{\pol}}\cap \cond{(((\psi(A)\multimap \rho(C))^{\pol}){\uparrow_{\text{$\phi(V^{A})\cup\theta(V^{B})$}}})^{\pol}}\\
&=&\cond{(\phi(A)\otimes \theta(B)\otimes 1_{\text{$\phi(V^{A})\cup\rho(V^{C})$}})^{\pol}}\cond{\cap(\psi(A)\otimes \rho(C)\otimes 1_{\text{$\phi(V^{A})\cup\theta(V^{B})$}})^{\pol}}\\
&=&\cond{(\phi(A)\multimap(1_{\text{$\psi(V^{A})\cup\rho(V^{C})$}}\multimap \theta(B)))}\cond{\cap(\psi(A)\multimap(1_{\text{$\phi(V^{A})\cup\theta(V^{B})$}}\multimap \rho(C)))}
\end{eqnarray*}
Now, define the projects:
 \begin{eqnarray*}
 \de{f_{1}}&=&\de{Fax}_{\phi}\otimes\de{Fax}_{\theta}\otimes \de{0}_{\psi(V^{A})\cup\rho(V^{C})}\otimes \de{0}_{V^{C}}\\
 \de{f_{2}}&=&\de{Fax}_{\psi}\otimes\de{Fax}_{\rho}\otimes \de{0}_{\phi(V^{A})\cup\theta(V^{B})}\otimes \de{0}_{V^{B}}\\
 \de{distr}&=&\de{f_{1}+f_{2}}
\end{eqnarray*}
We have $\de{Distr\deplug g}=\de{f_{1}\deplug g}+\de{f_{2}\deplug g}$.
Let us compute $\de{(f_{1}\deplug g)\deplug a}$ for $\de{a}\in\cond{A}$:
\begin{eqnarray*}
&\de{(f_{1}\deplug g)\deplug a}\\
&=&\!\!\!\!\!\!\!\!\!\!((\de{Fax}_{\phi}\otimes\de{Fax}_{\theta}\otimes \de{0}_{\psi(V^{A})\cup\rho(V^{C})}\otimes \de{0}_{V^{C}})\deplug \de{g})\deplug \de{a}\\
&=&\!\!\!\!\!\!\!\!\!\!(\de{Fax}_{\theta}\otimes \de{0}_{\psi(V^{A})\cup\rho(V^{C})}\otimes \de{0}_{V^{C}})\deplug (g\deplug \phi(\de{a}))\\
&=&\!\!\!\!\!\!\!\!\!\!(\de{Fax}_{\theta}\otimes \de{0}_{V^{C}})\deplug((\de{g}\deplug \phi(\de{a}))\deplug \de{0}_{\psi(V^{A})\cup\rho(V^{C})})
\end{eqnarray*}
Since, as we have shown earlier, $\de{g}\in\cond{(\phi(A)\multimap(1_{\text{$\psi(V^{A})\cup\rho(V^{C})$}}\multimap \theta(B)))}$, the project $((\de{g}\deplug \phi(\de{a}))\deplug \de{0}_{\psi(V^{A})\cup\rho(V^{C})})$ is in $\cond{\theta(B)}$, hence it is equal to $\theta(\de{b})$ with $\de{b}\in\cond{B}$. This yields:
\begin{eqnarray*}
\de{(f_{1}\deplug g)\deplug a}&=&(\de{Fax}_{\theta}\otimes \de{0}_{V^{C}})\deplug\theta(\de{b})\\
&=&\de{b}\otimes \de{0}_{V^{C}}
\end{eqnarray*}
Similarly, one gets that $\de{(f_{2}\deplug g)\deplug a}=\de{c}\otimes \de{0}_{V^{B}}$ for $\de{c}\in\cond{C}$. Hence $(\de{distr}\deplug\de{g})\deplug\de{a}\in\cond{B+C}\subset\cond{B\with C}$. 

Eventually, $\de{distr}\deplug\de{g}\in\cond{A\multimap B\with C}$, which implies that the project $\de{distr}$ we defined is the one we were looking for.
\end{proof}

Distributivity is quite easy to grasp without going into details: the project $\de{Distr}$ just superimposes elements of $((\phi(A)\!\multimap\! \theta(B))$ and $(\psi(A)\!\multimap\! \rho(C)))$ over the carrier of $\cond{A}$.

\begin{proposition}\label{mix}
The mix rule is never satisfied for proper behaviours.
\end{proposition}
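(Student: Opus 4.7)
The plan is to reduce the question ``does the mix rule hold for proper $\cond{A,B}$?'' to a single numerical condition and then exploit the defining behaviour closure to violate it. Semantically mix amounts to the inclusion $\cond{A\otimes B}\subseteq\cond{A\parr B}=(\cond{A}^{\pol}\otimes\cond{B}^{\pol})^{\pol}$, which by bi-orthogonality of $\cond{A\otimes B}$ is equivalent to requiring $\sca{a\otimes b}{a'\otimes b'}\neq 0,\infty$ for every $\de{a}\in\cond{A}$, $\de{b}\in\cond{B}$, $\de{a'}\in\cond{A}^{\pol}$, $\de{b'}\in\cond{B}^{\pol}$. I would show that one can always produce witnesses in the proper case making this scalar vanish.

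First I would unpack that scalar. Proper behaviours and their orthogonals contain only wager-free projects (Corollary~\ref{charact}), and the disjointness $V^{A}\cap V^{B}=\emptyset$ forbids any alternating cycle between $A$ and $B$, so $\sca{a}{b}=\meas{A,B}=0$ and likewise $\sca{a'}{b'}=0$; thus $\de{a\otimes b}$ and $\de{a'\otimes b'}$ are themselves wager-free and $\sca{a\otimes b}{a'\otimes b'}$ reduces to $\meas{A\cup B,A'\cup B'}$. The same disjointness forces every alternating $1$-cycle in $(A\cup B)\bicol(A'\cup B')$ to stay either entirely in the $V^{A}$-layer (alternating $A$- and $A'$-edges) or entirely in the $V^{B}$-layer; after summing over slices, or equivalently applying the three-term adjunction of Corollary~\ref{adjunctdia} twice, this should yield
\begin{equation*}
\sca{a\otimes b}{a'\otimes b'}=\unit{B}\unit{B'}\sca{a}{a'}+\unit{A}\unit{A'}\sca{b}{b'}.
\end{equation*}

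The concluding step is to kill both summands using the behaviour axiom. Starting from arbitrary $\de{a}\in\cond{A}$, $\de{a'}\in\cond{A}^{\pol}$, $\de{b}\in\cond{B}$, $\de{b'}\in\cond{B}^{\pol}$ (all four sets non-empty by properness), I would set $\de{\tilde a}:=\de{a}-\unit{A}\de{0}_{V^{A}}$ and $\de{\tilde b}:=\de{b}-\unit{B}\de{0}_{V^{B}}$. Both lie in their respective behaviours by closure under addition of $\lambda\de{0}_{V}$, and by construction $\unit{\tilde A}=\unit{\tilde B}=0$, while adding empty slices contributes no alternating cycles, so $\sca{\tilde a}{a'}=\sca{a}{a'}$ and $\sca{\tilde b}{b'}=\sca{b}{b'}$. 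Substituting into the displayed formula gives $\sca{\tilde a\otimes\tilde b}{a'\otimes b'}=0$, producing a project $\de{\tilde a\otimes\tilde b}\in\cond{A\otimes B}$ that fails to be orthogonal to $\de{a'\otimes b'}\in(\cond{A\parr B})^{\pol}$; hence $\cond{A\otimes B}\not\subseteq\cond{A\parr B}$ and mix fails.

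The hard part will be the sliced-graph bookkeeping behind the splitting of $\meas{A\cup B,A'\cup B'}$ into $\unit{B}\unit{B'}\meas{A,A'}+\unit{A}\unit{A'}\meas{B,B'}$, i.e.\ checking that the ``passive'' slice weights combine exactly into the clean product factors displayed above. Everything else is a direct consequence of the behaviour axiom.
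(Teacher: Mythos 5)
Your proposal is correct and follows essentially the same route as the paper: both derive the splitting $\sca{a\otimes b}{a'\otimes b'}=\unit{A}\unit{A'}\sca{b}{b'}+\unit{B}\unit{B'}\sca{a}{a'}$ from the layer-wise decomposition of alternating cycles between graphs on disjoint carriers, and then exploit closure of behaviours under adding $\lambda\de{0}_{V}$ to make this scalar vanish. The only cosmetic difference is that you normalize both $\unit{A}$ and $\unit{B}$ to zero, whereas the paper tunes a single $\lambda$ on one side so that the two terms cancel.
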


\begin{proof}
Let $\cond{A}$, $\cond{B}$ be behaviours, and let $\de{a}$, $\de{a'}$, $\de{b}$, and $\de{b'}$ be projects in the conducts $\cond{A}$, $\cond{A^{\pol}}$, $\cond{B}$, and $\cond{B^{\pol}}$ respectively. Then, noticing that:
\begin{eqnarray*}
\lefteqn{\scalar{A\cup B,A'\cup B'}}\\
&=&\sum_{i,j,k,l}\alpha^{A}_{i}\alpha^{B}_{j}\alpha^{A'}_{k}\alpha^{B'}_{l}\scalar{A_{i}\cup B_{j},A'_{k}\cup B'_{l}}\\
&=&\sum_{i,j,k,l}\alpha^{A}_{i}\alpha^{B}_{j}\alpha^{A'}_{k}\alpha^{B'}_{l}(\scalar{A_{i},A'_{k}}+\scalar{B_{j},B'_{l}})\\
&=&\unit{B}\unit{B'}\scalar{A,A'}+\unit{A}\unit{A'}\scalar{B,B'}
\end{eqnarray*}
we can compute $\sca{a\otimes b}{a'\otimes b'}$ as follows:
\begin{eqnarray*}
\lefteqn{\sca{a\otimes b}{a'\otimes b'}}\\
&=&\unit{A'}\unit{B'}(\unit{A}b+\unit{B}a)+\unit{A}\unit{B}(\unit{A'}b'+\unit{B'}a')\\
&&+\scalar{A\cup B,A'\cup B'}\\
&=&\unit{A}\unit{A'}(\unit{B'}b+\unit{B}b'+\scalar{B,B'})\\
&&+\unit{B}\unit{B'}(\unit{A'}a+\unit{A}a'+\scalar{A,A'})\\
&=&\unit{A}\unit{A'}\sca{b}{b'}+\unit{B}\unit{B'}\sca{a}{a'}
\end{eqnarray*}
Since $\sca{a}{a'}$ and $\sca{b}{b'}$ are different from $0$ and $\infty$, it is possible to make the last expression be equal to $0$ by changing the value of $\unit{A}$ for instance, using the fact that for $\de{c}=\de{a+\lambda 0}$, we have $\sca{c}{a'}=\sca{a}{a'}$ and that $\unit{C}=\unit{A}+\lambda$.
\end{proof}

\begin{proposition}\label{weak}
Weakening does not hold for non-empty behaviours.
\end{proposition}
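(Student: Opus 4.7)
Weakening here should amount to the inclusion $\cond{B}\subseteq\cond{A}\multimap\cond{B}$ realized by the canonical lift $\de{b}\mapsto\de{b}_{\uparrow V^{A}}=\de{b}\otimes\de{0}_{V^{A}}$, for arbitrary non-empty behaviours $\cond{A},\cond{B}$ of disjoint carriers (the analogue of the mix statement one paragraph above). My plan is to refute it: I will exhibit a test $\de{a}\in\cond{A}$ such that $\de{b}_{\uparrow V^{A}}\plug\de{a}$ falls outside $\cond{B}$, the crux being to use the behaviour axiom to shift $\unit{A}$ down to $0$, making the cut collapse to a degenerate project that is orthogonal to nothing.

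We may assume $\cond{A}$ and $\cond{B}$ proper by Proposition \ref{charac} (the cases $\cond{0},\cond{T}$ being trivial or excluded by the hypothesis). Fix $\de{a_{0}}=(0,A_{0})\in\cond{A}$ and $\de{b}=(0,B)\in\cond{B}$, wager-free by properness, and set $\de{a}:=\de{a_{0}}+(-\unit{A_{0}})\de{0}_{V^{A}}$. The defining axiom of a behaviour keeps $\de{a}$ inside $\cond{A}$, and since $\unit{\cdot}$ is additive under $+\lambda\de{0}_{V^{A}}$, I obtain $\unit{A}=\unit{A_{0}}-\unit{A_{0}}=0$.

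Next I will compute $\de{b}_{\uparrow V^{A}}\plug\de{a}$. The edges of $\de{b}_{\uparrow V^{A}}$ all live in $V^{B}$ (lifting by $\de{0}_{V^{A}}$ adds vertices but no edges), while those of $A$ live in $V^{A}$, with $V^{A}\cap V^{B}=\emptyset$. So no alternating path of length $\geqslant 2$ can occur in the plugging: a $B$-edge ends in $V^{B}$, where no $A$-edge can start, and vice versa. In particular, there are no alternating $1$-circuits, so the measurement vanishes; slice by slice, the underlying sliced graph of the cut reduces to $\unit{A}\cdot B=0\cdot B$. Combining with $a=b=0$ and $\unit{A}=0$, the wager $b\cdot\unit{A}+\unit{B}\cdot a+\meas{\cdots,A}$ is also $0$, so $\de{b}_{\uparrow V^{A}}\plug\de{a}=(0,0\cdot B)$, with $\unit{0\cdot B}=0$.

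To finish, pick any $\de{c}=(0,C)\in\cond{B}^{\pol}$, available because $\cond{B}^{\pol}$ is itself a proper behaviour by Corollary \ref{charact}. Every slice of $0\cdot B$ carries coefficient $0$, so the double sum defining $\meas{0\cdot B,C}$ vanishes, and
\begin{equation*}
\sca{(0,0\cdot B)}{c}=0\cdot\unit{C}+\unit{0\cdot B}\cdot 0+\meas{0\cdot B,C}=0.
\end{equation*}
Hence $(0,0\cdot B)$ is orthogonal to nothing in $\cond{B}^{\pol}$, in particular not in $\cond{B}^{\pol\pol}=\cond{B}$, and therefore $\de{b}_{\uparrow V^{A}}\notin\cond{A}\multimap\cond{B}$. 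The only real obstacle is the unit-tuning trick of the second paragraph; once $\unit{A}=0$ is achieved, the whole computation collapses by degeneracy.
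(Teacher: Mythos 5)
Your proof is correct and follows essentially the same route as the paper: the whole argument rests on using the behaviour axiom to replace the test project by one with $\unit{\cdot}=0$, so that the cut degenerates to a ``scalar zero'' project whose pairing with every element of $\cond{B}^{\pol}$ vanishes and which therefore cannot lie in $\cond{B}$. The only (immaterial) difference is one of formulation: you refute the inclusion $\cond{B}\rightarrow\cond{A}\multimap\cond{B}$ by zeroing $\unit{A}$, while the paper refutes $\cond{A\multimap B}\rightarrow\cond{A\otimes C\multimap B}$ by zeroing $\unit{C}$, the mechanism being identical.
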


\begin{proof}
Let $\cond{A,B}$ be conducts, let $\cond{C}$ be a behaviour, and let $\de{f}\in\cond{A\multimap B}$. Then $\de{f}\otimes\de{0}_{V^{C}}$ is not an element of $\cond{A\otimes C\multimap B}$. Indeed, chose $\de{a\otimes c}$ in $\cond{A\otimes C}$. Then $\sca{f\otimes 0}{a\otimes c}=\unit{C}\sca{f}{c}$. Moreover, $(F\otimes 0)\plug (A\otimes C)=\unit{C} F\plug A$. This yields that $\de{(f\otimes 0)\deplug (a\otimes c)}=\unit{C}\de{f\deplug a}$. Since $\cond{C}$ is a behaviour, it is possible to cancel $\unit{C}$ by considering $\de{c-\unit{\text{$C$}}0}\in\cond{C}$. Eventually, this gives that $\de{f\otimes 0}$ is not in $\cond{A\otimes C\multimap B}$ unless $\cond{B}=\cond{T}$ or $\cond{A}=\cond{0}$.
\end{proof}

\begin{remark}
This amounts to showing that there are no maps from $\cond{C}$ to $\cond{1}$ when $\cond{C}$ is a behaviour. 
\end{remark}

\subsection{Booleans}

One of the interests of the MALL fragment of Linear Logic is the possibility of defining booleans as proofs of $\cond{T\oplus T}$. Although it is possible to define a type of booleans in the multiplicative fragment, this additive version is more natural, especially when one considers that the proofs represented $\textnormal{True}$ and $\textnormal{False}$ are closely related to the usual lambda-calculus definitions $\textnormal{true}:=\lambda x. \lambda y. x$ and $\textnormal{false}:=\lambda x.\lambda y. y$. Our construction, however, satisfies a property that could be seen as a drawback in this perspective.

\begin{proposition}
$$\cond{T_{\emptyset}\oplus T_{\emptyset}}=\cond{T}_{\emptyset}$$
\end{proposition}

\begin{proof}
The conduct $\cond{T}_{\emptyset}$ is defined as the full behaviour, i.e. the orthogonal of the empty set, over the empty set. One can check that:
$$\cond{T}_{\emptyset}=\{(a,\emptyset)~|~ a\in \realN\}^{\pol\pol}=\{(\infty,\emptyset)\}^{\pol\pol}$$
By definition, $\cond{T_{\emptyset}\oplus T_{\emptyset}}$ is equal to the bi-orthogonal closure of the union of $\extend{T_{\emptyset}}{\emptyset}$ and itself. But $\extend{T_{\emptyset}}{\emptyset}=\{\de{t}\otimes \de{0}_{\emptyset}~|~ \de{t}\in\cond{T}_{\emptyset}\}=\{\de{t}\in\cond{T}_{\emptyset}\}$. This implies that $\cond{T_{\emptyset}\oplus T_{\emptyset}}=\cond{T}_{\emptyset}$.
\end{proof}

This shows that the classical way of defining booleans does not work in our setting. We could imagine a solution: replacing $\cond{T}_{\emptyset}$ by $\cond{T}_{V}$ where $V$ is a non-empty location. Unfortunately, this solution is not satisfying.

\begin{proposition}
Let $\{t,f\}$ be a two-element set. Then $\cond{T}_{\{t\}}\oplus \cond{T}_{\{f\}}$ is equal to $\cond{T}_{\{t,f\}}$.
\end{proposition}

\begin{proof}
The argument is quite straightforward. Since $\cond{T}_{\{t\}}$ contains a project with infinite wager, so does $\cond{T}_{\{t\}}\oplus \cond{T}_{\{f\}}$. Thus, by \autoref{miseinfcondpleine}, $\cond{T}_{\{t\}}\oplus \cond{T}_{\{f\}}$ is equal to $\cond{T}_{\{t,f\}}$.
\end{proof}

To define adequate booleans, we need a more radical change to get rid of infinite wagers. First notice that, as a consequence of the homothety lemma (\autoref{homothetie}), a conduct never contains exactly one element up to observational equivalence. We therefore need to consider conducts up to a more restrictive equivalence to obtain quotients of conducts that contain a finite number of elements.

\begin{definition}
Let $V$ be a set. A conduct $\cond{A}$ with carrier $V$ is \emph{singular} when $\cond{A}\neq\cond{T}_{V}$ and it contains exactly one element up to homothety and observational equivalence, i.e. for all $\de{a,b}\in\cond{A}$, there exists a real number $\lambda$ such that $\de{a}\equiv_{\cond{A}}\lambda\de{b}$.
\end{definition}

The following proposition shows that singular conducts can be used to define a boolean type.
\begin{proposition}
If $\cond{A,B}$ are singular conducts with respective (disjoint) carriers $T$ and $F$, the conduct $\cond{A\oplus B}$ contains exactly two elements up to homothety and observational equivalence.
\end{proposition}

\begin{proof}
Let us take an element $\de{c}$ in $\cond{A\oplus B}$. By Proposition \ref{ethplus}, we have two cases:
\begin{itemize}
\item either $\de{c}=\de{t}\otimes \de{0}_{F}$ where $\de{t}\in\cond{A}$;
\item or $\de{c}=\de{0}_{T}\otimes\de{f}$ where $\de{f}\in\cond{B}$.
\end{itemize}
By definition of singular conducts, any two projects $\de{p,q}$ in either $\cond{A}$ or $\cond{B}$ are equivalent up to scalar multiplication. Moreover, for any conduct $\cond{C}$ and set $V$ disjoint from the carrier of $\cond{C}$, $\de{p}\cong_{C}\de{q}$ implies that $\de{p}\otimes\de{0}_{V}\cong_{(\extend{C}{V})^{\pol\pol}}\de{q}\otimes\de{0}_{V}$. Choosing elements $\de{true}$ and $\de{false}$ in $\cond{A}$ and $\cond{B}$ respectively, we have that there exists a real number $\lambda$ such that $\de{c}$ is either equivalent to $\lambda.\de{true}\otimes\de{0}_{F}$ in the conduct $(\cond{A\uparrow_{F}})^{\pol\pol}$ or equivalent to $\lambda.\de{false}\otimes\de{0}_{T}$ in the conduct $(\cond{B{\uparrow_{T}}})^{\pol\pol}$. Since both $(\cond{A{\uparrow_{F}}})^{\pol\pol}$ and $(\cond{B{\uparrow_{T}}})^{\pol\pol}$ are subsets of the behaviour $\cond{A\oplus B}$, we finally obtain, by Proposition \ref{equivgoingup}:
\begin{itemize}
\item either $\de{c}\cong_{\cond{A\oplus B}}\lambda.\de{true}\otimes \de{0}_{F}$;
\item or $\de{c}\cong_{\cond{A\oplus B}}\lambda.\de{false}\otimes \de{0}_{T}$.\qedhere
\end{itemize}
\end{proof}

In order to define adequate booleans, we still need to understand if \emph{singular behaviours}, or at least \emph{singular conducts}, do exist. This is in fact depends on the chosen parameter map $m:]0,1]\rightarrow\mathbf{R}_{\geqslant 0}\cup\{\infty\}$. For instance, if one choses the map $m:x\mapsto\infty$, then \autoref{trivial} shows that all behaviours are either empty or full, thus no singular behaviours exist in this model. For this reason, we do not study further here the existence of singular behaviours and conducts.

\section{Denotational Semantics}\label{denotsection}

In this section, we will show how the GoI construction we obtained can be used to obtain a categorical model of MALL. The first part of the section is a quick overview of the definitions of the category \catmll{}. As this construction is presented in detail (though not in the sliced graphs setting) in our previous work \cite{seiller-goim}, we will not dwell on the details of the proofs which are straightforward adaptations of the corresponding proofs in the simpler setting of graphs.

\subsection{A $\ast$-autonomous category}

Let us first define the category of conducts. For this, we define $\psi_{i}: \mathbf{N}\rightarrow \mathbf{N}\times\{0,1\}$ ($i=0,1$):
\begin{equation*}
\psi_{i}:  x  \mapsto  (x,i)
\end{equation*}

\begin{definition}[Objects and morphisms of \catmll]
We define the following category:
\begin{equation*}
\left.\begin{array}{l}\mathrm{Obj}=\{\cond{A}~|~\cond{A}=\cond{A}^{\pol\pol}\text{ with carrier }X_{\cond{A}}\subset\mathbf{N}\}\\
\mathrm{Mor}[\cond{A},\cond{B}]=\{\de{f}\in \psi_{0}(\cond{A})\multimap\psi_{1}(\cond{B})\}\end{array}\right.
\end{equation*}
\end{definition}

To define the composition of morphisms, we will use three copies of $\mathbf{N}$. We thus define the following useful bijections:
\begin{equation*}
\begin{array}{rrclrrcl}
\mu: & \!\!\mathbf{N}\times\{0,1\} &\!\!\!\! \rightarrow \!\!\!\!& \mathbf{N}\times\{1,2\},&\!\!\!\!& (x,i) & \!\!\mapsto\!\! & (x,i+1)\\
\nu: & \!\!\mathbf{N}\times\{0,2\} &\!\!\!\! \rightarrow \!\!\!\!& \mathbf{N}\times\{0,1\},&\!\!\!\!& (x,i) & \!\!\mapsto\!\! & (x,i/2)\end{array}
\end{equation*}

\begin{definition}[Composition in \catmll]
Given two morphisms $\de{f}$ and $\de{g}$ in $\mathfrak{Mor}[\cond{A},\cond{B}]$ and $\mathfrak{Mor}[\cond{B},\cond{C}]$ respectively, we define 
\begin{equation*}
\de{g}\circ\de{f}=\nu(\de{f}\deplug\mu(\de{g}))
\end{equation*}
\end{definition}

Then one can show that this is indeed a category \cite{seiller-goim}. Notice the identities are defined by faxes (Definition \ref{faxs}) which are represented by finite graphs.

We now define a bifunctor $\otimes$, and for that we will use the functions $\phi: \mathbf{N}\times\{0,1\}\rightarrow \mathbf{N}$ defined\footnote{This function, called the \emph{Cantor pairing}, is heavily used in the first GoI constructions. Its use here has the exact same purpose, although it did appear in the very definition of connectives in former GoI models while we only use it to define the categorical models. This difference comes from the fact that we are defining a \emph{localised GoI} model, while former GoI constructions were \enquote{delocalised} from the start.} by $\phi((x,i))=2x+i$ and $\tau$
\begin{equation*}
\tau: \left\{\begin{array}{rcl}
\mathbf{N}\times\{0,1\}&\rightarrow&\mathbf{N}\times\{0,1\}\\
(2x+1,0)&\mapsto& (2x,1)\\
(2x,1)&\mapsto&(2x+1,0)\\
(x,i)&\mapsto&(x,i)\text{ otherwise}\end{array}\right.
\end{equation*}

\begin{definition}
We define on \catmll the bifunctor $\bar{\otimes}$ induced by the tensor product. It is defined on objects by
\begin{equation*}
\cond{A}\bar{\otimes}\cond{B}=\phi(\psi_{0}(\cond{A})\otimes\psi_{1}(\cond{B}))
\end{equation*}
and on morphisms by
\begin{equation*}
\de{f}\bar{\otimes} \de{g}=\tau(\psi_{0}(\phi(\de{f}))\otimes\psi_{1}(\phi(\de{g})))
\end{equation*}
\end{definition}

\begin{theorem}
The category \catmll is a $\ast$-autonomous category. More precisely, (\catmll,$\bar{\otimes}$,$\cond{1}_{\emptyset}$) is symmetric monoidal closed and the object $\cond{\bot}_{\emptyset}=\cond{1}_{\emptyset}^{\pol}$ is dualizing.
\end{theorem}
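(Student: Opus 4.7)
The plan is to separate two types of content. The ``algebraic'' content of a $\ast$-autonomous category (units, associativity, symmetry, closure, involutive negation) lives at the level of conducts and projects and has already been established, in essence, by the Properties of the Tensor, by Duality (Proposition \ref{duality}), and by the Adjunction (Corollary \ref{adjunction}). The ``bureaucratic'' content consists in rearranging carriers inside $\mathbf{N}$, and will be handled uniformly by fax projects (Proposition \ref{faxs}) along suitable bijections of $\mathbf{N}$. Accordingly, I first verify that $\bar{\otimes}$ is a genuine bifunctor: preservation of identities follows from the fact that $\de{Fax}_{\phi\circ\psi}$ and $\de{Fax}_{\phi}\circ\de{Fax}_{\psi}$ define the same morphism (up to the equivalences induced by the composition $\nu(-\plug \mu(-))$), and preservation of composition reduces, by Associativity (Proposition \ref{associativity}), to disjointness of the three copies of $\mathbf{N}$ used throughout.

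Next I would build the structural natural isomorphisms. For each of the associator $\cond{A}\bar{\otimes}(\cond{B}\bar{\otimes}\cond{C}) \to (\cond{A}\bar{\otimes}\cond{B})\bar{\otimes}\cond{C}$, the left and right unitors $\cond{1}_{\emptyset}\bar{\otimes}\cond{A}\to\cond{A}$ and $\cond{A}\bar{\otimes}\cond{1}_{\emptyset}\to\cond{A}$, and the symmetry $\cond{A}\bar{\otimes}\cond{B}\to\cond{B}\bar{\otimes}\cond{A}$, the underlying conducts are equal up to a canonical bijection on carriers (since $\otimes$ on conducts is already known to be associative, commutative, and to have $\cond{1}_{\emptyset}$ as a neutral element). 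The corresponding morphism in $\catmll$ is then taken to be the fax along that bijection, which Proposition \ref{faxs} places in the required linear implication. Naturality follows from the fact that $\de{Fax}_{\phi}\plug\de{a}$ reduces to $\phi(\de{a})$ in the appropriate sense, and the coherence diagrams (pentagon, triangle, hexagon) reduce to equalities of faxes along bijections of $\mathbf{N}$ that are equal as set maps.

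For monoidal closure, the internal hom is $\cond{A}\multimap\cond{B}$ (suitably delocated); the adjunction isomorphism
\[
\mathfrak{Mor}[\cond{A}\bar{\otimes}\cond{B},\cond{C}]\;\cong\;\mathfrak{Mor}[\cond{A},\cond{B}\multimap\cond{C}]
\]
is a direct consequence of Duality: both sides unfold, using Proposition \ref{duality}, to projects in $(\cond{A}\otimes\cond{B}\otimes\cond{C}^{\pol})^{\pol}$ on the same carrier (up to a fixed delocation), and the bijection is mediated by faxes. For $\ast$-autonomy, taking $\cond{\bot}_{\emptyset}=\cond{1}_{\emptyset}^{\pol}$ as dualizing object, Duality yields
\[
\cond{A}\multimap\cond{\bot}_{\emptyset} \;=\; (\cond{A}\otimes\cond{1}_{\emptyset})^{\pol} \;\cong\; \cond{A}^{\pol},
\]
since $\cond{1}_{\emptyset}$ is the tensor unit, and iterating gives $(\cond{A}\multimap\cond{\bot}_{\emptyset})\multimap\cond{\bot}_{\emptyset}\cong\cond{A}^{\pol\pol}=\cond{A}$. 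The canonical map $\cond{A}\to(\cond{A}\multimap\cond{\bot}_{\emptyset})\multimap\cond{\bot}_{\emptyset}$ is realized by a fax along this identification and is therefore an isomorphism.

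The main obstacle I anticipate is not any single nontrivial computation but the bookkeeping of the delocations $\psi_{i}$, $\phi$, $\tau$, $\mu$, $\nu$ throughout the coherence and naturality diagrams: one has to check that the various ways of moving a carrier through $\mathbf{N}\times\{0,1\}$, $\mathbf{N}\times\{1,2\}$ and back really produce equal set-theoretic bijections, so that the corresponding faxes coincide as projects (and not merely up to some further isomorphism). Once this is settled, each diagram collapses because the underlying tensor on conducts already satisfies the required equalities on the nose.
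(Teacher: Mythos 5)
Your sketch is correct and follows the same route the paper intends: the paper's own ``proof'' is only a two-sentence deferral to the earlier work \cite{seiller11}, where the unsliced case is established exactly by this strategy (algebraic content from the Properties of the Tensor, Duality and the Adjunction; structural and coherence morphisms realized by faxes along bijections of $\mathbf{N}$, which act as strict delocations under execution). The one point worth keeping in mind when filling in details is the one you already flag: since $\catmll$ is not quotiented, every coherence and naturality square must commute as an equality of projects, which requires checking that plugging against a fax contributes neither extra alternating cycles to the measurement nor a wager term --- but this holds because fax edges only cross between the interface and a disjoint fresh copy.
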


\begin{proof}
A proof of this result for directed weighted graphs can be found in our earlier paper \cite{seiller-goim}. The proof in this case (sliced graphs) is a straightforward adaptation of it.
\end{proof}

\subsection{Products and Coproducts}

Moreover, to have a denotational semantics for MALL, we need to define a product. The natural construction would be to define the category \catmall:
\begin{equation*}
\left.\begin{array}{l}\mathrm{Obj}=\{\cond{A}~|~\cond{A}=\cond{A}^{\pol\pol}\text{ behaviour with carrier }X_{\cond{A}}\subset\mathbf{N}\}\\
\mathrm{Mor}[\cond{A},\cond{B}]=\{\de{f}\in \psi_{0}(\cond{A})\multimap\psi_{1}(\cond{B})\}\end{array}\right.
\end{equation*}
and then define on this full subcategory of \catmll the bifunctor $\bar{\with}$ by $\cond{A},\cond{B}\mapsto \phi(\psi_{0}(\cond{A})\bar{\with}\psi_{1}(\cond{B}))$ on objects, and:
\begin{equation*}
\de{f}\bar{\with}\de{g}=\tau(\de{Distr}\deplug \psi_{0}(\phi(\de{f})\with\psi_{1}(\phi(\de{g})))
\end{equation*}

However, this does not define a categorical product. Indeed, as usual when dealing with geometry of interaction for additives, the problem lies in the elimination of the cut between additive connectives. 

Here is what happens on an example. Let us take two projects $\de{f,g}$ in respectively $\cond{A\multimap B}$ and $\cond{A\multimap C}$. Suppose moreover that both projects have only one slice to simplify the following discussion; we think of $\de{f,g}$ as interpretations of two sequent calculus proofs $\pi_{f}$ and $\pi_{g}$. Then, $\de{f}\bar{\with}\de{g}$ is a project with two slices in $\cond{A\multimap (B\with C)}$, where the first slice contains the graph $F\cup\emptyset_{V^{C}}$, and the second contains the graph $G\cup\emptyset_{V^{B}}$. We want $\de{f}\bar{\with}\de{g}$ to be the interpretation of the proof obtained from $\pi_{f}$ and $\pi_{g}$ by a $\with$ rule; we will denote this proof by $\pi_{\with}$. Now, let us take a project $\de{h}$ (once again we suppose it has only one slice to ease the discussion) in $B\multimap D$, and let us think of it as the interpretation of a proof $\pi_{h}$. Then the project $\de{h}\otimes \de{0}_{V^{C}}$ is in $\cond{(B\with C)\multimap D}$ (we use here Lemma \ref{kindofsemidistrib} which will be shown in the next section). This project will be the interpretation of the proof obtained from $\pi_{h}$ by applying a $\oplus$ rule, which we will denote by $\pi_{\oplus}$. Taking the cut between $\de{f}\bar{\with}\de{g}$ and $\de{h}\otimes \de{0}_{V^{C}}$ then gives us the interpretation of the proof $\pi$ obtained by applying a cut rule between $\pi_{\with}$ and $\pi_{\oplus}$. Applying one step of the cut-elimination procedure on $\pi$ then gives us a proof $\pi'$ whose interpretation should be $\de{f}\deplug\de{h}$. This raises the question: is $\de{p'}=\de{f}\deplug\de{h}$ equal to $(\de{f}\bar{\with}\de{g})\deplug(\de{h\otimes 0}_{V^{C}})$ ? One can easily see that it is never the case! Indeed, $(\de{f}\bar{\with}\de{g})\deplug\de{h}$ is equal to $\de{p}=((\de{f\otimes 0}_{V^{C}}) \deplug\de{h})+((\de{g\otimes 0}_{V^{B}})\deplug\de{0}_{V^{C}})$ (see the graphic representation in Figure \ref{figureadd}). It is clear that $\de{p}$ is not equal to $\de{p'}$ since $\de{p}$ has two slices, while $\de{p'}$ has only one. The situation is actually even worse: even though one slice of $\de{p}$ is equal to $\de{p'}$, the other one is not in general equal to the empty graph. Indeed, in the result of the execution $(G\cup(V^{B},\emptyset))\plug (H\cup(V^{C},\emptyset))$ one keeps the edges in $G$ whose source and target are in $V^{A}$ and the edges in $H$ whose source and target are in $V^{D}$.

\begin{figure*}
\begin{center}
\begin{minipage}{12cm}
\begin{center}
\begin{tikzpicture}[x=1.2cm,y=1.2cm]
	\node (S1) at (0,2) {Slice 1};
	\node (L) at (0,1) {Locations};
	\node (S2) at (0,0) {Slice 2};
	\node (F1) at (1,2.5) {$\de{f\otimes 0}_{V^{C}}$};
	\node (F2) at (1,1.5) {$\de{h\otimes 0}_{V^{C}}$};
	\node (F1) at (1,0.5) {$\de{g\otimes 0}_{V^{B}}$};
	\node (F2) at (1,-0.5) {$\de{h\otimes 0}_{V^{C}}$};
	\node (VA) at (2,1) {$V^{A}$};
	\node (VB) at (4,1) {$V^{B}$};
	\node (VC) at (6,1) {$V^{C}$};
	\node (VD) at (8,1) {$V^{D}$};
	\node (VA1) at (2,2) {$\cdot$};
	\node[shape=circle,draw] (VB1) at (4,2) {$\cdot$};
	\node[shape=circle,draw] (VC1) at (6,2) {$\cdot$};
	\node (VD1) at (8,2) {$\cdot$};
	\node (VA2) at (2,0) {$\cdot$};
	\node[shape=circle,draw] (VB2) at (4,0) {$\cdot$};
	\node[shape=circle,draw] (VC2) at (6,0) {$\cdot$};
	\node (VD2) at (8,0) {$\cdot$};

	\draw[-,dotted] (VA) -- (VA1) {};
	\draw[-,dotted] (VB) -- (VB1) {};
	\draw[-,dotted] (VC) -- (VC1) {};
	\draw[-,dotted] (VD) -- (VD1) {};
	\draw[-,dotted] (VA) -- (VA2) {};
	\draw[-,dotted] (VB) -- (VB2) {};
	\draw[-,dotted] (VC) -- (VC2) {};
	\draw[-,dotted] (VD) -- (VD2) {};
	
	\draw[->] (VA1) .. controls (1.5,3) and (2.5,3) .. (VA1) {};
	\draw[->] (VA1) .. controls (2.5,2.5) and (3.5,2.5) .. (VB1) {};
	\draw[->] (VB1) .. controls (3.5,3) and (2.5,3) .. (VA1) {};
	\draw[->] (VB1) .. controls (3.5,3) and (4.5,3) .. (VB1) {};
	
	\draw[->] (VB1) .. controls (3.5,1) and (4.5,1) .. (VB1) {};
	\draw[->] (VB1) .. controls (5,1.5) and (7,1.5) .. (VD1) {};
	\draw[->] (VD1) .. controls (7,1) and (5,1) .. (VB1) {};
	\draw[->] (VD1) .. controls (7.5,1) and (8.5,1) .. (VD1) {};

	\draw[->] (VA2) .. controls (1.5,1) and (2.5,1) .. (VA2) {};
	\draw[->] (VA2) .. controls (2.5,0.5) and (5.5,0.5) .. (VC2) {};
	\draw[->] (VC2) .. controls (5.5,1) and (2.5,1) .. (VA2) {};
	\draw[->] (VC2) .. controls (5.5,1) and (6.5,1) .. (VC2) {};

	\draw[->] (VB2) .. controls (3.5,-1) and (4.5,-1) .. (VB2) {};
	\draw[->] (VB2) .. controls (5,-0.5) and (7,-0.5) .. (VD2) {};
	\draw[->] (VD2) .. controls (7,-1) and (5,-1) .. (VB2) {};
	\draw[->] (VD2) .. controls (7.5,-1) and (8.5,-1) .. (VD2) {};	
	
\end{tikzpicture}
\end{center}
\footnotetext{Here an edge from $V^{i}$ to $V^{j}$ represents the set of edges whose sources are in $V^{i}$ and targets are in $V^{j}$. We did not represent those sets of edges that are necessarily empty (for instance from $V^{C}$ to $V^{C}$ in the graph of $\de{f\otimes 0}$). The circled dots represent the location of the cut, i.e. the vertices that disappear during the execution.}
\end{minipage}
\caption{Graphic representation of the plugging of $\de{f}\bar{\with}\de{g}$ and $\de{h\otimes 0}_{V^{C}}$.} \label{figureadd}
\end{center}
\end{figure*}

So, the categories considered are not a denotational semantics for MALL, and the $\with$ seems to be a bad candidate for defining a product. We will now see how one can solve this issue by using the observational equivalence.

\subsection{Observational Equivalence}

As we explained earlier, the $\with$ connective does not define a categorical product because, if $\de{f}\in\cond{A\multimap B}$, $\de{g}\in\cond{A\multimap C}$ and $\de{b}\in\cond{B^{\pol}}$, the computation of the cut $(\de{f}\bar{\with}\de{g})\deplug(\de{b\otimes 0})_{V^{C}}$ yields $\de{f}+\de{res}$ where $\de{res}$ is a residue equal to $(\de{g\otimes 0}_{V^{B}})\deplug(\de{b}\otimes\de{0}_{V^{C}})$. The following proposition shows, however, that this residue is not detected by the elements of $\cond{A^{\pol}}$, i.e. that $\sca{res}{a'}=0$ for all $\de{a'}\in\cond{A^{\pol}}$. This means that $\with$ defines a categorical product \emph{up to observational equivalence}.

\begin{proposition}\label{propprod}
Let $\de{f}\in\cond{A\multimap B}$ and $\de{g}\in\cond{A\multimap C}$ be projects, and write $\de{h}=\de{f}\bar{\with}\de{g}$. Then for all $\de{b}\in\cond{B}^{\pol}$, $\de{f}\deplug \de{b}\cong_{\cond{A}}\de{h}\deplug \de{b}\otimes\de{0}_{V^{C}}$.
\end{proposition}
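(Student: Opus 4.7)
The plan is to unfold $\cong_{\cond{A}}$ into the scalar equality $\sca{f\plug b}{a'}=\sca{h\plug (b\otimes 0_{V^{C}})}{a'}$ for every $\de{a'}\in\cond{A}^{\pol}$, and then to exhibit the right-hand side as $\sca{f\plug b}{a'}$ plus a residue that I will argue is unobservable against $\cond{A}^{\pol}$. I exploit the description of $\de{h}=\de{f}\bar{\with}\de{g}$ given just before the proposition: up to the harmless delocations hidden in the bifunctor, $\de{h}$ is the two-slice project whose slices are $\de{f\otimes 0}_{V^{C}}$ and $\de{g\otimes 0}_{V^{B}}$, so as a project $\de{h}=\de{f\otimes 0}_{V^{C}}+\de{g\otimes 0}_{V^{B}}$.

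By bilinearity of $\plug$ and linearity of the pairing, this decomposition yields
\[
\sca{h\plug (b\otimes 0_{V^{C}})}{a'}=\sca{(\de{f\otimes 0}_{V^{C}})\plug(\de{b\otimes 0}_{V^{C}})}{a'}+\sca{(\de{g\otimes 0}_{V^{B}})\plug(\de{b\otimes 0}_{V^{C}})}{a'}.
\]
A direct check using the edgelessness of the $\emptyset_{V^{C}}$ components shows the first summand equals $\sca{f\plug b}{a'}$: no alternating path can traverse the empty $V^{C}$ parts, so neither the execution graph $F\plug B$ nor the wager $\sca{f}{b}$ is altered by the presence of the $\de{0}_{V^{C}}$'s. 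It thus suffices to show that the residue $\de{res}=(\de{g\otimes 0}_{V^{B}})\plug(\de{b\otimes 0}_{V^{C}})$ satisfies $\sca{res}{a'}=0$. For $\de{res}$'s graph, the key combinatorial observation is that $G$ has no edges touching $V^{B}$ and $B$ has no edges touching $V^{C}$, so no alternating path in the bicoloured graph can chain the two colours through $V^{B}$ or $V^{C}$: only single-edge $G$-edges with both endpoints in $V^{A}$ survive the execution, so $\de{res}$ has graph $G|_{V^{A}}$ on carrier $V^{A}$. Since by Propositions~\ref{charact}, \ref{closedmult} and \ref{closedadd} all conducts involved are proper behaviours, every project is wager-free, and $\sca{res}{a'}$ collapses to $\unit{B}\cdot\meas{G|_{V^{A}},A'}$.

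The delicate final step, and the main obstacle, is to argue that $\meas{G|_{V^{A}},A'}=\meas{G,A'}=0$ for every $\de{a'}\in\cond{A}^{\pol}$. The plan is to identify this quantity with $\sca{g}{a'}$ (legitimate since all wagers are zero) and then to apply the cyclic property (Theorem~\ref{cyclicppty}) in order to reinterpret it as the wager of $\de{g}\plug\de{a'}$ paired with a trivial test; combining this with the hypothesis $\de{g}\in\cond{A\multimap C}$ and the fact that $\cond{C}$ is a behaviour (which, via Proposition~\ref{wagerfreeorth}, forces wagers of elements of $\cond{C}$ to vanish) should force the scalar to collapse to zero. Once this vanishing is secured, adding the two contributions gives $\sca{h\plug (b\otimes 0_{V^{C}})}{a'}=\sca{f\plug b}{a'}$ for every $\de{a'}\in\cond{A}^{\pol}$, which is exactly the desired equivalence.
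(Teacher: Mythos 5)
Your setup is sound and matches the discussion that precedes the proposition in the paper: the decomposition of $\de{h}\plug(\de{b}\otimes\de{0}_{V^{C}})$ into $\de{f}\plug\de{b}$ plus the residue $\de{res}=(\de{g\otimes 0}_{V^{B}})\plug(\de{b\otimes 0}_{V^{C}})$ is correct, and so is the combinatorial identification of the residue's graph with a $\unit{B}$-weighted copy of $G$ restricted to $V^{A}$. The gap is in the last step, which you yourself only describe as a ``plan''. The mechanism you propose does not close: you want $\meas{G,A'}=0$ by reading it as the wager of $\de{g}\plug\de{a'}$ and invoking wager-freeness of elements of $\cond{C}$, but $\de{g}\in\cond{A\multimap C}$ only guarantees $\de{g}\plug\de{a}\in\cond{C}$ for $\de{a}\in\cond{A}$, whereas your test projects $\de{a'}$ range over $\cond{A}^{\pol}$; nothing constrains $\sca{g}{a'}$ there. (Since $\de{f}\plug\de{b}$ and $\de{h}\plug(\de{b}\otimes\de{0}_{V^{C}})$ are elements of $\cond{A}^{\pol}$, the meaningful test --- and the one the paper's own proof performs --- is against $\de{a}\in\cond{A}$, despite the literal reading of the subscript in $\cong_{\cond{A}}$.) In addition, your appeal to Propositions \ref{charact}, \ref{closedmult} and \ref{closedadd} to make every wager vanish imports hypotheses the statement does not contain: the proposition is stated for arbitrary conducts, and even for behaviours the degenerate cases $\cond{0}$ and $\cond{T}$ are not proper.

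The paper closes the argument quite differently, with a rescaling trick that needs neither the explicit residue graph nor any wager-freeness. Writing $\sca{h\plug(b\otimes 0)}{a}=\sca{f\plug a}{b}+\lambda$, where the residue term $\lambda$ does not depend on $\de{f}$, orthogonality of $\de{f}\plug\de{a}\in\cond{B}$ and $\de{b}\in\cond{B}^{\pol}$ gives $\mu=\sca{f\plug a}{b}\neq 0,\infty$. If $\lambda\neq 0$, replace $\de{f}$ by $-\frac{\lambda}{\mu}\de{f}$, still in $\cond{A\multimap B}$ by Corollary \ref{homothetie}; the corresponding $\de{h'}$ then satisfies $\sca{h'\plug(b\otimes 0)}{a}=-\lambda+\lambda=0$, contradicting the orthogonality forced by $\de{h'}\in\cond{A\multimap(B\with C)}$ and $\de{b}\otimes\de{0}_{V^{C}}\in(\cond{B\with C})^{\pol}$. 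You should either adopt this contradiction argument, or at minimum rerun your final step against test projects in $\cond{A}$ (so that $\de{g}\plug\de{a}\in\cond{C}$ genuinely holds) and state explicitly the properness assumptions your computation uses.
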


\begin{proof}
For any $\de{a}\in \cond{A}$, we have $\de{h}\deplug (\extde{a}{V^{B}})=\de{f}\deplug\de{a}+\de{g}\deplug \de{0}_{V^{B}}$. Now, let $\de{b}\in\cond{B}^{\pol}$ be a project. Then $\sca{h\deplug a{\uparrow_{\text{$V$}^{\text{$B$}}}}}{b}=\sca{f\deplug a}{b}+\sca{g\deplug 0_{\text{$V$}^{\text{$B$}}}}{b}$. Now, suppose that $\sca{g\deplug 0_{\text{$V$}^{\text{$B$}}}}{b}=\lambda\neq 0$. Since $\de{f\deplug a}\in\cond{B}$, we have $\sca{f\deplug a}{b}=\mu\neq 0$. But, by the homothety lemma \ref{homothetie}, we get $\de{g}=-\frac{\lambda}{\mu}\de{f}\in\cond{A\multimap B}$. Since $\de{g}\deplug\de{a}=-\frac{\lambda}{\mu}\de{f}\deplug\de{a}$, we finally obtain that $\sca{distr\deplug (f\with g)}{b}=0$, which is a contradiction.

Thus, for any $\de{b}\in\cond{B}^{\pol}$, we have $\sca{g\deplug 0_{\text{$V$}^{\text{$B$}}}}{b}=0$.
\end{proof}

As  a consequence, we would like to quotient the category \catmll by the observational equivalence. For this, we need to show that the categorical structure we have does not collapse when taking the observational quotients. The following proposition — an easy consequence of the trefoil property (Theorem \ref{cyclicppty}) — and its corollaries\footnote{They do more than just that, they also ensure that the quotiented category inherits the monoidal structure of \catmll.} make sure of that.

\begin{proposition}\label{propquot}
Let $\de{f}\cong_{\cond{A\multimap B}}\de{f'}$ and $\de{g}\in\cond{B\multimap C}$ be projects. Then $\de{f\deplug g}\cong_{\cond{A\multimap C}}\de{f'\deplug g}$. 
\end{proposition}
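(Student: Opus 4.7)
The plan is to unfold the definition of $\cong_{\cond{A\multimap C}}$ and then reduce the statement to the hypothesis $\de{f}\cong_{\cond{A\multimap B}}\de{f'}$ by transporting the test elements through $\de{g}$ using the cyclic property and the adjunction. First I would note that $\de{f\plug g}$ and $\de{f'\plug g}$ both belong to $\cond{A\multimap C}$: for any $\de{a}\in\cond{A}$, associativity and commutativity of the cut give $(\de{f\plug g})\plug\de{a}=\de{g}\plug(\de{f\plug a})$, which lies in $\cond{C}$ by the hypotheses on $\de{f}$ and $\de{g}$, and likewise for $\de{f'}$.

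Next, by Proposition \ref{duality} the orthogonal $(\cond{A\multimap C})^{\pol}$ equals $\cond{A}\otimes\cond{C}^{\pol}$, which is the biorthogonal closure of the set $E=\cond{A}\odot\cond{C}^{\pol}$ of elementary tensors. Lemma \ref{compethics} therefore reduces the claim to showing that, for all $\de{a}\in\cond{A}$ and $\de{c}\in\cond{C}^{\pol}$, the equality $\sca{f\plug g}{a\otimes c}=\sca{f'\plug g}{a\otimes c}$ holds.

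The key computation rewrites this scalar so that $\de{f}$ appears paired with a project in $(\cond{A\multimap B})^{\pol}$. Applying successively the adjunction (Corollary \ref{adjunction}) to extract $\de{a}$ from the tensor, then associativity and commutativity of the cut (Proposition \ref{assoc}, whose triple-intersection hypothesis is satisfied since $V^A$, $V^B$, $V^C$ are pairwise disjoint), then the cyclic property (Theorem \ref{cyclicppty}) applied to $\de{f\plug a}$, $\de{g}$, $\de{c}$, and finally the adjunction once more, I obtain
\[\sca{f\plug g}{a\otimes c}=\sca{(f\plug a)\plug g}{c}=\sca{f\plug a}{g\plug c}=\sca{f}{a\otimes(g\plug c)}.\]

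Since $\de{g}\in\cond{B\multimap C}=(\cond{B}\otimes\cond{C}^{\pol})^{\pol}$, the project $\de{g\plug c}$ lies in $\cond{B}^{\pol}$, so $\de{a}\otimes(\de{g\plug c})\in\cond{A}\otimes\cond{B}^{\pol}=(\cond{A\multimap B})^{\pol}$. The hypothesis $\de{f}\cong_{\cond{A\multimap B}}\de{f'}$ then yields $\sca{f}{a\otimes(g\plug c)}=\sca{f'}{a\otimes(g\plug c)}$, and running the chain of rewritings backwards with $\de{f'}$ in place of $\de{f}$ recovers $\sca{f'\plug g}{a\otimes c}$, closing the argument. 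The main obstacle is purely bookkeeping, namely verifying the triple-intersection conditions at each step and identifying the correct test element in $(\cond{A\multimap B})^{\pol}$; no genuine computation is needed once the chain of equalities is set up.
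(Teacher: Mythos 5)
Your argument is correct and follows essentially the same route as the paper: both proofs move $\de{g}$ across the pairing via the cyclic property/adjunction so that the hypothesis on $\de{f}$ can be applied to the transported test element $\de{g\plug c}$ (resp.\ $\de{a}\otimes(\de{g\plug c})$). Your extra step of reducing to elementary tensors via Lemma \ref{compethics} and Proposition \ref{ethtenscondtens} is not in the paper's terser proof, but it is harmless and in fact makes explicit why the transported element lands in $(\cond{A\multimap B})^{\pol}$.
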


\begin{proof}
For all $c\in\cond{C\multimap A}$, we have $\sca{f\deplug g}{c}=\sca{f}{g\deplug c}$. Therefore:
\begin{eqnarray*}
\sca{f\deplug g}{c}&=&\sca{f}{g\deplug c}\\
&=&\sca{f'}{g\deplug c}\\
&=&\sca{f'\deplug g}{c}
\end{eqnarray*}
\end{proof}

\begin{corollary}\label{corquot}
Let $\de{f},\de{f'},\de{g},\de{g'}$ be projects such that $\de{f}\cong_{\cond{A\multimap B}} \de{f'}$ and $\de{g}\cong_{\cond{B}\multimap \cond{C}}\de{g'}$. Then $\de{f\deplug g}\cong_{\cond{A\multimap C}}\de{f'\deplug g'}$.
\end{corollary}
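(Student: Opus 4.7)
The plan is to reduce Corollary \ref{corquot} to two applications of Proposition \ref{propquot}, interleaved with transitivity of the equivalence relation $\cong$. Concretely, I will show the chain
\[
\de{f\plug g}\;\cong_{\cond{A\multimap C}}\;\de{f'\plug g}\;\cong_{\cond{A\multimap C}}\;\de{f'\plug g'}
\]
and then conclude by transitivity of $\cong_{\cond{A\multimap C}}$ (which is immediate from its definition as coincidence of scalar pairings on $\cond{(A\multimap C)^{\pol}}$).

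The first equivalence is a direct instance of Proposition \ref{propquot} applied to $\de{f}\cong_{\cond{A\multimap B}}\de{f'}$ and to $\de{g}\in\cond{B\multimap C}$. The second equivalence requires the analogous statement with the equivalence on the right-hand factor. This is not literally what Proposition \ref{propquot} asserts, so I would either (i) state this symmetric version as an intermediate step, or (ii) prove it inline by essentially the same argument: for any $\de{c}\in\cond{(A\multimap C)^{\pol}}$, using the cyclic property (Theorem \ref{cyclicppty}) twice,
\[
\sca{f'\plug g}{c}=\sca{f'\plug c}{g}=\sca{f'\plug c}{g'}=\sca{f'\plug g'}{c},
\]
where the middle equality uses $\de{g}\cong_{\cond{B\multimap C}}\de{g'}$ after noting that $\de{f'\plug c}\in\cond{(B\multimap C)^{\pol}}$. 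The latter fact follows from the adjunction (Corollary \ref{adjunction}) together with the duality $\cond{B\multimap C}=(\cond{B}\otimes\cond{C}^{\pol})^{\pol}$ (Proposition \ref{duality}): orthogonality of $\de{f'\plug c}$ to any $\de{g''}\in\cond{B\multimap C}$ reduces via cyclic property to orthogonality of $\de{f'}$ to $\de{g''\plug c}\in\cond{(A\multimap B)^{\pol}}$, which holds since $\de{f'}\in\cond{A\multimap B}$.

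The only delicate step is verifying $\de{f'\plug c}\in\cond{(B\multimap C)^{\pol}}$, which is what legitimates the use of the equivalence $\de{g}\cong_{\cond{B\multimap C}}\de{g'}$ against this test project; everything else is a mechanical unrolling of definitions and an application of the cyclic property, exactly mirroring the proof of Proposition \ref{propquot}. Once these two equivalences are combined by transitivity, the corollary follows.
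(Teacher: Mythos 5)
Your proposal is correct and matches the route the paper intends: the corollary is meant to follow from Proposition \ref{propquot} by applying it once on each side and chaining by transitivity, and your inline proof of the right-hand version is the same cyclic-property computation as the paper's proof of the proposition itself. You are in fact more careful than the paper, which never checks that the transported test project (your $\de{f'\plug c}$, its $\de{g\plug c}$) actually lies in the relevant orthogonal; your verification of that point is the right one.
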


\begin{corollary}
Let $\de{a}\cong_{\cond{A}}\de{a'}$ and $\de{f}\cong_{\cond{A\multimap B}}\de{g}$ be projects. Then $\de{f}\deplug\de{a}\cong_{\cond{B}}\de{g}\deplug\de{a'}$.
\end{corollary}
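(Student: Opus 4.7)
The plan is to fix an arbitrary test $\de{b}\in\cond{B}^{\pol}$ and prove $\sca{f\plug a}{b}=\sca{g\plug a'}{b}$ by two pivots: first swap $\de{f}$ for $\de{g}$ by testing with $\de{a}\otimes\de{b}$, then swap $\de{a}$ for $\de{a'}$ by testing with $\de{g}\plug\de{b}$. Each pivot amounts to shuttling $\de{b}$ between the two arguments of the inner product via the adjunction (Corollary~\ref{adjunction}) or the cyclic property (Theorem~\ref{cyclicppty}), so that the corresponding hypothesis can be applied against an appropriate test project.

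For the first pivot, $\de{a}\otimes\de{b}$ lies in $\cond{A}\odot\cond{B}^{\pol}\subseteq(\cond{A\multimap B})^{\pol}$ by the duality proposition (Proposition~\ref{duality}); hence $\de{f}\cong_{\cond{A\multimap B}}\de{g}$ gives $\sca{f}{a\otimes b}=\sca{g}{a\otimes b}$, and the adjunction rewrites both sides to yield $\sca{f\plug a}{b}=\sca{g\plug a}{b}$.

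For the second pivot, I first check that $\de{g\plug b}$ lies in $\cond{A}^{\pol}$. The carriers $V^{A}\cup V^{B}$, $V^{A}$, $V^{B}$ of $\de{g}$, an arbitrary $\de{a''}\in\cond{A}$, and $\de{b}$ have empty triple intersection, since $V^{A}\cap V^{B}=\emptyset$ is implicit in $\de{g}\in\cond{A\multimap B}$. The cyclic property therefore gives $\sca{a''}{g\plug b}=\sca{g\plug a''}{b}$, which is nonzero and finite because $\de{g\plug a''}\in\cond{B}$ and $\de{b}\in\cond{B}^{\pol}$. Hence $\de{g\plug b}$ is a legal test for $\cond{A}$, and the hypothesis $\de{a}\cong_{\cond{A}}\de{a'}$ gives $\sca{a}{g\plug b}=\sca{a'}{g\plug b}$; a final invocation of the cyclic property on each side rewrites this as $\sca{g\plug a}{b}=\sca{g\plug a'}{b}$.

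Chaining the two pivots yields the required equality for every $\de{b}\in\cond{B}^{\pol}$, which is exactly $\de{f\plug a}\cong_{\cond{B}}\de{g\plug a'}$. There is no genuine obstacle here: once the two test projects $\de{a}\otimes\de{b}$ and $\de{g}\plug\de{b}$ are identified, the adjunction and the cyclic property do all the transport work. In fact, the statement can be seen as a specialisation of Proposition~\ref{propquot} (or Corollary~\ref{corquot}) obtained by viewing $\de{a}$ and $\de{a'}$ as morphisms in $\cond{1}_{\emptyset}\multimap\cond{A}$.
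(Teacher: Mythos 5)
Your proof is correct and follows essentially the route the paper intends: the corollary is left unproved there as an instance of Proposition~\ref{propquot}/Corollary~\ref{corquot} (via $\de{a},\de{a'}$ viewed as morphisms out of the tensor unit), whose proofs are exactly your cyclic-property/adjunction pivots, and your direct verification --- including the check that $\de{g}\plug\de{b}$ is a legitimate test in $\cond{A}^{\pol}$ and that $\de{a}\otimes\de{b}\in(\cond{A\multimap B})^{\pol}$ by Proposition~\ref{duality} --- fills in precisely the omitted details.
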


\begin{corollary}
If $\de{a}\cong_{\cond{A}}\de{a'}$ and $\de{b}\cong_{\cond{B}}\de{b'}$, then $\de{a}\otimes\de{b}\cong_{\cond{A\otimes B}}\de{a'}\otimes\de{b'}$.
\end{corollary}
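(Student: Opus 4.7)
The plan is to chase the definition of $\cong_{\cond{A\otimes B}}$ directly, using the adjunction (Corollary~\ref{adjunction}) to partially apply an arbitrary test project from $(\cond{A}\otimes\cond{B})^{\pol}$ and then invoke each equivalence hypothesis in turn. Concretely, I would fix an arbitrary $\de{h}\in(\cond{A}\otimes\cond{B})^{\pol}$ and transform $\sca{h}{a\otimes b}$ into $\sca{h}{a'\otimes b'}$ in two steps, swapping $\de{b}$ for $\de{b'}$ first and then $\de{a}$ for $\de{a'}$.

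The preliminary observation is that by duality (Proposition~\ref{duality}), $(\cond{A}\otimes\cond{B})^{\pol}=\cond{A\multimap B^{\pol}}$; hence $\de{h}\plug\de{a}\in\cond{B}^{\pol}$ for every $\de{a}\in\cond{A}$, and, by commutativity of $\otimes$, also $\de{h}\plug\de{b'}\in\cond{A}^{\pol}$ for every $\de{b'}\in\cond{B}$. These are precisely the shapes of test projects required to invoke $\de{b}\cong_{\cond{B}}\de{b'}$ and $\de{a}\cong_{\cond{A}}\de{a'}$ respectively.

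The main chain of equalities then runs: $\sca{h}{a\otimes b}=\sca{h\plug a}{b}$ by adjunction; this equals $\sca{h\plug a}{b'}$ since $\de{h}\plug\de{a}\in\cond{B}^{\pol}$ and $\de{b}\cong_{\cond{B}}\de{b'}$; then $\sca{h\plug a}{b'}=\sca{h}{a\otimes b'}=\sca{h\plug b'}{a}$ by commutativity of $\otimes$ together with a second application of the adjunction; next $\sca{h\plug b'}{a}=\sca{h\plug b'}{a'}$ because $\de{h}\plug\de{b'}\in\cond{A}^{\pol}$ and $\de{a}\cong_{\cond{A}}\de{a'}$; and finally the adjunction one more time gives $\sca{h\plug b'}{a'}=\sca{h}{a'\otimes b'}$. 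Since $\de{h}$ was arbitrary in $(\cond{A}\otimes\cond{B})^{\pol}$, this is exactly $\de{a}\otimes\de{b}\cong_{\cond{A\otimes B}}\de{a'}\otimes\de{b'}$.

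I do not foresee any real obstacle. The only point worth double-checking is that the adjunction can be applied on either side of the tensor, which is a direct consequence of the commutativity of $\otimes$ at the level of projects recorded in the Properties of the Tensor. The whole argument is just the standard two-swap trick for a bifunctor, here expressed in the language of pairings and orthogonality.
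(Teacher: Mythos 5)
Your argument is correct and is exactly the intended one: the paper states this as an unproved corollary of Proposition~\ref{propquot}, whose own proof is the same adjunction-chasing computation, and your two-swap chain (using $(\cond{A}\otimes\cond{B})^{\pol}=\cond{A\multimap B^{\pol}}$ to see that the partial applications $\de{h}\plug\de{a}$ and $\de{h}\plug\de{b'}$ land in $\cond{B}^{\pol}$ and $\cond{A}^{\pol}$) fills it in faithfully. No gaps.
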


Corollary \ref{corquot} shows that the observational equivalence defines a congruence on the category \catmll{}. This allows us to define the following quotient category:

\begin{definition}
Define the category \concat by
\begin{equation*}
\left.\begin{array}{l}\mathrm{Obj}=\{\cond{A}~|~\cond{A}=\cond{A}^{\pol\pol}\text{ with carrier }X_{\cond{A}}\subset\mathbf{N}\}\\
\mathrm{Mor}[\cond{A},\cond{B}]=\{[f]~|~f\in\psi_{0}(\cond{A})\multimap\psi_{1}(\cond{B})\}\end{array}\right.
\end{equation*}
\end{definition}

\begin{proposition}
The category \concat inherits the $\ast$-autonomous structure of the category \catmll.
\end{proposition}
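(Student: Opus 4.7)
The plan is to transfer each ingredient of the $\ast$-autonomous structure of \catmll across the quotient, exploiting the fact that the objects of \concat coincide with those of \catmll and only the morphism sets have been quotiented. Three things need to be checked: (i) composition is well-defined on equivalence classes, so that \concat is a category and $[\cdot]\colon\catmll\to\concat$ is a functor; (ii) the bifunctor $\bar{\otimes}$ (and, deduced from it by duality, the internal hom $\multimap$) descends to \concat; and (iii) each structural natural isomorphism --- associator, unitors, symmetry, evaluation/coevaluation, and the dualizing isomorphism $\cond{A}\cong\cond{A}^{\pol\pol}$ --- descends.

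I would start with (i). Composition in \catmll is defined as $\de{g}\circ\de{f} = \nu(\de{f}\plug\mu(\de{g}))$, so compatibility with $\cong$ reduces, via Corollary \ref{corquot}, to the claim that every delocation preserves the equivalence $\cong$. This last fact I would establish by unfolding the definitions: a delocation merely relabels vertices and therefore induces a bijection at the level of the scalars $\sca{\cdot}{\cdot}$ used to define orthogonality, so it sends orthogonals to orthogonals and $\cong$-classes to $\cong$-classes. Step (ii) then follows by combining this delocation lemma with the last corollary of the preceding subsection (which says $\otimes$ preserves $\cong$ on projects); the internal hom inherits compatibility because $\cond{A}\multimap\cond{B}=(\cond{A}\otimes\cond{B}^{\pol})^{\pol}$ and orthogonalisation is an operation on conducts rather than on individual projects. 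Step (iii) is essentially formal: each structural isomorphism is witnessed in \catmll by a concrete project (often built from faxes, as in Proposition \ref{faxs}), and any diagram that is an on-the-nose equality of projects in \catmll is, \emph{a fortiori}, a commuting diagram of $\cong$-classes in \concat.

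The step I expect to carry whatever real content the proof has is the verification, inside (i) and (ii), that delocations preserve $\cong$; once that is in hand, every other check is a routine transport across the quotient functor. Putting these together, $[\cdot]\colon\catmll\to\concat$ becomes a strict $\ast$-autonomous functor, which is precisely the assertion that \concat inherits the $\ast$-autonomous structure of \catmll.
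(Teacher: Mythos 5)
Your proposal follows essentially the same route as the paper: quotient only the hom-sets, use Corollary \ref{corquot} and its companions to see that composition and $\bar{\otimes}$ descend, observe that delocations preserve $\cong$ (which the paper likewise flags as the one point needing comment, for the compatibility of the currying isomorphism), and note that the structural and dualizing isomorphisms pass to equivalence classes formally. The argument is correct and matches the paper's proof in substance and emphasis.
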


\begin{proof}
Notice that we quotient only the hom-sets. The three corollaries of Proposition \ref{propquot} above ensure that we have indeed defined a category, and that it inherits the monoidal structure of \catmll. To show that \concat is closed, one shows that the isomorphism between $\mathfrak{Mor}[\cond{A},\cond{B}\bar{\multimap}\cond{C}]$ and $\mathfrak{Mor}[\cond{A}\bar{\otimes}\cond{B},\cond{C}]$ in \catmll is compatible with the equivalence relation. This compatibility is however obvious: equivalence is preserved by delocations. The fact that $\bot$ is dualizing is a direct consequence of the preservation of isomorphisms when one takes the quotient.
\end{proof}

\begin{definition}
Define the category \behcat by
\begin{equation*}
\left.\begin{array}{l}\mathrm{Obj}=\{\cond{A}~|~\cond{A}\text{ behaviour with carrier }X_{\cond{A}}\subset\mathbf{N}\}\\
\mathrm{Mor}[\cond{A},\cond{B}]=\{[f]~|~f\in\psi_{0}(\cond{A})\multimap\psi_{1}(\cond{B})\}\end{array}\right.
\end{equation*}
\end{definition}

\begin{proposition}
The category \behcat is a full subcategory of \concat closed under the monoidal product, the internalisation of Hom-sets and duality, which has products, coproducts and in which mix and weakening do not hold.
\end{proposition}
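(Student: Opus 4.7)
The plan is to verify each clause separately, drawing on the structural results already established in this section.

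First, that \behcat is a full subcategory of \concat is essentially tautological: behaviours are a distinguished class of conducts (Proposition \ref{charac}), the hom-sets are defined by the same formula, and identities are represented by faxes (Proposition \ref{faxs}), which lie in $\cond{A}\bar{\multimap}\cond{A}$ and therefore belong to the subcategory whenever $\cond{A}$ is itself a behaviour. Closure under $\bar{\otimes}$ and $\bar{\multimap}$ is Proposition \ref{closedmult}, and closure under duality is the remark following the definition of behaviour; together these ensure that the $\ast$-autonomous structure of \concat restricts to \behcat.

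For products, I would promote $\with$ to a bifunctor $\bar{\with}$ on \behcat, setting $\cond{A}\bar{\with}\cond{B}=\phi(\psi_{0}(\cond{A})\with\psi_{1}(\cond{B}))$ on objects (a behaviour by Proposition \ref{closedadd}) and using the $\de{Distr}$ project of Proposition \ref{distributivity} to form the pairing $\langle[\de{f}],[\de{g}]\rangle$ of two morphisms $[\de{f}]:\cond{C}\to\cond{A}$ and $[\de{g}]:\cond{C}\to\cond{B}$. The projections $[\pi_{\cond{A}}]:\cond{A}\bar{\with}\cond{B}\to\cond{A}$ are the duals of the $\oplus$-embeddings, essentially faxes tensored with $\de{0}$ on the discarded carrier. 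Commutativity of the product triangle, modulo observational equivalence, is exactly Proposition \ref{propprod}: the residue arising from projection is invisible to every test in the orthogonal. Uniqueness is the decisive use of Proposition \ref{sumwith}: any morphism into $\cond{A}\bar{\with}\cond{B}$ is observationally equivalent to the sum of its two projections, so two morphisms sharing both projections are observationally equivalent. The terminal object is $\cond{T}_{\emptyset}$. Coproducts follow by dualising, setting $\cond{A}\bar{\oplus}\cond{B}=(\cond{A}^{\pol}\bar{\with}\cond{B}^{\pol})^{\pol}$ and using Proposition \ref{dualityadditives} together with closure under duality.

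The failure of mix and weakening is a direct reading of Propositions \ref{mix} and \ref{weak}: a mix natural transformation would require a uniform morphism $\cond{A}\bar{\otimes}\cond{B}\to(\cond{A}^{\pol}\bar{\otimes}\cond{B}^{\pol})^{\pol}$, excluded by the scalar computation in the proof of Proposition \ref{mix}; weakening would give a morphism $\cond{C}\to\cond{1}_{\emptyset}$ for every behaviour $\cond{C}$, ruled out by Proposition \ref{weak}.

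The main obstacle is the uniqueness half of the universal property of the product. Without the observational quotient, the residue discussed before Proposition \ref{propprod} prevents \catmll from having a product at all; with the quotient in hand, one must still verify that Proposition \ref{sumwith}, originally stated for projects, transports faithfully to equivalence classes of morphisms, which in turn relies on the stability lemmas (Proposition \ref{propquot} and its corollaries) already used to define \concat.
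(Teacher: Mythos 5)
Your proposal is correct and reaches the same conclusion, but it attacks the universal property from the opposite side: the paper's proof reduces everything to showing that $\oplus$ is a coproduct (the copairing of $\de{f}\in\cond{A\multimap C}$ and $\de{g}\in\cond{B\multimap C}$ being $\de{f}\bar{\with}\de{g}\in\cond{(A\oplus B)\multimap C}$, with injections $\iota_{\cond{A}},\iota_{\cond{B}}$ given by identities tensored with $\de{0}$, and the triangle identities following from Proposition \ref{propprod}), and then obtains the product for free since \behcat{} is closed under orthogonality; you instead build the product directly with $\de{Distr}$ as pairing and dualise to get the coproduct. The two routes are mirror images and each buys the other by Proposition \ref{dualityadditives}. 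Where you genuinely add something is the uniqueness half of the universal property: the paper's proof verifies only that composing the (co)pairing with the injections recovers $[\de{f}]$ and $[\de{g}]$, and is silent on uniqueness. Your appeal to Proposition \ref{sumwith} is the right idea, but as stated it applies to projects in $\cond{A\with B}$, not to morphisms in $\cond{C\multimap(A\with B)}$; to make it bite you must first reduce observational equivalence in $\cond{C\multimap(A\with B)}=(\cond{C}\otimes(\cond{A}^{\pol}\oplus\cond{B}^{\pol}))^{\pol}$ to tests against generators of the form $\de{c}\otimes\de{a'}_{\uparrow V^{B}}$ and $\de{c}\otimes\de{b'}_{\uparrow V^{A}}$, using Lemma \ref{compethics}, Proposition \ref{ethtenscondtens} and the adjunction, and only then does "same projections implies same class" follow. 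That step is missing in your sketch (you flag it as the main obstacle, correctly), but it is a routine elaboration rather than a gap in the strategy; the remaining clauses (full subcategory, closure under $\bar{\otimes}$, $\bar{\multimap}$ and $(\cdot)^{\pol}$, failure of mix and weakening) are dispatched exactly as the paper intends, by Propositions \ref{closedmult}, \ref{closedadd}, \ref{mix} and \ref{weak}.
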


\begin{proof}
It is sufficient to prove that $\oplus$ is a coproduct, since \behcat is closed under taking the orthogonal. Let $\cond{A,B,C}$ be behaviours, and $\de{f}\in\cond{A\multimap C}$, $\de{g}\in\cond{B\multimap C}$ be projects. Then the project $\de{f}\bar{\with}\de{g}$ is a project in $\cond{(A\oplus B)\multimap C}$. Define $\iota_{\cond{A}}$ (resp. $\iota_{\cond{B}}$) as the identity on $\cond{A}$ (resp. on $\cond{B}$) tensored with $\de{0}_{V^{B}}$ (resp. $\de{0}_{V^{A}}$). Then, it is an easy consequence of Proposition \ref{propprod} that, for any representative $\de{h}$ of $[\de{f}\bar{\with}\de{g}]$ and any representative $\de{i}$ of $[\iota_{\cond{A}}]$, we have $\de{h}\deplug\de{i}\in[\de{f}]$. The verification concerning $\iota_{\cond{B}}$ is similar.
\end{proof}

So the categorical model we obtain has two layers (see Figure \ref{catmodels}). The first layer consists in a non-degenerate (i.e. $\otimes\neq\parr$ and $\cond{1}\neq\cond{\bot}$) $\ast$-autonomous category \concat{}, hence a denotational model for MLL with units. The second layer is the full subcategory \behcat which does not contain the multiplicative units but is a non-degenerate model (i.e. $\otimes\neq\parr$, $\otimes\neq\with$ and $\cond{0}\neq\cond{\top}$) of MALL with additive units that does not satisfy the mix and weakening rules.

\begin{figure}
\centering
\begin{tikzpicture}[x=1.2cm,y=0.8cm]
	\draw[fill,opacity=0.1] (0,0) .. controls  (0,4.5) and (0.5,5) .. (5,5) .. controls (9.5,5) and (10,4.5) .. (10,0) .. controls (10,-4.5) and (9.5,-5) .. (5,-5) .. controls (0.5,-5) and (0,-4.5) .. (0,0) ;
		\node (A) at (2,0) {\begin{tabular}{c}\small{\concat}\\\small{($\ast$-autonomous)}\end{tabular}};
	\draw[fill,opacity=0.2] (4,0) .. controls  (4,2.5) and (4.5,3) .. (6,3) .. controls (7.5,3) and (8,2.5) .. (8,0) .. controls (8,-2.5) and (7.5,-3) .. (6,-3) .. controls (4.5,-3) and (4,-2.5) .. (4,0) ;
		\node (B) at (6,0) {\begin{tabular}{c}\small{\behcat}\\\small{(closed under $\otimes,\multimap,\with,\oplus,(\cdot)^{\pol}$)}\\\small{NO weakening, NO mix}\end{tabular}};
	\node (bot) at (2,-3) {$\bullet_{\bot}$};
	\node (one) at (3,-3) {$\bullet_{\cond{1}}$};
	
	\node (top) at (6,-2) {$\bullet_{\cond{T}}$};
	\node (zero) at (7,-2) {$\bullet_{\cond{0}}$};
\end{tikzpicture}
\caption{The categorical models}\label{catmodels}
\end{figure}

Moreover, all the results up to this point are independent of the choice of the parameter (the function $m$ that measures cycles), hence we did not define one, but a whole lot of such models.

\section{Truth and Soundness}\label{truthsection}

In this section, we suppose that the measurement map $m$ satisfies $m(1)=\infty$.

\subsection{Truth}

The notion of a \emph{successful project} captures the kind of projects that are interpretations of proofs of the sequent calculus. Keeping in mind that the graphs are a generalisation of the set of axiom links of a proof net, the graph of a successful project should be a disjoint union of transpositions, that is a graph in which each vertex is the source (resp. the target) of at most one edge, and such that for each edge $e \in E(v,w)$, there exists $e^{\ast} \in E(w,v)$ and $\omega(e) = 1 = \omega(e^{\ast})$. Moreover, the wager of such a project should be equal to zero, since a non-zero wager marks the appearance of a cycle in an application.

\begin{definition}[Success]\label{successdef}
A project $\de{a}=(a,A)$ is \emph{successful} when $a=0$ and $A=\sum_{i\in I^{A}} A_{i}$, and, for all $i\in I^{A}$, the graph $A_{i}$ is a disjoint union of transpositions.
\end{definition}

\begin{definition}[Truth]
A conduct $\cond{A}$ is \emph{true} when it contains a successful project.
\end{definition}

\begin{proposition}[Consistency]
The conducts $\cond{A}$ and $\cond{A^{\pol}}$ cannot be simultaneously true.
\end{proposition}

\begin{proof}
Let $\de{a}=(0,A)$ and $\de{a'}=(0,A')$ be two successful projects on the same carrier $V^{A}$. Their interaction is measured by the sum $$\sum_{(i,j)\in I^{A}\times I^{A'}} \meas{A_{i},A'_{j}}$$
It is easy to show the only possible cases are $\sca{a}{a'}=0$ and $\sca{a}{a'}=\infty$. This shows that $\de{a}$ and $\de{a'}$ cannot be orthogonal.
\end{proof}

\begin{proposition}[Compositionality]\label{compositiontruthmall}
Let $\de{f}=(0,F)$ and $\de{a}=(0,A)$ be two successful projects in the conducts $\cond{A\multimap B}$ and $\cond{A}$ respectively. Then $\cond{B}$ is true. Moreover, if $\cond{B}\neq\cond{T}_{V^{B}}$, then $\de{f\deplug a}$ is a successful project.
\end{proposition}

\begin{proof}
Since the wagers of $\de{f}$ and $\de{a}$ are equal to zero, we get $$\sca{f}{a}=\meas{F,A}=\sum_{(i,j)\in I^{F}\times I^{A}} \meas{F_{i},A_{j}}$$ Since each of the terms $\meas{F_{i},A_{j}}$ are either null or equal to $\infty$, we deduce that $\sca{f}{a}$ is either null or equal to $\infty$. 
\begin{itemize}
\item Suppose that $\sca{f}{a}=\infty$. Since $\de{f\deplug a}$ is, by definition of $\cond{A\multimap B}$, a project in $\cond{B}$ whose wager is equal to $\sca{f}{a}$, we have that $\cond{B}$ contains a project with infinite wager, and therefore $\cond{B}=\cond{T}_{V^{B}}$ by Proposition \ref{miseinfcondpleine}. As a consequence, $\cond{B}$ is true since it contains the successful project $\de{0}_{V^{B}}$.
\item Suppose now that $\sca{f}{a}=0$. In this case, the same reasoning we used to prove Theorem 44 in \cite{seiller-goim}, applied to each pairing of slices $(i,j)\in I^{F}\times I^{A}$, shows that $F_{i}\plug A_{j}$ is a disjoint union of transpositions. We then conclude that 
$$\de{f\deplug a}=(0,\sum_{(i,j)\in I^{F}\times I^{A}} F_{i}\plug A_{j})$$
is a successful project in $\cond{B}$.
\end{itemize}
In particuler, if one supposes that $\cond{B}\neq\cond{T}_{V^{B}}$, one finds himself in the second case (i.e. $\sca{f}{a}=0$) since $\sca{f}{a}=\infty$ implies $\cond{B}=\cond{T}_{V^{B}}$.
\end{proof}

\subsection{Full Soundness}

\begin{definition}
We fix $\mathcal{V}=\{X_{i}(j)\}_{i,j\in\naturalN}$ a set of \emph{localised variables\footnotemark\addtocounter{footnote}{-1}}. For $i\in\naturalN$, the set $X_{i}=\{X_{i}(j)\}_{j\in\naturalN}$ will be called the \emph{variable name $X_{i}$}, and an element of $X_{i}$ will be referred to as a \emph{variable of name\footnote{The \emph{variable names} are the variables in the usual sense, while the notion of localised variable is close to the usual notion of \emph{occurence of a variable}.} $X_{i}$}. Moreover, we suppose that each variable name $X_{i}$ has an associated \emph{size} $n_{i}\in\naturalN$.
\end{definition}

For $i,j\in\naturalN$ we define the \emph{location} $\sharp X_{i}(j)$ of the localised variable $X_{i}(j)$ as the set $$\{(i,m)~|~ jn_{i}\leqslant m\leqslant (j+1)n_{i}-1\}$$

\begin{definition}[Formulas of locMALL$_{\cond{T,0}}$]
We inductively define the formulas of \emph{localised multiplicative additive linear logic} locMALL as well as their \emph{locations} as follows:
\begin{itemize}
\item A localised variable $X_{i}(j)$ of name $X_{i}$ is a formula whose location is defined as $\sharp X_{i}(j)$;
\item If $X_{i}(j)$ is a localised variable of name $X_{i}$, then $(X_{i}(j))^{\pol}$ is a formula whoe location is $\sharp X_{i}(j)$.
\item If $A,B$ are formulas and $X,Y$ are their locations and satisfy $X\cap Y=\emptyset$, then $A\otimes B$ (resp. $A\parr B$, resp. $A\with B$, resp. $A\oplus B$) is a formula whose location is $X\cup   Y$;
\item The constants $\cond{T}_{\sharp \Gamma}$ and $\cond{0}_{\sharp\Gamma}$ are formulas whose location is $\sharp\Gamma$.
\end{itemize}
If $A$ is a formula, we will denote by $\sharp A$ the location of $A$. We also define the sequents $\vdash \Gamma$ of locMALL when the formulas of $\Gamma$ have pairwise disjoint locations\footnote{This is a natural condition since the comma in (one-sided) sequents corresponds to a $\parr$.}.
\end{definition}

\begin{definition}[Formulas of MALL$_{\cond{T,0}}$]
The formulas of MALL$_{\cond{T,0}}$ are defined by the following grammar:
\begin{equation*}
F:=X_{i}~|~X_{i}^{\pol}~|~F\otimes F~|~F\parr F~|~F\with F~|~F\oplus F~|~\cond{0}~|~\cond{T}
\end{equation*}
where the $X_{i}$ are variable names.
\end{definition}

\begin{remark}
Notice that multiplicative units are not considered here. This is coherent with the idea of a \emph{purely linear} fragment, where weakening does not hold. Indeed, multiplicative units $\cond{1}$ and $\cond{\bot}$ are not purely linear as they can be defined as $\cond{\oc T}$ and $\cond{\wn 0}$ respectively. This translates in our models as the fact that multiplicative units are \emph{not} behaviours, but merely conducts. A more involved sequent calculus which includes multiplicative units can be introduced by considering with some notion of polarities. The amount of additional work needed would not be justified for merely including multiplicative units in our calculi, and such considerations will be introduced in forthcoming works when extending our models with exponential connectives.
\end{remark}

\begin{remark}
To any formula of locMALL$_{\cond{T,0}}$ there corresponds a unique formula of MALL$_{\cond{T,0}}$ obtained by replacing localised variables by their name, i.e. by applying the transformation $X_{i}(j)\mapsto X_{i}$ to each localised variable $X_{i}(j)$. Conversely, it is always possible to \emph{localise} a formula of MALL$_{\cond{T,0}}$ (though not in a unique way): if $e$ is an enumeration of occurences of variable names in $A$, one can define in a natural way a formula $A^{e}$ of locMALL$_{\cond{T,0}}$.
\end{remark}

\begin{definition}[Proofs of locMALL$_{\cond{T,0}}$]
A proof of locMALL$_{\cond{T,0}}$ is a derivation obtained from the sequent calculus rules shown in Figure \ref{locMALL}, and such that any localised variable $X_{i}(j)$ and any negation $(X_{i}(j))^{\pol}$ of a localised variable appears at most once in a premise-free rule (axiom or $\top$ rule).
\end{definition}

\begin{definition}[Proofs of MALL$_{\cond{T,0}}$]
A proof of MALL$_{\cond{T,0}}$ is a derivation obtained from the sequent calculus rules shown in Figure \ref{MALL}.
\end{definition}

\begin{remark}
To any proof of locMALL$_{\cond{T,0}}$ corresponds a unique proof of MALL$_{\cond{T,0}}$ obtained by replacing localised variables by their names. Conversely, if $e$ is an enumeration of the occurrences of variable names appearing in the axiom and $\top$ rules of a proof of MALL$_{\cond{T,0}}$ $\pi$, one one can extend this enumeration to the whole derivation tree and obtain in this way a proof of locMALL$_{\cond{T,0}}$ $\pi^{e}$.
\end{remark}

\begin{figure}
\begin{center}
$\begin{array}{cc}
\begin{minipage}{5cm}
\begin{prooftree}
\AxiomC{}
\RightLabel{\scriptsize{Ax $(j\neq j')$}}
\UnaryInfC{$\vdash X_{i}(j)^{\pol},X_{i}(j')$}
\end{prooftree}
\end{minipage}
&
\begin{minipage}{5cm}
\begin{prooftree}
\AxiomC{$\vdash A,\Delta$}
\AxiomC{$\vdash A^{\pol},\Gamma$}
\RightLabel{\scriptsize{Cut\footnotemark\addtocounter{footnote}{-1}}}
\BinaryInfC{$\vdash \Delta,\Gamma$}
\end{prooftree}
\end{minipage}
\\
\begin{minipage}{5,5cm}
\begin{prooftree}
\AxiomC{$\vdash A,\Delta$}
\AxiomC{$\vdash B,\Gamma$}
\RightLabel{\scriptsize{$\otimes\footnotemark\addtocounter{footnote}{-1}$}}
\BinaryInfC{$\vdash A\otimes B,\Delta,\Gamma$}
\end{prooftree}
\end{minipage}
&
\begin{minipage}{5cm}
\begin{prooftree}
\AxiomC{$\vdash A,B,\Gamma$}
\RightLabel{\scriptsize{$\parr$}}
\UnaryInfC{$\vdash A\parr B,\Gamma$}
\end{prooftree}
\end{minipage}
\\
\begin{minipage}{5cm}
\begin{prooftree}
\AxiomC{$\vdash A_{i},\Gamma$}
\RightLabel{\scriptsize{$\oplus^{i}$}}
\UnaryInfC{$\vdash A_{0}\oplus A_{1},\Gamma$}
\end{prooftree}
\end{minipage}
&
\begin{minipage}{5cm}
\begin{prooftree}
\AxiomC{$\vdash A,\Gamma$}
\AxiomC{$\vdash B,\Gamma$}
\RightLabel{\scriptsize{$\with$}}
\BinaryInfC{$\vdash A\with B,\Gamma$}
\end{prooftree}
\end{minipage}
\\
\begin{minipage}{5cm}
\begin{prooftree}
\AxiomC{}
\RightLabel{\scriptsize{$\top_{\sharp\Gamma}$}}
\UnaryInfC{$\vdash \top,\Gamma$}
\end{prooftree}
\end{minipage}
&
\begin{minipage}{5cm}
\centering
\begin{prooftree}
\AxiomC{}
\noLine
\UnaryInfC{No rule for $\cond{0}$.}
\end{prooftree}
\end{minipage}
\end{array}
$\end{center}
\caption{localised sequent calculus locMALL}\label{locMALL}
\end{figure}
\addtocounter{footnote}{1}\footnotetext{It is necessary that $(\sharp A\cup  \sharp\Delta)\cap (\sharp B\cup  \sharp\Gamma)=\emptyset$ in order to apply the $\otimes$ rule and that $\sharp\Delta\cap\sharp\Gamma=\emptyset$ in order to apply the cut rule. Similarly, one should have that $(\sharp A)\cap(\sharp B)=\emptyset$ in order to apply the $\with$ rule.}

\begin{figure}
\begin{center}
$\begin{array}{cc}
\begin{minipage}{5cm}
\begin{prooftree}
\AxiomC{}
\RightLabel{\scriptsize{Ax}}
\UnaryInfC{$\vdash X_{i}^{\pol},X_{i}$}
\end{prooftree}
\end{minipage}
&
\begin{minipage}{5cm}
\begin{prooftree}
\AxiomC{$\vdash A,\Delta$}
\AxiomC{$\vdash A^{\pol},\Gamma$}
\RightLabel{\scriptsize{Cut}}
\BinaryInfC{$\vdash \Delta,\Gamma$}
\end{prooftree}
\end{minipage}
\\
\begin{minipage}{5,5cm}
\begin{prooftree}
\AxiomC{$\vdash A,\Delta$}
\AxiomC{$\vdash B,\Gamma$}
\RightLabel{\scriptsize{$\otimes$}}
\BinaryInfC{$\vdash A\otimes B,\Delta,\Gamma$}
\end{prooftree}
\end{minipage}
&
\begin{minipage}{5cm}
\begin{prooftree}
\AxiomC{$\vdash A,B,\Gamma$}
\RightLabel{\scriptsize{$\parr$}}
\UnaryInfC{$\vdash A\parr B,\Gamma$}
\end{prooftree}
\end{minipage}
\\
\begin{minipage}{5cm}
\begin{prooftree}
\AxiomC{$\vdash A_{i},\Gamma$}
\RightLabel{\scriptsize{$\oplus^{i}$}}
\UnaryInfC{$\vdash A_{0}\oplus A_{1},\Gamma$}
\end{prooftree}
\end{minipage}
&
\begin{minipage}{5cm}
\begin{prooftree}
\AxiomC{$\vdash \Gamma,A$}
\AxiomC{$\vdash \Gamma,B$}
\RightLabel{\scriptsize{$\with$}}
\BinaryInfC{$\vdash \Gamma,A\with B$}
\end{prooftree}
\end{minipage}
\\
\begin{minipage}{5cm}
\begin{prooftree}
\AxiomC{}
\RightLabel{\scriptsize{$\top$}}
\UnaryInfC{$\vdash \top,\Gamma$}
\end{prooftree}
\end{minipage}
&
\begin{minipage}{5cm}
\centering
\begin{prooftree}
\AxiomC{}
\noLine
\UnaryInfC{No rule for $\cond{0}$.}
\end{prooftree}
\end{minipage}
\end{array}
$\end{center}
\caption{Sequent calculus MALL$_{\cond{T,0}}$}\label{MALL}
\end{figure}

\begin{definition}[Interpretations]
We define an \emph{interpretation basis} as a map $\Phi$ which associates to any variable name $X_{i}$ a behaviour with carrier $\{0,\dots,n_{i}-1\}$.
\end{definition}

\begin{definition}[Interpretation of locMALL$_{\cond{T,0}}$ formulas]
Let $\Phi$ be an interpretation basis. We define the interpretation $I_{\Phi}(F)$ along $\Phi$ of a formula $F$ inductively:
\begin{itemize}
\item If $F=X_{i}(j)$, then $I_{\Phi}(F)$ is the delocation (i.e. a behaviour) of $\Phi(X_{i})$ along the bijection $x\mapsto (i,jn_{i}+x)$;
\item If $F=(X_{i}(j))^{\pol}$, we define the behaviour $I_{\Phi}(F)=(I_{\Phi}(X_{i}(j)))^{\pol}$;
\item If $F=\cond{T}_{\sharp\Gamma}$ (resp. $F=\cond{0}_{\sharp\Gamma}$), we define $I_{\Phi}(F)$ as the behaviour $\cond{T}_{\sharp\Gamma}$ (resp. $\cond{0}_{\sharp\Gamma}$);
\item If $F=A\otimes B$, we define the behaviour $I_{\Phi}(F)=I_{\Phi}(A)\otimes I_{\Phi}(B)$;
\item If $F=A\parr B$, we define the behaviour $I_{\Phi}(F)=I_{\Phi}(A)\parr I_{\Phi}(B)$;
\item If $F=A\oplus B$, we define the behaviour $I_{\Phi}(F)=I_{\Phi}(A)\oplus I_{\Phi}(B)$;
\item If $F=A\with B$, we define the behaviour $I_{\Phi}(F)=I_{\Phi}(A)\with I_{\Phi}(B)$.
\end{itemize}
Moreover, a sequent $\vdash \Gamma$ will be interpreted as the $\parr$ of the formulas in $\Gamma$, which we will denote by $\bigparr \Gamma$.
\end{definition}

\begin{definition}[Interpretation of locMALL$_{\cond{T,0}}$ proofs]\label{interpretationpreuvesmall}
Let $\Phi$ be an interpretation basis. We define the interpretation $I_{\Phi}(\pi)$ of a proof $\pi$ of locMALL$_{\cond{T,0}}$ inductively:
\begin{itemize}
\item if $\pi$ consists only in an axiom rule $\vdash (X_{i}(j))^{\pol},X_{i}(j')$, we define $I_{\Phi}(\pi)$ as the project $\de{Fax}$ corresponding to the bijection $(i,jn_{i}+x)\mapsto (i,j'n_{i}+x)$;
\item if $\pi$ consists only in a $\cond{T}_{\sharp \Gamma}$ rule, we define $I_{\Phi}(\pi)=(0,(\sharp \Gamma,\emptyset))$;
\item if $\pi$ is obtained from $\pi'$ by a $\parr$ rule, then $I_{\Phi}(\pi)=I_{\Phi}(\pi')$;
\item if $\pi$ is obtained from $\pi_{1}$ and $\pi_{2}$ by a $\otimes$ rule, we define $I_{\Phi}(\pi)=I_{\Phi}(\pi_{1})\otimes I_{\Phi}(\pi')$;
\item if $\pi$ is obtained from $\pi'$ by a $\oplus^{i}$ rule introducing a formula whose location is $V$, we define $I_{\Phi}(\pi)=I_{\Phi}(\pi')\otimes\de{0}_{V}$;
\item if $\pi$ of conclusion $\vdash \Gamma, A_{0}\with A_{1}$ is obtained from $\pi_{0}$ and $\pi_{1}$, of respective conclusions $\vdash \Gamma, A_{0}$ and $\vdash \Gamma, A_{1}$, by a $\with$ rule, we define:
\begin{eqnarray*}
\psi_{i}&:& x\mapsto (x,i)~~~~(i=0,1)\\
\tilde{\psi_{i}}&=&((\psi_{i})\restr{\sharp\Gamma})^{-1}~~~~(i=0,1)\\
\dot{\psi_{i}}&=&((\psi_{i})\restr{\sharp A_{i}})^{-1}~~~~(i=0,1)
\end{eqnarray*}
The interpretation of $\pi$ is then defined as:
\begin{equation*}
I_{\Phi}(\pi)=\de{Distr}^{\tilde{\psi_{0}},\tilde{\psi_{1}}}_{\dot{\psi_{0}},\dot{\psi_{1}}}\deplug(\psi_{0}(I_{\Phi}(\pi_{0}))\otimes\de{0}_{\sharp A_{1}}+\psi_{1}(I_{\Phi}(\pi_{1}))\otimes\de{0}_{\sharp A_{0}})
\end{equation*}
\item if $\pi$ is obtained from $\pi_{1}$ and $\pi_{2}$ by a cut rule, we define
$$I_{\Phi}(\pi)=I_{\Phi}(\pi_{1})\deplug I_{\Phi}(\pi_{2})$$
\end{itemize}
\end{definition}

Figure \ref{interpretationwith} represent the different steps in the interpretation of the $\with$ rule. Notice that the result can be easily understood and is a quite natural definition: the shared context $\Gamma$ is superimposed on two slices, and these slices contain the two projects interpreting the premisses of the rule.

\begin{lemma}\label{kindofsemidistrib}
Let $\cond{A,B,C}$ be behaviours. Then $\cond{(A\multimap B)\oplus C\subset A\multimap (B\oplus C)}$.
\end{lemma}

\begin{proof}
It is equivalent to show the inclusion $\cond{A\otimes (B^{\pol}\with C^{\pol})\subset (A\otimes B^{\pol})\with C^{\pol}}$. Using the definition of $\with$, we get:
\begin{eqnarray*}
\cond{A\otimes (B^{\pol}\with C^{\pol})}&=&\cond{A\otimes ((B{\uparrow_{\mathnormal{C}}})^{\pol} \cap (C{\uparrow_{\mathnormal{B}}})^{\pol})}\\
&=&\{\de{a}\otimes \de{d}~|~\de{a}\in\cond{A},\de{d}\in\cond{((B{\uparrow_{\mathnormal{C}}})^{\pol} \cap (C{\uparrow_{\mathnormal{B}}})^{\pol})}\}^{\pol\pol}\\
\cond{(A\otimes B^{\pol})\with C^{\pol}}&=&\cond{(((A\otimes B^{\pol})^{\pol}){\uparrow_{\mathnormal{C}}})^{\pol}\cap (C{\uparrow_{\mathnormal{A,B}}})^{\pol}}\\
&=&\cond{((A\multimap B){\uparrow_{\mathnormal{C}}})^{\pol}\cap (C{\uparrow_{\mathnormal{A,B}}})^{\pol}}\\
&=&\cond{((A\multimap B){\uparrow_{\mathnormal{C}}}\cup C{\uparrow_{\mathnormal{A,B}}})^{\pol}}
\end{eqnarray*}
All which is left to do is to show that a project of the form $\de{a}\otimes\de{d}$, where $\de{a}$ is an element of $\cond{A}$ and $\de{d}\in\cond{((B{\uparrow_{\mathnormal{C}}})^{\pol} \cap (C{\uparrow_{\mathnormal{B}}})^{\pol})}$, is orthogonal to any project in $\cond{E}=\cond{(A\multimap B){\uparrow_{\mathnormal{C}}}\cup C{\uparrow_{\mathnormal{A,B}}}}$. Let $\de{e}$ be a project in $\cond{E}$. Then:
\begin{enumerate}
\item either $\de{e}\in\cond{C}{\uparrow_{\mathnormal{A,B}}}$, i.e. $\de{e}=\de{c}\otimes \de{0}_{V^{A}\cup V^{B}}$. Then $\sca{a\otimes d}{e}=\sca{d}{c\otimes 0_{\text{$V^{B}$}}}$. But, since $\de{d}\in\cond{(C{\uparrow_{\mathnormal{B}}})^{\pol})}$, this entails that $\sca{d}{c\otimes 0_{\text{$V^{B}$}}}\neq0,\infty$. Therefore $\de{e}\poll\de{a\otimes d}$.
\item or else $\de{e}\in\cond{(A\multimap B){\uparrow_{\mathnormal{C}}}}$, i.e. $\de{e}=\de{f}\otimes \de{0}_{V^{C}}$ with $\de{f}\in\cond{A\multimap B}$. Then: 
\begin{eqnarray*}
\sca{e}{a\otimes d}&=&\sca{f\otimes 0_{\text{$V^{C}$}}}{a\otimes d}\\
&=&\sca{f}{(a\otimes d)\deplug 0_{\text{$V^{C}$}}}\\
&=&\sca{f}{a\otimes (d\deplug 0_{\text{$V^{C}$}})}\\
&=&\sca{f\deplug a}{d\deplug 0_{\text{$V^{C}$}}}\\
&=&\sca{(f\deplug a)\otimes 0_{\text{$V^{C}$}}}{d}
\end{eqnarray*}
But, since $\de{f\deplug a}\in\cond{B}$, we have $\de{(f\deplug a)\otimes 0_{\text{$V^{C}$}}}\in\cond{B}{\uparrow_{\mathnormal{C}}}$. The project $\de{d}$ is by definition in $(\cond{B}{\uparrow_{\mathnormal{C}}})^{\pol}$, and finally we obtain $\de{e}\poll\de{a\otimes d}$.\qedhere
\end{enumerate}
\end{proof}

\begin{figure}
\centering
\subfigure[Interpretations of $\pi_{0}$ and $\pi_{1}$]{
\begin{tikzpicture}[x=0.75cm,y=0.75cm]

	\draw (0,0,0) -- (1,0,0) node [midway,below] {$\sharp A_{0}$};
	\draw (2,0,0) -- (3,0,0) node [midway,below] {$\sharp\Gamma$};
	\draw (4,0,0) -- (5,0,0) node [midway,below] {$\sharp A_{1}$};
	
	\draw[pattern color=black,pattern=north east lines,opacity=0.4] 
	(-0.5,-1.2,0) -- 
	(3.5,-1.2,0)  -- 
	(3.5,0.8,0) -- 
	(-0.5,0.8,0) -- (-0.5,-1.2,0) {};

	\draw[pattern color=black,pattern=north west lines,opacity=0.4] 
	(1.5,-0.8,0) -- 
	(5.5,-0.8,0) --
	(5.5,1.2,0) -- 
	(1.5,1.2,0) -- (1.5,-0.8,0) {};
	
	\node (P11) at (0.3,-0.9,0) {$I_{\Phi}(\pi_{1})$};
	\node (P21) at (4.7,0.9,0) {$I_{\Phi}(\pi_{2})$};
\end{tikzpicture}}

\subfigure[Summation of the delocations]{
\begin{tikzpicture}[x=0.9cm,y=0.85cm]

	\draw (-2,0) -- (-1,0) node [midway,below] {$\psi_{0}(\sharp A_{0})$};
	\draw (0,0) -- (1,0) node [midway,below] {$\psi_{0}(\sharp\Gamma)$};
	\draw (4,0) -- (5,0) node [midway,below] {$\psi_{1}(\sharp A_{1})$};
	\draw (6,0) -- (7,0) node [midway,below] {$\psi_{1}(\sharp\Gamma)$};
	\draw (-2,3) -- (-1,3) node [midway,below] {$\psi_{0}(\sharp A_{0})$};
	\draw (0,3) -- (1,3) node [midway,below] {$\psi_{0}(\sharp\Gamma)$};
	\draw (4,3) -- (5,3) node [midway,below] {$\psi_{1}(\sharp A_{1})$};
	\draw (6,3) -- (7,3) node [midway,below] {$\psi_{1}(\sharp\Gamma)$};
	
	\draw[dashed] (-3,-1.5) -- (8,-1.5) {};
	\draw[dashed] (8,-1.5) -- (8,1.5) node [sloped,midway,below] {$\psi_{0}(I_{\Phi}(\pi_{0}))\otimes\de{0}_{\sharp A_{1}}$};
	\draw[dashed] (8,1.5) -- (8,4.5) node [sloped,midway,below] {$\psi_{1}(I_{\Phi}(\pi_{1}))\otimes\de{0}_{\sharp A_{0}}$};
	\draw[dashed] (8,4.5) -- (-3,4.5) {};
	\draw[dashed] (-3,4.5) -- (-3,-1.5) node [sloped,midway,below,white] {$(I)$};
	\draw[dashed] (-3,1.5) -- (8,1.5) {};

	\draw[pattern=north east lines,pattern color=black,opacity=0.4] 
	(-2.5,-1) -- 
	(1.5,-1) --
	(1.5,1) --
	(-2.5,1) -- (-2.5,-1) {};

	\draw[dashed,fill=black,opacity=0.1] 
	(3.5,-1) -- 
	(7.5,-1) --
	(7.5,1) --
	(3.5,1) -- (3.5,-1) {};

	\draw[fill=black,dashed,opacity=0.1] 
	(-2.5,2) -- 
	(1.5,2) --
	(1.5,4) --
	(-2.5,4) -- (-2.5,2) {};

	\draw[pattern color=black,pattern=north west lines,opacity=0.4] 
	(3.5,2) -- 
	(7.5,2) --
	(7.5,4) --
	(3.5,4) -- (3.5,2) {};

	\node (P12) at (-1.5,0.7)  {$\psi_{0}(I_{\Phi}(\pi_{0}))$};
	\node (Q12) at (4,0.7) {$\de{0}_{\sharp A_{1}}$};
	\node (P22) at (6.5,3.7) {$\psi_{1}(I_{\Phi}(\pi_{1}))$};
	\node (Q12) at (1,3.7) {$\de{0}_{\sharp A_{2}}$};

\end{tikzpicture}}

\subfigure[Interpretation of $\pi$]{
\begin{tikzpicture}[x=0.9cm,y=0.85cm]
	\draw (0,0) -- (1,0) node [midway,below] {$\sharp A_{0}$};
	\draw (2,0) -- (3,0) node [midway,below] {$\sharp\Gamma$};
	\draw (4,0) -- (5,0) node [midway,below] {$\sharp A_{1}$};
	\draw (4,2.5) -- (5,2.5) node [midway,below] {$\sharp A_{1}$};
	\draw (2,2.5) -- (3,2.5) node [midway,below] {$\sharp\Gamma$};
	\draw (0,2.5) -- (1,2.5) node [midway,below] {$\sharp A_{0}$};	
	\draw (0,5) -- (1,5) node [midway,below] {$\sharp A_{0}$};
	\draw (2,5) -- (3,5) node [midway,below] {$\sharp\Gamma$};
	\draw (4,5) -- (5,5) node [midway,below] {$\sharp A_{1}$};
	\draw (4,7.5) -- (5,7.5) node [midway,below] {$\sharp A_{1}$};
	\draw (2,7.5) -- (3,7.5) node [midway,below] {$\sharp\Gamma$};
	\draw (0,7.5) -- (1,7.5) node [midway,below] {$\sharp A_{0}$};	

	\draw[dashed] (-3,-1.5) -- (8,-1.5) {};
	\draw[dashed] (8,-1.5) -- (8,1.5) node [sloped,midway,below] {$I_{\Phi}(\pi_{0})\otimes\de{0}_{\sharp A_{1}}$};
	\draw[dashed] (8,1.5) -- (8,3.5) node [sloped,midway,below] {$\de{0}_{\sharp A_{0}\cup\sharp A_{1}\sharp \Gamma}$};
	\draw[dashed] (8,3.5) -- (8,6.5) node [sloped,midway,below] {$I_{\Phi}(\pi_{1})\otimes\de{0}_{\sharp A_{0}}$};
	\draw[dashed] (8,6.5) -- (8,8.5) node [sloped,midway,below] {$\de{0}_{\sharp A_{0}\cup\sharp A_{1}\sharp \Gamma}$};

	\draw[dashed] (8,8.5) -- (-3,8.5) {};
	\draw[dashed] (-3,8.5) -- (-3,-1.5) node [sloped,midway,below,white] {$(I)$};
	\draw[dashed] (-3,1.5) -- (8,1.5) {};
	\draw[dashed] (-3,3.5) -- (8,3.5) {};
	\draw[dashed] (-3,6.5) -- (8,6.5) {};

	\draw[pattern color=black,pattern=north east lines,opacity=0.4] 
	(-0.5,-1) -- 
	(3.5,-1) --
	(3.5,1) --
	(-0.5,1) -- (-0.5,-1) {};

	\draw[dashed,fill=black,opacity=0.1] 
	(1.5,-1) -- 
	(5.5,-1) --
	(5.5,1) --
	(1.5,1) -- (1.5,-1) {};

	\draw[fill=black,dashed,opacity=0.1] 
	(-0.5,4) -- 
	(3.5,4) --
	(3.5,6) --
	(-0.5,6) -- (-0.5,4) {};

	\draw[pattern color=black,pattern=north west lines,opacity=0.4] 
	(1.5,4) -- 
	(5.5,4) --
	(5.5,6) --
	(1.5,6) -- (1.5,4) {};

\end{tikzpicture}}
\caption{Interpretation of the $\with$ rule (of conclusion $\vdash \Gamma, A_{0}\with A_{1}$) applied to the proofs $\pi_{0}$ and $\pi_{1}$ of respective conclusions $\vdash \Gamma, A_{0}$ and $\vdash \Gamma, A_{1}$}\label{interpretationwith}
\end{figure}	

\begin{proposition}[localised soundness]\label{locsoundmall}
Let $\Phi$ be an interpretation basis. If $\pi$ is a proof of conclusion $\vdash \Delta$, then $I_{\Phi}(\pi)$ is a successful project in the behaviour $I_{\Phi}(\vdash\Delta)$.
\end{proposition}

\begin{proof}
We show this result by induction on the last rule in $\pi$. By definition, the interpretation of the axiom rule introducing $\vdash (X_{i}(j))^{\pol},X_{i}(j')$ is a successful project in $I_{\Phi}(X_{i}(j))\multimap I_{\Phi}(X_{i}(j'))$ which is equal to $I_{\Phi}((X_{i}(j))^{\pol}\parr X_{i}(j'))$.
Then:
\begin{itemize}
\item if $\pi$ consists only in the rule $\cond{T}_{\sharp\Gamma}$, then $I_{\Phi}(\pi)=(0,\de{0}_{\sharp\Gamma})$ is successful and an element of $\cond{T}_{\sharp\Gamma}$;
\item the cases of multiplicative connectives are dealt with as in our former paper (\cite{seiller-goim}, Proposition 54);
\item if the last rule is a $\oplus$ rule — we suppose without loss of generality that it is a $\oplus^{1}$ rule:

\begin{prooftree}
\AxiomC{$\vdots{}^{\pi'}$}
\noLine
\UnaryInfC{$\vdash \Gamma, A_{1}$}
\RightLabel{\scriptsize{$\oplus^{1}$}}
\UnaryInfC{$\vdash \Gamma, A_{1}\oplus A_{2}$}
\end{prooftree}
Then $I_{\Phi}(\pi)=I_{\Phi}(\pi')\otimes \de{0}_{V}$, and $I_{\Phi}(\vdash \Gamma, A_{1}\oplus A_{2})=(\bigparr \Gamma)\parr(A_{1}\oplus A_{2})$. We now use the fact that $\cond{(A\multimap B)\oplus C\subset A\multimap (B\oplus C)}$ (Lemma \ref{kindofsemidistrib}) to show the inclusion $I_{\Phi}(\vdash \Gamma,A_{1})\oplus I^{\Phi}(A_{2})\subset I_{\Phi}(\vdash \Gamma, A_{1}\oplus A_{2})$. But, since $I_{\Phi}(\pi')$ is a successful project in $I_{\Phi}(\vdash \Gamma,A_{1})$, $I_{\Phi}(\pi)$ is a successful project in $I_{\Phi}(\vdash \Gamma,A_{1})\oplus I^{\Phi}(A_{2})$. As a consequence, it is a successful project in $I_{\Phi}(\vdash \Gamma, A_{1}\oplus A_{2})$;
\item if the last rule is a $\with$ rule:

\begin{prooftree}
\AxiomC{$\vdots{}^{\pi_{0}}$}
\noLine
\UnaryInfC{$\vdash \Gamma,A_{0}$}
\AxiomC{$\vdots{}^{\pi_{1}}$}
\noLine
\UnaryInfC{$\vdash \Gamma,A_{1}$}
\RightLabel{\scriptsize{$\with$}}
\BinaryInfC{$\vdash \Gamma, A_{0}\with A_{1}$}
\end{prooftree}
Using the notations of Définition \ref{interpretationpreuvesmall}, we have:
\begin{equation*}
I_{\Phi}(\pi)=\de{Distr}^{\tilde{\psi_{0}},\tilde{\psi_{1}}}_{\dot{\psi_{0}},\dot{\psi_{1}}}\deplug(\psi_{0}(I_{\Phi}(\pi_{0}))\otimes\de{0}_{\sharp A_{1}}+\psi_{1}(I_{\Phi}(\pi_{1}))\otimes\de{0}_{\sharp A_{0}})
\end{equation*}
By definition, the interpretations $I_{\Phi}(\pi_{i})$ are successful projects in the behaviours $I_{\Phi}(\vdash \Gamma, A_{i})$. We deduce from this that the projects $\psi_{i}(I_{\Phi}(\pi_{i}))$ are successful in the behaviours $\psi_{i}(I_{\Phi}(\vdash \Gamma, A_{i}))$ (delocations obviously preserve success). Since $\cond{A+B}\subset\cond{A\with B}$ when $\cond{A,B}$ are non-empty behaviours, we have\footnote{Indeed, the interpretations of the sequents $\vdash \Gamma,A_{0}$ and $\vdash \Gamma,A_{1}$ are non-empty by construction.} that $$\psi_{0}(I_{\Phi}(\pi_{0}))\otimes\de{0}_{\sharp A_{1}}+\psi_{1}(I_{\Phi}(\pi_{1}))\otimes\de{0}_{\sharp A_{0}}$$ is a successful project in $$\psi_{0}(I_{\Phi}(\vdash \Gamma, A_{0}))\with\psi_{1}(I_{\Phi}(\vdash \Gamma, A_{1}))$$
Since delocations are implemented by successful projects and the project implementing distributivity is defined as the sum of two delocations, $I_{\Phi}(\pi)$ is a successful project. Moreover, it is an element of the interpretation $I_{\Phi}(\vdash \Gamma,A_{0}\with A_{1})$ of $\vdash \Gamma,A_{0}\with A_{1}$ by Proposition \ref{distributivity}.
\item if $\pi$ is obtained by a cut rule between $\pi_{1}$ and $\pi_{2}$, of respective conclusions $\vdash A,\Gamma_{1}$ and $\vdash A^{\pol},\Gamma_{2}$, then Theorem \ref{compositiontruthmall} ensures us\footnote{Notice that in some cases, Theorem \ref{compositiontruthmall} does not ensure that the resulting project is successful, since the wager can be infinite. We are however in a particular case, since the graph of the project interpreting the $\cond{T}_{\sharp\Gamma}$ rule is empty, it cannot produce cycles. We are thus necessarily in the case of a null wager, i.e. the case where the produced project is successful.\label{footnoteadeq}} that $I_{\Phi}(\pi_{1})\deplug I_{\Phi}(\pi_{2})$ is a successful project in $\bigparr \Gamma$.
\end{itemize}
\end{proof}

Following the remarks we made earlier concerning the translation from the non-localised system to the localised one, we can chose an enumeration of the variable names in a proof $\pi$ of MALL$_{\cond{T,0}}$: we thus obtain a proof $\pi^{e}$ of locMALL$_{\cond{T,0}}$ whose conclusion is $A^{e}$. The following theorem is then a simple consequence of the preceding one.

\begin{theorem}[Full soundness for MALL$_{\cond{T,0}}$]
Let $\Phi$ be an interpretation basis, $\pi$ a proof of MALL$_{\cond{T,0}}$ of conclusion $\vdash \Gamma$, and $e$ an enumeration of the occurrences of the variable names in axiom and $\top$ rules in $\pi$. Then $I_{\Phi}(\pi^{e})$ is a successful project in $I_{\Phi}(\vdash \Gamma^{e})$.
\end{theorem}

\section{Graphs and Operators}\label{embedsection}

In this section, we study two particular values of the parameter $m$. The first, $m(x)=-\log(1-x)$, will give us a combinatorial version of Girard's Geometry of Interaction in the Hyperfinite Factor (GoI5); the second, $m(x)=\infty$, will give us a refined version of (the multiplicative fragment) of more ancient versions of GoI \cite{goi1,goi2,goi3}. Finally, we will relate our model to Girard's first GoI construction with permutations \cite{multiplicatives}.

Let $\hil{H}$ be a separable infinite-dimensional Hilbert space (for instance, the space $l^{2}(\mathbb{N})$ of square-summable sequences), and let $\{e_{i}\}_{i\in\mathbb{N}}$ be a base of $\hil{H}$. For every finite subset $S\subset\mathbb{N}$ there is a projection on the subspace generated by $\{e_{s}~|~s\in S\}$ that we will denote by $p_{S}$.

\subsection{localised Adjacency Matrices and the Contraction Property}

\begin{definition}
From a directed weighted graph $G$, we can define a simple graph $\what{G}$ with weights in $\mathbb{R}_{>0}\cup\{\infty\}$:
\begin{eqnarray*}
V^{\what{G}}&=&V^{G}\\
E^{\what{G}}&=&\{(v,w)~|~\exists e\in E^{G}, s^{G}(e)=v, t^{G}(e)=w\}\\
\omega^{\what{G}}&:&(v,w)\mapsto\sum_{e\in E^{G}(v,w)}\omega^{G}(e)
\end{eqnarray*}
If the weights of $\what{G}$ are in $\mathbb{R}_{>0}$, we will say it is total.
\end{definition}

\begin{definition}[localised weight matrix]
If $G$ is a weighted graph, the weight matrix of $\what{G}$ defines an operator in $p_{V_{G}}\mathcal{B}(\hil{H})p_{V_{G}}$ (hence in $\mathcal{B}(\hil{H})$). We will make an abuse of notation and denote this operator, the \emph{localised weight matrix of $G$}, by $\mat{G}$. 
\end{definition}

\begin{lemma}[Contraction Lemma for $m(x)=-\log(1-x)$]
Let $m: ]0,1]\rightarrow\realN\cup\{\infty\}$ be defined as $m(x)=-\log(1-x)$. Then, for any graphs $F,G$:
$$\scalar{F,G}_{m}=\scalar{\what{F},\what{G}}$$
\end{lemma}

\begin{proof}
The proof is quite involved and can be found in our earlier paper \cite{seiller-goim}.
\end{proof}

\begin{lemma}[Contraction Lemma for $m(x)=\infty$]
Let $m: ]0,1]\rightarrow\realN\cup\{\infty\}$ be defined as $m(x)=-\infty$. Then, for any graphs $F,G$:
$$\scalar{F,G}_{m}=\scalar{\what{F},\what{G}}$$
\end{lemma}

\begin{proof}
We have either $\scalar{F,G}=0$ if there are no alternating $1$-circuits in $F\bicol G$, or $\scalar{F,G}=\infty$ if there is at least one such $1$-circuit. 

In the first case, the replacement of $F$ by $\what{F}$ does not create any $1$-circuit. Indeed, if such a $1$-circuit existed in $\what{F}\bicol G$, for instance $\bar{\pi}=\bar{f}_{0}g_{0}\dots\bar{f}_{k} g_{k}$, then for each edge $\bar{f}_{i}$ in $E^{\what{F}}$ one can chose an edge $f_{i}$ in $E^{F}$ with same source and target as $\bar{f}_{i}$. Then $\pi=f_{0}g_{0}\dots f_{k}g_{k}$ is a $1$-circuit in $F\bicol G$, contradicting the fact that $\scalar{F,G}=0$.

In the second case, that is when there exists at least one $1$-circuit in $F\bicol G$, then there exists at least one $1$-circuit in $\what{F}\bicol G$. To see that, let us denote by $\pi=f_{0}g_{0}\dots f_{k}g_{k}$ a $1$-circuit in $F\bicol G$. Then the edges $f_{i}$ are replaced by edges with same source and target in $\what{F}$, which we will denote $\bar{f_{i}}$. Then there is a $1$-circuit in $\what{F}\bicol G$, namely the circuit $\bar{\pi}=\bar{f_{0}}g_{0}\dots \bar{f_{k}}g_{k}$.

We have just shown that $\scalar{F,G}=\scalar{\what{F},G}$. By symmetry of $\scalar{\cdot,\cdot}$ and using this result on $G$, we obtain $\scalar{F,G}=\scalar{\what{F},\what{G}}$.
\end{proof}

\subsection{The Feedback Equation}

From the very beginning, the geometry of interaction construction has been related with the computation of paths in proof nets \cite{PnHilb,pathslambda}. As we will see, the operation of execution between graphs captures exactly the corresponding operation in the setting of operators. To show this result, we will first recall the \emph{feedback equation}, and define execution between operators.

The feedback equation \cite{feedback} is the operator-theoretic counterpart of the cut-elimination procedure. A solution of a feedback equation corresponds to the normal form of a proof net containing a cut. Let $u,v$ be operators acting on the Hilbert spaces $\hil{H}\oplus\hil{H}'$ and $\hil{H}'\oplus\hil{H''}$ respectively. A \emph{solution to the feedback equation involving $u$ and $v$} is an operator $w$ acting on the Hilbert space $\hil{H}\oplus\hil{H}''$ such that $w(x\oplus z)=x'\oplus z'$ when there exists $y,y'\in\hil{H}'$ such that:
\begin{eqnarray*}
u(x\oplus y)&=&x'\oplus y'\\
v(y'\oplus z)&=&y\oplus z'
\end{eqnarray*}
This equation is usually illustrated as in Figure \ref{feedbacktrace}. Figure \ref{illustrationfeedback} illustrates the feedback equation with proof nets: if $u$ and $v$ represent two proof nets (shown with two conclusions only in order to simplify the exposition), then cutting the two proof nets corresponds to adding the equation $y=w'$ and $y'=w$ to the picture. The cut-free proof net obtained from the cut-elimination procedure would then be represented by an operator $w$ such that $w(x\oplus z)=x'\oplus z'$ when there exist $y,y',w,w'$ satisfying:
\begin{eqnarray*}
u(x\oplus y)&=&x'\oplus y'\\
v(w\oplus z)&=&w'\oplus z'\\
w&=&y'\\
w'&=&y
\end{eqnarray*}
This is  the same as saying that the cut-free proof net obtained is represented by a solution to the feedback equation involving $u$ and $v$.

\begin{figure}
\begin{center}
\begin{tikzpicture}[x=0.6cm,y=0.6cm]
	\draw (-2,0) -- (2,0) node [midway,below,blue] {$x$};
		\node (Hi) at (-3,0) {$\hil{H}$};
	\draw (1,-2) -- (2,-2) {};
	\draw[dotted] (-2,-2) -- (1,-2) {};
		\node (HHi) at (-3,-2) {$\hil{H'}$};
	\draw (-2,-4) -- (2,-4) node [midway,below,blue] {$z$};
	\draw (2,-4) -- (7,-4) {};
		\node (HHHi) at (-3,-4) {$\hil{H''}$};
	\draw (2,1) -- (2,-3) -- (5,-3) -- (5,1) -- (2,1);
		\node (U) at (3.5,-1) {$u$};
	\draw (7,-1) -- (7,-5) -- (10,-5) -- (10,-1) -- (7,-1);
		\node (V) at (8.5,-3) {$v$};
	\draw (5,0) -- (10,0) {};
	\draw (10,0) -- (14,0) node [midway,below,blue] {$x'$};
		\node (Ho) at (15,0) {$\hil{H}$};
	\draw (5,-2) -- (7,-2) node [midway,below,blue] {$y'$};
	\draw (10,-2) -- (11,-2) {};
	\draw[dotted] (11,-2) -- (14,-2) {};
		\node (HHo) at (15,-2) {$\hil{H'}$};
	\draw (10,-4) -- (14,-4) node [midway,below,blue] {$z'$};
		\node (HHHo) at (15,-4) {$\hil{H''}$};
	\draw[red] (11,-2) -- (11,3) {};
	\draw[red] (11,3) -- (1,3) node [midway,below,blue] {$y$};	
	\draw[red] (1,3) -- (1,-2) {};
\end{tikzpicture}
\end{center}
\caption{Illustration of the feedback equation}\label{feedbacktrace}
\end{figure}

\begin{figure}
\centering
\subfigure[Two proof nets represented by operators]{
\centering
\begin{tikzpicture}
	\draw[fill,opacity=0.2] (0,-1) .. controls (0,-0.4) and (0.4,0) .. (2,0) .. controls (3.6,0) and (4,-0.4) .. (4,-1) .. controls (4,-1.6) and (3.6,-2) .. (2,-2) .. controls (0.4,-2) and (0,-1.6) .. (0,-1) {};
	\draw[fill,opacity=0.2] (6,-1) .. controls (6,-0.4) and (6.4,0) .. (8,0) .. controls (9.6,0) and (10,-0.4) .. (10,-1) .. controls (10,-1.6) and (9.6,-2) .. (8,-2) .. controls (6.4,-2) and (6,-1.6) .. (6,-1) {};
	\draw (1,-1.9) -- (1,-3);
		\draw[<-,dashed] (0.9,-1.9) -- (0.9,-3) node [midway,left] {\scriptsize{$x$}};
		\draw[->,dashed] (1.1,-1.9) -- (1.1,-3) node [midway,right] {\scriptsize{$x'$}};
	\draw (3,-1.9) -- (3,-3);
		\draw[<-,dashed] (2.9,-1.9) -- (2.9,-3) node [midway,left] {\scriptsize{$y$}};
		\draw[->,dashed] (3.1,-1.9) -- (3.1,-3) node [midway,right] {\scriptsize{$y'$}};
	\draw (7,-1.9) -- (7,-3);
		\draw[<-,dashed] (6.9,-1.9) -- (6.9,-3) node [midway,left] {\scriptsize{$w$}};
		\draw[->,dashed] (7.1,-1.9) -- (7.1,-3) node [midway,right] {\scriptsize{$w'$}};
	\draw (9,-1.9) -- (9,-3);
		\draw[<-,dashed] (8.9,-1.9) -- (8.9,-3) node [midway,left] {\scriptsize{$z$}};
		\draw[->,dashed] (9.1,-1.9) -- (9.1,-3) node [midway,right] {\scriptsize{$z'$}};
	
	\node (A) at (2,-1) {$u$};
	\node (B) at (8,-1) {$v$};
\end{tikzpicture}}
\subfigure[Cutting the proof nets induces the equalities $y=w'$ and $y'=w$]{
\centering
\begin{tikzpicture}
	\draw[fill,opacity=0.2] (0,-1) .. controls (0,-0.4) and (0.4,0) .. (2,0) .. controls (3.6,0) and (4,-0.4) .. (4,-1) .. controls (4,-1.6) and (3.6,-2) .. (2,-2) .. controls (0.4,-2) and (0,-1.6) .. (0,-1) {};
	\draw[fill,opacity=0.2] (6,-1) .. controls (6,-0.4) and (6.4,0) .. (8,0) .. controls (9.6,0) and (10,-0.4) .. (10,-1) .. controls (10,-1.6) and (9.6,-2) .. (8,-2) .. controls (6.4,-2) and (6,-1.6) .. (6,-1) {};
	\draw (1,-1.9) -- (1,-3);
		\draw[<-,dashed] (0.9,-1.9) -- (0.9,-3) node [midway,left] {\scriptsize{$x$}};
		\draw[->,dashed] (1.1,-1.9) -- (1.1,-3) node [midway,right] {\scriptsize{$x'$}};
	\draw (3,-1.9) -- (3,-3);
		\draw[<-,dashed] (2.9,-1.9) -- (2.9,-3) node [midway,left] {\scriptsize{$y$}};
		\draw[-,dashed] (3.1,-1.9) -- (3.1,-3) node [midway,right] {\scriptsize{$y'$}};
	\draw (7,-1.9) -- (7,-3);
		\draw[<-,dashed] (6.9,-1.9) -- (6.9,-3) node [midway,left] {\scriptsize{$w$}};
		\draw[-,dashed] (7.1,-1.9) -- (7.1,-3) node [midway,right] {\scriptsize{$w'$}};
	\draw (9,-1.9) -- (9,-3);
		\draw[<-,dashed] (8.9,-1.9) -- (8.9,-3) node [midway,left] {\scriptsize{$z$}};
		\draw[->,dashed] (9.1,-1.9) -- (9.1,-3) node [midway,right] {\scriptsize{$z'$}};
	
	\draw (3,-3) .. controls (3,-3.8) and (3.2,-4) .. (5,-4) {};
		\draw[dashed,-] (3.1,-3) .. controls (3.1,-3.7) and (3.3,-3.9) .. (5,-3.9) .. controls (6.8,-3.9) and (6.9,-3.7) .. (6.9,-3);
		\draw[dashed,-] (2.9,-3) .. controls (2.9,-3.9) and (3.1,-4.1) .. (5,-4.1) .. controls (6.8,-4.1) and (7.1,-3.9) .. (7.1,-3) {};
	\draw (5,-4) .. controls (6.8,-4) and (7,-3.8) .. (7,-3) {};

	\node (cut) at (5,-4.3) {cut};
	
	\node (A) at (2,-1) {$u$};
	\node (B) at (8,-1) {$v$};
\end{tikzpicture}}
\caption{Illustration of the feedback equation}\label{illustrationfeedback}
\end{figure}

If we write $p,p',p''$ the projections onto the spaces $\hil{H,H',H''}$ respectively, the execution formula $u\plug v=(p+p''v)\left(\sum_{i\geqslant 0} (uv)^{i}\right)(up+p'')$, when it is defined, gives a solution to the feedback equation involving $u$ and $v$. More generally, the formula $(p+p''v)(1-uv)^{-1}(up+p'')$, when $1-uv$ is invertible, defines such a solution.

Girard studied, in his paper entitled \emph{Geometry of Interaction $\text{IV}$: the Feedback Equation} \cite{feedback}, an extension of this solution when the operator $1-uv$ is non-invertible. He then showed that for the couples of operators $(u,v)$ where $u,v$ are hermitians whose norm is at most $1$, the solution involving the inverse operator $(1-uv)^{-1}$ defines a sort of (partial) functional application, which can be extended to all couples of operators $(u,v)$ with $u,v$ hermitians whose norm is at most $1$. Moreover, this extension is the unique such extension preserving associativity and verifying some continuity properties.

In his first constructions, Girard defined the execution between two operators $u,v$ to be equal to $u\plug v=(p+p''v)\left(\sum_{i\geqslant 0} (uv)^{i}\right)(up+p'')$ when this expression was defined, i.e. when the product $uv$ was nilpotent. The fact that this definition was partial conveyed the (somewhat wrong) impression that orthogonality was introduced to deal with this problem: two operators were orthogonal when the execution between them was defined. In his latest construction, i.e. the geometry of interaction in the hyperfinite factor, Girard builds upon his general solution to the feedback equation a geometry of interaction in which execution is always defined. The notion of orthogonality, which uses the Fuglede-Kadison determinant\footnote{The Fuglede-Kadison determinant is a generalisation of the usual determinant of matrices that can be defined in any type $\text{II}_{1}$ factor.} is then in its rightful place: it does not ensure that execution is defined but gives information on how execution is performed.

\begin{definition}
Let $u,v$ be operators. We denote by $\ex{u,v}$ the solution, when it is defined, of the feedback equation involving $u$ and $v$.
\end{definition}

\subsection{Paths as a Solution to the Feedback Equation}

Girard's general solution to the feedback equation is actually restricted to the case where the two operators involved are hermitians of norm at most $1$. For this reason, the geometry of interaction in the hyperfinite factor only deals with hermitian operators of norm at most $1$. Although the restriction to hermitian operators does not seem crucial for his result, the condition of the norm is, on the other hand, essential in the proofs. We therefore consider the notion of \emph{operator graphs} whose weight matrix is of norm at most $1$, and the special case of \emph{symmetric operator graphs} whose matrix is moreover self-adjoint.

\begin{definition}[Operator Graphs]
An \emph{operator graph} is a graph $F$ such that $\norm{\mat{{F}}}\leqslant 1$. An operator graph is said to be \emph{symmetric} when $\mat{{F}}=\mat{{F}}^{\ast}$, i.e. when for all edges $e\in E^{\what{F}}(v,w)$ ($v,w\in V^{\what{F}}$), there is an edge $e^{\ast}\in E^{\what{F}}(w,v)$ such that $\omega^{\what{F}}(e)=\omega^{\what{F}}(e^{\ast})$.
\end{definition}

As it turns out, even though the general solution of the feedback equation is constructed with the use of unbounded operators, it can be constructed in a very simple manner when one restricts to operators acting on finite-dimensional Hilbert spaces. In our earlier paper \cite{seiller-goim}, we showed that, provided $F,G$ satisfy $\meas[-\log(1-x)]{F,G}\neq\infty$, execution between operator graphs $F,G$ corresponded to solving the feedback equation between the associated matrices of weights. Here, we will extend this result by showing it holds in full generality. Indeed, the matrix of weights of the graph $\what{F\plug H}$ can be shown to be equal to the matrix $\sum_{i=0}^{\infty} (p_{F}\what{F}+p_{G})(\what{F})^{i}(p_{F}+\what{G}p_{G})$. Since this series is always convergent when the operators $\what{F},\what{G}$ are of norm at most $1$, this gives the result.

\begin{proposition}
Let $u$ (resp. $v$) be a $n+m$ (resp. $m+k$) square matrix of norm at most $1$ seen as $n+m+k$ matrices:
$$u=\left(\begin{array}{ccc} u_{n,n} & u_{n,m} & 0\\
					u_{m,n} & u_{m,m} & 0\\
					0 & 0 & 0\end{array}\right)
		~~~~~~
v=\left(\begin{array}{ccc} 0 & 0 & 0\\
					0 & v_{m,m} & v_{k,m}\\
					0 & v_{m,k} & v_{k,k}\end{array}\right)$$
Let $p,q,r$ be the projections defined as the block matrices:
$$
p=\left(\begin{array}{ccc} \text{Id} & 0 & 0\\
					0 & 0 & 0\\
					0 & 0 & 0\end{array}\right)
		~~~~~~
q=\left(\begin{array}{ccc} 0 & 0 & 0\\
					0 & \text{Id} & 0\\
					0 & 0 & 0\end{array}\right)
		~~~~~~
r=\left(\begin{array}{ccc} 0 & 0 & 0\\
					0 & 0 & 0\\
					0 & 0 & \text{Id}\end{array}\right)
$$
Then the series:
$$\sum_{i\geqslant 0} (pu+r)(vu)^{i}(p+rv)$$
is convergent and defines a solution to the feedback equation involving $u$ and $v$.
\end{proposition}

\begin{proof}
Let us first compute the product $vu$.
$$vu=\left(\begin{array}{ccc}
		0 & 0 & 0\\
		v_{m,m}u_{m,n} & v_{m,m}u_{m,m} & 0\\
		v_{m,k}u_{m,n} & v_{m,k}u_{m,m} & 0
		\end{array}\right)$$

Let us write $q_{1}$ as the largest projection such that $q_{1}\leqslant q$ and $q_{1}v_{m,m}u_{m,m}q_{1}=q_{1}$ (this may be equal to $0$), and denote by $q_{<1}$ the projection $q-q_{1}$. Notice that $q_{1}$ is the projection on the subspace corresponding to the eigenvalue $1$ of $v_{m,m}u_{m,m}$. We can write $vu$ as a $4\times 4$ block matrix along the projections $p,q_{1},q_{<1},r$ as follows:

$$vu=\left(\begin{array}{cccc}
		0 & 0 & 0 & 0\\
		w_{2,1} & \text{Id} & w_{2,3} & 0\\
		w_{3,1} & w_{3,2} & w_{3,3} & 0\\
		w_{4,1} & w_{4,2} & w_{4,3} & 0
		\end{array}\right)$$
But since $\norm{vu}\leqslant 1$, we have that $w_{2,1}=w_{2,3}=w_{3,2}=w_{4,2}=0$:
$$vu=\left(\begin{array}{cccc}
		0 & 0 & 0 & 0\\
		0 & \text{Id} & 0 & 0\\
		w_{3,1} & 0 & w_{3,3} & 0\\
		w_{4,1} & 0 & w_{4,3} & 0
		\end{array}\right)$$
I.e. $vu=q_{1}+z$ where $(1-q_{1})z(1-q_{1})=z$. Using the fact that $q_{1}z=zq_{1}=0$ we obtain, for any integer $k>0$:
$$ (vu)^{k}=q_{1}+z^{k}$$
Hence:
$$\sum_{i\geqslant 0} (pu+r)(vu)^{i}(p+rv)=\sum_{i\geqslant 0} (pu+r)(q_{1}+z^{i})(p+rv)=\sum_{i\geqslant 0} (pu+r)z^{i}(p+rv)$$
Now, we can make $z$ into a triangular matrix by making $w_{3,3}$ triangular. It is clear that the non-null elements on the diagonal are strictly lesser than $1$, or else it would contradict the fact that $q_{1}$ was the largest projection satisfying $q_{1}\leqslant q$ and $q_{1}v_{m,m}u_{m,m}q_{1}=q_{1}$. This implies that $1-z$ is invertible since all its diagonal coefficients are different from zero. Hence $(pu+r)(1-z)^{i}(p+rv)=(pu+r)\sum_{i\geqslant 0} z^{i}(p+rv)$ is always defined and is equal to $\sum_{i\geqslant 0} (pu+r)z^{i}(p+rv)$.
\end{proof}

\begin{proposition}\label{execution}
Let $F$ and $G$ be operator graphs, $\mat{{F}}$ and $\mat{{G}}$ their localised weight matrices. Then:
$$\mat{{F\plug G}}=\ex{\mat{F},\mat{G}}$$
\end{proposition}

\begin{proof}
Let $p,q,r$ be the projections corresponding to the sets $V^{F}-V^{F}\cap V^{G}, V^{F}\cap V^{G}, V^{G}-V^{F}\cap V^{G}$ respectively. 

We first write $F\plug G$ as the union $\cup_{k\geqslant 1} C_{k}$ where $C_{k}$ is the graph of alternating paths in $F\bicol G$ with source and target in $V^{F}\Delta V^{G}$ and of length equal to $k$. Each $C_{k}$ can be written as the union $\cup_{i,j\in\{F,G}\} C_{k}^{i,j}$ where $C_{k}^{i,j}$ is the graph of paths in $C_{k}$ with source in $i$ and target in $j$. It is easily seen that:
\begin{eqnarray*}
C_{k}^{F,F}&=&\left\{\begin{array}{cl}p\mat{F}(\mat{G}\mat{F})^{\frac{k-1}{2}}p & \text{$k$ odd}\\ 0 &\text{otherwise}\end{array}\right.\\
C_{k}^{G,G}&=&\left\{\begin{array}{cl}q(\mat{G}\mat{F})^{\frac{k-1}{2}}\mat{G}q & \text{$k$ odd}\\ 0 &\text{otherwise}\end{array}\right.\\
C_{k}^{F,G}&=&\left\{\begin{array}{cl}q(\mat{G}\mat{F})^{\frac{k}{2}}p & \text{$k$ even}\\ 0 &\text{otherwise}\end{array}\right.\\
C_{k}^{G,F}&=&\left\{\begin{array}{cl}p\mat{F}(\mat{G}\mat{F})^{\frac{k-2}{2}}\mat{G}q & \text{$k>2$ even}\\  0 &\text{otherwise}\end{array}\right.
\end{eqnarray*}
Now, since $\mat{{F\plug G}}=\sum_{k>0} \sum_{i,j\in\{F,G\}} \mat{C_{k}^{i,j}}$, we obtain:
\begin{eqnarray*}
\mat{{F\plug G}}&=&\sum_{k>0\text{ even}} \left(q(\mat{G}\mat{F})^{\frac{k}{2}}p + p\mat{F}(\mat{G}\mat{F})^{\frac{k}{2}}\mat{G}q\right)\\
&& ~~~~~~ + \sum_{k>0\text{ odd}} \left(p\mat{F}(\mat{G}\mat{F})^{\frac{k-1}{2}}p + q(\mat{G}\mat{F})^{\frac{k-1}{2}}\mat{G}q\right)\\
&=&\sum_{k>0} \left(q(\mat{G}\mat{F})^{k}p + p\mat{F}(\mat{G}\mat{F})^{k}\mat{G}q\right)\\
&& ~~~~~~ + \sum_{k>0} \left(p\mat{F}(\mat{G}\mat{F})^{k}p + q(\mat{G}\mat{F})^{k}\mat{G}q\right)\\
&=&\sum_{k> 0} (p\mat{{F}}+r)(\mat{{G}}\mat{{F}})^{i}(p+r\mat{{G}})
\end{eqnarray*}
From the previous proposition, this series converges, and therefore defines a solution to the feedback equation involving $\mat{F}$ and $\mat{G}$. Hence, we have just shown that $\mat{{F\plug G}}=\ex{\mat{F},\mat{G}}$.
\end{proof}

\subsection{GoI in the Hyperfinite Factor}

We will now extend the previous results to sliced graphs and show how the choice of the measurement map $m(x)=-\log(1-x)$ defines a combinatorial version of Girard's hyperfinite GoI. However, before doing so, we will recall the important definitions of the latter, and discuss two constructions of the additive connectives.

Girard defines his latest geometry of interaction in the hyperfinite factor $\infhyp$ of type $\text{II}_{\infty}$ with a fixed trace $tr$. This von Neuman algebra can be obtained as the tensor product of $\mathcal{B}(\hil{H})$ with the hyperfinite factor of type $\text{II}_{1}$, usually denoted $\mathcal{R}$. We will therefore work with operators in $\mathcal{B}(\hil{H})\otimes\mathcal{R}$ and the trace defined as the tensor product of the normalized trace on $\mathcal{B}(\hil{H})$ (i.e. the trace of minimal projections is $1$) and the normalized trace on $\mathcal{R}$ (i.e. the trace of the identity is $1$). 

We first recall the notion of project used by Girard. We will refer to them as \emph{hyperfinite projects} to avoid a collapse of terminology.

\begin{definition}[Hyperfinite projects]
A \emph{hyperfinite project} is a $5$-tuple:
$$\de{a}=(p,a,\mathcal{A},\alpha,A)$$
consisting of:
\begin{itemize}
\item a finite projection $p^{\ast}=p^{2}=p\in\infhyp $, the \emph{carrier} of the project $\de{a}$;
\item a finite and hyperfinite von Neumann algebra $\mathcal{A}$, the \emph{idiom} of $\de{a}$;
\item a normal hermitian tracial form $\alpha$ on $\mathcal{A}$, the \emph{pseudo-trace} of $\de{a}$;
\item a real number $a\in\mathbb{R}\cup\{\alpha(1_{\mathcal{A}})\infty\}$, the \emph{wager} of $\de{a}$;
\item an hermitian $A\in (p\infhyp  p)\otimes\mathcal{A}$ such that $\norm{A}\leqslant 1$.
\end{itemize}
As in Girard's paper, we will denote such an object by $\de{a}=a\cdot +\cdot \alpha+A$.

We will also distinguish those hyperfinite projects satisfying $\alpha(1_{\vn{A}})\neq 0$, and call these \emph{strict hyperfinite projects}.
\end{definition}

Let $\de{a},\de{b}$ be two hyperfinite projects, and $A\in \infhyp\otimes\vn{A}$ and $B\in\infhyp\otimes\vn{B}$ be their associated operators. We will denote by $A^{\dagger_{\vn{B}}}$ and $B^{\ddagger_{\vn{A}}}$ the operators in $\infhyp\otimes\vn{A}\otimes\vn{B}$ defined by:
\begin{eqnarray*}
A^{\dagger_{\vn{B}}}&=&A\otimes 1_{\vn{B}}\\
B^{\ddagger_{\vn{A}}}&=&(\text{Id}_{\infhyp}\otimes\sigma)(B\otimes 1_{\vn{A}})
\end{eqnarray*}
where $\sigma$ is the natural isomorphism $\vn{B}\otimes \vn{A}\rightarrow \vn{A}\otimes \vn{B}$.

The geometry of interaction in the hyperfinite factor uses a generalisation of the usual determinant of matrices known as the \emph{Fuglede-Kadison determinant} \cite{FKdet}. Indeed, a type $\text{II}_{1}$ factor possesses a trace, and it is therefore possible to define a determinant by using the identity $\det(\exp(A))=\exp(\tr(A))$ which relates the determinant to the trace and which is satisfied for any $A$ acting on a Hilbert space of finite dimension.

The Fuglede-Kadison determinant is defined as follows: if $Tr$ denotes the normalized trace (i.e. $Tr(1)=1$), one defines, for any invertible operator $A$:
\begin{equation*}
\det{}^{FK}(A)=e^{Tr(\log(\abs{A}))}
\end{equation*}
This definition is then extended to non-invertible operators\footnote{This extension is not unique, but the results we show are independant of the chosen extension.}.

Since GoI5 deals with pseudo-traces, we need to extend this definition.
\begin{definition}
If $\alpha$ is a pseudo-trace on $\vn{A}$, and $\tr$ a trace on $\finhyp$, we define, for all invertible $A\in\finhyp\otimes\vn{A}$:
\begin{equation*}
\det{}^{FK}_{\tr\otimes\alpha}(A)=e^{\tr\otimes\alpha(\log(\abs{A}))}
\end{equation*}
We will abusively denote by $\det{}^{FK}_{\tr\otimes\alpha}$ any extension of this definition to $\finhyp\otimes\vn{A}$.
\end{definition}

\begin{definition}[Measurement between hyperfinite projects]
Let $\de{a}=a\cdot +\cdot \alpha+ A$ and $\de{b}=b\cdot+\cdot\beta+B$ be two hyperfinite projects. One defines the \emph{measurement} of the interaction between $\de{a}$ and $\de{b}$ as:
$$\sca[]{a}{b}=a\beta(1_{\vn{B}})+b\alpha(1_{\vn{A}})-\log(\det{}^{FK}_{\tr\otimes\alpha\otimes\beta}(1-A^{\dagger_{\vn{B}}}B^{\ddagger_{\vn{A}}}))$$
\end{definition}

\begin{definition}[Execution between hyperfinite projects]
Let $\de{a}=a\cdot +\cdot \alpha+ A$ and $\de{b}=b\cdot+\cdot\beta+B$ be two hyperfinite projects. One defines the \emph{execution} of $\de{a}$ and $\de{b}$, denoted by $\de{a\plug b}$, as the project $\sca[]{a}{b}\cdot+\cdot\alpha\otimes\beta+\ex{A^{\dagger_{\vn{B}}},B^{\ddagger_{\vn{A}}}}$.
\end{definition}

The definition of the tensor product of two hyperfinite projects is, as in the case of graphs, a particular case of execution: if $\de{a}$ and $\de{b}$ are hyperfinite projects of respective carriers $p$ and $q$ such that $pq=0$, then their tensor product is defined as $\de{a}\plug\de{b}$. Notice that in this configuration, the solution to the feedback equation is equal to $A^{\dagger_{\vn{B}}}+B^{\ddagger_{\vn{A}}}$.

\subsection{Additives in GoI5: the two versions}

The construction of additive connectives in Girard's paper differs slightly from the one that has led to our graph-theoretic constructions. Indeed, the approach chosen in this paper is inspired from a remark Girard made after his paper on hyperfinite GoI was published, namely that the restriction to strict hyperfinite projects (a restriction imposed in the first version of GoI5) can be omitted. This slight difference leads to a more natural approach to additive connectives: while in the initial construction one had to consider two kinds of conducts (namely positive and negative conducts), we can now work in a more uniform framework where one can consider only one restriction on conducts (\emph{dichologies} in the words of Girard, which correspond to our \emph{behaviours}) which is self-dual. We will present here the two different constructions in order to help the reader grasp the small differences arising from this change.

In both constructions, we will be using the following formal weighted sum of projects:
\begin{equation*}
\de{a+\lambda b}=(p,a,\vn{A},\alpha,A)+\mu(q,b,\vn{B},\beta,B)=(p\wedge q,a+\mu b,\vn{A\oplus B},\alpha\oplus\mu\beta,A\oplus B)
\end{equation*}
where $p\wedge q$ denotes the smallest projection $r$ such that $p\leqslant r$ and $q\leqslant r$.

We will also use the notion of extension: if $\de{a}=(p,a,\vn{A},\alpha,A)$ is a hyperfinite project and $q$ a projection such that $pq=0$, then one defines $\extde{a}{q}$ as the project $(p+q,a,\vn{A},\alpha,A)$. Notice that $\extde{a}{q}=\de{a}\otimes \de{0}_{q}$ with $\de{0}_{q}=(q,0,\complexN,1_{\complexN},0)$. As in Section \ref{gdisection}, we define $\extend{A}{q}=\{\extde{a}{q}~|~\de{a}\in\cond{A}\}$ for conducts $\cond{A}$ with carrier $p$ such that $pq=0$. In practice, the projection $q$ will be the carrier of a conduct $\cond{B}$, in which case we will denote $\extend{A}{q}$ (resp. $\extde{a}{q}$) by $\extend{A}{B}$ (resp. $\extde{a}{B}$).

\subsubsection{The initial construction: GoI5.1}

The initial framework dealt only with strict hyperfinite projects. Additives can, as in the graphs setting, be defined on conducts by:
$$\cond{A\oplus B}=((\extend{A}{B})^{\pol\pol}\cup(\extend{B}{A})^{\pol\pol})^{\pol\pol}$$
$$\cond{A\with B}=(\extend{(A^{\pol})}{B})^{\pol}\cap(\extend{(B^{\pol})}{A})^{\pol}$$
In order to define those constructions on projects, one needs to restrict the type of conducts considered. In particular, one wants the elements of the form $\de{a+b}$ ($\de{a}\in\cond{A}$, $\de{b}\in\cond{B}$) to be in the conduct $\cond{A\with B}$. As a consequence, one needs to restrict to conducts such that $a=0$ for all $\de{a}$. This leads to the definition of \emph{negative conducts}. This restriction is however not self-dual, and leads to the notion of \emph{positive conducts} which are orthogonals of negative conducts.

\begin{definition}[Negative Conducts]
A conduct $\cond{A}$ is \emph{negative} if:
\begin{itemize}
\item for all $\de{a}=a\cdot +\cdot\alpha+A \in\cond{A}$, we have $a=0$;
\item for all $\de{a}\in\cond{A}$ and all $\lambda$ such that $\lambda+\alpha(1_{\vn{A}})\neq 0$, the project $\de{a+\lambda 0}$ is in $\cond{A}$.
\end{itemize}
A negative conduct is \emph{proper} if it is non-empty.
\end{definition}

Dually, one defines the notion of positive conducts.
\begin{definition}[Positive Conducts]
A conduct $\cond{B}$ is \emph{positive} if:
\begin{itemize}
\item it contains all \emph{daimons}: for all $\lambda\neq 0$, $\de{Dai}_{\lambda}=\lambda\cdot+\cdot 1_{\complexN}+0\in\cond{B}$;
\item if $b\cdot+\cdot \beta+B\in\cond{B}$, with $b\neq 0$, then for all $c\neq 0$, $c\cdot+\cdot\beta+B\in\cond{B}$.
\end{itemize}
A positive conduct is \emph{proper} if it does not contain the project $\de{Dai}_{0}=0\cdot+\cdot 1_{\complexN}+0$.
\end{definition}

\noindent One then can define $\with$ only between negative conducts, and $\oplus$ on positive ones. 

\begin{definition}[Additives on polarized conducts]
Let $\cond{A},\cond{B}$ be negative conducts. Then $\cond{A\with B}$ is negative and $\cond{A^{\pol}\oplus B^{\pol}}$ is positive.
\end{definition}

One then needs to restrict to the proper cases, in order to obtain some results concerning the polarization of tensor products. Moreover, the restriction to positive/negative proper conducts is not completely satisfying since it yields in some cases unpolarized conducts, and the table of polarities is far from being what one would expect (i.e. it does not correspond to the usual notion of polarization in linear logic).

\begin{proposition}[Multiplicatives on polarized conducts]
The relationship between polarization and multiplicative connectives is as depicted in Figure \ref{polarizedmult}.
\end{proposition}

\begin{figure}
\centering
\subfigure[Tensor]{
\begin{tabular}{c||c|c}
$\otimes$ & N & P\\
\hline\hline
N & N & P\\
P & P & ?
\end{tabular}
}
\subfigure[Parr]{
\begin{tabular}{c||c|c}
$\parr$ & N & P\\
\hline\hline
N & ? & N\\
P & N & P
\end{tabular}
}
\caption{Multiplicative Connectives and Polarized Conducts}\label{polarizedmult}
\end{figure}

\subsubsection{The modified construction: GoI5.2}

The omission of the restriction to strict hyperfinite projects was suggested by Girard \cite{syntran}, and leads to a more satisfying construction. Indeed, one still defines additives on conducts as:
$$\cond{A\oplus B}=((\extend{A}{B})^{\pol\pol}\cup(\extend{B}{A})^{\pol\pol})^{\pol\pol}$$
$$\cond{A\with B}=(\extend{(A^{\pol})}{B})^{\pol}\cap(\extend{(B^{\pol})}{A})^{\pol}$$
But now the restriction to conducts such that $a=0$ for all $\de{a}$ can be made self-dual, leading to the notion of \emph{dichologies} which correspond to the notion of behaviours defined in Section \ref{gdisection}.

\begin{definition}[Dichologies]
A \emph{dichology} is a conduct $\cond{A}$ such that:
\begin{itemize}
\item if $\de{a}\in\cond{A}$ then for all $\lambda\in\realN$, $\de{a}+\lambda\de{0}\in\cond{A}$.
\item if $\de{a}\in\cond{A^{\pol}}$ then for all $\lambda\in\realN$, $\de{a}+\lambda\de{0}\in\cond{A^{\pol}}$.
\end{itemize}
It is proper if neither $\cond{A}$ nor $\cond{A^{\pol}}$ are empty.
\end{definition}

One can then get a characterisation of proper dichologies in the same way we characterised proper behaviours in the sliced graph setting (Propositions \ref{charact} and \ref{charac}). As a matter of fact, all results stated in Section \ref{gdisection} can be obtained in this setting: even the proofs are the same (using the adjunction for operators rather than the adjunction for graphs). Additive connectives are therefore defined between any two dichologies, without restriction. Moreover, the multiplicative constructions behave in a much more satisfying way.

\begin{proposition}[Connectives and Dichologies]
If $\cond{A,B}$ are dichologies with disjoint carriers, then $\cond{A^{\pol}, A\parr B,A\otimes B,A\with B,A\oplus B}$ are dichologies.
\end{proposition}

We will therefore consider the embedding in GoI5.2 rather than the original version. We want however to stress that the graph setting could very well be embedded in GoI5.1 in the same way by considering the restriction to \emph{strict} projects, i.e. projects $\de{a}=(a,A)$ where $\unit{A}\neq 0$.

\subsection{Graphs and Hyperfinite GoI}

The results and definition relating operator and graph operations we have exposed up to this point only dealt with graphs and not sliced graphs. We now extend these results and definitions to sliced graphs.

\begin{definition}
Let $A$ be a sliced graph. The localised weight matrix of $A$, denoted by $\mat{A}$ is defined as $\bigoplus_{i\in I^{A}} \mat{{A_{i}}}$, i.e. the direct sum of the localised weight matrices corresponding to the slices of $A$.
\end{definition}

\begin{definition}
A \emph{sliced operator graph} is a sliced graph $A$ such that $\mat{A}$ is of norm at most $1$. It is said to be symmetric if $\mat{A}$ is moreover hermitian.

An operator project is a project $(a,A)$ where $A$ is a sliced operator graph.
\end{definition}

We now want to associate to each operator project $\de{a}=(a,A)$ a hyperfinite project $\phi(\de{a})=(p,b,\mathcal{A},\alpha,\bar{\phi}(A))$. The finite projection $p$ of the resulting project will of course be the projection associated to the set of vertices of $A$ and the wager of $\phi(\de{a})$ will be equal to the wager of $\de{a}$, i.e. $p=p_{V^{A}}$ and $a=b$. Moreover, the idiom $\mathcal{A}$ will be defined as $\bigoplus_{i\in I^{A}} \complexN$ so the image of a project will therefore have a \emph{commutative} idiom, while the coefficients $\alpha^{A}_{i}$ for $i\in I^{A}$ will define the pseudo-trace on $\mathcal{A}$.

We therefore first define the embedding $\bar{\phi}$ that maps an operator graph $G$ to the operator $\mat{G}\otimes 1_{\mathcal{R}}$ in $\mathcal{R}_{0,1}$. This defines a translation of graphs as operators in the hyperfinite type $\text{II}_{\infty}$ factor. To a sliced graph is associated a direct sum of matrices $\bigoplus_{i\in I}\mat{A_{i}}$, i.e. a matrix in the algebra $\B{\hil{H}}\otimes\bigoplus_{i\in I} \complexN$. We can then extend $\bar{\phi}$ to such a direct sums of matrices by considering $\bar{\phi}\otimes\text{Id}_{\bigoplus_{i^{I}}\complexN}$ which we will abusively denote by $\bar{\phi}$ in the following.

This embedding can then be extended to map operator projects to hyperfinite projects as follows. We will denote by $\bigoplus_{i\in I}\lambda_{i}$, where $\lambda_{i}$ are real numbers, the pseudo-trace on $\bigoplus_{i\in I}\complexN$ defined as:
$$(\bigoplus_{i\in I}\lambda_{i})(\bigoplus_{i\in I}x_{i})=\sum_{i\in I}\lambda_{i}x_{i}$$

\begin{definition}
To an operator project $\de{a}=(a,A)$ we associate the hyperfinite project $\phi(\de{a})=a'\cdot +\cdot \alpha^{A}+\bigoplus_{i\in I^{A}} \bar{\phi}(\mat{A_{i}})$, where:
\begin{itemize}
\item $a'=-\infty$ if $\unit{A}<0$ and $a=\infty$, and $a'=a$ otherwise;
\item $\alpha^{A}$ is the pseudo-trace defined as $\bigoplus_{i\in I^{A}} \alpha^{A}_{i}$
\end{itemize}
\end{definition}

\begin{lemma}
Let $\vn{A,B}$ be idioms and $\alpha,\beta$ be pseudo-traces on $\vn{A,B}$ respectively. Then for all operators $A\in\finhyp\otimes\vn{A}$ and $B\in\finhyp\otimes\vn{B}$, we will denote by $A\oplus B$ the operator in $\finhyp\otimes(\vn{A}\oplus\vn{B})$ defined by $\iota_{1}(A)+\iota_{2}(B)$ where $\iota_{1}$ (resp. $\iota_{2}$) is the embedding $\finhyp\otimes\vn{A}\rightarrow \finhyp\otimes(\vn{A\oplus B})$ (resp. $\finhyp\otimes\vn{B}\rightarrow \finhyp\otimes(\vn{A\oplus B})$). One can then define the pseudo-trace $\alpha\oplus\beta$ on $\vn{A}\oplus\vn{B}$, and we have:
\begin{equation*}
-\log(\det{}^{FK}_{tr\otimes(\alpha\oplus\beta)}(1-A\oplus B))=-\log(\det{}^{FK}_{tr\otimes\alpha}(1-A))-\log(\det{}^{FK}_{tr\otimes\beta}(1-B))
\end{equation*}
\end{lemma}

\begin{proof}
Since $\det{}^{FK}(AB)=\det{}^{FK}(A)\det{}^{FK}(B)$, one obtains:
$$-\log(\det{}^{FK}(A))-\log(\det{}^{FK}(B))=-\log(\det{}^{FK}(AB))$$ 
Then, using the fact that $\det{}^{FK}$ is multiplicative:
\begin{eqnarray*}
\lefteqn{-\log(\det{}^{FK}_{tr\otimes(\alpha\oplus\beta)}(1-A\oplus B))}\\
&=&-\log(\det{}^{FK}_{tr\otimes(\alpha\oplus\beta)}((1-A)\oplus (1-B)))\\
&=&-\log(\det{}^{FK}_{tr\otimes(\alpha\oplus\beta)}(1-A)\oplus 1)-\log(\det{}^{FK}_{tr\otimes(\alpha\oplus\beta)}(1\oplus (1-B)))
\end{eqnarray*}

Let us compute $-\log(\det{}^{FK}_{tr\otimes(\alpha\oplus\beta)}(1-A)\oplus 1)$:
\begin{eqnarray*}
-\log(\det{}^{FK}_{tr\otimes(\alpha\oplus\beta)}((1-A)\oplus 1))&=&tr\otimes(\alpha\oplus\beta)(\log(\abs{(1-A)\oplus 1}))\\
&=&tr\otimes(\alpha\oplus\beta)(\log(\abs{1-A}\oplus 1))\\
&=&tr\otimes(\alpha\oplus\beta)(\log(\abs{(1-A)})\oplus 0)\\
&=&tr\otimes\alpha(\log(\abs{(1-A)}))\\
&=&-\log(\det{}^{FK}_{tr\otimes\alpha}(1-A))
\end{eqnarray*}
Similarly, $-\log(\det{}^{FK}_{tr\otimes(\alpha\oplus\beta)}(1\oplus (1-B)))=-\log(\det{}^{FK}_{tr\otimes\beta}(1-B))$. We can now conclude.
\end{proof}

\begin{lemma}
Let $a\cdot +\cdot\alpha + A$ be a hyperfinite project such that $\vn{A}=\complexN$ (in which case $\alpha\in\realN$):
\begin{equation*}
-\log(\det{}^{FK}_{tr\otimes\alpha}(1-A))=-\alpha\times \log(\det{}^{FK}(1-A))
\end{equation*}
\end{lemma}

\begin{proof}
By definition:
\begin{eqnarray*}
\det{}^{FK}_{tr\otimes\alpha}(1-A)&=&\exp(-tr\otimes\alpha(\log(\abs{1-A})))\\
&=&\exp(\alpha\times(-\tr(\log(\abs{1-A}))))\\
&=&(\exp(-\tr(\log(\abs{1-A}))))^{\alpha}\\
&=&(\det{}^{FK}(1-A))^{\alpha}
\end{eqnarray*}
Thus $-\log(\det{}^{FK}_{tr\otimes\alpha}(1-A))=-\log((\det{}^{FK}(1-A))^{\alpha}=-\alpha \log(\det{}^{FK}(1-A))$.
\end{proof}

\begin{proposition}\label{measplongtranche}
Let $G,H$ be sliced graphs. Then: 
\begin{equation*}
\meas{G,H}=-\log(\det{}^{FK}_{tr\otimes \alpha^{G}\otimes\alpha^{H}}(1-\Phi(\mat{G}^{\dagger_{H}})\Phi(\mat{H}^{\ddagger_{G}})))
\end{equation*}
\end{proposition}

\begin{proof}
Notice that $\mat{G}^{\dagger_{H}}=\oplus_{i\in I^{G}}\oplus_{j\in I^{H}} \mat{G_{i}}$ and $\mat{H}^{\ddagger_{G}}=\oplus_{i\in I^{G}}\oplus_{j\in I^{H}} \mat{H_{j}}$. Thus:
\begin{eqnarray*}
\lefteqn{-\log(\det{}^{FK}_{tr\otimes \alpha^{G}\otimes\alpha^{H}}(1-\Phi(\mat{G}^{\dagger_{H}})\Phi(\mat{H}^{\ddagger_{G}})))}\\
&=&-\log(\det{}^{FK}_{tr\otimes \alpha^{G}\otimes\alpha^{H}}(1-\Phi(\mat{G}^{\dagger_{H}}\mat{H}^{\ddagger_{G}})))\\
&=&-\log(\det{}^{FK}_{tr\otimes (\oplus_{i\in I^{G}}\oplus_{j\in I^{H}})}(1-\oplus_{i\in I^{G}}\oplus_{j\in I^{H}}\bar{\phi}(\mat{G_{i}}\mat{H_{j}})))\\
&=&\sum_{i\in I^{G}}\sum_{j\in I^{H}} -\log(\det{}^{FK}_{tr\otimes\alpha^{G}_{i}\alpha^{H}_{j}}(1-\bar{\phi}(\mat{G_{i}})\bar{\phi}(\mat{H_{j}}))\\
&=&\sum_{i\in I^{G}}\sum_{j\in I^{H}} -\alpha^{G}_{i}\alpha^{H}_{j}\log(\det{}^{FK}(1-\bar{\phi}(\mat{G_{i}})\bar{\phi}(\mat{H_{j}})))
\end{eqnarray*}
Now, we use the fact that $-\log(\det{}^{FK}(1-\bar{\phi}(\mat{G_{i}})\bar{\phi}(\mat{H_{j}})))=\meas[-\log(1-x)]{G_{i},H_{j}}$, a result that was proven in our earlier paper \cite{seiller-goim}. We thus obtain:
\begin{equation*}
-\log(\det{}^{FK}_{tr\otimes \alpha^{G}\otimes\alpha^{H}}(1-\Phi(\mat{G}^{\dagger_{H}})\Phi(\mat{H}^{\ddagger_{G}})))=\sum_{i\in I^{G}}\sum_{j\in I^{H}} \alpha^{G}_{i}\alpha^{H}_{j}\meas{G_{i},H_{j}}
\end{equation*}
Finally, $-\log(\det{}_{tr\otimes \alpha^{G}\otimes\alpha^{H}}(1-\mat{G}^{\dagger_{H}}\mat{H}^{\ddagger_{G}}))=\meas{F,G}$.
\end{proof}

\begin{proposition}\label{execplongtranche}
Let $F,G$ be sliced graphs. Then $\mat{{F\plug G}}$ is the solution to the feedback equation involving $\mat{{F}}^{\dagger_{G}}$ and $\mat{{G}}^{\ddagger_{F}}$.
\end{proposition}

\begin{proof}
By definition $\what{F\plug G}=\sum_{i\in I^{F}}\sum_{j\in J^{G}}\alpha^{F}_{i}\alpha^{G}_{j}\what{F_{i},G_{j}}$. Thus:
\begin{eqnarray*}
\mat{{F\plug G}}&=&\mat{\sum_{i\in I^{F}}\sum_{j\in J^{G}}\alpha^{F}_{i}\alpha^{G}_{j}\what{F_{i},G_{j}}}\\
&=&\oplus_{i\in I^{F}}\oplus_{j\in J^{G}}\mat{{F_{i},G_{j}}}
\end{eqnarray*}
Proposition \ref{execution} ensures us that $\mat{{F_{i},G_{j}}}$ is the solution to the feedback equation involving $\mat{F_{i}}$ and $\mat{G_{j}}$. From this, one can deduce that $\oplus_{i\in I^{F}}\oplus_{j\in J^{G}}\mat{{F_{i},G_{j}}}$ is the solution to the feedback equation involving $\mat{{F}}^{\dagger_{G}}$ and $\mat{{G}}^{\ddagger_{F}}$. Finally, $\mat{{F\plug G}}=\ex{\mat{{F}}^{\dagger_{G}},\mat{{G}}^{\ddagger_{F}}}$.
\end{proof}

From the preceding propositions, one can then show that the interaction graphs setting, in the special case of $m(x)=-\log(1-x)$, is a combinatorial version of the hyperfinite geometry of interaction of Girard.

\begin{theorem}
Let $\de{a,b}$ be operator projects, and let $m$ be the map $x\mapsto -\log(1-x)$. We have the following:
\begin{eqnarray}
\sca{a}{b}&=&\mathopen{\ll}\phi(\de{a}),\phi(\de{b})\mathclose{\gg}\\
\phi(\de{a\deplug b})&=&\phi(\de{a})\deplug\phi(\de{b})\\
\phi(\de{a\otimes b})&=&\phi(\de{a})\otimes\phi(\de{b})\\
\phi(\de{a+\lambda b})&=&\phi(\de{a})\with\lambda\phi(\de{b})\\
\phi(\de{a})\poll\phi(\de{b})&\text{ iff }&\de{a}\poll\de{b}
\end{eqnarray}
\end{theorem}

\begin{proof}
The first equality is given by Proposition \ref{measplongtranche}, and it induces the last statement. The second equality is a consequence of Propositions \ref{measplongtranche} and \ref{execplongtranche}:
\begin{eqnarray*}
\phi(\de{a\deplug b})&=&\phi(\sca[-\log(1-x)]{a}{b},A\plug B)\\
&=&\sca[-\log(1-x)]{a}{b}\cdot+\cdot \bigoplus_{(i,j)\in I^{A}\times I^{B}}\alpha^{A}_{i}\alpha^{B}_{j}+\Phi(A\plug B)\\
&=&\sca{\Phi(A)}{\Phi(B)}\cdot+\cdot (\bigoplus_{i\in I^{A}}\alpha^{A}_{i})\otimes (\bigoplus_{j\in I^{B}}\alpha^{B}_{j})+\Phi(A\plug B)\\
&=&\sca{\Phi(A)}{\Phi(B)}\cdot+\cdot (\bigoplus_{i\in I^{A}}\alpha^{A}_{i})\otimes (\bigoplus_{j\in I^{B}}\alpha^{B}_{j})+\ex{A^{\dagger_{B}},B^{\ddagger_{A}}}
\end{eqnarray*}
The third equality is obtained by noticing that $\de{a\otimes b}=\de{a\deplug b}$ (once the locations of $\de{a}$ and $\de{b}$ are disjoint, the two constructions coincide). This leaves the equality $\phi(\de{a+\lambda b})=\phi(\de{a})\with\lambda\phi(\de{b})$. The proof of this, however is a direct consequence of the definition of $+$ and the definition of $\with$ in Girard's framework.
\end{proof}

\subsection{Orthogonality as Nilpotency}

\begin{definition}
Let $G$ be a graph. We define the \emph{localised connectivity matrix} $\matconn{G}$ of $G$ to be the operator of $p_{V^{G}}\mathcal{B}(\hil{H})p_{V^{G}}\subset\mathcal{B}(\hil{H})$ whose matrix in the base $\{e_{i}~|~i\in V^{G}\}$ is the connectivity matrix of $G$, that is the matrix $(a_{i,j})_{i,j\in V^{G}}$ such that:
$$a_{i,j}=\left\{\begin{array}{ll}
	0 & \text{ if $E^{G}=\emptyset$}\\
	1 & \text{ otherwise}
	\end{array}\right.$$

The \emph{connectivity matrix} of a sliced graph $G=\{G_{i}\}_{i\in I^{G}}$ is defined as the direct sum $\matconn{G}=\bigoplus_{i\in I^{G}}\matconn{G_{i}}$.

If $G,H$ are two sliced graph, we define: $$\matconn{G}\star\matconn{H}=\bigoplus_{(i,j)\in I^{G}\times I^{H}} \matconn{G_{i}}\matconn{H_{j}}$$
\end{definition}

\begin{remark}
The $\star$ operation is defined in this way in order to avoid dealing with the dialects and the operations $(\cdot)^{\dagger}$ and $(\cdot)^{\ddagger}$ (defined in the preceding subsections). However, one could very well define it as the connectivity matrix of the product $(\matconn{G})^{\dagger_{H}}(\matconn{H})^{\ddagger_{G}}$.
\end{remark}

\begin{proposition}\label{nilpotency}
Let  $\de{a,b}$ be two projects with carrier $V$, and $m(x)=\infty$ for $x\in]0,1]$. Then:
\begin{equation*}
\de{a}\poll\de{b}\Leftrightarrow\left\{\begin{array}{l}\matconn{A}\star\matconn{B}\text{ is nilpotent}\\ \unit{A}b+\unit{B}a\neq 0,\infty\end{array}\right.
\end{equation*}
In particular, if $A,B$ have only one slice, the product $\matconn{A}\matconn{B}$ is nilpotent.
\end{proposition}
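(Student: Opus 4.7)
The plan is to unfold the definition of orthogonality, $\de{a}\poll\de{b}$ iff $\sca{a}{b} = a\unit{B}+\unit{A}b+\meas{A,B}\neq 0,\infty$, and then to analyze the term $\meas{A,B}$ under the specific choice $m\equiv\infty$ on $]0,1]$. With this choice, each single-slice measurement $\meas{F_i,G_j}$ is $0$ if $\elemcycl{F_i,G_j}=\emptyset$ and $\infty$ otherwise. From the definition of the measurement on sliced graphs, $\meas{A,B}$ is therefore $\infty$ as soon as some $\meas{A_i,B_j}=\infty$ and equals $0$ otherwise. Hence $\sca{a}{b}$ is finite if and only if no $A_i\bicol B_j$ contains any alternating $1$-circuit; and by Proposition \ref{countingeqclas} this is equivalent to the absence of alternating cycles tout court, since every alternating cycle is a power of a unique alternating $1$-cycle.

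The heart of the argument is then to translate this absence of alternating cycles into nilpotency of $\matconn{A_i}\matconn{B_j}$. With the natural convention on the connectivity matrix, the $(v,w)$-entry of $\matconn{F}\matconn{G}$ is a weighted count of length-$2$ alternating paths from $w$ to $v$ whose first edge lies in $G$ and whose second lies in $F$; inductively, the $(v,w)$-entry of $(\matconn{F}\matconn{G})^{n}$ is the total weight of alternating paths of length $2n$ with the same pattern. Since all edge weights are strictly positive, no cancellation can occur in these sums: an entry (and in particular a diagonal entry) is nonzero exactly when at least one alternating path of the corresponding length-$2n$ shape exists between the indicated vertices. A diagonal entry of $(\matconn{F}\matconn{G})^{n}$ is nonzero iff $F\bicol G$ contains an alternating cycle of length $2n$. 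Because $V$ is finite, $\matconn{F}\matconn{G}$ is a finite matrix, and nilpotency is equivalent to all traces of powers vanishing; combined with the previous observation, this gives that $\matconn{F}\matconn{G}$ is nilpotent iff $F\bicol G$ contains no alternating cycle, iff $\elemcycl{F,G}=\emptyset$.

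It remains to assemble the pieces. Since $\matconn{A}\star\matconn{B}=\bigoplus_{i,j}\matconn{A_i}\matconn{B_j}$, nilpotency of $\matconn{A}\star\matconn{B}$ is equivalent to nilpotency of every $\matconn{A_i}\matconn{B_j}$, hence by the previous paragraph to $\meas{A,B}\neq\infty$. Once this condition holds, $\meas{A,B}=0$, so $\sca{a}{b}=\unit{A}b+\unit{B}a$; the orthogonality requirement $\sca{a}{b}\neq 0,\infty$ then reduces to the second condition $\unit{A}b+\unit{B}a\neq 0,\infty$. The one-slice statement is a direct specialization. The main delicate step I expect is the matrix-theoretic one, where one must be careful that positivity of weights rules out accidental cancellations in powers of $\matconn{F}\matconn{G}$; no analogous argument would go through if signed or complex weights were allowed, but here the restriction to $]0,1]$-valued weights makes the implication from absence of alternating cycles to nilpotency a clean consequence of the standard fact that a nonnegative square matrix is nilpotent iff its supporting digraph is acyclic.
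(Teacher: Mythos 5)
Your proof is correct and follows essentially the same route as the paper's: unfold $\sca{a}{b}$, observe that with $m\equiv\infty$ the term $\meas{A,B}$ is $\infty$ exactly when some $A_i\bicol B_j$ contains an alternating cycle and $0$ otherwise, and identify that condition with nilpotency of $\matconn{A}\star\matconn{B}$. The only difference is that you spell out the matrix-theoretic step (nonnegative entries, no cancellation, nilpotency iff the supporting digraph is acyclic) that the paper's very terse proof leaves implicit; this is a welcome addition, not a divergence.
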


\begin{proof}
We show the left-to-right implication: $\de{a}\poll\de{b}$ implies $a\unit{B}+b\unit{A}+\meas{A,B}\neq 0,\infty$. But, if there were a cycle in one of the $A_{i}\bicol B_{j}$, the last term $\meas{A,B}$ would be equal to $\infty$. Hence $\sca{a}{b}$ would be infinite, and the projects would not be orthogonal. Thus $\matconn{A}\star\matconn{B}$ is nilpotent and $\meas{A,B}=0$, which means that $\unit{B}a+b\unit{A}\neq 0,\infty$.

The converse is straightforward.
\end{proof}

\begin{remark}
This proposition is true because we are working with finite graphs. What we are really proving is that $\de{a}\poll\de{b}$ if and only if $ \unit{A}b+\unit{B}a\neq 0,\infty$ and for all $i,j$, \emph{no cycles appear in $A_{i}\bicol B_{j}$}. In the case of infinite graphs, this condition would imply weak nilpotency (hence more in the style of the second version of GoI \cite{goi2}).
\end{remark}

\begin{corollary}
Let $\de{a}=(a,A)$ be a project, and $\de{a'}=(a,A')$ be such that $\matconn{A'}=\matconn{A}$. Then $\de{a}\cong_{\cond{A}}\de{a'}$ for all conduct $\cond{A}$ containing $\de{a}$.
\end{corollary}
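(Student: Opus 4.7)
The plan is to unfold the definition of $\cong_{\cond{A}}$ and to exhibit, in the regime $m(x)=\infty$, a manifest dependence of the scalar product on $A$ only through data preserved by the hypothesis $\matconn{A'}=\matconn{A}$. First, fix an arbitrary $\de{c}\in\cond{A}^{\pol}$, so that $\de{a}\poll\de{c}$. By Proposition~\ref{nilpotency} this gives $\matconn{A}\star\matconn{C}=\bigoplus_{i,j}\matconn{A_i}\matconn{C_j}$ nilpotent; as pointed out in the remark following that proposition, nilpotency of the direct sum forces nilpotency of each $\matconn{A_i}\matconn{C_j}$, which forbids any cycle in any $A_i\bicol C_j$. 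Hence in the $m(x)=\infty$ regime one has $\meas{A_i,C_j}=0$ for all $i,j$, whence $\meas{A,C}=\sum_{i,j}\alpha^A_i\alpha^C_j\meas{A_i,C_j}=0$, and consequently $\sca{a}{c}=a\unit{C}+\unit{A}c$.

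Next, I would transfer the same argument to $\de{a'}$: substituting $\matconn{A'}=\matconn{A}$ gives $\matconn{A'}\star\matconn{C}=\matconn{A}\star\matconn{C}$ nilpotent, so by the same chain of implications $\meas{A',C}=0$ and $\sca{a'}{c}=a\unit{C}+\unit{A'}c$. Reading $\matconn{A}=\bigoplus_{i\in I^A}\matconn{A_i}$ as an equality of operators that records the full (weighted) slice decomposition, the hypothesis $\matconn{A'}=\matconn{A}$ forces the corresponding totals to agree, i.e. $\unit{A'}=\unit{A}$; thus $\sca{a}{c}=\sca{a'}{c}$ for every $\de{c}\in\cond{A}^{\pol}$, which is exactly the desired equivalence.

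The main obstacle, such as it is, lies in this last bookkeeping step rather than in the main computation: one must ensure that the identification of connectivity matrices really does preserve the total wager-weight $\sum_i\alpha^A_i$, so that the coefficient $\unit{A}c$ in the scalar product is unchanged when passing from $\de{a}$ to $\de{a'}$. Everything else is an immediate application of Proposition~\ref{nilpotency} and of the remark that nilpotency of $\matconn{A}\star\matconn{C}$ is, in this regime, precisely the condition that no cycle occurs in any $A_i\bicol C_j$.
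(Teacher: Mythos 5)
Your computation is the intended one: the paper states this corollary without proof, as an immediate consequence of Proposition~\ref{nilpotency}, and the argument is exactly the one you give --- for $\de{c}\in\cond{A}^{\pol}$, orthogonality with $\de{a}$ forces $\meas{A,C}=0$ in the regime $m(x)=\infty$, so $\sca{a}{c}=a\unit{C}+\unit{A}c$ depends on $A$ only through $\unit{A}$ and the absence of alternating cycles, and the latter is read off from $\matconn{A}\star\matconn{C}$, which is unchanged when $A$ is replaced by $A'$.

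The step you flag as ``bookkeeping'' is, however, asserted rather than proved, and it cannot be proved from the definitions. The connectivity matrix of a sliced graph is $\matconn{G}=\bigoplus_{i\in I^{G}}\matconn{G_{i}}$, where each $\matconn{G_{i}}$ is the ordinary connectivity matrix of the underlying graph of the slice; the coefficients $\alpha^{G}_{i}$ appear nowhere in this operator (note the paper even writes the sliced graph as $\{G_{i}\}_{i\in I^{G}}$, dropping the coefficients, when defining it). So $\matconn{A'}=\matconn{A}$ does \emph{not} literally force $\unit{A'}=\unit{A}$, and without that the conclusion fails: whenever $\cond{A}^{\pol}$ contains a project $\de{c}=(c,C)$ with $c\neq 0$ (the generic situation for a conduct such as $\{\de{a}\}^{\pol\pol}$, whose orthogonal is not wager-free), the two scalars differ by $\sca{a}{c}-\sca{a'}{c}=(\unit{A}-\unit{A'})c\neq 0$. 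The statement has to be read with the convention that $A'$ is obtained from $A$ by altering the slices while keeping the index set and the coefficients $\alpha^{A}_{i}$ --- which is precisely how the corollary is used right afterwards, to replace each slice by its underlying simple unweighted directed graph. Under that reading your argument is complete; but the sentence claiming that the equality of operators ``records the full (weighted) slice decomposition'' is an additional hypothesis, not a consequence of the definition of $\matconn{\cdot}$.
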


The model we obtain when taking $m(x)=\infty$ can therefore be reduced, by working up to observational equivalence, to working with simple (at most one edge between two points) non-weighted directed graphs.

Notice however the differences between the first versions of GoI and our framework. The addition of the wager is a quite important improvement: without it, we would have $\cond{1}=\cond{\bot}$. Moreover, the additive construction (the use of slices) allows us to define, as we have shown, a categorical model of MALL. Looking a little closer at this model, one can see, however, that it is not that exciting.

\begin{proposition}\label{trivial}
Let $m(x)=\infty$ and $\cond{A}$ be a behaviour. Then $\cond{A}$ is either empty or the orthogonal of an empty conduct.
\end{proposition}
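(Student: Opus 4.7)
The plan is to combine Proposition \ref{charac}, which classifies behaviours as either proper, empty, or the orthogonal of an empty conduct, with a direct computation showing that when $m(x)=\infty$, no proper behaviour can exist. Thus the argument reduces to showing that the only behaviours in this setting are $\cond{0}_V$ and $\cond{T}_V$.

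First I would observe the general shape of the measurement when $m(x)=\infty$: for any pair of graphs $F,G$, the sum $\meas{F,G}=\sum_{\pi\in\elemcycl{F,G}}m(\omega^{F\bicol G}(\pi))$ equals $0$ if $\elemcycl{F,G}=\emptyset$ and equals $\infty$ as soon as there is a single alternating $1$-circuit. Passing to sliced graphs via the definition of $\meas{-,-}$ together with the convention that any summand equal to $\infty$ forces the whole sum to be $\infty$, this means that for arbitrary sliced graphs $A,A'$ the value $\meas{A,A'}$ belongs to $\{0,\infty\}$, with no intermediate finite positive values possible.

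Next I would suppose, for contradiction, that $\cond{A}$ is a proper behaviour. By Corollary \ref{properbehaviour} and Proposition \ref{charac}, both $\cond{A}$ and $\cond{A}^{\pol}$ are non-empty and contain only wager-free projects. Pick $\de{a}=(0,A)\in\cond{A}$ and $\de{a'}=(0,A')\in\cond{A}^{\pol}$. By definition of the measurement on projects,
\begin{equation}
\sca{a}{a'}=0\cdot\unit{A'}+\unit{A}\cdot 0+\meas{A,A'}=\meas{A,A'}.\nonumber
\end{equation}
By the observation above, $\meas{A,A'}\in\{0,\infty\}$, so $\sca{a}{a'}\in\{0,\infty\}$, contradicting the orthogonality $\de{a}\poll\de{a'}$, which demands $\sca{a}{a'}\neq 0,\infty$.

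Hence no proper behaviour exists when $m(x)=\infty$, and by Proposition \ref{charac} every behaviour $\cond{A}$ of carrier $V$ must be $\cond{0}_V=\emptyset$ or $\cond{T}_V=\cond{0}_V^{\pol}$, which is exactly the conclusion. The only subtle point is ensuring the sliced-graph convention propagates correctly — that a single infinite summand forces $\meas{A,A'}=\infty$ — but this is immediate from the definition given in the excerpt, so no real obstacle arises.
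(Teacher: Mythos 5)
Your proposal is correct and follows essentially the same route as the paper's proof: both reduce to Proposition \ref{charac}, observe that a proper behaviour and its orthogonal would contain only wager-free projects, and note that two wager-free projects cannot be orthogonal when $m(x)=\infty$. You merely make explicit the step the paper leaves implicit, namely that $\sca{a}{a'}=\meas{A,A'}\in\{0,\infty\}$ because each slicewise measurement is either $0$ or $\infty$.
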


\begin{proof}
Notice that a proper behaviour and its orthogonal contain only project $\de{a}=(a,A)$ with $a=0$. But two such projects cannot be orthogonal when $m(x)=\infty$. Hence there are no proper behaviours and using Proposition \ref{charac} we have the result.
\end{proof}

Hence, the categorical model we get with an orthogonality defined by nilpotency is nothing more than a truth-value model.

\subsection{Multiplicatives}

We will here show how one can obtain a refined version of the construction depicted by Girard in his \emph{multiplicatives} paper \cite{multiplicatives}. To obtain it, one needs not only to consider a particular choice of map $m$, but also to change slightly the notion of orthogonality. Before going into the details of the construction, let us first explain this change.

As in the classical realisability setting \cite{krivine}, we will define a notion of \emph{pole}. A pole will be a subset of $\realN$ that contains the values of the measurement representing a successful interaction.

\begin{definition}
A \emph{pole} is a subset $\pole\in\realN$.
\end{definition}

One can then define orthogonality with respect to the pole in the following way:
\begin{definition}
Two projects $\de{a,b}$ are \emph{orthogonal with respect to the pole $\pole$} when $\sca{a}{b}\in\pole$. This will be denoted by $\de{a}\poll\de{b}$.
\end{definition}

\begin{remark}
This definition extends the definitions given in Section \ref{gdisection}: one just have to choose the pole $\realN-\{0\}$.
\end{remark}

The reason why we chose not to present the whole paper in this way comes from the fact that the main properties of the additive construction use the fact\footnote{The construction of additive connectives can be performed with any pole not containing $0$. This obviously does not mean that no additive construction can be performed in the other cases. However, the answer to the question of defining additives in the case of an arbitrary pole is outside the scope of this paper.} that we work with the particular pole $\realN-\{0\}$. However, all the results concerning multiplicative connectives are independent of the choice of a pole. Among these multiplicative constructions, we will have a closer look at the one defined by:
\begin{itemize}
\item the pole $\realN-\{1\}$;
\item the map $m(x)=1$.
\end{itemize}

Let us define $G_{\sigma}$ to be the graph associated to a permutations $\sigma$ on the set $X$ by:
\begin{eqnarray*}
V^{G_{\sigma}}&=&X\\
E^{G_{\sigma}}&=&X\\
s^{G_{\sigma}}&=&x\mapsto x\\
t^{G_{\sigma}}&=&x\mapsto \sigma(x)\\
\omega^{G_{\sigma}}&=&x\mapsto 1
\end{eqnarray*}

Of course, the execution defined in this paper is a generalisation of the execution defined by Girard between permutations. In this case, as in the first GoI constructions, the difference is that our execution is always defined while Girard's was defined only when no cycles appeared between the two permutations. But in the case the execution as permutations is defined, the two notion coincide. This means that if $\sigma,\tau$ are permutations and $\Ex(\sigma,\tau)$ is defined, then:
$$G_{\Ex(\sigma,\tau)}=G_{\sigma}\plug G_{\tau}$$
Orthogonality was defined as $\sigma\poll\tau$ if and only if $\sigma\tau$ is a cyclic permutation. This is equivalent to saying that $\sigma\poll\tau$ if and only if $\sigma\tau$ decomposes in a single cycle. Which can be translated in terms of graphs as:
$$\sigma\poll\tau \Leftrightarrow \meas{G_{\sigma},G_{\tau}}=1$$
These results lead to the following theorem.

\begin{theorem}
If $\sigma, \tau$ are permutations and $\de{g}_{\sigma}=(0,G_{\sigma})$ and $\de{g}_{\tau}=(0,G_{\tau})$ are the associated projects:
\begin{eqnarray*}
\sigma\poll \tau&\Leftrightarrow& \de{g}_{\sigma}\poll \de{g}_{\tau}\\
\de{g}_{\sigma\plug\tau}&=&\de{g}_{\sigma}\deplug \de{g}_{\tau}\\
\de{g}_{\sigma\otimes\tau}&=&\de{g}_{\sigma}\otimes \de{g}_{\tau}
\end{eqnarray*}
\end{theorem}

\begin{proof}
The only thing we did not already prove is the fact that $\de{g}_{\sigma\otimes \tau}=\de{g}_{\sigma}\otimes \de{g}_{\tau}$. But this equality is a consequence of the previous one, taking into account the fact that the tensor product is a special case of execution (the case when carriers are disjoint).
\end{proof}

\section{An application of graphs for GoI5}

In this section, we will show how the combinatorial version of GoI5 obtained from the graphs of interaction can help understand important and fine aspects of the hyperfinite GoI construction. We will focus on the construction of additives, and more particularly on the fact that the set of projects $\cond{A+B}$ does not generate the behaviour $\cond{A\with B}$. As it turns out, the counter-example produced in the proof of Proposition \ref{counterexgraphs} can be used to obtain a similar result in the hyperfinite GoI. While this result could have been obtained directly in the hyperfinite setting, the finding of this proof came through the setting of graphs in which intuitions are easier to come by. As it turns out, one just has to mimic the proof of the graph case to obtain the same result with operators.

For this, we recall that in the theory of von Neumann algebras, one defines a notion of equivalence of projections: two projections $p,q$ are equivalent (in the sense of Murray and von Neumann) in a von Neumann algebra $\vn{M}$ when there exists an element\footnote{The equations satisfied by $u$ imply that $u$ is a \emph{partial isometry}.} $u\in\vn{M}$ such that $uu^{\ast}=q$ and $u^{\ast}u=p$. One can then quotient the partial order on projections defined by $p\leqslant q$ if and only if $pq=p$ by this equivalence relation to obtain a partial order denoted by $\precsim_{\vn{M}}$. It is a standard fact that if $\vn{M}$ is a factor, then the order $\precsim_{\vn{M}}$ is total. In particular, given two projections $p,q$, then one of them is equivalent to a sub-projection of the other.

The main idea of the proof of Proposition \ref{counterexgraphs} is to construct an element which is in $\cond{(A+ B)^{\pol}}$ but not in $\cond{(A\with B)^{\pol}}$ by considering a project of the form $\de{a+b}+\de{0}_{u}$, where $\de{a,b}$ are in $\cond{A,B}$ respectively and $\de{0}_{u}$ consists in a single edge from the carrier of $\cond{A}$ to the carrier of $\cond{B}$. The main intuition is that the edge in $\de{0}_{u}$ cannot be seen by tests in $\cond{A+B}$ — i.e. cannot be used to create cycles — while it can be used by tests in $\cond{A\with B}$. In the case of hyperfinite Geometry of Interaction, we will use the same proof. The only difficulty is to understand how to construct the hyperfinite project equivalent to $\de{0}_{u}$. We will use the total order $\precsim$ to obtain a partial isometry between the carrier of one of the two conducts $\cond{A,B}$ and a sub-projection of the other. This partial isometry will play the rôle of the edge used to define $\de{0}_{u}$. The rest of the proof simply consists in computations.

\subsection{GoI5.2 construction}

We will first need to show an important proposition which will be used extensively in the computations. This proposition (\autoref{detnil}) states that the Fuglede-Kadison determinant of $1-u$ is equal to $1$ when $u$ is nilpotent. We will use two properties of the Fuglede-Kadison determinant in the proof: the fact that the determinant of $A$ is less than or equal to the spectral radius $\specrad{A}$ of $A$, and the fact that the determinant is multiplicative and therefore $\det{}^{FK}(A^{-1})=(\det{}^{FK}(A))^{-1}$. These two properties are stated and proved in the Fuglede and Kadison paper \cite{FKdet}. 

\begin{lemma}\label{specnil}
If $A$ is nilpotent, then $\specrad{A}=0$. 
\end{lemma}

\begin{proof}
We know that $\specrad{A}=\lim_{n\rightarrow \infty} \norm{A^{n}}^{\frac{1}{n}}$. If $A$ is nilpotent of degree $k$, then $\norm{A^{n}}=0$ for all $n\geqslant k$. Thus $\specrad{A}=0$.
\end{proof}

\begin{lemma}\label{invnil}
Chose $k\in\naturalN$ and $\alpha_{1},\dots,\alpha_{k}$ in $\complexN$. If $A$ is nilpotent, then $P(A)=\sum_{i=1}^{k} \alpha_{k}A^{k}$ is nilpotent.
\end{lemma}

\begin{proof}
The minimal degree of $A$ in $P(A)^{i}$ is equal to $A^{i}$. Thus $P(A)^{i}=0$ for all $i\geqslant k$.
\end{proof}

\begin{proposition}\label{detnil}
If $A$ is nilpotent, then $\det{}^{FK}(1+A)=1$.
\end{proposition}

\begin{proof}
Let us denote by $k$ the nilpotency degree of $A$. Since $A$ is nilpotent, $\specrad{A}=0$ by Lemma \ref{specnil}. Chose $\lambda$ in the spectrum $\spec{1+A}$ of $1+A$. By definition, $\lambda-1-A$ is non-invertible, which means that $(\lambda-1)-A$ is non-invertible, i.e. $\lambda-1\in\spec{A}$. This implies that $\lambda=1$ since the spectrum of $A$ is reduced to $\{0\}$. Thus $\specrad{1+A}\leqslant 1$ and therefore $\det{}^{FK}(1+A)\leqslant 1$.

Moreover, $(1+A)^{-1}=\sum_{i=0}^{k-1} (-A)^{i}=1+\sum_{i=1}^{k-1}(-A)^{i}$. By Lemma \ref{invnil}, we know that $B=\sum_{i=1}^{k-1} (-A)^{i}$ is nilpotent, and thus $\det{}^{FK}(1+B)\leqslant 1$ applying the same arguments as before. Since $\det{}^{FK}(1+B)=\det{}^{FK}((1+A)^{-1})=(\det{}^{FK}(1+A))^{-1}$, we finally conclude that $\det{}^{FK}(1+A)=1$.
\end{proof}

\begin{proposition}\label{contreexgdi52}
Let $\cond{A,B}$ be proper dichologies. There exists a hyperfinite project $\de{f}\in(\cond{A^{\pol}+B^{\pol}})^{\pol}$ such that $\de{f}\not\in\cond{A\oplus B}$.
\end{proposition}

\begin{proof}
We can suppose, without loss of generality, that $p\precsim q$. As a consequence, there exists a projection $p'\leqslant q$ such that $p'\sim p$. Now, let $u$ be a partial isometry such that $uu^{\ast}=p$ and $u^{\ast}u=p'$. Let $\de{a}\in\cond{A}$ and $\de{c}=\de{a}_{p+q}+\de{0}_{u}$ where $\de{0}_{u}=0\cdot+\cdot 1_{\complexN}+(u+u^{\ast})$.

We will now show that the hyperfinite project $\de{c}$ satisfies the wanted properties: we will first show that $\de{c}\in\cond{(A^{\pol}+B^{\pol})}^{\pol}$, and then we will show that $\de{c}\not\in\cond{A\oplus B}$.

\begin{itemize}
\item[]\textbf{Showing that $\de{c}\in\cond{(A^{\pol}+B^{\pol})^{\pol}}$.}~\\
Let $\de{d}=\de{a'}_{p+q}+\de{b'}_{p+q}\in\cond{A^{\pol}+B^{\pol}}$. Then:
\begin{eqnarray*}
\sca[]{c}{d}&=&\sca[]{a_{\text{p+q}}+0_{\text{u}}}{a'_{\text{p+q}}+b'_{\text{p+q}}}\\
&=&\sca[]{a_{\text{p+q}}}{a'_{\text{p+q}}}+\sca[]{a_{\text{p+q}}}{b'_{\text{p+q}}}+\sca[]{0_{\text{u}}}{a'_{\text{p+q}}}+\sca[]{0_{\text{u}}}{b'_{\text{p+q}}}\\
&=&\sca[]{a}{a'}+\sca[]{0_{\text{u}}}{a'_{\text{p+q}}}+\sca[]{0_{\text{u}}}{b'_{\text{p+q}}}
\end{eqnarray*}
Since $u^{\ast}=u^{\ast}uu^{\ast}=p'u^{\ast}$, $A'=A'p$ and $pp'=0$, one has:
\begin{eqnarray*}
-\log(\det{}^{FK}(1-A'(u+u^{\ast})))&=&-\log(\det{}^{FK}(1-A'u-A'u^{\ast}))\\
&=&-\log(\det{}^{FK}(1-A'u-A'pp'u^{\ast}))\\
&=&-\log(\det{}^{FK}(1-A'u))
\end{eqnarray*}
Since $u=uu^{\ast}u=up'$ and $A'=pA'$, we have $A'uA'u=A'up'pA'u=0$, i.e. $A'u$ is nilpotent. By using Proposition \ref{detnil}, we then obtain:
\begin{eqnarray*}
-\log(\det{}^{FK}(1-A'u))&=&0
\end{eqnarray*}
Thus $\sca[]{0_{\text{u}}}{a'_{\text{p+q}}}=0$ since all wagers are equal to zero. We show in a similar way that $\sca[]{0_{\text{u}}}{b'_{\text{p+q}}}=0$ since $uB=puBq$ and thus $(uB)^{2}=0$:
\begin{eqnarray*}
-\log(\det{}^{FK}(1-(u+u^{\ast})B))&=&-\log(\det{}^{FK}(1-uB-u^{\ast}B))\\
&=&-\log(\det{}^{FK}(1-uB-u^{\ast}pqB))\\
&=&-\log(\det{}^{FK}(1-uB))\\
&=&-\log(1)
\end{eqnarray*}
As a consequence we have $\sca[]{c}{d}=\sca[]{a}{a'}$, i.e. $\de{c}\in\cond{(A^{\pol}+B^{\pol})}^{\pol}$.

\item[]\textbf{Showing that $\de{c}\not\in\cond{A\oplus B}$.}~\\
To show this, we will find an element $\de{t}$ in $(\cond{A}_{p+q})^{\pol}\cap(\cond{B}_{p+q})^{\pol}$ such that $\de{c}\not\poll\de{t}$. Chose $\de{b'}\in\cond{B}$, $\de{a'}\in\cond{A}$, $\lambda\in\realposN$ with $\abs{\lambda}<1$ and define $\de{0_{\text{$\lambda$u}}}=0\cdot+\cdot 1_{\complexN}+\lambda(u+u^{\ast})$. We will show that there exists a real number $\mu$ such that $\de{t}=\de{b'}_{p+q}+\de{a'_{\text{p+q}}}+\mu \de{0}_{\text{$\lambda$u}}$ satisfies $\de{t}\in(\cond{A}_{p+q})^{\pol}$, $\de{t}\in(\cond{B}_{p+q})^{\pol}$ and $\de{t}\not\poll\de{c}$. Let  us chose $\de{b}\in\cond{B}$. One can compute:
\begin{eqnarray*}
\sca[]{t}{b_{\text{p+q}}}&=&\sca[]{b'_{\text{p+q}}+a'_{\text{p+q}}+\mu 0_{\text{$\lambda$u}}}{b_{\text{p+q}}}\\
&=&\sca[]{b'_{\text{p+q}}}{b_{\text{p+q}}}+\sca[]{a'_{\text{p+q}}}{b_{\text{p+q}}}+\mu\sca[]{0_{\text{$\lambda$u}}}{b_{p+q}}\\
&=&\sca[]{b}{b'}+\sca[]{a'_{\text{p+q}}}{b_{\text{p+q}}}+\mu\sca[]{0_{\text{$\lambda$u}}}{b_{p+q}}
\end{eqnarray*}
Since $A'B=0$, we have $\sca[]{a'_{\text{p+q}}}{b_{\text{p+q}}}=0$.
Moreover we have, as in previous computations:
\begin{eqnarray*}
\sca[]{0_{\text{$\lambda$u}}}{b_{p+q}}&=&-\log(\det{}^{FK}(1-\lambda(u+u^{\ast})B))\\
&=&-\log(\det{}^{FK}(1-\lambda uB+\lambda u^{\ast}pB))\\
&=&-\log(\det{}^{FK}(1-\lambda uB))\\
&=&-\log(1)\\
&=&0
\end{eqnarray*}
This shows that $\sca[]{t}{b_{\text{p+q}}}=\sca[]{b}{b'}$, hence $\de{t}\in(\cond{B}_{p+q})^{\pol}$.

Let us now chose $\de{a}\in\cond{A}$. One can compute:
\begin{eqnarray*}
\sca[]{t}{a_{\text{p+q}}}&=&\sca[]{b'_{\text{p+q}}+a'_{\text{p+q}}+\mu 0_{\text{$\lambda$u}}}{a_{\text{p+q}}}\\
&=&\sca[]{b'_{\text{p+q}}}{a_{\text{p+q}}}+\sca[]{a'_{\text{p+q}}}{a_{\text{p+q}}}+\mu\sca[]{0_{\text{$\lambda$u}}}{a_{p+q}}\\
&=&\sca[]{a}{a'}+\sca[]{b'_{\text{p+q}}}{a_{\text{p+q}}}+\mu\sca[]{0_{\text{$\lambda$u}}}{a_{p+q}}
\end{eqnarray*}
Since $AB'=0$, we have $\sca[]{b'_{\text{p+q}}}{a_{\text{p+q}}}=0$.
Moreover:
\begin{eqnarray*}
\sca[]{0_{\text{$\lambda$u}}}{a_{p+q}}&=&-\log(\det{}^{FK}(1-\lambda(u+u^{\ast})A))\\
&=&-\log(\det{}^{FK}(1-\lambda up'A+\lambda u^{\ast}A))\\
&=&-\log(\det{}^{FK}(1-\lambda u^{\ast}A))\\
&=&-\log(1)\\
&=&0
\end{eqnarray*}
This shows that $\sca[]{t}{a_{\text{p+q}}}=\sca[]{a}{a'}$, hence $\de{t}\in(\cond{A}_{p+q})^{\pol}$.

We just showed that $\de{t}\in(\cond{A}_{p+q})^{\pol}\cap(\cond{B}_{p+q})^{\pol}$. This shows that $\de{t}\in\cond{(A\oplus B)^{\pol}}$:
\begin{eqnarray*}
\cond{(A\oplus B)^{\pol}}&=&\cond{(A_{\text{p+q}}\cup B_{\text{p+q}})^{\pol\pol\pol}}\\
&=&\cond{(A_{\text{p+q}}\cup B_{\text{p+q}})^{\pol}}\\
&=&\cond{(A_{\text{p+q}})^{\pol}\cap (B_{\text{p+q}})^{\pol}}
\end{eqnarray*}

Let us now compute $\sca[]{t}{c}$:
\begin{eqnarray*}
\sca[]{t}{c}&=&\sca[]{b'_{\text{p+q}}+a'_{\text{p+q}}+\mu0_{\text{$\lambda$u}}}{a_{\text{p+q}}+0_{\text{u}}}\\
&=&\sca[]{b'_{p+q}}{a_{\text{p+q}}}+\sca[]{b'_{\text{p+q}}}{0_{\text{u}}}+\sca[]{a'_{\text{p+q}}}{a_{\text{p+q}}}+\dots\\
&&~~~\dots+\sca[]{a'_{\text{p+q}}}{0_{\text{u}}}+\mu\sca[]{0_{\text{$\lambda$u}}}{a_{\text{p+q}}}+\mu\sca[]{0_{\text{$\lambda$u}}}{0_{\text{u}}}
\end{eqnarray*}
We have $\sca[]{b'_{\text{p+q}}}{a_{\text{p+q}}}=0$. Moreover, we know $\sca[]{b'_{\text{p+q}}}{0_{\text{u}}}$, $\sca[]{a'_{\text{p+q}}}{0_{\text{u}}}$ and $\sca[]{0_{\text{$\lambda$u}}}{a_{\text{p+q}}}$ are all equal to zero (this is once again the same reasoning: one of the two terms $u,u^{\ast}$ disappears, and we then use Proposition \ref{detnil}). This gives us:
\begin{eqnarray*}
\sca[]{t}{c}&=&\sca[]{a'_{\text{p+q}}}{a_{\text{p+q}}}+\mu\sca[]{0_{\text{$\lambda$u}}}{0_{\text{u}}}\\
&=&\sca[]{a'}{a}+\mu\sca[]{0_{\text{$\lambda$u}}}{0_{\text{u}}}
\end{eqnarray*}
But one can compute the second term, which is different from zero: 
\begin{eqnarray*}
\sca[]{0_{\text{$\lambda$u}}}{0_{\text{u}}}&=&-\log(\det{}^{FK}(1-\lambda(u+u^{\ast})(u+u^{\ast})))\\
&=&-2\tr(p)\log(1-\lambda)
\end{eqnarray*}
We can then define $\mu=\frac{\sca[]{a}{a'}}{2\tr(p)\log(1-\lambda)}$ and, in this case, $\sca[]{t}{c}=0$.
\end{itemize}
This shows that $\de{c}\in\cond{(A^{\pol}+B^{\pol})^{\pol}}$ and $\de{c}\not\in\cond{(A_{\text{p+q}}\cap B_{\text{p+q}})^{\pol\pol}}$ — i.e. $\de{c}\not\in\cond{A\oplus B}$.
\end{proof}

\subsection{GoI5.1}

Contradicting what we just showed (Proposition \ref{contreexgdi52}), Girard states (Proposition 16) in his paper that the set of elements of the form $\de{a+b}$ with $\de{a}\in\cond{A}$ and $\de{b}\in\cond{B}$ generates the conduct $\cond{A\with B}$ under bi-orthogonality. It is then natural to ask why the construction of additives in GoI5.2 satisfies weaker properties than the one defined by Girard. As it turns out, Proposition 16 in Girard's paper is not true. We will obtain the proof of this claim as an adaptation of the previous proof.

\begin{lemma}[(GoI5.1)]
Let $\cond{A}$ be a negative conduct, $\de{a}\in\cond{A}$ and $\de{b}\in\cond{A^{\pol}}$. Then we necessarily are in one of the two following cases:
\begin{itemize}
\item either the wager $b$ of $\de{b}$ is equal to zero;
\item or the interaction $-\log(\det{}^{FK}(1-A^{\dagger}B^{\ddagger}))$ is equal to zero.
\end{itemize}
\end{lemma}

\begin{proof}
Suppose that both the interaction $\lambda=-\log(\det{}^{FK}(1-A^{\dagger}B^{\ddagger}))$ and the wager $b$ of $\de{b}$ are different from zero. We have:
$$\sca[]{a}{b}=b\alpha(1_{\vn{A}})-\log(\det{}^{FK}(1-A^{\dagger}B^{\ddagger}))$$ 
Since $\alpha(1_{\vn{A}})\neq 0$, we can define the non-zero real number $\lambda/\alpha(1_{\vn{A}})$. Using a property of positive conducts, we can deduce that $\de{b'}=-\lambda/\alpha(1_{\vn{A}})\cdot +\cdot \beta+B$ is an element of $\cond{A^{\pol}}$. But $\sca[]{a}{b'}=0$, which is a contradiction. As a consequence, either $a=0$, or $\lambda=0$.
\end{proof}

\begin{lemma}[(GoI5.1)]
If $\cond{A,B}$ are negative conducts, one has $\cond{A+B}\subset\cond{A\with B}$.
\end{lemma}

\begin{proof}
Since $\cond{A\with B}=(\cond{A\oplus B})^{\pol}=(\cond{A}_{p+q}\cup\cond{B}_{p+q})^{\pol}$, we will show that any element of $\cond{A+B}$ is in the orthogonal of $\cond{A}_{p+q}\cup\cond{B}_{p+q}$. Let $\de{f+g}$ be an element of $\cond{A\with B}$, $\de{a}_{p+q}\in\cond{A}_{p+q}$ and $\de{b}\in\cond{B}_{p+q}$. Then:
\begin{eqnarray*}
\sca[]{f_{\text{p+q}}+g_{\text{p+q}}}{a_{\text{p+q}}}&=&\sca[]{f_{\text{p+q}}}{a_{\text{p+q}}}+\sca[]{g_{\text{p+q}}}{a_{\text{p+q}}}\\
&=&\sca[]{f}{a}+a\gamma(1_{\vn{G}})
\end{eqnarray*}
Using the preceding lemma, either $\sca[]{f}{a_{\text{p+q}}}=a\varphi(1_{\vn{F}})$ or $a=0$. In the first case, we obtain that $\sca[]{f_{\text{p+q}}+g_{\text{p+q}}}{a_{\text{p+q}}}=a(\varphi(1_{\vn{F}})+\gamma(1_{\vn{G}}))$ which is necessarily different from $0$ and $\infty$. In the second case, $\sca[]{f_{\text{p+q}}+g_{\text{p+q}}}{a_{\text{p+q}}}=\sca[]{f}{a}$ which is also different from $0$ and $\infty$.

We show in a similar way that $\sca[]{f_{\text{p+q}}+g_{\text{p+q}}}{b_{\text{p+q}}}\neq 0,\infty$.
\end{proof}

\begin{proposition}[Counter-examples in GoI5.1]\label{counterexgdi51}
Let $\cond{A,B}$ be two positive conducts with disjoint carriers. There exists a strict hyperfinite project in $\cond{(A^{\pol}+B^{\pol})^{\pol}}$ which is not an element of $\cond{A\oplus B}$.
\end{proposition}

\begin{proof}
This proof is an adaptation of the proof of Proposition \ref{contreexgdi52} that deals with the small differences between the GoI5.1 and GoI5.2 constructions. If we write $p$ and $q$ the respective (disjoint) carriers of $\cond{A}$ and $\cond{B}$, we can suppose without loss of generality that $p\precsim q$, and thus that there exists a projection $p'\leqslant q$ such that $p\sim p'$. We will write $u$ a partial isometry such that $uu^{\ast}=p$ and $u^{\ast}u=p'$. We chose an element $\de{a}=a\cdot+\cdot 1_{\complexN}+0$ in $\cond{A}$ and define $\de{c}=\de{a}_{p+q}+\de{0}_{u}-\de{0}$ where $\de{0}_{u}=0\cdot+\cdot 1_{\complexN}+(u+u^{\ast})$ and $\de{0}=0\cdot+\cdot 1_{\complexN}+0$.

\begin{itemize}
\item[]\textbf{The project $\de{c}$ is an element of $\cond{(A^{\pol}+B^{\pol})^{\pol}}$.}~\\
Let $\de{d}\in\cond{A^{\pol}+B^{\pol}}$. Then $\de{d}=\de{a'_{\text{p+q}}+b'_{\text{p+q}}}$ where $\de{a'}\in\cond{A^{\pol}}$ and $\de{b'}\in\cond{B^{\pol}}$. We can then compute $\sca[]{c}{d}$ (using the fact that $\de{a',b'}$ are elements of a negative conduct and therefore have a null wager, we notice that $\sca[]{0}{a'_{\text{p+q}}}=\sca[]{0}{b'_{\text{p+q}}}=0$):
\begin{eqnarray*}
\sca[]{c}{d}&=&\sca[]{a_{\text{p+q}}+0_{\text{u}}+0}{a'_{\text{p+q}}+b'_{\text{p+q}}}\\
&=&\sca[]{a_{\text{p+q}}}{a'_{\text{p+q}}+b'_{\text{p+q}}}+\sca[]{0_{\text{u}}}{a'_{\text{p+q}}}+\dots\\
&&~~~~\dots+\sca[]{0}{a'_{\text{p+q}}}+\sca[]{0_{\text{u}}}{b'_{\text{p+q}}}+\sca[]{0}{b'_{\text{p+q}}}\\
&=&\sca[]{a_{\text{p+q}}}{a'_{\text{p+q}}+b'_{\text{p+q}}}+\sca[]{0_{\text{u}}}{a'_{\text{p+q}}}+\sca[]{0_{\text{u}}}{b'_{\text{p+q}}}
\end{eqnarray*}
As in the proof of Proposition \ref{contreexgdi52}, we show that both $\sca[]{0_{\text{u}}}{a'_{\text{p+q}}}$ and $\sca[]{0_{\text{u}}}{b'_{\text{p+q}}}$ are equal to zero since the wagers of $\de{a',b'}$ are equal to zero. Thus:
\begin{eqnarray*}
\sca[]{c}{d}&=&\sca[]{a_{\text{p+q}}}{a'_{\text{p+q}}}
\end{eqnarray*}
We can conclude, using the preceding lemma, that $\de{c}\poll\de{d}$.

\item[]\textbf{The project $\de{c}$ is not in $\cond{A\oplus B}$.}~\\
To show this, we find an element $\de{t}$ in $(\cond{A}_{p+q})^{\pol}\cap(\cond{B}_{p+q})^{\pol}$ such that $\de{c}\not\poll\de{t}$. Let us chose $\de{b'}\in\cond{B}^{\pol}$, $\de{a'}\in\cond{A}^{\pol}$, $\lambda\in\realposN$ with $\abs{\lambda}<1$ and $\de{0_{\text{$\lambda$u}}}=0\cdot+\cdot 1_{\complexN}+\lambda(u+u^{\ast})$. We will show that there exists a real number $\mu$ such that $\de{t}=\de{b'}_{p+q}+\de{a'_{\text{p+q}}}+\mu \de{0}_{\text{$\lambda$u}}-\mu\de{0}$ satisfies $\de{t}\in(\cond{A}_{p+q})^{\pol}$, $\de{t}\in(\cond{B}_{p+q})^{\pol}$ and $\de{t}\not\poll\de{c}$. Let us chose $\de{b}\in\cond{B}$; we can compute:
\begin{eqnarray*}
\sca[]{t}{b_{\text{p+q}}}&=&\sca[]{b'_{\text{p+q}}+a'_{\text{p+q}}+\mu 0_{\text{$\lambda$u}}-\mu 0}{b_{\text{p+q}}}\\
&=&\sca[]{b'_{\text{p+q}}+a'_{\text{p+q}}}{b_{\text{p+q}}}+\mu\sca[]{0_{\text{$\lambda$u}}}{b_{p+q}}-\mu\sca[]{0}{b_{p+q}}\\
&=&\sca[]{b'_{\text{p+q}}+a'_{\text{p+q}}}{b_{\text{p+q}}}+\mu\sca[]{0_{\text{$\lambda$u}}}{b_{p+q}}-\mu\sca[]{0}{b_{p+q}}
\end{eqnarray*}
Using the fact that $A'B=0$, we have $\sca[]{a'_{\text{p+q}}}{b_{\text{p+q}}}=\alpha'(1_{\vn{A}'})b$. Moreover, $\sca[]{0}{b_{p+q}}=b$. Thus:
\begin{eqnarray*}
\sca[]{0_{\text{$\lambda$u}}}{b_{p+q}}&=&b-\log(\det{}^{FK}(1-\lambda(u+u^{\ast})B))\\
&=&b-\log(\det{}^{FK}(1-\lambda uB+\lambda u^{\ast}pB))\\
&=&b-\log(\det{}^{FK}(1-\lambda uB))\\
&=&b-\log(1)\\
&=&b
\end{eqnarray*}
As a consequence, $\sca[]{t}{b_{\text{p+q}}}=\sca[]{b_{\text{p+q}}}{b'_{\text{p+q}}+a'_{\text{p+q}}}$, which shows (using the preceding lemma) that $\de{t}\in(\cond{B}_{p+q})^{\pol}$.

Let us now chose $\de{a}\in\cond{A}$. We can compute:
\begin{eqnarray*}
\sca[]{t}{a_{\text{p+q}}}&=&\sca[]{b'_{\text{p+q}}+a'_{\text{p+q}}+\mu 0_{\text{$\lambda$u}}-(1+\mu)0}{a_{\text{p+q}}}\\
&=&\sca[]{b'_{\text{p+q}}}{a_{\text{p+q}}}+\sca[]{a'_{\text{p+q}}}{a_{\text{p+q}}}+\dots\\
&&~~~~~\dots+\mu\sca[]{0_{\text{$\lambda$u}}}{a_{p+q}}-(1+\mu)\sca[]{0}{a_{p+q}}\\
&=&\sca[]{a}{a'}+\sca[]{b'_{\text{p+q}}}{a_{\text{p+q}}}+\dots\\
&&~~~~~\dots+\mu\sca[]{0_{\text{$\lambda$u}}}{a_{p+q}}-(1+\mu)\sca[]{0}{a_{p+q}}
\end{eqnarray*}
Since $AB'=0$, $\sca[]{b'_{\text{p+q}}}{a_{\text{p+q}}}=a$; similarly $\sca[]{0}{b_{p+q}}=a$.
Moreover:
\begin{eqnarray*}
\sca[]{0_{\text{$\lambda$u}}}{a_{p+q}}&=&a-\log(\det{}^{FK}(1-\lambda(u+u^{\ast})A))\\
&=&a-\log(\det{}^{FK}(1-\lambda up'A+\lambda u^{\ast}A))\\
&=&a-\log(\det{}^{FK}(1-\lambda u^{\ast}A))\\
&=&a-\log(1)\\
&=&a
\end{eqnarray*}
Thus $\sca[]{t}{a_{\text{p+q}}}=\sca[]{a}{a'}+a+\mu a-(1+\mu)a=\sca[]{a}{a'}$, which shows that $\de{t}\in(\cond{A}_{p+q})^{\pol}$.

We just showed that $\de{t}\in(\cond{A}_{p+q})^{\pol}\cap(\cond{B}_{p+q})^{\pol}$, which means that $\de{t}\in\cond{(A\oplus B)^{\pol}}$:
\begin{eqnarray*}
\cond{(A\oplus B)^{\pol}}&=&\cond{(A_{\text{p+q}}\cup B_{\text{p+q}})^{\pol\pol\pol}}\\
&=&\cond{(A_{\text{p+q}}\cup B_{\text{p+q}})^{\pol}}\\
&=&\cond{(A_{\text{p+q}})^{\pol}\cap (B_{\text{p+q}})^{\pol}}
\end{eqnarray*}

We can now compute $\sca[]{t}{c}$:
\begin{eqnarray*}
\sca[]{t}{c}&=&\sca[]{b'_{\text{p+q}}+a'_{\text{p+q}}+\mu0_{\text{$\lambda$u}}-\mu 0}{a_{\text{p+q}}+0_{\text{u}}-0}\\
&=&\sca[]{b'_{\text{p+q}}+a'_{\text{p+q}}}{a_{\text{p+q}}}+\sca[]{\mu0_{\text{$\lambda$u}}-\mu 0}{0_{\text{u}}-0}\\
&=&\sca[]{b'_{\text{p+q}}+a'_{\text{p+q}}}{a_{\text{p+q}}}+\mu\sca[]{0_{\text{$\lambda$u}}}{0_{\text{u}}}
\end{eqnarray*}

The computation of the third term gives us:
\begin{eqnarray*}
\sca[]{0_{\text{$\lambda$u}}}{0_{\text{u}}}&=&-\log(\det{}^{FK}(1-\lambda(u+u^{\ast})(u+u^{\ast})))\\
&=&-2\tr(p)\log(1-\lambda)
\end{eqnarray*}
As a consequence:
\begin{eqnarray*}
\sca[]{t}{c}&=&\sca[]{b'_{\text{p+q}}+a'_{\text{p+q}}}{a_{\text{p+q}}}-2\mu \tr(p)\log(1-\lambda)
\end{eqnarray*}
By chosing $\mu=\frac{\sca[]{b'_{\text{p+q}}+a'_{\text{p+q}}}{a_{\text{p+q}}}}{2\tr(p)\log(1-\lambda)}$ we have $\sca[]{t}{c}=0$.
\end{itemize}
Finally, we just showed that $\de{c}\in\cond{(A^{\pol}+B^{\pol})^{\pol}}$ and $\de{c}\not\in\cond{(A_{\text{p+q}}\cap B_{\text{p+q}})^{\pol\pol}}$ — i.e.  $\de{c}\not\in\cond{A\oplus B}$.
\end{proof}

\subsection{Discussion about the preceding results}

It is now necessary to discuss the obtained result. Indeed, one must understand to what extent this counter-example to Girard's proposition 16 changes the results of his paper.

The construction of additive connectives described in Section \ref{gdisection} corresponds to the alternative construction of additives in the hyperfinite setting (GoI5.2). These constructions, and the proofs of their properties, use only the adjunction for graphs and some combinatorial properties of the formal sum defined on graphs and extended to projects. It turns out that hyperfinite projects satisfy all those properties. Hence, the proofs of these Propositions yield proofs in Girard's setting by simply understanding the projects as hyperfinite projects. We can conclude that the construction described in Section \ref{gdisection} can be mimicked step by step with hyperfinite projects to yield an additive construction having the exact same properties. Notice however that one should prove formally that the trefoil property holds in this case in order to ensure that the results of Section \ref{denotsection} also hold in this case.

The case of the initial construction of additives (GoI5.1) described by Girard in his paper \cite{goi5}, is more complicated. Once again, the adjunction is satisfied by strict hyperfinite projects, but one should be careful when applying the combinatorial properties of the formal sum which may yield non-strict hyperfinite projects. We believe that with some care one should be able to show that the proofs of the propositions obtained in Section \ref{gdisection} can be adapted to this setting. We will however not attempt to do so in this paper, believing that such results extend beyond the scope of the paper.

\section{Conclusion}

\subsection{Results}

Generalising the first GoI model introduced by Girard \cite{multiplicatives}, we were able to define a graph-theoretic geometry of interaction in which one can interpret the Multiplicative Additive fragment of Linear Logic. Contrary to what happens in the two other versions of GoI dealing with additives \cite{goi3,goi5}, proofs of MALL are interpreted in our framework by \emph{finite objects}. 

Moreover, we were able to define an internal notion of observational equivalence with which we were able to solve the usual issue when dealing with additives in GoI: the connective $\with$ is not a product. We were then able to show that one can obtain, from our constructions, categorical models of MALL (with additive units) where no connectives and units are equal and in which neither the mix rule nor the weakening rule are satisfied. These models are moreover obtained as subcategories of $\ast$-autonomous categories, i.e. models of MLL with units. All these results moreover rely on a single geometric property we called the \emph{trefoil property}.

All these constructions being parametrised by a choice of a "measuring map" from $]0,1]\rightarrow\mathbb{R}_{\geqslant 0}\cup\{\infty\}$, we looked more closely at the construction in two particular cases. It can be shown that a first choice ($m(x)=-\log(1-x)$) defines a combinatorial version of the Multiplicative Additive fragment of Girard's GoI5 \cite{goi5}. It therefore gives insights on the notion of orthogonality used by Girard, and his use of \emph{idioms} — which corresponds in our setting to slices — and \emph{pseudo-trace} — which corresponds in our setting to the weights associated to the slices. On the other hand, a second choice of map ($m(x)=\infty$) defines a refined version of the Multiplicative fragment of the first versions of GoI where orthogonality was defined as nilpotency. However, this choice of parameter yields a trivial model of the additives. Nonetheless, our construction makes a bridge between "old-style" geometry of interaction and Girard's most recent work \cite{goi5}.

These results unveil the trefoil property as a fundamental identity upon which all Girard's GoI constructions are founded. Indeed, the adjunctions obtained from the determinant, from nilpotency or from the cyclicity in case of permutations are all special cases of the trefoil property.

\subsection{Perspectives}
This work opens a number of perspectives. Among these, we think the following two are of great interest and importance: extending the results to deal with exponential connectives, and obtain a better understanding of the trefoil property.

The extension to exponentials is the object of an upcoming paper. Although we know how to extend the setting of graphs to deal with exponentials \cite{seiller-phd}, some work is still needed to understand the relations between the obtained construction and Girard's numerous GoIs. This work seems of great interest, in particular when it comes to the study of computational complexity. We should be able to obtain characterisations of complexity classes as particular sets of graphs in the same way we recently obtained characterisations of the classes \textbf{co-NL} and \textbf{L} as sets of operators \cite{seiller-conl,seiller-lsp}. Moreover, this new framework can help gain intuitions about the restrictions on exponential connectives in systems like Elementay Linear Logic or Light Linear Logic \cite{lll} since we naturally obtain a restrained system as in Girard's GoI5.

Better understanding of the trefoil property may be used to help  obtain new GoI constructions. As a matter of fact, we believe that this property should be satisfied in some (if not all) model categories. We would therefore like to characterise the kinds of situations in which one can perform  GoI constructions, in the same way model categories characterise the kinds of situations where one can define and study homotopy. The interest of such an axiomatization would be to obtain new GoI constructions, in various domains of mathematics such as algebraic geometry, without much effort. As the operator-theoretic construction of the crossed product of an algebra and a group acting on it was used to obtain computational complexity results \cite{normativity,seiller-conl,seiller-lsp}, we believe one could use constructions and invariants specific to other mathematical fields to obtain new results and/or gain a better understanding of computation.

\bibliography{goiadd-revised}
\bibliographystyle{elsarticle-harv}

\end{document}